\documentclass[11pt]{article}%

\usepackage[T1]{fontenc}
\usepackage[utf8]{inputenc}

\usepackage[margin=1.25in]{geometry}
\usepackage[dvipsnames]{xcolor}
\usepackage[
  colorlinks,
  citecolor=blue,
  linkcolor=blue,
  urlcolor=blue
]{hyperref}

\usepackage{amsmath, amssymb, amsthm}
\usepackage{mathtools}
\usepackage{bm}
\usepackage{array}
\usepackage{graphicx}
\usepackage{booktabs}
\usepackage{longtable}
\usepackage{multirow}
\usepackage{float}
\usepackage[
  font=footnotesize,
  justification=centering
]{caption}
\usepackage{siunitx}
\newcolumntype{L}[1]{>{\raggedright\arraybackslash}p{#1}}
\newcolumntype{C}[1]{>{\centering\arraybackslash}p{#1}}
\usepackage{enumitem}
\usepackage{microtype}
\allowdisplaybreaks[2]

\DeclareMathOperator*{\argmin}{arg\,min}
\DeclareMathOperator*{\MSE}{MSE}

\DeclareMathOperator*{\Cov}{Cov}

\newcommand{\R}{\mathbb{R}}

\providecommand{\pto}{\overset{p}{\to}}
\providecommand{\dto}{\rightsquigarrow}

\providecommand{\calH}{\mathcal H}
\providecommand{\calM}{\mathcal M}

\providecommand{\calA}{\mathcal A}
\providecommand{\calB}{\mathcal B}
\providecommand{\calW}{\mathcal W}

\providecommand{\one}{\mathbf 1}

\theoremstyle{plain}
\newtheorem{lemma}{Lemma}[section]
\newtheorem{theorem}{Theorem}[section]
\newtheorem{proposition}{Proposition}[section]
\newtheorem{corollary}{Corollary}[section]
\newtheorem{assumption}{Assumption}[section]

\theoremstyle{definition}
\newtheorem{definition}{Definition}[section]



\usepackage[round]{natbib}
\usepackage{url}

\title{Quasi-Bayesian Hierarchical Models}
\author{Desmond Fairall\thanks{Department of Economics, University of Warwick. Email: \href{mailto:desmond.fairall@warwick.ac.uk}{desmond.fairall@warwick.ac.uk}}\\University of Warwick
    \and Thomas Glinnan\thanks{\textit{Corresponding Author}. Department of Economics, London School of Economics and Political Science. Email: \href{mailto:t.m.glinnan@lse.ac.uk}{t.m.glinnan@lse.ac.uk}}\\LSE}
\date{}

\begin{document}
\maketitle

\begin{abstract}
We develop the Quasi-Bayesian Hierarchical Model (QBHM) for grouped GMM settings. The framework combines Bayesian hierarchical modelling with Laplace-type estimation: it preserves each group-specific objective function, while introducing a pooling term for economically comparable parameters. When the number of studies is fixed, the QBHM estimator---the quasi-posterior mean---has the same asymptotic distribution as GMM when estimating strongly identified study parameters. For weakly identified studies, we analyze the asymptotic properties of the method via a weak-GMM limit experiment: an asymptotic approximation in which the sample-moment criterion remains a random function over the weak parameter space, and the upper-level pooling relation induces a family of priors over weak values. In this experiment, the weak-limit QBHM rule is a Bayes rule under squared loss for the hierarchy-induced weak-limit prior, which provides a decision-theoretic justification for our procedure. We also extend our results to mixed within-study blocks, allowing a single study to contain both strongly and weakly identified parameters. Pooling can also reduce the pointwise asymptotic mean squared error (MSE) relative to unpooled estimation when the bias--variance tradeoff is favorable. Gaussian likelihood, nonlinear weak-GMM, and weak-IV calculations show when this happens, while simulations and a microenterprise application illustrate the method.
\end{abstract}

\medskip
\noindent\textbf{Keywords:} Quasi-Bayesian inference; Generalized Bayes; Hierarchical shrinkage; GMM; Weak identification.\\
\textbf{JEL codes:} C11, C14, C18, C23, C26, C51.

\section{Introduction}
\label{sec:intro}

Applied economists increasingly draw evidence from collections of related settings rather than from a single data set: for example, a treatment may be evaluated in several countries or experimental sites, or a structural parameter may be estimated separately across markets, firms, villages, or policy environments. In such cases, the empirical question is not only whether an effect is present in one sample; it is also how effects or structural parameters vary across contexts, and how evidence from one context should inform estimates in another. This is the external-validity problem: when are studies similar enough that information should be shared, and how should that sharing be done?\footnote{See, for example, \citet{Angrist2010} on the credibility revolution in internal validity, and \citet{Meager2019,Vivalt2020,Slough2022,Egami2023} for recent work on external-validity syntheses.  Recent Bayesian work on prior-study information with uncertain external validity includes \citet{FinanPouzo2026,You2026}.}

Bayesian Hierarchical Models (BHMs) provide a natural answer when each study can be described by a likelihood. To use them, a researcher specifies a sampling model within each group and an upper-level distribution for the group-specific parameters, and the posterior then combines the direct evidence from each group with information estimated from the collection of groups. This produces shrinkage: noisy group-level estimates are pulled toward values that are more plausible in light of the broader evidence. Hierarchical models have therefore become a standard tool for combining related treatment-effect estimates, especially when the within-study estimates are simple and approximately Gaussian; see, for example, \citet{GelmanHill2007} and \citet{Meager2019}.

Many econometric applications, however, are not naturally likelihood-based. Researchers often estimate the parameter of interest using instrumental variables, GMM, minimum distance, simulated moments, nonlinear least squares, or a structural objective chosen for its identifying content rather than for its interpretation as a full sampling model. In these settings, imposing a parametric likelihood only to obtain hierarchical pooling can be unattractive. It may add distributional assumptions that are not part of the original analysis, and it may move the within-group estimand away from the parameter defined by the original econometric criterion.

This paper develops \emph{Quasi-Bayesian Hierarchical Models} (QBHMs), a framework for hierarchical pooling when the within-group analysis is based on a GMM objective function, rather than a likelihood. Instead of imposing a likelihood, the construction starts from the objective function that the researcher would already use in each group. For group \(j\), let \(\phi_j\) denote the observation-level moment, residual, score, or minimum-distance map chosen by the researcher, and let \(q_{2,j,n_j}(x_j,\alpha_j)=-\|G_{j,n_j}(x_j,\alpha_j)\|_{W_{j,n_j}}^2/2\), with \(G_{j,n_j}(x_j,\alpha_j)=n_j^{-1/2}\sum_i\phi_j(X_{ji},\alpha_j)\), be the objective function the researcher would use if the group were analyzed on its own.  Let \(q_{1,j}(\alpha_j,\theta;z_j)\) describe the pooling relation between the economically comparable component\footnote{Comparability means that the researcher has selected component functions or transformations \(r_j(\alpha_j)\) that have the same substantive interpretation across groups.} of \(\alpha_j\) and a hierarchical parameter \(\theta\); that is, \(q_{1,j}\) assigns higher values to configurations in which the selected group component is more compatible with the hierarchical value \(\theta\).  A QBHM forms the quasi-posterior, the normalized exponential weighting of this composite objective,
\[
\Pi^{\mathrm{full}}_{n,\lambda}(d\alpha\,d\theta\mid x,z)
\propto
\exp\!\left[
\sum_{j=1}^J q_{2,j,n_j}(x_j,\alpha_j)
+
\lambda\sum_{j=1}^J q_{1,j}(\alpha_j,\theta;z_j)
\right]
\pi_\theta(d\theta)\prod_{j=1}^J \pi_j(d\alpha_j),
\]
where $\propto$ denotes that the left-hand and right-hand sides differ by a constant which does not depend on $\alpha$ or $\theta$. The scalar \(\lambda\ge0\) is the pooling strength.\footnote{We normalize the lower-level quasi-Bayes temperature, meaning the scalar multiplier on the lower-level objective, to one.  Fixed temperatures or learning rates are common in generalized-Bayes updates and can matter for quasi-posterior dispersion, finite-sample robustness, and calibration \citep{HolmesWalker2017,LyddonHolmesWalker2019}.  Under the strong-identification Laplace approximation, however, a fixed temperature does not change the first-order location of the Laplace-type estimator (LTE), and in the present normalization the substantively reported scale parameter is \(\lambda\), the relative weight on pooling.}  Throughout the paper, the \textit{hierarchy} refers to this upper-level pooling relation: the combination of \(q_1\), \(\pi_\theta\), and \(\lambda\) that specifies which study-level components should be similar and how strongly they are pulled together. It is a researcher-specified pooling device, not a requirement that the studies be generated from a random-effects population. At \(\lambda=0\), the marginal quasi-posteriors for each component $\alpha_j$ are the same as would be obtained by separate Laplace-Type Estimation \citep{ChernozhukovHong2003}. Furthermore, when the $q_{2,j,n_j}$ functions are log likelihoods, \(q_{1,j}\) is a conditional log density, and \(\lambda=1\), this replicates a classical Bayesian Hierarchical Model. The lower level therefore remains an econometric objective function, while the upper level supplies hierarchical shrinkage. Our analysis in this paper considers the properties of the quasi-posterior mean, which we will denote the \textit{QBHM estimator}; we treat its components as point estimators of the study-level estimands $\alpha_1, ..., \alpha_J$. Our theory considers the case where $J$ is a fixed number (rather than growing with $n$), as the number of studies is typically very small compared to the sample size. One consequence of this is that we do not consider estimation of $\theta$ (such as the so-called `average Average Treatment Effect'), as this is not generally identified unless $J \rightarrow \infty$.

The paper makes four main contributions. First, we introduce and highlight the flexibility of the QBHM framework. The group objective functions, instrument sets, controls, fixed effects, auxiliary statistics, weights, and dimensions may differ, as long as the pooled components are economically comparable. As such, one study can be estimated by IV with its own instruments, another by a fixed-effects residual objective function, another by a score objective function from a quasi-likelihood, and another by minimum distance on auxiliary moments. The hierarchy only links the components the researcher chooses to pool; these may be site-level treatment effects, IV returns to capital, production elasticities, switching thresholds, or other estimands with the same economic interpretation across studies.

Second, we give fixed-\(J\) theory under grouped-GMM conditions that are directly checkable in smooth applications. We formalize strong and weak identification by using differentiability in quadratic mean (DQM) local paths, defined below as \(O(n_j^{-1/2})\)-local perturbations of the group data law, following the weak-identification perspective in \citet{kaji2021weakidentification} and the weak-GMM limit experiment in \citet{AM2022}. Strong studies have the same asymptotic distribution as their unpooled GMM or LTE analogues, as in a standard IV design with a first stage bounded away from zero. Weak studies, such as IV designs with first stages local to zero or nonlinear GMM problems with weak curvature, remain in a weak-GMM limit experiment, meaning a limit in which the whole sample-moment objective remains a random function over the weak parameter space. In this limit, the hierarchy changes the \(\lambda\)-indexed prior over weak values, which we call the weak-limit prior path, and therefore remains in the asymptotic distribution. The appendix gives the corresponding profiled treatment of studies containing both strongly and weakly identified parameters.

Third, we characterize when our procedure can have lower asymptotic MSE, compared to estimating studies individually by Laplace-Type methods or GMM. Intuitively, pooling creates a bias--variance tradeoff, which can lead to lower MSE when the variance reduction dominates the increase in bias, in the same broad shrinkage-risk spirit as \citet{JamesStein1961}. A local nonlinear weak-GMM result and a scalar weak-IV example make this condition explicit in two salient cases.

Finally, we give practical guidance for reporting the pooling strength $\lambda$ and for statistical inference. We recommend reporting sensitivity paths over \(\lambda\), rather than demonstrating results over only one value of $\lambda$, though a form of cross validation (CV) provides a heuristic for a good value. Drawing analogies to \citet{AM2016conditional}, we show how one might construct weak-identification-robust confidence sets for hypothesis testing, noting that these do not agree with using the quasi-posterior itself.

\paragraph{Existing Literature.} The paper is related to several strands of work. First, generalized Bayes and moment-based Bayesian procedures replace a full likelihood with a loss, moment condition, or empirical-likelihood object in the update \citep{BissiriHolmesWalker2016,ChernozhukovHong2003,Schennach2005BayesianETEL,ChibShinSimoni2018MomentConditions}.  Recent work develops related quasi-Bayesian procedures for moment misspecification and conditional moment restriction settings \citep{ChernozhukovHansenKongWang2025PlausibleGMM,Kankanala2025GeneralizedBayesCMR}.  The present paper differs because we consider \(J\) to be fixed, and moreover study the shrinkage induced by the hierarchy, rather than misspecification robustness, frequentist coverage, or contraction for a single quasi-posterior. The closest conceptual comparison to our work is the quasi-Bayesian grouped-panel framework of \citet{Huang2023GroupedPanels}, which also combines flexible loss functions and priors for grouped data, but the object of interest is different. In our setting, the groups are observed econometric problems rather than latent panel clusters, the lower-level objective functions may differ across groups, and the theory focuses on fixed-\(J\) hierarchical shrinkage and weak-GMM MSE rather than posterior contraction for latent group assignments.

Hierarchical Bayes, empirical Bayes, and shrinkage show why pooling can reduce quadratic risk in normal means, multi-study, and labor-economics settings \citep{JamesStein1961,Meager2019,Walters2024EmpiricalBayes}. QBHM differs because the lower level need not be a common likelihood, a common normal approximation to a reduced-form summary, or a many-groups population-distribution problem.  Penalized or regularized GMM is also related to our work, since a QBHM mode maximizes the original grouped objective function plus an induced penalty.  The object analyzed in this paper, however, is the quasi-posterior mean \(\tilde\alpha^{\mathrm{full}}_{n,\lambda}\). The mode and mean are approximately the same in the strong identification case, by the generalized Bernstein von-Mises theorem in \citet{ChernozhukovHong2003}. However, this equivalence need not hold under weak identification, as shown in \citet{AM2022}, since the posterior is no longer asymptotically Gaussian; in general, it can be skewed, with the mean no longer equal to the mode. Relatedly, \citet{AndrewsMikusheva2023Inadmissible} show that GMM estimators can be inadmissible under weak identification, while quasi-Bayes posterior means and bootstrap-aggregated (bagged) GMM have stronger continuity properties. QBHM adds a cross-group hierarchy to this weak-GMM decision environment.

Finally, weak-instrument, semiparametric weak-identification, and weak-GMM work show that conventional first-order approximations can fail when the objective function is weakly curved or has only finite local drift \citep{StaigerStock1997,StockYogo2005,kaji2021weakidentification}.  Kaji's formulation explains weak identification through local paths that do not regularize the estimand at the root-\(n\) scale, and the Andrews--Mikusheva weak-GMM experiment supplies the decision-theoretic environment that QBHM uses below for point-estimation MSE.  For inference, the conditional-testing approach builds on the weak-instrument and functional-nuisance conditioning literatures \citep{Moreira2003,AndrewsMoreiraStock2006,AM2016conditional}. As in that literature, quasi-posterior intervals summarize the shrinkage estimator, while weak-identification-robust confidence sets should still be constructed by conditioning.

The paper proceeds as follows. Section~\ref{sec:framework} gives the normal--normal example that introduces the MSE tradeoff and the strong-versus-weak information scale. Section~\ref{sec:main-fixedJ-theory} defines the general grouped QBHM and gives the fixed-\(J\) theory, including the strong-study approximation, the weak-study weak-GMM reduction, and the Bayes-rule interpretation under squared loss of the rule induced by the hierarchy. Section~\ref{sec:mse-improvement} states the pointwise MSE comparison, a local sufficient condition, and a scalar weak-IV example. Section~\ref{sec:lambda-inference} discusses sensitivity paths and optional CV-selected reference choices for \(\lambda\), together with weak-identification-robust confidence sets. Sections~\ref{sec:mc} and~\ref{sec:structural} present the Monte Carlo evidence and empirical illustration. Section~\ref{sec:conclusion} concludes.

\section{A Normal-Normal Example}
\label{sec:framework}
\label{sec:normal-summary-example}

To fix intuition, we begin with an exact Bayesian hierarchical model. The rest of the paper does not impose these distributional assumptions, but the normal--normal case is useful because it gives closed-form shrinkage coefficients and makes the role of identification strength transparent.  It is also close to models used in applied hierarchical analyses, and so the results here may be of independent interest.\footnote{For example, \citet{Meager2019} considers a model with normal distributions in the main stages, apart from the prior on the heterogeneity parameter.  If that heterogeneity parameter were fixed, the model would fall within the case considered here.}  Suppose that there are \(J\) observed groups, and that group \(j\) is summarized by a scalar estimate \(\Delta_j\) of a group-specific estimand \(\alpha_j\), such as a site-level treatment effect or IV slope. Specifying a Bayesian hierarchical model requires us to specify our beliefs over the data in three parts. In this case, the lower level is
\[
        \Delta_j\mid \alpha_j
        \sim
        N\!\left(\alpha_j,\frac{\sigma_{\Delta,j}^2}{\mathcal I_{j,n}}\right),
        \qquad
        \sigma_{\Delta,j}^2>0,\quad \mathcal I_{j,n}>0 .
\]
The hierarchical level is
\[
        \alpha_j\mid \theta
        \sim
        N(\theta,\tau_0^2),
        \qquad \tau_0^2>0 ,
\]
and the prior level is
\[
        \theta\sim N(g_0,s_0^2),
        \qquad s_0^2>0 .
\]
Here \(\theta\) is the common mean toward which the group-specific estimands are pooled, \(\tau_0^2\) controls prior cross-group dispersion around that mean, and \(\mathcal I_{j,n}/\sigma_{\Delta,j}^2\) is the information in the group-\(j\) summary.  We use \(\mathcal I_{j,n}\) to model identification strength in this setup: strong identification corresponds to \(\mathcal I_{j,n}=n_j\to\infty\), so the group summary becomes increasingly precise, while weak identification corresponds to \(\mathcal I_{j,n}\) remaining bounded, so the effective information in \(\Delta_j\) does not grow with \(n_j\).

The joint quasi-posterior is proportional to
\[
\exp\left[
-\frac12\sum_{j=1}^J\frac{\mathcal I_{j,n}}{\sigma_{\Delta,j}^2}(\Delta_j-\alpha_j)^2
-\frac{\lambda}{2}\sum_{j=1}^J\frac{(\alpha_j-\theta)^2}{\tau_0^2}
-\frac12\frac{(\theta-g_0)^2}{s_0^2}
\right].
\]
When \(\lambda=1\), this is the ordinary posterior under the normal hierarchical model above; when \(\lambda=0\), the groups are estimated separately; and, more generally, \(\lambda/\tau_0^2\) is the effective pooling precision.  Increasing \(\lambda\), or decreasing \(\tau_0^2\), strengthens the pull of each \(\alpha_j\) toward the common mean. The same display can also be read as a quadratic QBHM objective.  The lower normal density contributes
\[
        q_{2,j,n_j}^{\mathrm{N}}(\Delta_j,\alpha_j)
        =
        -\frac12\frac{\mathcal I_{j,n}}{\sigma_{\Delta,j}^2}
        (\Delta_j-\alpha_j)^2,
\]
the hierarchy contributes
\[
        q_{1,j}^{\mathrm{N}}(\alpha_j,\theta)
        =
        -\frac12\frac{(\alpha_j-\theta)^2}{\tau_0^2},
\]
and the prior contributes the final quadratic term in \(\theta\).  The exponent in the quasi-posterior is therefore the composite objective
\[
        \sum_{j=1}^J q_{2,j,n_j}^{\mathrm{N}}(\Delta_j,\alpha_j)
        +
        \lambda\sum_{j=1}^J q_{1,j}^{\mathrm{N}}(\alpha_j,\theta)
        -
        \frac12\frac{(\theta-g_0)^2}{s_0^2},
\]
Setting \(\lambda=1\) gives the ordinary normal hierarchical posterior for this benchmark specification. The convenient functional forms yield explicit expressions for the marginal quasi-posterior mean of $\alpha_j$, denoted $\tilde\alpha_{j,\lambda}$, which is the \textit{QBHM estimator} that we study in this paper.

\begin{proposition}
\label{prop:normal-normal-summary}
In the Gaussian case described above, fix \(J<\infty\), \(g_0\in\mathbb R\), \(s_0^2>0\), \(\tau_0^2>0\), and, for each \(k=1,\ldots,J\), \(\sigma_{\Delta,k}^2>0\) and \(\mathcal I_{k,n}>0\).  For every \(\lambda\ge0\),
\begin{equation}
\label{eq:normal-normal-summary-alpha}
\tilde\alpha_{j,\lambda}
=
c_{j,n}(\lambda)\Delta_j+
\bigl[1-c_{j,n}(\lambda)\bigr]\tilde\theta_\lambda,
\qquad
c_{j,n}(\lambda)
=
\left(1+\frac{\lambda \sigma_{\Delta,j}^2}{\mathcal I_{j,n}\tau_0^2}\right)^{-1},
\end{equation}
where \(\tilde\theta_\lambda\) is the marginal quasi-posterior mean of \(\theta\) and satisfies
\begin{equation}
\label{eq:normal-normal-summary-theta}
\tilde\theta_\lambda
=
\frac{
s_0^{-2}g_0+(\lambda/\tau_0^2)\sum_{k=1}^J c_{k,n}(\lambda)\Delta_k
}{
s_0^{-2}+(\lambda/\tau_0^2)\sum_{k=1}^J c_{k,n}(\lambda)
}.
\end{equation}
For \(\lambda>0\), the diffuse-prior limit \(s_0^{-2}\downarrow0\) is
\begin{equation}
\label{eq:normal-normal-summary-diffuse}
\tilde\theta_\lambda^{\mathrm{diff}}
=
\sum_{k=1}^J w_{k,n}(\lambda)\Delta_k,
\qquad
w_{k,n}(\lambda)
=
\frac{c_{k,n}(\lambda)}{\sum_{\ell=1}^J c_{\ell,n}(\lambda)}.
\end{equation}
\end{proposition}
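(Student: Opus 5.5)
The plan is to exploit the Gaussian structure: the exponent of the joint quasi-posterior is a negative-definite quadratic form in \((\alpha_1,\ldots,\alpha_J,\theta)\). The quasi-posterior is therefore a proper multivariate normal distribution, and its mean equals its mode. Negative definiteness holds because \(s_0^{-2}>0\), each \(\mathcal I_{j,n}/\sigma_{\Delta,j}^2>0\), and \(\lambda\ge0\). Rather than inverting the full precision matrix, I would compute the mean by iterated expectations: first find the conditional mean of each \(\alpha_j\) given \(\theta\), then the marginal quasi-posterior of \(\theta\), and finally combine them.

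\textbf{Conditional step.} Conditional on \(\theta\), the exponent separates across \(j\). Each \(\alpha_j\) has precision \(a_j+b\), where \(a_j=\mathcal I_{j,n}/\sigma_{\Delta,j}^2\) and \(b=\lambda/\tau_0^2\), and its conditional mean is \((a_j\Delta_j+b\theta)/(a_j+b)\). Since \(a_j/(a_j+b)=(1+b/a_j)^{-1}=c_{j,n}(\lambda)\), this conditional mean equals \(c_{j,n}\Delta_j+(1-c_{j,n})\theta\). This expression is affine in \(\theta\). Taking the quasi-posterior expectation over \(\theta\) therefore gives \eqref{eq:normal-normal-summary-alpha} with \(\tilde\theta_\lambda=E[\theta\mid\Delta]\). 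The case \(\lambda=0\) is trivial: \(c=1\) and \(\theta\) decouples.

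\textbf{Marginal step.} Integrate each \(\alpha_j\) out of \(\exp[-\tfrac{a_j}{2}(\Delta_j-\alpha_j)^2-\tfrac b2(\alpha_j-\theta)^2]\). The standard convolution identity yields, up to factors free of \(\theta\), \(\exp[-\tfrac12\frac{a_jb}{a_j+b}(\Delta_j-\theta)^2]\). Moreover \(a_jb/(a_j+b)=b\,c_{j,n}(\lambda)\). The marginal log-density of \(\theta\) is therefore \(-\tfrac12 s_0^{-2}(\theta-g_0)^2-\tfrac12 b\sum_k c_{k,n}(\Delta_k-\theta)^2\). This is Gaussian, with mean given by the precision-weighted average in \eqref{eq:normal-normal-summary-theta}.

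\textbf{Diffuse limit.} For \(\lambda>0\), the quantities \(b\) and \(\sum_k c_{k,n}(\lambda)\) are strictly positive and do not depend on \(s_0\). Letting \(s_0^{-2}\downarrow0\) in \eqref{eq:normal-normal-summary-theta} is then a continuous limit of a ratio with positive denominator. Cancelling \(b\) gives \eqref{eq:normal-normal-summary-diffuse}.

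The only mildly delicate point is the completing-the-square computation in the marginal step, specifically the identity \(a_jb/(a_j+b)=b\,c_{j,n}\). Everything else is bookkeeping: the affine conditional mean together with the tower property avoids any matrix inversion.
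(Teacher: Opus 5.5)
Your proof is correct and follows essentially the paper's route: eliminate each $\alpha_j$ conditional on $\theta$ to get $c_{j,n}(\lambda)\Delta_j+[1-c_{j,n}(\lambda)]\theta$, then solve the scalar precision-weighted equation for $\theta$. The paper identifies the mean with the unique minimizer of the positive-definite quadratic and solves the first-order conditions by substitution, whereas you integrate instead (tower property plus the convolution identity with $a_jb/(a_j+b)=b\,c_{j,n}(\lambda)$); this gives the identical $\theta$-equation and also yields properness and the Gaussian marginal of $\theta$ directly.
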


In this case, we see that the QBHM estimates are a weighted average of individual estimates and a term which combines information from all of the groups. Note that when \(\lambda=0\) we have no pooling, and consequently \(\tilde\alpha_{j,0}=\Delta_j\).  For fixed \(\lambda>0\), a group with weaker identification, meaning a larger sampling variance \(\sigma_{\Delta,j}^2/\mathcal I_{j,n}\), has a smaller \(c_{j,n}(\lambda)\) and so receives more shrinkage toward \(\tilde\theta_\lambda\).  Conversely, when the group summary is precise, \(c_{j,n}(\lambda)\) is close to one and the pooled estimate remains close to \(\Delta_j\).

Proposition~\ref{prop:normal-normal-summary} contains two sources of shrinkage: each \(\tilde\alpha_{j,\lambda}\) is shrunk toward the estimated common mean \(\tilde\theta_\lambda\), and \(\tilde\theta_\lambda\) itself is a compromise between the prior mean \(g_0\) and the group summaries.  When \(s_0^2\) is small, the prior pulls the common mean toward \(g_0\), but when \(s_0^2\) is large, this direct prior pull is weak. As the prior gets wider and wider, as in \eqref{eq:normal-normal-summary-diffuse}, the common mean instead converges to an empirical average of the group summaries, with weights determined by the shrinkage coefficients. Weakly identified groups are therefore pulled more toward the common mean, but they also receive less weight in forming that common mean. We can see from this that Bayesian Hierarchical models naturally provide an identification strength-adaptive way to pool data together from multiple studies.

One common concern when using these methods is the influence that the prior distribution has over the results. As discussed, we treat the mean of the quasi-posterior as a point estimator, and analyse its frequentist properties, rather than consider QBHMs as Bayesian objects. The standard solution to the choice of priors is twofold. First, as shown in our main results, prior choice is asymptotically negligible for the estimator, when working under strong identification. The prior does matter in small samples and under weak identification - as highlighted above - but this can be viewed as strength of the method, as it allows the incorporation of extra information. This is especially important in the case of weakly identified parameters, where the data themselves do not pin down the estimand consistently. Second, a researcher who wishes to de-emphasize the effect of priors can set them to be very wide. This is commonly done in applied work (e.g. by \citet{Meager2019}) and, as seen from diffuse-prior limit above, makes the shrinkage effect of the priors become asymptotically irrelevant. All of these features carry through to the QBHM case. As such, our method provides the flexibility to use prior information if needed, but with no real obligation to do so.

The next result records the consequences of identification strength. It shows that fixed pooling has no effect on the asymptotic distribution when group \(j\) is strongly identified, but it does have an effect when the identification is weak.

\begin{corollary}
\label{cor:normal-normal-scale}
In the setting of Proposition~\ref{prop:normal-normal-summary}, fix group \(j\) and a compact set \(\Lambda\subset[0,\bar\lambda]\), with \(\bar\lambda<\infty\).  Suppose \(J\), \(g_0\), \(s_0^2\), \(\tau_0^2\), and \(\sigma_{\Delta,k}^2\), \(k=1,\ldots,J\), are fixed in \(n\), and \(\max_{1\le k\le J}|\Delta_k|=O_p(1)\).  If \(\mathcal I_{j,n}=n_j\to\infty\), then \(\sup_{\lambda\in\Lambda}\sqrt{n_j}\,|\tilde\alpha_{j,\lambda}-\Delta_j|=o_p(1)\).  If instead \(\mathcal I_{j,n}\) is bounded above along a subsequence, then for every fixed \(\lambda>0\) the shrinkage coefficient \(1-c_{j,n}(\lambda)\) is bounded away from zero along that subsequence, so fixed-\(\lambda\) pooling remains in the leading finite-information problem.
\end{corollary}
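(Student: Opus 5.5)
The plan is to work directly from the closed forms in Proposition~\ref{prop:normal-normal-summary}. Subtracting \(\Delta_j\) from \eqref{eq:normal-normal-summary-alpha} gives the exact identity
\[
\tilde\alpha_{j,\lambda}-\Delta_j=\bigl[1-c_{j,n}(\lambda)\bigr]\bigl(\tilde\theta_\lambda-\Delta_j\bigr),
\qquad
1-c_{j,n}(\lambda)=\frac{\lambda\sigma_{\Delta,j}^2}{\mathcal I_{j,n}\tau_0^2+\lambda\sigma_{\Delta,j}^2}.
\]
Both parts of the corollary therefore reduce to controlling the two factors: the shrinkage coefficient \(1-c_{j,n}(\lambda)\), which is deterministic, and the distance \(\tilde\theta_\lambda-\Delta_j\), which is stochastic.

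For the strong case \(\mathcal I_{j,n}=n_j\to\infty\), I would first bound the coefficient uniformly on \(\Lambda\). Since \(0\le\lambda\le\bar\lambda\),
\[
\sup_{\lambda\in\Lambda}\sqrt{n_j}\,\bigl[1-c_{j,n}(\lambda)\bigr]\le \frac{\bar\lambda\sigma_{\Delta,j}^2}{\sqrt{n_j}\,\tau_0^2}\to0 .
\]
Next, I would show \(\sup_{\lambda\in\Lambda}|\tilde\theta_\lambda|\le|g_0|+\max_k|\Delta_k|\). By \eqref{eq:normal-normal-summary-theta}, \(\tilde\theta_\lambda\) is a convex combination of \(g_0\) and the \(\Delta_k\). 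The weights \(s_0^{-2}\) and \((\lambda/\tau_0^2)c_{k,n}(\lambda)\) are nonnegative, and the denominator is at least \(s_0^{-2}>0\). This holds uniformly in \(\lambda\) and \(n\), including \(\lambda=0\). Hence \(\sup_\lambda|\tilde\theta_\lambda-\Delta_j|\le|g_0|+2\max_k|\Delta_k|=O_p(1)\), and multiplying the two factors gives \(O(n_j^{-1/2})\cdot O_p(1)=o_p(1)\) uniformly over \(\Lambda\).

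For the weak case, suppose \(\mathcal I_{j,n}\le M<\infty\) along a subsequence. Then for fixed \(\lambda>0\),
\[
1-c_{j,n}(\lambda)\ge \frac{\lambda\sigma_{\Delta,j}^2}{M\tau_0^2+\lambda\sigma_{\Delta,j}^2}>0
\]
along that subsequence, which is the claimed lower bound. The interpretation that pooling "remains in the leading finite-information problem" then follows from the identity above. The weight on \(\tilde\theta_\lambda\) does not vanish, and \(\Delta_j\) itself has variance \(\sigma_{\Delta,j}^2/\mathcal I_{j,n}\ge\sigma_{\Delta,j}^2/M\), which is of order one. So the shrinkage correction is of the same order as the sampling noise in \(\Delta_j\), rather than negligible.

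The main obstacle is essentially nil. The only point needing care is that the bound on \(\tilde\theta_\lambda\) be uniform in \(\lambda\in\Lambda\) and in \(n\), and the convex-combination argument delivers this directly because \(s_0^{-2}>0\) keeps the denominator bounded away from zero.
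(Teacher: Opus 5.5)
Your proposal is correct and follows the paper's proof essentially step for step. It uses the same factorization \(\tilde\alpha_{j,\lambda}-\Delta_j=[1-c_{j,n}(\lambda)](\tilde\theta_\lambda-\Delta_j)\), the same convex-combination bound \(\sup_{\lambda\in\Lambda}|\tilde\theta_\lambda|\le|g_0|+\max_k|\Delta_k|\), the same uniform bound \(1-c_{j,n}(\lambda)\le\bar\lambda\sigma_{\Delta,j}^2/(n_j\tau_0^2)\), and the same explicit lower bound \(\lambda\sigma_{\Delta,j}^2/(M\tau_0^2+\lambda\sigma_{\Delta,j}^2)\) in the bounded-information case.
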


Note from \eqref{eq:normal-normal-summary-diffuse} that, in the diffuse-prior limit and for \(\lambda>0\), the shrinkage target is not necessarily the true common mean.  Instead, it is the weighted empirical mean \(\tilde\theta_\lambda^{\mathrm{diff}}\) of the group summaries \(\Delta_k\).  Thus the pooled estimate is pulled toward an empirical center that includes group \(j\)'s own summary, rather than toward an oracle or leave-one-out target.  Pooling is therefore most likely to improve MSE when the unpooled summary is noisy and this empirical center is well aligned with the group-specific estimand.  On the other hand, it is likely to harm MSE when the summary is already precise or the center is poorly aligned with the group.

The next two results characterize when pooling can improve a form of mean squared error.  The first fixes \(\theta\) in a simplified known-center case to isolate the scalar signal-to-noise tradeoff behind the shrinkage coefficient and to evaluate integrated MSE.  That is, it asks whether pooling can lower mean squared error before conditioning on the realized group-specific effect.\footnote{This can be viewed as ex ante MSE, or equivalently as average MSE under the hierarchical distribution.  It differs from the pointwise MSE below, which conditions on the realized value of \(\alpha_{j0}\) and averages only over the sampling error.}  For this proposition only, suppose that, conditional on known \(\theta\), \(\alpha_j=\theta+u_j\) with \(u_j\sim N(0,\tau^2)\), and that \(\Delta_j=\alpha_j+\varepsilon_j\) with \(\varepsilon_j\sim N(0,\sigma_{\Delta,j}^2/\mathcal I_{j,n})\) independent of \(u_j\); write \(v_j=\sigma_{\Delta,j}^2/\mathcal I_{j,n}\).

\begin{proposition}
\label{prop:normal-normal-risk}
Under the known-center Gaussian example above, consider the linear rule \(\delta_j(\eta_j)=\eta_j\Delta_j+(1-\eta_j)\theta\), with \(\eta_j\in[0,1]\).  The integrated MSE for estimating \(\alpha_j\), averaging over both \(u_j\) and \(\varepsilon_j\), is \(R_j(\eta_j)=(1-\eta_j)^2\tau^2+\eta_j^2v_j\).  This risk is uniquely minimized at \(\eta_j^\star=\tau^2/(\tau^2+v_j)\).  For \(\lambda>0\), the coefficient \(c_{j,n}(\lambda)\) equals this risk-minimizing weight when \(\lambda/\tau_0^2=1/\tau^2\).
\end{proposition}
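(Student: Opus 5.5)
The plan is a direct computation in three steps: expand the risk, minimize it as a scalar quadratic, and then match the result to the coefficient \(c_{j,n}(\lambda)\) from Proposition~\ref{prop:normal-normal-summary}.

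\textbf{Step 1: the risk.} I would first write the estimation error of the linear rule in terms of the two independent shocks. Since \(\Delta_j=\theta+u_j+\varepsilon_j\) and \(\alpha_j=\theta+u_j\), with \(\theta\) known,
\[
\delta_j(\eta_j)-\alpha_j=\eta_j(\theta+u_j+\varepsilon_j)+(1-\eta_j)\theta-\theta-u_j=-(1-\eta_j)u_j+\eta_j\varepsilon_j .
\]
Squaring and taking expectations over \((u_j,\varepsilon_j)\), the cross term vanishes because \(u_j\) and \(\varepsilon_j\) are independent and mean zero. The remaining terms give \(R_j(\eta_j)=(1-\eta_j)^2\tau^2+\eta_j^2v_j\).

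\textbf{Step 2: the minimizer.} \(R_j\) is a quadratic in \(\eta_j\) with leading coefficient \(\tau^2+v_j>0\), so it is strictly convex on \(\mathbb R\). Setting the derivative to zero,
\[
-2(1-\eta_j)\tau^2+2\eta_jv_j=0,
\]
gives \(\eta_j^\star=\tau^2/(\tau^2+v_j)\). This value lies in \((0,1)\subset[0,1]\), so it is the unique minimizer over \([0,1]\) as well as over \(\mathbb R\).

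\textbf{Step 3: matching to the shrinkage coefficient.} Using \(v_j=\sigma_{\Delta,j}^2/\mathcal I_{j,n}\), the coefficient from Proposition~\ref{prop:normal-normal-summary} can be written as
\[
c_{j,n}(\lambda)=\Bigl(1+\tfrac{\lambda}{\tau_0^2}v_j\Bigr)^{-1}.
\]
Substituting \(\lambda/\tau_0^2=1/\tau^2\) gives \((1+v_j/\tau^2)^{-1}=\tau^2/(\tau^2+v_j)=\eta_j^\star\). The restriction \(\lambda>0\) is needed because this identity requires \(\lambda/\tau_0^2=1/\tau^2>0\).

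There is no real obstacle here. The only point needing care is that the risk averages over \(u_j\) with \(\theta\) fixed. That is why the cross term disappears, and why the bias term is \(\tau^2\) rather than a pointwise squared bias.
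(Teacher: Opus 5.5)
Your proposal is correct and follows the paper's proof essentially step for step: you write the error as \(-(1-\eta_j)u_j+\eta_j\varepsilon_j\), drop the cross term by independence and zero means, minimize the strictly convex quadratic through its first-order condition, and substitute \(\lambda/\tau_0^2=1/\tau^2\) into \(c_{j,n}(\lambda)=1/[1+(\lambda/\tau_0^2)v_j]\). Your added check that \(\eta_j^\star\) lies in \([0,1]\), so it is also the minimizer over the constrained range, is a small tidy-up that the paper leaves implicit.
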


The integrated-risk-optimal level of pooling has an intuitive interpretation.  The parameter \(\tau^2\) measures how dispersed the true group effects are around the center, while \(v_j\) measures how noisy the group summary is.  When \(v_j\) is small relative to \(\tau^2\), the summary \(\Delta_j\) is precise, the center is weakly predictive of \(\alpha_j\), or both, so \(\eta_j^\star\) is close to one and the integrated-risk-optimal rule puts little weight on the center.  When \(v_j\) is large relative to \(\tau^2\), the summary is noisy, the group effects are expected to lie close to the center, or both, so \(\eta_j^\star\) is smaller and the optimal rule pools more aggressively.

The last statement of Proposition~\ref{prop:normal-normal-risk} calibrates $c_{j,n}(\lambda)$ against this integrated-risk benchmark.  Since \(c_{j,n}(\lambda)=1/(1+\lambda v_j/\tau_0^2)\), the QBHM puts less weight on the unpooled summary when \(v_j\) is large, when \(\lambda\) is large, or when \(\tau_0^2\) is small.  Weak information, stronger imposed pooling, and tighter hierarchical dispersion all push in the same direction.  The coefficient is integrated-risk optimal in this case only when the effective pooling precision \(\lambda/\tau_0^2\) matches the random-effects precision \(1/\tau^2\).  If the imposed pooling precision is larger than \(1/\tau^2\), the rule shrinks more than the optimum; if it is smaller, the rule shrinks less.

In contrast with the last result, the next corollary conditions on the realized group-specific estimand and asks when shrinkage toward a fixed center improves on the unpooled summary.  This distinction matters because pooling can be attractive on average under a well-centered random-effects distribution, while still increasing MSE for a group whose realized estimand is far from the center.

\begin{corollary}
\label{cor:normal-pointwise-mse}
Suppose \(\Delta_j=\alpha_{j0}+\varepsilon_j\), with \(\mathbb E\varepsilon_j=0\) and \(\text{Var}(\varepsilon_j)=v_j:=\sigma_{\Delta,j}^2/\mathcal I_{j,n}\in(0,\infty)\).  For a fixed center \(\theta\) and a linear shrinkage rule \(\delta_j(\eta)=\eta\Delta_j+(1-\eta)\theta\), with \(\eta\in[0,1]\), the pointwise MSE for estimating the fixed estimand \(\alpha_{j0}\) is \(\MSE(\delta_j(\eta),\alpha_{j0})=\eta^2v_j+(1-\eta)^2(\theta-\alpha_{j0})^2\).  Relative to the unpooled rule \(\Delta_j\), the shrinkage rule has lower pointwise MSE if and only if \((1-\eta)^2(\theta-\alpha_{j0})^2<(1-\eta^2)v_j\).  For \(\eta<1\), this is equivalently \((\theta-\alpha_{j0})^2<[(1+\eta)/(1-\eta)]v_j\).
\end{corollary}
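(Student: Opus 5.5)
The proof is a direct bias--variance computation, so I would proceed in three short steps.

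First, write the estimation error of the shrinkage rule as
\[
\delta_j(\eta)-\alpha_{j0}
=\eta(\Delta_j-\alpha_{j0})+(1-\eta)(\theta-\alpha_{j0})
=\eta\varepsilon_j+(1-\eta)(\theta-\alpha_{j0}),
\]
using \(\Delta_j=\alpha_{j0}+\varepsilon_j\). Since \(\theta\) and \(\alpha_{j0}\) are fixed (non-random), the second term is a deterministic bias. Squaring and taking expectations, the cross term is \(2\eta(1-\eta)(\theta-\alpha_{j0})\mathbb E\varepsilon_j=0\) because \(\mathbb E\varepsilon_j=0\). The squared noise term has expectation \(\eta^2\text{Var}(\varepsilon_j)=\eta^2 v_j\). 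This gives the stated formula
\[
\MSE(\delta_j(\eta),\alpha_{j0})=\eta^2 v_j+(1-\eta)^2(\theta-\alpha_{j0})^2 .
\]
Finiteness of \(v_j\) guarantees every expectation here exists.

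Second, specialize to \(\eta=1\) to get the unpooled rule, \(\delta_j(1)=\Delta_j\), whose MSE is \(v_j\). The shrinkage rule therefore has strictly lower MSE if and only if
\[
\eta^2 v_j+(1-\eta)^2(\theta-\alpha_{j0})^2<v_j,
\]
which, after moving \(\eta^2 v_j\) to the right, is exactly \((1-\eta)^2(\theta-\alpha_{j0})^2<(1-\eta^2)v_j\).

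Third, for \(\eta<1\) (and \(\eta\in[0,1]\)), the factor \((1-\eta)^2\) is strictly positive. Dividing both sides by it and using the factorization \(1-\eta^2=(1-\eta)(1+\eta)\) yields the equivalent condition
\[
(\theta-\alpha_{j0})^2<\frac{1+\eta}{1-\eta}\,v_j .
\]
Strict inequalities are preserved because the divisor is positive. In the boundary case \(\eta=1\) both sides of the previous inequality vanish, so the strict inequality fails, consistent with the rule coinciding with \(\Delta_j\); this explains why the equivalent form is stated only for \(\eta<1\).

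There is no real obstacle. The only points needing care are to note that \(\theta\) is treated as a fixed center, so no covariance between the center and \(\varepsilon_j\) enters (unlike the empirical center \(\tilde\theta_\lambda\) of Proposition~\ref{prop:normal-normal-summary}), and to justify the division by \((1-\eta)^2\) only when \(\eta<1\). No Gaussianity is needed; only the first two moments of \(\varepsilon_j\) are used.
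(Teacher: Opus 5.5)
Your proposal is correct and follows essentially the same route as the paper's proof. Both write the error as \(\eta\varepsilon_j+(1-\eta)(\theta-\alpha_{j0})\) and use the first two moments of \(\varepsilon_j\) to get the MSE. Both then compare with the \(\eta=1\) unpooled value \(v_j\), and divide by \((1-\eta)^2>0\) using \(1-\eta^2=(1-\eta)(1+\eta)\) when \(\eta<1\), noting that strict improvement is impossible at \(\eta=1\).
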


The pointwise condition separates the variance gain, \(v_j\), from the bias cost, \((\theta-\alpha_{j0})^2\), the squared distance between the shrinkage target and the realized estimand.  Pooling is therefore most likely to help when \(v_j\) is large and \(\theta\) is close to \(\alpha_{j0}\): the unpooled estimate is noisy, and the bias from moving toward the center is small.  Pooling is most likely to hurt when \(v_j\) is small or when \(\theta\) is far from \(\alpha_{j0}\): the unpooled estimate is already precise, or the shrinkage target is badly aligned.  Full shrinkage to the center corresponds to \(\eta=0\), in which case pooling improves pointwise MSE exactly when \((\theta-\alpha_{j0})^2<v_j\).  As \(\eta\) approaches one, the condition becomes easier to satisfy because the amount of shrinkage is small, but the possible MSE gain also becomes small.

Proposition~\ref{prop:normal-normal-risk} and Corollary~\ref{cor:normal-pointwise-mse} describe the same shrinkage tradeoff from two perspectives.  Proposition~\ref{prop:normal-normal-risk} treats the center as known and evaluates MSE after averaging over a normal random-effects distribution.  Corollary~\ref{cor:normal-pointwise-mse} fixes the realized estimand and evaluates MSE pointwise.  Together, they show why pooling can help in weak-information settings: it reduces variance when the unpooled summary is noisy, but it can introduce bias when the center is not well aligned with the group-specific estimand.

The full QBHM in Proposition~\ref{prop:normal-normal-summary} is complicated not only because the objective functions need not correspond to Gaussian log likelihoods, but also because the hierarchy may use a general prior and need not have a known center.  The next section states the grouped GMM QBHM directly, with separate first-order treatments of strong and weak study blocks.

\section{Main Results}
\label{sec:main-fixedJ-theory}

We now state the regularity conditions needed for our main theory. Section~\ref{sec:dqm-identification} defines strong and weak identification, Section~\ref{sec:strong-components} gives the result for strongly identified studies, and Section~\ref{sec:weak-components} gives the result for weakly identified studies; throughout this section, ``group'' and ``study'' are used interchangeably. For the main text, identification strength is the same for all components in a study: the reported vector for a study is either strongly identified or weakly identified. Appendix~\ref{app:mixed-within-study-blocks} gives the extension that allows mixed identification status within a study. Throughout, stochastic processes indexed by compact sets are taken to have Borel measurable versions in the corresponding sup-norm spaces.\footnote{This is a measurability convention for process-level convergence statements: if an empirical-process construction gives only asymptotic measurability, weak convergence and probability statements are understood in the usual outer-probability sense.} For a symmetric positive semidefinite matrix \(A\), write \(\|x\|_A^2:=x'Ax\).

\subsection{Setup}
\label{sec:formal-grouped-problem}

Consider \(J\ge1\) independent groups, where group \(j\) has data \(x_j=(x_{j1},\ldots,x_{j n_j})\), fixed group-level covariates \(z_j\) when such covariates enter the pooling relation, and parameter \(\alpha_j\in\calA_j\subset\R^{d_j}\).  Stack \(\alpha=(\alpha_1',\ldots,\alpha_J')'\in\calA:=\prod_{j=1}^J\calA_j\), allowing the dimensions \(d_j\) to differ across groups.  The pooling relation may therefore be applied only to economically comparable components or transformations of them, denoted \(r_j(\alpha_j)\), leaving study-specific nuisance components unpooled; for example, \(r_j\) may select a treatment effect, an IV slope, a production elasticity, or a switching threshold from a larger study-specific parameter vector.  The hierarchical parameter \(\theta\in\Theta\subset\R^{d_\theta}\) indexes the function that creates pooling.  Let \(n_j\) denote the sample size of group \(j\).

At any law \(P_j\) at which the group-\(j\) estimand is identified, write \(\alpha_j(P_j)\) for the estimand.\footnote{The theory also allows pseudo-true estimands.  For example, under fixed-weight GMM, \(\alpha_j(P_j)\) may be any maximizer of the population objective \(a\mapsto-\|\mathbb E_{P_j}\phi_j(X_{ji},a)\|_{W_j}^2/2\).} The moment condition defining that estimand is
\begin{equation}
\label{eq:unconditional-moment-restriction}
        \mathbb{E}_{P_j}\phi_j(X_{ji},\alpha_j(P_j))=0,
\end{equation}
where \(\phi_j:\mathcal X_j\times\calA_j\to\R^{k_j}\) is chosen by the researcher. When this restriction has multiple solutions, the notation refers to the solution, selection, or pseudo-true value specified by the study-level criterion. It is important that we explicitly consider a map $P_j \mapsto \alpha_j(P_j)$, rather than only look at the moment condition under one `true' data distribution, as this is needed to study weak identification. We follow the formalization in \citet{kaji2021weakidentification} below, which nests several common definitions. The data need not be generated from a random-effects model; while a plausible random-effects description may help guide the choice of the pooling term, the study-level estimand is defined through the moment condition rather than through a random-effects law.  Let \(\hat g_j(a):=n_j^{-1}\sum_i\phi_j(X_{ji},a)\) and \(G_{j,n_j}(x_j,a):=\sqrt{n_j}\,\hat g_j(a)\).  With positive definite weight \(W_{j,n_j}\), the within-group objective function is
\begin{equation}
\label{eq:gmm-objective}
        q_{2,j,n_j}(x_j,a)
        :=
        -\frac12
        \|G_{j,n_j}(x_j,a)\|_{W_{j,n_j}}^2
        =
        -\frac{n_j}{2}\hat g_j(a)'W_{j,n_j}\hat g_j(a).
\end{equation}
This display is the fixed-weight GMM objective in maximization form.  With continuously updated GMM, the weight depends on the candidate value \(a\); for example, \(W_{j,n_j}(a)=\widehat\Sigma_{j,n_j}(a,a)^{-1}\), where \(\widehat\Sigma_{j,n_j}(a,\tilde a)\) estimates the covariance between \(G_{j,n_j}(X_j,a)\) and \(G_{j,n_j}(X_j,\tilde a)\).  The weak-study subsection uses a generic estimated weight process for the lower-level objective, though in the continuously updated case, that process is the diagonal inverse-covariance weight.  Write \(\bar q_{2,j,n_j}=n_j^{-1}q_{2,j,n_j}\), and let \(M_j\) denote the population criterion that is the uniform limit of \(\bar q_{2,j,n_j}\) for strongly identified studies.  In fixed-weight GMM, if \(\mu_j(a):=\mathbb E_{P_{j0}}\phi_j(X_{ji},a)\) and \(W_{j,n_j}\pto W_j\), then
\[
        \bar q_{2,j,n_j}(x_j,a)
        =
        -\frac12\hat g_j(a)'W_{j,n_j}\hat g_j(a),
        \qquad
        M_j(a)=-\frac12\|\mu_j(a)\|_{W_j}^2 .
\]
As discussed in earlier sections, the pooling term is a second objective function \(q_{1,j}(\alpha_j,\theta;z_j)\).  In an ordinary likelihood-based hierarchy, it would be the log density of \(\alpha_j\) given \((\theta,z_j)\). In a QBHM, however, it may instead be any continuous pooling objective, such as a smooth penalty that rewards similarity in \(r_j(\alpha_j)\).  For example, a Gaussian common-mean term for comparable components is
\[
        q_{1,j}(\alpha_j,\theta)
        =
        -\frac12(r_j(\alpha_j)-\theta)'\Omega_{0j}^{-1}(r_j(\alpha_j)-\theta),
\]
up to constants.  Set
\[
        q_{2,n}(\alpha;x):=\sum_{j=1}^J q_{2,j,n_j}(x_j,\alpha_j),
        \qquad
        q_1(\alpha,\theta;z):=\sum_{j=1}^J q_{1,j}(\alpha_j,\theta;z_j),
\]
with \(z=(z_1,\ldots,z_J)\).  The observations \(X_{ji}\) enter the group objective; the fixed covariates \(z_j\) enter the pooling relation.  For a pooling-strength parameter \(\lambda\ge0\), define
\begin{equation}
\label{eq:composite-objective}
L_{n,\lambda}(\alpha,\theta;x,z)
:=
q_{2,n}(\alpha;x)
+
\lambda q_1(\alpha,\theta;z),
\end{equation}
and let \(\pi(d\alpha\,d\theta)=\pi_\theta(d\theta)\prod_{j=1}^J\pi_j(d\alpha_j)\) be a proper product prior.  The QBHM quasi-posterior is
\begin{equation}
\label{eq:qbhm-posterior}
\begin{aligned}
\Pi^{\mathrm{full}}_{n,\lambda}(d\alpha\,d\theta\mid x,z)
&=
\frac{
\exp\!\left[L_{n,\lambda}(\alpha,\theta;x,z)\right]
\pi_\theta(d\theta)\prod_{j=1}^J\pi_j(d\alpha_j)
}
{Z_{n,\lambda}},\\
Z_{n,\lambda}
&:=
\int_{\Theta}\int_{\calA}
\exp\!\left[L_{n,\lambda}(a,\vartheta;x,z)\right]
\pi_\theta(d\vartheta)\prod_{j=1}^J\pi_j(da_j).
\end{aligned}
\end{equation}
Recall that if \(q_{2,j,n_j}\) is a group log likelihood, \(q_{1,j}\) is a conditional log density, and \(\lambda=1\), this display is ordinary hierarchical Bayes.  Otherwise, the lower level remains the econometric objective function in \eqref{eq:gmm-objective}.  At \(\lambda=0\), the marginal law of \(\alpha\) is the product of the unpooled quasi-posteriors,
\[
\begin{aligned}
\Pi^{\mathrm{full}}_{n,0}(d\alpha\,d\theta\mid x,z)
&=
\pi_\theta(d\theta)\prod_{j=1}^J\Pi^{\mathrm{un}}_{j,n}(d\alpha_j\mid x_j),\\
\Pi^{\mathrm{un}}_{j,n}(d\alpha_j\mid x_j)
&\propto
\exp[q_{2,j,n_j}(x_j,\alpha_j)]\pi_j(d\alpha_j).
\end{aligned}
\]
The parameter \(\theta\) then has no effect on the marginal estimates of the study-level estimands.  Under squared loss, the \textit{QBHM estimator} is the marginal quasi-posterior mean
\begin{equation}
\label{eq:qpost-mean}
\tilde\alpha^{\mathrm{full}}_{n,\lambda}
:=
\int \alpha\,\Pi^{\mathrm{full}}_{n,\lambda}(d\alpha\,d\theta\mid x,z),
\qquad
\tilde\alpha^{\mathrm{full}}_{j,n,\lambda}
:=
\int \alpha_j\,\Pi^{\mathrm{full}}_{n,\lambda}(d\alpha\,d\theta\mid x,z).
\end{equation}
Throughout the paper, a tilde denotes a marginal quasi-posterior mean and a circumflex denotes an objective-function maximizer.  The choice of \(\lambda\) determines the position on the bias--variance tradeoff.  Section~\ref{sec:lambda-inference} therefore recommends reporting estimates over a sensitivity path in \(\lambda\), rather than only at one value.

The first formal condition fixes the sampling environment, compact supports, continuity needed for the quasi-posterior, and base-prior regularity.

\begin{assumption}\label{ass:primitive-setup}
The following conditions hold.
\begin{enumerate}[label=(\roman*),leftmargin=2.1em]
\item The number of groups \(J\) is fixed and \(n_j\to\infty\) for every \(j\). For each \(n_j\), the \(n_j\) observations in group \(j\) are independent and identically distributed under the row law for that sample size; this row law may depend on \(n_j\) and on the local direction. The group samples \((X_{j1},\ldots,X_{jn_j})\), \(j=1,\ldots,J\), are independent across groups.
\item Each \(\calA_j\) and \(\Theta\) is compact, and \(\Lambda\subset[0,\bar\lambda]\) is compact.
\item With probability approaching one under the triangular-array law under consideration, \(a\mapsto q_{2,j,n_j}(x_j,a)\) is finite and continuous on \(\calA_j\) for each group. The pooling objective \(q_1(\alpha,\theta;z)\) is finite and continuous on \(\calA\times\Theta\).
\item The priors \(\pi_j\) and \(\pi_\theta\) are proper Borel priors with positive continuous densities on the relevant compact sets, relative to fixed dominating measures.
\end{enumerate}
\end{assumption}

In applications, compactness can be interpreted as a sufficiently wide truncation chosen before the asymptotic approximation is applied.  Assumption~\ref{ass:primitive-setup} imposes independent sampling units within groups and independent samples across groups, but it does not impose a common likelihood, a common moment dimension, a common instrument set, or common supports across groups.  A collection of studies with different instruments, controls, fixed effects, or auxiliary statistics can therefore be pooled when the components linked by \(q_1\) have a common interpretation. The notation is abstract, but the requirement is the standard GMM one: each study supplies a moment objective for its own estimand, and the hierarchy only adds a cross-study relation for selected comparable components. The compactness, continuity, and density conditions are used to make the quasi-posterior proper and to give uniform bounds in the reductions below.

\subsection{Defining strong and weak identification}
\label{sec:dqm-identification}

This section defines what it means for a reported study block to be strongly or weakly identified. Recall that the data law in group \(j\) is denoted \(P_{j0}\). In the strongly identified and point-identified case, we write \(\alpha_{j0}:=\alpha_j(P_{j0})\), so the estimand is well defined at this fixed law. Weak identification requires a slightly different formalization. We treat \(P_{j0}\) as a reference law at which the estimand may be only set identified, and study sequences of nearby laws \(P_{j,n_j,f_j}\) that converge to \(P_{j0}\) at the \(O(n_j^{-1/2})\) scale. Along these sequences, the estimand is point identified, so the local path selects a particular nearby value from the set of values consistent with \(P_{j0}\). Equivalently, when the sample size in group \(j\) is \(n_j\), we view the data as drawn from \(P_{j,n_j,f_j}\) rather than exactly from \(P_{j0}\). This formalization, originally due to \citet{kaji2021weakidentification}, covers several familiar forms of weak identification. For example in weak IV, the correlation between the instrument and the endogenous regressor converges to $0$ as the sample size increases, so the sequence \((P_{j,n_j,f_j})_{n_j}\) consists of laws under which the first stage becomes weak at the same rate as sampling noise\footnote{Rank-deficient GMM and local-to-singular moment problems fit the same description. When regular and weak coordinates appear inside the same study, Appendix~\ref{app:mixed-within-study-blocks} profiles the regular coordinates and leaves a reduced moment process with Gaussian variation and finite local drift.}. The direction \(f_j\) indexes which local departure from \(P_{j0}\) is being considered: in the weak-IV example, it determines the local first-stage strength, while in a rank-deficient GMM problem it determines the local drift that selects among otherwise observationally similar parameter values.

Since groups are independent, local sequences can be specified group by group: for \(f=(f_1,\ldots,f_J)\), \(P_{n,f}=\bigotimes_{j=1}^J P_{j,n_j,f_j}^{n_j}\). Thus local identification is defined within each group and then combined across groups by independence. The relevant notion of convergence is differentiability in quadratic mean, or DQM. After choosing a common dominating measure \(\mu_j\),\footnote{For example, Lebesgue measure or any measure dominating the local path.} let \(p_{j,n_j,f_j}\) and \(p_{j0}\) denote the densities of \(P_{j,n_j,f_j}\) and \(P_{j0}\).

\begin{definition}\label{def:dqm-path}
For group \(j\), a triangular-array local path \((P_{j,n_j,f_j})_{n_j}\) through \(P_{j0}\) is differentiable in quadratic mean with score \(f_j\in L_2(P_{j0})\) if \(\mathbb E_{P_{j0}}f_j=0\) and
\[
        \int
        \left[
        \sqrt{n_j}
        \left(\sqrt{p_{j,n_j,f_j}}-\sqrt{p_{j0}}\right)
        -\frac12 f_j\sqrt{p_{j0}}
        \right]^2
        d\mu_j
        \to0 .
\]
\end{definition}

Write \(L_2^0(P_{j0})=\{f\in L_2(P_{j0}):\mathbb E_{P_{j0}}f=0\}\). This formulation is useful because it separates the original structural model from the local weak experiment. Let \(\mathcal P_{j,\alpha}\) denote the part of the group-\(j\) model on which the reported estimand \(\alpha_j(P_j)\) is point identified. In a weak-identification sequence, the reference law \(P_{j0}\) may be an identification-failure law, so \(\alpha_j(P_{j0})\) need not be a single well-defined structural value. We therefore look at DQM local paths that approach \(P_{j0}\) while remaining in \(\mathcal P_{j,\alpha}\). A score \(f_j\) is called \emph{pertinent} for the reported estimand if at least one such DQM path induces score \(f_j\) and all such paths with the same score have the same limiting estimand value. Write \(\dot{\mathcal P}_{j0,\alpha}\) for the resulting pertinent tangent cone, and write
\[
        \alpha^W_{j0}(f_j)
        :=
        \lim_{n_j\to\infty}\alpha_j(P_{j,n_j,f_j})
\]
for the common limit when it exists\footnote{The cone language matters: multiplying the local score by a positive constant changes the speed along the same local ray but not the limiting weak value. Thus the weak estimand is allowed to depend on the direction of approach to the failure point, but not linearly on the magnitude of the perturbation.}. We can now define identification strength as follows:

\begin{definition}\label{def:dqm-identification}
A component \(b_j(P_j)\) is DQM-regular at \(P_{j0}\) if \(b_j(P_{j0})\) is well defined and there is a continuous linear map \(\dot b_j:L_2^0(P_{j0})\to\R^{d_b}\) such that, for every DQM path \((P_{j,n_j,f_j})_{n_j}\) through \(P_{j0}\) along which the component is well defined,
\[
        \sqrt{n_j}\bigl(b_j(P_{j,n_j,f_j})-b_j(P_{j0})\bigr)
        \to
        \dot b_j[f_j].
\]
We call such a component \textit{strongly identified}. A reported block \(\alpha_j(P_j)\) is \textit{weakly regular}, or \textit{weakly identified}, at \(P_{j0}\) if it is not DQM-regular at \(P_{j0}\) and there is a map \(\alpha^W_{j0}:\dot{\mathcal P}_{j0,\alpha}\to\R^{d_j}\) such that
\[
        \alpha_j(P_{j,n_j,f_j})\to \alpha^W_{j0}(f_j)
\]
for every pertinent DQM local path, with \(\alpha^W_{j0}\) continuous on the pertinent cone and homogeneous of degree zero: \(\alpha^W_{j0}(cf_j)=\alpha^W_{j0}(f_j)\) for all \(c>0\) whenever both scores are pertinent.
\end{definition}

Whether a reported estimand is strongly or weakly identified is a property of the map \(P\mapsto\alpha_j(P)\), not of a particular estimator. Strong identification means that the estimand has a first-order linear derivative with respect to local changes in the law. Weak identification means instead that the estimand has a well-defined limit along each pertinent local direction, but this limit is a nonlinear direction-only object. That direction-only dependence is the source of the non-Gaussian weak-GMM limit below.

For weak GMM, there is one additional reduction before the process limit is written down. We keep \(\calA_j\) for the researcher's original candidate set and define the reference-law population moment and zero set by
\[
        \mu_{j0}(a)=\mathbb E_{P_{j0}}\phi_j(X_{ji},a),
        \qquad
        \mathcal Z_{j0}=\{a\in\calA_j:\mu_{j0}(a)=0\}.
\]
The weak-GMM limit is indexed by a \emph{retained weak space}: a compact set \(\calW_j\subseteq\mathcal Z_{j0}\) that contains the weak limits \(\alpha^W_{j0}(f_j)\) for the local directions under consideration. The limit below is indexed by \(\calW_j\), not by all of \(\calA_j\). Values in \(\calA_j\setminus\mathcal Z_{j0}\) do not have a finite weak-process limit, since \(g_{j,n}(a)=\sqrt{n_j}\mu_{j0}(a)+O_p(1)\) under the reference law. They are therefore regularized, profiled out, or asymptotically discarded, rather than included in the weak experiment. If \(\mathcal Z_{j0}\) is lower-dimensional, the retained-space base measure introduced below is understood to be a fixed coordinate, Hausdorff, or other appropriate measure on \(\calW_j\).

\begin{assumption}\label{ass:study-level-split}
The index sets \(\mathcal J_S\) and \(\mathcal J_W\) form a fixed partition of \(\{1,\ldots,J\}\). For each \(j\in\mathcal J_S\), every reported component of the estimand map \(P_j\mapsto \alpha_j(P_j)\) is strongly identified at \(P_{j0}\). For each \(j\in\mathcal J_W\), the reported block is weakly regular at \(P_{j0}\), and the compact set \(\calW_j\subseteq\mathcal Z_{j0}\) used below is a retained weak space associated with that block.
\end{assumption}

When each reported estimand is scalar, this condition only requires classifying each study as strong or weak. The condition is imposed only for the main fixed-\(J\) theory; Appendix~\ref{app:mixed-within-study-blocks} gives the profiled extension for a study containing both strongly and weakly identified components.

\subsection{Strong studies}
\label{sec:strong-components}

We now give primitive conditions for the strong-study results. The following condition gives a locally quadratic population objective and enough sample regularity for the corresponding lower-level quasi-posterior to concentrate at the usual \(O(n_j^{-1/2})\) rate.

\begin{assumption}
\label{ass:primitive-strong}
For each \(j\in\mathcal J_S\), the following conditions hold. All stochastic convergence statements are under the triangular-array law used for group \(j\) in the asymptotic statement under consideration.
\begin{enumerate}[label=(\roman*),leftmargin=2.1em]
\item The point \(\alpha_{j0}\) is the unique interior maximizer of \(M_j\) on \(\calA_j\). The function \(M_j\) is continuous on \(\calA_j\) and twice continuously differentiable on an open neighborhood \(\mathcal N_j\) of \(\alpha_{j0}\).

\item Write \(H_j(a):=-\nabla_{\alpha\alpha}^2M_j(a)\) on \(\mathcal N_j\), and write \(H_j:=H_j(\alpha_{j0})\). The matrix \(H_j\) is positive definite.

\item The normalized sample objective converges uniformly to its population counterpart, and its first two derivatives converge locally near \(\alpha_{j0}\). Specifically,
\[
        \sup_{a\in\calA_j}
        |\bar q_{2,j,n_j}(X_j,a)-M_j(a)|\pto0 .
\]
There is \(\delta_j>0\) such that
\[
        \mathcal U_j(\delta_j)
        =
        \{a\in\calA_j:\|a-\alpha_{j0}\|\le\delta_j\}
\]
is contained in \(\mathcal N_j\), and, with probability approaching one, \(\bar q_{2,j,n_j}(X_j,\cdot)\) is twice continuously differentiable on a neighborhood of \(\mathcal U_j(\delta_j)\). For \(r=1,2\),
\[
        \sup_{a\in\mathcal U_j(\delta_j)}
        \left\|
        \nabla_\alpha^r\bar q_{2,j,n_j}(X_j,a)
        -
        \nabla_\alpha^r M_j(a)
        \right\|
        \pto0.
\]
The sample score at the population value satisfies
\[
        \left\|
        n_j^{-1/2}\nabla_\alpha q_{2,j,n_j}(X_j,\alpha_{j0})
        \right\|
        =O_p(1).
\]

\item Writing \(p_j\) for the prior density of \(\alpha_j\), \(p_j\) is continuously differentiable on \(\mathcal U_j(\delta_j)\). For every fixed \(z\), the pooling objective \(q_1\), viewed as a function of \(\alpha_j\) with the other group coordinates and \(\theta\) held fixed, has first and second partial derivatives on a neighborhood of \(\mathcal U_j(\delta_j)\); these derivatives are jointly continuous in \((\alpha,\theta)\) on \(\mathcal U_j(\delta_j)\times\prod_{\ell\ne j}\calA_\ell\times\Theta\).
\end{enumerate}
\end{assumption}

These high-level objective conditions for smooth finite-dimensional GMM impose the local criterion behavior needed for the quasi-posterior reduction once a study is classified as strong, without requiring the objective to be a likelihood. Assumption~\ref{ass:primitive-strong} is checked by verifying that the population criterion has a single interior maximizer, has nonsingular curvature there, and is uniformly approximated by the sample criterion and its first two derivatives near that point under the triangular-array law being studied. For a smooth finite-dimensional GMM criterion, this follows from uniform laws of large numbers for the moments and their derivatives, convergence of the weighting matrix when the criterion uses one, and a central limit theorem for the score at \(\alpha_{j0}\). The Hessian condition is often implied by full rank of the moment Jacobian under correct specification; under misspecification it can be checked directly from the population criterion. Under Assumption~\ref{ass:primitive-strong}, with probability approaching one there exists \(\widehat\alpha_j^{\mathrm{un}}\) such that
\[
        \widehat\alpha_j^{\mathrm{un}}
        \in
        \arg\max_{a\in\mathcal U_j(\delta_j)}
        q_{2,j,n_j}(X_j,a),
        \qquad
        \sqrt{n_j}\bigl(\widehat\alpha_j^{\mathrm{un}}-\alpha_{j0}\bigr)
        =O_p(1).
\]
Here \(\widehat\alpha_j^{\mathrm{un}}\) is the standard GMM estimator, and Theorem~\ref{thm:strong-equivalence} is stated as a quasi-posterior law result in the local coordinate centered at this maximizer. In the result below, fix \(j\in\mathcal J_S\), and let \(\widehat\alpha_j^{\mathrm{un}}\) be a measurable local maximizer over \(\mathcal U_j(\delta_j)\) satisfying \(\sqrt{n_j}(\widehat\alpha_j^{\mathrm{un}}-\alpha_{j0})=O_p(1)\). Let \(P^S_{j,n,\lambda}\) be the marginal law under the full QBHM quasi-posterior of \(h_j=\sqrt{n_j}(\alpha_j-\widehat\alpha_j^{\mathrm{un}})\), and let \(P^{S,0}_{j,n}\) be the law of the same \(h_j\) under the unpooled quasi-posterior \(\Pi^{\mathrm{un}}_{j,n}\).

\begin{theorem}\label{thm:strong-equivalence}
Suppose Assumptions~\ref{ass:primitive-setup}, \ref{ass:study-level-split}, and \ref{ass:primitive-strong} hold. Then
\[
        \sup_{\lambda\in\Lambda}
        d_{\mathrm{BL}}\!\left(P^S_{j,n,\lambda},N(0,H_j^{-1})\right)=o_p(1),
        \qquad
        d_{\mathrm{BL}}\!\left(P^{S,0}_{j,n},N(0,H_j^{-1})\right)=o_p(1),
\]
where \(d_{\mathrm{BL}}\) is the bounded-Lipschitz metric. Furthermore
\[
        \sup_{\lambda\in\Lambda}
        \left\|
        \int h_j\,P^S_{j,n,\lambda}(dh_j)
        \right\|
        +
        \left\|
        \int h_j\,P^{S,0}_{j,n}(dh_j)
        \right\|
        =o_p(1).
\]
Consequently,
\[
        \sup_{\lambda\in\Lambda}
        \left\|
        \sqrt{n_j}\bigl(\tilde\alpha^{\mathrm{full}}_{j,n,\lambda}-\alpha_{j0}\bigr)
        -
        \sqrt{n_j}\bigl(\widehat\alpha^{\mathrm{un}}_j-\alpha_{j0}\bigr)
        \right\|
        =o_p(1).
\]
\end{theorem}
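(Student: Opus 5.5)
The plan is to reduce the pooled marginal for \(\alpha_j\) to a reweighting of the unpooled quasi-posterior by a bounded, slowly varying factor, and then apply a Bernstein--von Mises argument in the style of \citet{ChernozhukovHong2003}. Integrate out \(\theta\) and \(\alpha_{-j}\). The marginal quasi-posterior density of \(\alpha_j\) is then proportional to
\[
\exp[q_{2,j,n_j}(x_j,\alpha_j)]\,p_j(\alpha_j)\,K_{n,\lambda}(\alpha_j),
\]
where
\[
K_{n,\lambda}(a):=\int_\Theta\int_{\prod_{\ell\neq j}\calA_\ell}\exp\Bigl[\textstyle\sum_{\ell\ne j}q_{2,\ell,n_\ell}(x_\ell,\alpha_\ell)+\lambda q_1(a,\alpha_{-j},\theta;z)\Bigr]\pi_\theta(d\theta)\prod_{\ell\ne j}\pi_\ell(d\alpha_\ell).
\]
Independence across groups and the product prior allow this factorization. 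The other groups' data enter \(K_{n,\lambda}\) only as a fixed weight on \((\alpha_{-j},\theta)\), whatever their identification status.

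The key structural point is the following. Since \(q_1\) is continuous on the compact set \(\calA\times\Theta\) and \(\Lambda\subset[0,\bar\lambda]\), the bound \(|\lambda q_1|\le \bar\lambda C\) holds. Hence the ratio \(K_{n,\lambda}(a)/K_{n,\lambda}(a')\) lies in \([e^{-2\bar\lambda C},e^{2\bar\lambda C}]\) uniformly in \(a,a',\lambda\) and \(n\). Moreover, \(\log K_{n,\lambda}\) is differentiable on \(\mathcal U_j(\delta_j)\). Its gradient equals \(\lambda\) times the tilted average of \(\nabla_{\alpha_j}q_1\), which is bounded by \(\bar\lambda\sup\|\nabla_{\alpha_j}q_1\|<\infty\) uniformly in \(n\) and \(\lambda\) by Assumption~\ref{ass:primitive-strong}(iv). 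The effective prior \(\tilde p_{n,\lambda}:=p_jK_{n,\lambda}\) is therefore a random prior density with two properties:
\begin{itemize}[leftmargin=1.5em]
\item it is bounded above and below on \(\calA_j\) uniformly in \(n\) and \(\lambda\), after normalizing by its value at \(\widehat\alpha_j^{\mathrm{un}}\);
\item it is uniformly Lipschitz near \(\alpha_{j0}\).
\end{itemize}
These are exactly the properties of the prior used in the standard Laplace/BvM argument. The \(\lambda=0\) statement is the special case \(K\equiv\) constant.

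I would then run the usual three-region argument for the law of \(h=\sqrt{n_j}(\alpha_j-\widehat\alpha_j^{\mathrm{un}})\). Because the estimand in the statement is a mean, I would prove convergence in total variation with a first-moment weight, \(\int(1+\|h\|)\,|\,\text{density}-\phi_{H_j^{-1}}|\,dh=o_p(1)\), uniformly in \(\lambda\). That weighted convergence implies both the \(d_{\mathrm{BL}}\) claim and the mean claim.

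1. \textbf{Outer region} \(\|\alpha_j-\alpha_{j0}\|>\delta\). Uniform convergence of \(\bar q_{2,j,n_j}\) to \(M_j\) with a unique maximizer gives \(q_{2}(a)-q_2(\widehat\alpha^{\mathrm{un}}_j)\le -n_j\eta\) with probability approaching one. The lower level's mass there is therefore \(O(n_j^{d_j/2}e^{-n_j\eta})\) relative to the normalizer, even after multiplying by \(\|h\|\le C\sqrt{n_j}\). The bounded factor \(K\) does not affect this.

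2. \textbf{Intermediate region} \(M/\sqrt{n_j}\le\|\alpha_j-\widehat\alpha_j^{\mathrm{un}}\|\le\delta\). A second-order Taylor expansion of \(q_2\) around the interior local maximizer \(\widehat\alpha^{\mathrm{un}}_j\), whose gradient is zero, together with local uniform Hessian convergence to \(-H_j\) and positive definiteness, gives \(q_2(a)-q_2(\widehat\alpha^{\mathrm{un}}_j)\le -c\,n_j\|a-\widehat\alpha^{\mathrm{un}}_j\|^2\) once \(\delta\) is small. This yields a Gaussian-tail bound that vanishes as \(M\to\infty\), including the first-moment weight.

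3. \textbf{Inner region} \(\|h\|\le M\). Here \(q_2(\widehat\alpha^{\mathrm{un}}_j+h/\sqrt{n_j})-q_2(\widehat\alpha^{\mathrm{un}}_j)=-\tfrac12h'H_jh+o_p(1)\) uniformly. The Lipschitz bound gives \(\tilde p_{n,\lambda}(\widehat\alpha^{\mathrm{un}}_j+h/\sqrt{n_j})/\tilde p_{n,\lambda}(\widehat\alpha^{\mathrm{un}}_j)=1+O(M/\sqrt{n_j})\) uniformly in \(\lambda\).

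Combining the three regions by letting \(M\to\infty\) slowly gives the uniform weighted-TV result. The centered means are \(o_p(1)\) because \(N(0,H_j^{-1})\) has mean zero. The final display then follows by writing
\[
\sqrt{n_j}(\tilde\alpha^{\mathrm{full}}_{j,n,\lambda}-\alpha_{j0})=\sqrt{n_j}(\widehat\alpha^{\mathrm{un}}_j-\alpha_{j0})+\int h\,P^S_{j,n,\lambda}(dh).
\]

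The main obstacle is uniformity: over \(\lambda\), and over the randomness of \(K_{n,\lambda}\), which depends on the other groups' data and possibly on weakly identified groups. I would handle it with the deterministic bounds above, which hold for every realization of \(x_{-j}\) and every \(\lambda\in\Lambda\). The remaining random quantities, namely the Taylor remainder and the outer-region gap \(\eta\), involve only group \(j\)'s data, so the \(o_p(1)\) terms are automatically uniform in \(\lambda\). A secondary issue is that the \(\sqrt{n_j}\) weight on the outer region requires the exponential bound rather than mere vanishing mass; compactness of \(\calA_j\) supplies this.
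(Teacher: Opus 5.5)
Your proposal is correct. It uses the same ingredients as the paper: a quadratic expansion of $q_{2,j,n_j}$ around the interior unpooled maximizer with zero gradient, prior and hierarchy flatness on the $n_j^{-1/2}$ scale from bounded $\alpha_j$-derivatives of $\log p_j$ and $q_1$, a strict-concavity Gaussian tail on the intermediate region, and a global separation gap whose $e^{-cn_j}$ decay beats the $n_j^{(d_j+1)/2}$ volume-and-moment factor.

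The difference is the order of integration. The paper (Lemmas~\ref{lem:strong-primitive} and~\ref{lem:local-moment-ratio}) conditions on all remaining coordinates $\omega$, including $\theta$. It shows that each conditional law of $h_j$ is close to $N(0,\widehat H_{j,n}^{-1})$, with conditional mean $o_p(1)$, uniformly in $(\lambda,\omega)$, and then averages over the conditional law of $\omega$. You instead integrate $(\alpha_{-j},\theta)$ out first. The hierarchy and all other studies then collapse into one tilt $K_{n,\lambda}$, whose ratio bound $e^{\pm 2\bar\lambda C}$ and log-Lipschitz constant $\bar\lambda\sup\|\nabla_{\alpha_j}q_1\|$ are nonrandom. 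Uniformity in $\lambda$ and in the other studies' data (weak or strong) is then immediate. The argument becomes a single-study Bernstein--von Mises proof with a well-behaved random prior, and your first-moment-weighted $L^1$ target is somewhat stronger than the theorem needs.

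What the paper's conditional route buys is reuse. The same uniform-in-$\omega$ conditional expansion drives the Laplace step in Proposition~\ref{prop:weak-marginal-main}, where strong studies are integrated out given $(\alpha_W,\theta)$. It also drives the mixed-block extension (Lemma~\ref{lem:mixed-strong-primitive}), where the Laplace center $\widehat\beta_j(\omega)$ moves with the retained weak coordinate, so marginalizing first would not produce a single quadratic.

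One cosmetic fix: your outer region is centered at $\alpha_{j0}$ but your intermediate region at $\widehat\alpha^{\mathrm{un}}_j$. Take the intermediate radius slightly larger (e.g.\ $2\delta$, still inside the strict-concavity neighborhood) so that the three regions cover $\calA_j$.
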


The final result says that the marginal QBHM quasi-posterior mean is first-order equivalent to the unpooled local maximizer. More specifically, it implies that the QBHM estimator $\tilde\alpha^{\mathrm{full}}_{j,n,\lambda}$ and the usual GMM estimator $\widehat\alpha^{\mathrm{un}}_j$ have the same asymptotic distribution, no matter the value of $\lambda$ chosen. This parallels the main LTE result of \citet{ChernozhukovHong2003} - the difference here is that we focus on the quasi-posterior mean, rather than Bayes estimates defined by a more general loss function, and we show uniformity over the $\lambda$ term. For fixed pooling strengths, the theorem shows that, in a strongly identified study, the QBHM quasi-posterior mean differs from the unpooled local GMM maximizer by \(o_p(n_j^{-1/2})\), uniformly in \(\lambda\). Thus any usual first-order limit law for the unpooled maximizer carries over to the QBHM mean. In particular, uniform integrability implies that the two estimators have the same asymptotic MSE. The estimators can still differ in finite samples, and the prior can matter at that scale, but fixed pooling has no first-order effect under strong identification. Pooling can enter at first order only in the weak-study limit considered next.

\subsection{Weak studies}
\label{sec:weak-components}

For the retained weak spaces \(\calW_j\) introduced in Section~\ref{sec:dqm-identification}, set
\[
        \calW:=\prod_{j\in\mathcal J_W}\calW_j,
        \qquad
        \alpha_W=(\alpha_j)_{j\in\mathcal J_W}\in\calW .
\]
For weak study \(j\), define the local moment process
\[
        g_{j,n}(\alpha_j)
        :=
        G_{j,n_j}(X_j,\alpha_j)
        =
        n_j^{-1/2}\sum_{i=1}^{n_j}\phi_j(X_{ji},\alpha_j),
\]
and stack these processes as \(g_n(\alpha_W):=\bigl(g_{j,n}(\alpha_j)'\bigr)_{j\in\mathcal J_W}'\). Let \(k_W:=\sum_{j\in\mathcal J_W}k_j\). Under the primitive conditions below, \(g_n\) converges on \(\calW\) to \(g=m+\mathbb G\), and the weak-study objective converges to the grouped weak-GMM criterion \(Q\). In this notation, \(\alpha_W\) is a candidate retained weak value, \(\alpha_W^*\) is the fixed weak value selected by the local path, \(g\) is the random weak-GMM moment process, and \(m\) is its deterministic drift in the fixed weak-GMM state. Local coordinates \(h\) are used only after centering at \(\alpha_W^*\), so \(h=\alpha_W-\alpha_W^*\); when examples write a generic candidate value \(a\), it is translated into \(h\) by the same identity. The testing subsection later defines \(h_{a_0}\) as a residual process, not as this local coordinate. At this scale the hierarchy enters through the prior path \(\pi_\lambda\), so changing \(\lambda\) changes the limiting weak quasi-posterior mean even though $\phi_j$ is unchanged. The weak-study part of the QBHM lower level is
\begin{equation}
\label{eq:weak-lower-level-link}
        \sum_{j\in\mathcal J_W} q_{2,j,n_j}(X_j,\alpha_j)
        =
        -\frac12 Q_n(\alpha_W)+c_n,
        \qquad
        Q_n(\alpha_W)
        =
        g_n(\alpha_W)'\widehat W_n(\alpha_W)g_n(\alpha_W).
\end{equation}
The additive term \(c_n\) may depend on the data but not on \(\alpha_W\), and therefore cancels from normalized quasi-posteriors.  The display is only a reduction of the original weak-study criteria to the retained weak coordinates; it does not replace the lower-level objective by a different one.

The same weight matrix appears in the original lower-level criterion and in \(Q_n\).  With fixed weights, \(\widehat W_n\) is the fixed, or converging, block-diagonal GMM weight.  With continuously updated GMM, it is the estimated weight evaluated at the candidate value \(\alpha_W\).  The off-diagonal covariance kernel used for feasible weak-identification-robust tests is introduced in Section~\ref{sec:testing}; it is not part of the candidate-value objective in this subsection.  Mixed within-study blocks require a profiled weak experiment and are treated in Appendix~\ref{app:mixed-within-study-blocks}.

The assumption below fixes the local law, the retained weak spaces, the empirical-process limit, and the weight convergence.  Let \(P^W_{n,f}:=\bigotimes_{j\in\mathcal J_W}P_{j,n_j,f_j}^{n_j}\) be the product local law for the weak studies.  For the DQM score \(f_j\), define \(m_j(\alpha_j):=\mathbb E_{P_{j0}}[f_j(X_{ji})\phi_j(X_{ji},\alpha_j)]\) and \(m(\alpha_W):=(m_j(\alpha_j)')_{j\in\mathcal J_W}'\).  Let \(W:\calW\to\mathbb R^{k_W\times k_W}\) denote the nonrandom limit of the estimated GMM weight process.

\begin{assumption}\label{ass:primitive-weak}
The following conditions hold for the weak studies and the local direction \(f=(f_j)_{j\in\mathcal J_W}\).
\begin{enumerate}[label=(\roman*),leftmargin=2.1em]
\item For each \(j\in\mathcal J_W\), \((P_{j,n_j,f_j})_{n_j}\) is a DQM local path through \(P_{j0}\) with score \(f_j\in L_2^0(P_{j0})\). The score is pertinent, \(f_j\in\dot{\mathcal P}_{j0,\alpha}\), and the path remains in the identified submodel \(\mathcal P_{j,\alpha}\) for all large \(n_j\). Write \(\alpha_j^*:=\alpha^W_{j0}(f_j)\) for the associated weak value.
\item For each \(j\in\mathcal J_W\), \(\calW_j\subseteq\mathcal Z_{j0}\) is the retained weak space associated with the local sequence under consideration. The path-specific weak value satisfies \(\alpha_j^*\in\calW_j\) and \(m_j(\alpha_j^*)=0\).
\item For each \(j\in\mathcal J_W\), the vector-valued class \(\{\phi_j(\cdot,\alpha_j):\alpha_j\in\calW_j\}\) is \(P_{j0}\)-Donsker as a tight Borel element of \(\ell^\infty(\calW_j,\mathbb R^{k_j})\), has a square-integrable envelope, and is continuous as a map \(\alpha_j\mapsto\phi_j(\cdot,\alpha_j)\) into \(L_2(P_{j0})\). The corresponding reference-law Gaussian limit has a version with continuous sample paths.
\item Under \(P^W_{n,f}\), the weight process satisfies \(\sup_{\alpha_W\in\calW}\|\widehat W_n(\alpha_W)-W(\alpha_W)\|\pto0\). The limit weight \(W\) is continuous on \(\calW\), symmetric, and uniformly positive definite.
\end{enumerate}
\end{assumption}

Under Assumption~\ref{ass:primitive-weak}, \(\alpha_W^*:=(\alpha_j^*)_{j\in\mathcal J_W}\) belongs to \(\calW\) and satisfies \(m(\alpha_W^*)=0\), the state restriction used later for pointwise MSE and weak-identification-robust testing.  Clause \textup{(i)} selects the local sequence and therefore the drift \(m\), clause \textup{(ii)} records the retained-space restriction, and clauses \textup{(iii)}--\textup{(iv)} give the uniform central limit theorem and weight convergence needed for the weak quasi-posterior limit.  For continuously updated GMM, clause \textup{(iv)} is typically verified by uniform convergence and nonsingularity of the estimated covariance matrix. Under this condition, we get the following.

\begin{proposition}\label{prop:weak-process-main}
Suppose Assumptions~\ref{ass:primitive-setup}, \ref{ass:study-level-split}, and \ref{ass:primitive-weak} hold. Then, under \(P^W_{n,f}\), \((g_n,\widehat W_n)\dto(g,W)\) in \(\ell^\infty(\calW,\mathbb R^{k_W})\times\ell^\infty(\calW,\mathbb R^{k_W\times k_W})\), where \(g(\alpha_W)=m(\alpha_W)+\mathbb G(\alpha_W)\) and \(\mathbb G\) is mean-zero Gaussian with continuous sample paths. Consequently, \(Q_n\dto Q\) in \(\ell^\infty(\calW)\), where
\[
        Q(\alpha_W)=g(\alpha_W)'W(\alpha_W)g(\alpha_W).
\]
\end{proposition}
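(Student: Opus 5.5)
The plan is to work at the reference law and then transfer to the local law through Le Cam's third lemma. The proof has four steps; the third is the main obstacle.

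\emph{Step 1: reference-law limit, study by study.} For \(j\in\mathcal J_W\), write
\[
g_{j,n}(\alpha_j)=\mathbb G_{j,n}(\alpha_j)+\sqrt{n_j}\,\mu_{j0}(\alpha_j),
\]
where \(\mathbb G_{j,n}\) is the centered empirical process under \(P_{j0}\). Because \(\calW_j\subseteq\mathcal Z_{j0}\), the drift \(\sqrt{n_j}\mu_{j0}\) vanishes identically on \(\calW_j\). So under \(P_{j0}^{n_j}\) we have \(g_{j,n}=\mathbb G_{j,n}\). By the Donsker part of Assumption~\ref{ass:primitive-weak}(iii), \(\mathbb G_{j,n}\dto\mathbb G_j\) in \(\ell^\infty(\calW_j,\R^{k_j})\), with \(\mathbb G_j\) tight, mean-zero Gaussian, and continuous in paths.

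\emph{Step 2: joint convergence with the log-likelihood ratio.} DQM (Definition~\ref{def:dqm-path}, via Assumption~\ref{ass:primitive-weak}(i)) gives local asymptotic normality:
\[
\log\frac{dP^{n_j}_{j,n_j,f_j}}{dP^{n_j}_{j0}}
= n_j^{-1/2}\sum_i f_j(X_{ji})-\tfrac12\,\mathbb E_{P_{j0}} f_j^2+o_{P_{j0}}(1).
\]
Marginal tightness plus finite-dimensional convergence (multivariate CLT for \(\phi_j(\cdot,a_1),\dots,\phi_j(\cdot,a_m)\) together with \(f_j\)) then give joint convergence of \((\mathbb G_{j,n},\log\text{LR}_{j,n})\) to a Gaussian pair \((\mathbb G_j,N_j)\). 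Its cross-covariance at \(\alpha_j\) is \(\mathbb E_{P_{j0}}[f_j\phi_j(\cdot,\alpha_j)]=m_j(\alpha_j)\).

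\emph{Step 3: transfer to the local law (main obstacle).} Le Cam's third lemma, in its process form (e.g.\ van der Vaart--Wellner Thm.~3.10.7), gives \(g_{j,n}\dto m_j+\mathbb G_j\) under \(P^{n_j}_{j,n_j,f_j}\). The delicate point is applying the third lemma in \(\ell^\infty\) rather than only to finite-dimensional marginals. I would handle it as follows:
\begin{itemize}
\item obtain tightness under the local law from contiguity, since asymptotic equicontinuity in probability under \(P_{j0}\) carries over to contiguous laws;
\item identify the limit through finite-dimensional distributions, using the finite-dimensional third lemma;
\item note that continuity of \(m_j\) follows from \(L_2\)-continuity of \(\alpha_j\mapsto\phi_j(\cdot,\alpha_j)\) and Cauchy--Schwarz, so the limit has continuous paths.
\end{itemize}
Independence across groups (Assumption~\ref{ass:primitive-setup}(i)) makes the laws products, so the stacked \(g_n\dto g=m+\mathbb G\) with independent Gaussian blocks.

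\emph{Step 4: adjoin the weight and apply the continuous mapping theorem.} Assumption~\ref{ass:primitive-weak}(iv) gives \(\widehat W_n\to W\) uniformly in probability, with \(W\) deterministic. Slutsky's lemma in the product space then yields \((g_n,\widehat W_n)\dto(g,W)\). Finally, the map \((g,W)\mapsto g'Wg\) from \(\ell^\infty(\calW,\R^{k_W})\times\ell^\infty(\calW,\R^{k_W\times k_W})\) to \(\ell^\infty(\calW)\) is continuous, because it is pointwise bilinear and the sup-norms involved are bounded. The continuous mapping theorem therefore gives \(Q_n\dto Q\).
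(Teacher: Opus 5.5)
Your proposal is correct and follows essentially the same route as the paper. The paper likewise adds the score \(f_j\) to the reference-law Donsker class, derives the LAN expansion from DQM, and invokes Le Cam's first lemma (for contiguity) and third lemma to shift the Gaussian limit by \(m_j\); it then stacks across independent groups with a continuous restriction-and-stacking map and finishes with Slutsky and the continuous mapping theorem for \(Q_n\). The one step you leave implicit is that contiguity follows from Le Cam's first lemma applied to the LAN limit, which the paper states explicitly.
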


Proposition~\ref{prop:weak-process-main} supplies the random criterion for the reduced weak quasi-posterior.  Because \(Q\) is a process on the retained space \(\calW\), rather than a local quadratic expansion around a single point, the limit is not a single point, like it would be in the strongly-identified case. Since strong studies are consistently estimated by the QBHM, the weak-study limit includes the strong studies, evaluated at their true values $\alpha_{j0}$.  Write \(\calA_S:=\prod_{j\in\mathcal J_S}\calA_j\) and \(\alpha_{S0}:=(\alpha_{j0})_{j\in\mathcal J_S}\). For \(a_S=(a_j)_{j\in\mathcal J_S}\in\calA_S\), set \(q_{1,S}(a_S,\theta;z_S):=\sum_{j\in\mathcal J_S}q_{1,j}(a_j,\theta;z_j)\) and \(q_{1,W}(\alpha_W,\theta;z_W):=\sum_{j\in\mathcal J_W}q_{1,j}(\alpha_j,\theta;z_j)\), with an empty sum interpreted as zero.  The vectors \(z_S\) and \(z_W\) collect the corresponding fixed covariates.

For each weak study, let \(\mu_j^W\) be a finite retained-space dominating measure with full support on \(\calW_j\), obtained by restricting or reparametrizing the original base measure when \(\calW_j\) is full-dimensional and by using the chosen coordinate, Hausdorff, or other retained-space dominating measure when \(\calW_j\) is lower-dimensional. Let \(p_j^W\) denote the retained-space baseline prior density induced by \(\pi_j\) on \(\calW_j\) in the same restriction, reparametrization, or coordinate chart, and set \(\mu_W(d\alpha_W):=\prod_{j\in\mathcal J_W}\mu_j^W(d\alpha_j)\) and \(p_W^0(\alpha_W):=\prod_{j\in\mathcal J_W}p_j^W(\alpha_j)\). Thus \(p_W^0\mu_W\) is the unpooled baseline prior measure on the retained weak coordinates. The substantive hierarchy enters through
\[
        r_\lambda(\alpha_W)
        =
        \int_\Theta
        \exp\!\left(
        \lambda\left[q_{1,W}(\alpha_W,\theta;z_W)+q_{1,S}(\alpha_{S0},\theta;z_S)\right]
        \right)
        \pi_\theta(d\theta).
\]
The corresponding hierarchy-induced prior path on weak values is
\[
        \pi_\lambda(d\alpha_W)
        =
        p^\alpha_\lambda(\alpha_W)\mu_W(d\alpha_W),
        \qquad
        p^\alpha_\lambda(\alpha_W)
        =
        \frac{p_W^0(\alpha_W)r_\lambda(\alpha_W)}
        {\int_{\calW}p_W^0(u)r_\lambda(u)\mu_W(du)} .
\]
The notation separates the prior measure from its density: \(\pi_\lambda\) is the probability measure on weak values, while \(p^\alpha_\lambda\) is its density in \(\alpha_W\)-coordinates. They are not different priors. Both are induced by the lower-level baseline prior, the upper-level prior \(\pi_\theta\), the pooling objective \(q_1\), and the pooling strength \(\lambda\). In local coordinates \(h=\alpha_W-\alpha_W^*\), we write \(p_\lambda(h)=p^\alpha_\lambda(\alpha_W^*+h)\) with respect to the translated retained-space measure. When \(q_1\) rewards similarity across studies, increasing \(\lambda\) shifts prior mass toward weak values that better satisfy the pooling relation with the other weak studies and the strong-study limiting values.  In empirical terms, precisely estimated sites can affect the limiting weights placed on weakly estimated site-level treatment effects, IV slopes, production elasticities, or threshold values, without changing the weak-study moment criterion.

The finite-sample version replaces the limiting strong-study values by the unpooled strong-study estimates. Let \(\widehat\alpha^{\mathrm{un}}_{S,n}:=(\widehat\alpha_j^{\mathrm{un}})_{j\in\mathcal J_S}\), and set
\[
\begin{aligned}
        r_{n,\lambda}(\alpha_W)
        &=
        \int_\Theta
        \exp\!\left(
        \lambda\left[q_{1,W}(\alpha_W,\theta;z_W)
        +q_{1,S}(\widehat\alpha^{\mathrm{un}}_{S,n},\theta;z_S)\right]
        \right)
        \pi_\theta(d\theta),\\
        \pi_{n,\lambda}^{\mathrm{plug}}(d\alpha_W)
        &\propto
        p_W^0(\alpha_W)r_{n,\lambda}(\alpha_W)\mu_W(d\alpha_W).
\end{aligned}
\]

Under this definition, we have the following.

\begin{proposition}\label{prop:primitive-prior-main}
Suppose Assumptions~\ref{ass:primitive-setup}, \ref{ass:study-level-split}, and \ref{ass:primitive-strong} hold. Then the induced prior path \((\pi_\lambda)_{\lambda\in\Lambda}\) is proper, and for every \(f\in C_b(\calW)\), the map \(\lambda\mapsto\int f(\alpha_W)\pi_\lambda(d\alpha_W)\) is continuous on \(\Lambda\). Moreover,
\[
        \sup_{\lambda\in\Lambda}
        \|\pi_{n,\lambda}^{\mathrm{plug}}-\pi_\lambda\|_{TV}=o_p(1).
\]
Here and below, \(\|\cdot\|_{TV}\) denotes total variation distance.
\end{proposition}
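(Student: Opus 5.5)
The plan is to work directly with the unnormalized densities $u_\lambda(\alpha_W):=p_W^0(\alpha_W)r_\lambda(\alpha_W)$ and $u_{n,\lambda}(\alpha_W):=p_W^0(\alpha_W)r_{n,\lambda}(\alpha_W)$ on the compact retained space $\calW$. Each claim then follows from uniform bounds and uniform continuity on the compact set $\calW\times\calA_S\times\Theta\times\Lambda$.

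\emph{Properness.} By Assumption~\ref{ass:primitive-setup}(ii)--(iii), $q_1$ is finite and continuous on the compact set $\calA\times\Theta$. Hence it is bounded, say $|q_1|\le B$; the same bound holds for $q_{1,W}+q_{1,S}$ evaluated at any $a_S\in\calA_S$, since this sum is just $q_1$ at $(\alpha_W,a_S,\theta)$. For $\lambda\in\Lambda\subset[0,\bar\lambda]$ this gives
\[
e^{-\bar\lambda B}\le r_\lambda(\alpha_W)\le e^{\bar\lambda B}
\]
because $\pi_\theta$ is a probability measure, and the same two-sided bound holds for $r_{n,\lambda}$. Since $p_W^0$ is a positive density for the finite measure $\mu_W$ (Assumption~\ref{ass:primitive-setup}(iv), transferred to the retained coordinates), the normalizer $Z_\lambda:=\int_\calW u_\lambda\,d\mu_W$ lies in $[e^{-\bar\lambda B}c_0,\,e^{\bar\lambda B}c_0]$ with $c_0:=\int p_W^0\,d\mu_W\in(0,\infty)$. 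So $\pi_\lambda$ is a proper probability measure, uniformly in $\lambda$, and so is $\pi^{\mathrm{plug}}_{n,\lambda}$.

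\emph{Continuity in $\lambda$.} The integrand $(\lambda,\alpha_W,\theta)\mapsto\exp(\lambda[q_{1,W}+q_{1,S}(\alpha_{S0},\cdot)])$ is jointly continuous and bounded by $e^{\bar\lambda B}$. Dominated convergence then gives that $(\lambda,\alpha_W)\mapsto r_\lambda(\alpha_W)$ is continuous. For $f\in C_b(\calW)$, a second application of dominated convergence, with dominating function $\|f\|_\infty p_W^0e^{\bar\lambda B}\in L_1(\mu_W)$, shows that both $\lambda\mapsto\int f u_\lambda\,d\mu_W$ and $\lambda\mapsto Z_\lambda$ are continuous. Since $Z_\lambda$ is bounded away from zero, the ratio $\int f\,d\pi_\lambda$ is continuous on $\Lambda$.

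\emph{Plug-in approximation.} First I establish consistency of the unpooled strong-study estimates: $\widehat\alpha^{\mathrm{un}}_{S,n}\pto\alpha_{S0}$. This follows from $\sqrt{n_j}(\widehat\alpha_j^{\mathrm{un}}-\alpha_{j0})=O_p(1)$ under Assumption~\ref{ass:primitive-strong}, which comes from the standard consistency argument using uniform convergence and a unique interior maximizer.

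Next, $q_{1,S}$ is uniformly continuous on the compact set $\calA_S\times\Theta$. Using $|e^x-e^y|\le e^{\max(x,y)}|x-y|$, this gives
\[
\sup_{\lambda,\alpha_W}|r_{n,\lambda}(\alpha_W)-r_\lambda(\alpha_W)|\le \bar\lambda e^{\bar\lambda B}\,\omega\bigl(\|\widehat\alpha^{\mathrm{un}}_{S,n}-\alpha_{S0}\|\bigr),
\]
where $\omega$ is the modulus of continuity of $q_{1,S}$ in $a_S$, uniform in $\theta$. The right-hand side is $o_p(1)$.

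To pass to total variation, write
\[
\|\pi^{\mathrm{plug}}_{n,\lambda}-\pi_\lambda\|_{TV}\le \int p_W^0\left|\frac{r_{n,\lambda}}{Z_{n,\lambda}}-\frac{r_\lambda}{Z_\lambda}\right|d\mu_W .
\]
Split the integrand into the numerator error $|r_{n,\lambda}-r_\lambda|/Z_{n,\lambda}$ and the normalizer error $r_\lambda|Z_{n,\lambda}^{-1}-Z_\lambda^{-1}|$. Both are bounded by a constant times $\sup_{\lambda,\alpha_W}|r_{n,\lambda}-r_\lambda|$, using $|Z_{n,\lambda}-Z_\lambda|\le c_0\sup|r_{n,\lambda}-r_\lambda|$ and the lower bounds on $Z_{n,\lambda}$ and $Z_\lambda$. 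This yields the uniform $o_p(1)$ statement.

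\emph{Main obstacle.} The only real difficulty is a bookkeeping point. Assumption~\ref{ass:primitive-strong} is stated for local maximizers within $\mathcal U_j(\delta_j)$, so consistency of $\widehat\alpha^{\mathrm{un}}_j$ must be taken directly from the root-$n$ bound rather than re-derived. The uniformity over $\lambda$ is harmless here because $\lambda$ enters only through the bounded factor $\bar\lambda$.
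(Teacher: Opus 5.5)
Your proposal is correct and follows essentially the paper's proof: a bounded, jointly continuous exponential tilt on a compact space plus dominated convergence gives properness and weak continuity, and consistency of $\widehat\alpha^{\mathrm{un}}_{S,n}$ plus uniform continuity of $q_{1,S}$ on $\calA_S\times\Theta$ handles the plug-in path. The only cosmetic difference is in the total-variation step. The paper bounds the ratio $r_{n,\lambda}/r_\lambda\in[e^{-\bar\lambda\Delta_n},e^{\bar\lambda\Delta_n}]$ and so gets $\|\pi^{\mathrm{plug}}_{n,\lambda}-\pi_\lambda\|_{TV}\le e^{2\bar\lambda\Delta_n}-1$ without tracking normalizers separately. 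You instead use an additive sup bound and split the error into numerator and normalizer parts, which works equally well here.
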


Proposition~\ref{prop:primitive-prior-main} says that the hierarchy is a
well-behaved way to tilt the weak experiment.  The hierarchy need not be
literally correct as a model for how the studies were generated.  What matters
is that, after the weak coordinates are retained, each value of \(\lambda\)
gives a proper prior on \(\calW\), and small changes in \(\lambda\) lead only
to small changes in prior averages.  The strong studies enter this prior
through their first-order limits, so replacing those limits by the feasible
strong-study estimates does not change the induced prior path asymptotically,
uniformly over \(\lambda\in\Lambda\).
The finite-sample reduced quasi-posterior uses \(Q_n\), while the limiting quasi-posterior uses \(Q\):
\[
\begin{aligned}
        \Pi_{n,\lambda}(d\alpha_W\mid X,z)
        &=\frac{\exp[-Q_n(\alpha_W)/2]\pi_\lambda(d\alpha_W)}
        {\int_{\calW}\exp[-Q_n(u)/2]\pi_\lambda(du)}, \\
        \Pi_\lambda(d\alpha_W\mid g)
        &=\frac{\exp[-Q(\alpha_W)/2]\pi_\lambda(d\alpha_W)}
        {\int_{\calW}\exp[-Q(u)/2]\pi_\lambda(du)} .
\end{aligned}
\]
Let \(\Sigma(\alpha,\tilde\alpha):=\Cov(\mathbb G(\alpha),\mathbb G(\tilde\alpha))\). When \(W(\alpha_W)=\Sigma(\alpha_W,\alpha_W)^{-1}\), the limiting quasi-posterior is the continuously updated weak-GMM quasi-posterior. Appendix~\ref{app:am-diffuse-limit} gives the corresponding Andrews--Mikusheva diffuse-prior result: under the uniform diffuse-likelihood condition stated there, after absorbing the data-independent determinant factor into the structural prior, \(\Pi_\lambda(\cdot\mid g)\) is the diffuse-nuisance-prior limit of proper Andrews--Mikusheva Bayes posteriors. Using \(\pi_\lambda\), rather than the plug-in path, is harmless at this order because Proposition~\ref{prop:primitive-prior-main} gives uniform total-variation convergence. Let \(T^{W}_{n,\lambda}:=\int_{\calW}\alpha_W\,\Pi_{n,\lambda}(d\alpha_W\mid X,z)\) and \(t_\lambda(g):=\int_{\calW}\alpha_W\,\Pi_\lambda(d\alpha_W\mid g)\). Within the reduced weak experiment, the unpooled quasi-posterior mean is \(T^{W}_{n,0}\) when \(\pi_0\) is proper; the usual unpooled weak-GMM point estimator is any approximate minimizer of \(Q_n\).

Our main result for weak studies is that the QBHM quasi-posterior is asymptotically equivalent to the limit experiment, when we look at the marginal distribution for weakly identified studies.

\begin{proposition}\label{prop:weak-marginal-main}
Suppose Assumptions~\ref{ass:primitive-setup}, \ref{ass:study-level-split}, \ref{ass:primitive-strong}, and \ref{ass:primitive-weak} hold, and let \((\pi_\lambda)_{\lambda\in\Lambda}\) be the prior path in Proposition~\ref{prop:primitive-prior-main}. Under the corresponding product triangular-array law, the weak-study marginal of the full QBHM quasi-posterior is asymptotically equivalent, uniformly over \(\lambda\in\Lambda\), to the reduced weak-GMM quasi-posterior \(\Pi_{n,\lambda}\):
\[
        \sup_{\lambda\in\Lambda}
        \|\Pi^{\mathrm{full},W}_{n,\lambda}-\Pi_{n,\lambda}\|_{TV}=o_p(1).
\]
Consequently, \(\sup_{\lambda\in\Lambda}\|\int \alpha_W\,\Pi^{\mathrm{full},W}_{n,\lambda}(d\alpha_W\mid X,z)-T^{W}_{n,\lambda}\|=o_p(1)\).
\end{proposition}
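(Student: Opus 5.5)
We would prove Proposition~\ref{prop:weak-marginal-main} by integrating out the strong coordinates and \(\theta\), and then showing that what remains is the reduced quasi-posterior with a prior that converges to \(\pi_\lambda\). The full quasi-posterior factorizes in the lower level, since \(q_{2,n}\) is a sum over studies. Hence the weak-study marginal has density proportional to
\[
\exp[-Q_n(\alpha_W)/2]\,p_W^0(\alpha_W)\,R_{n,\lambda}(\alpha_W),
\]
where
\[
R_{n,\lambda}(\alpha_W)=\int_\Theta\int_{\calA_S}\exp\bigl(\lambda[q_{1,W}(\alpha_W,\theta)+q_{1,S}(a_S,\theta)]\bigr)\,\Pi^{\mathrm{un}}_{S,n}(da_S)\,\pi_\theta(d\theta).
\]
Here \(\Pi^{\mathrm{un}}_{S,n}=\prod_{j\in\mathcal J_S}\Pi^{\mathrm{un}}_{j,n}\) is the product of unpooled strong quasi-posteriors. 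The normalizing constants of those quasi-posteriors cancel after normalization, and \(c_n\) cancels as well. There is one preliminary step. The full posterior lives on \(\calA_j\), not on \(\calW_j\), so we first verify that the mass of weak coordinates outside the retained space (or outside a shrinking neighborhood of it, mapped to \(\calW_j\) by the retained-space chart) is \(o_p(1)\), uniformly in \(\lambda\). This holds because \(q_{2,j,n_j}\approx -n_j\|\mu_{j0}(a)\|^2_{W_j}/2\) off \(\mathcal Z_{j0}\), while \(\lambda q_1\) is bounded by continuity on compacts. We take this as the content of the reduction implicit in \eqref{eq:weak-lower-level-link}.

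The key step is to show that
\[
\sup_{\lambda\in\Lambda}\sup_{\alpha_W\in\calW}\bigl|R_{n,\lambda}(\alpha_W)-r_\lambda(\alpha_W)\bigr|=o_p(1).
\]
Since \(r_\lambda\) is bounded below by \(e^{-\bar\lambda C}>0\), this gives uniform convergence of the ratio \(R_{n,\lambda}/r_\lambda\) to one. By Theorem~\ref{thm:strong-equivalence}, or simply by its first-stage consistency (uniform convergence of \(\bar q_{2,j,n_j}\) to \(M_j\) with a unique maximizer, plus a positive prior density), each \(\Pi^{\mathrm{un}}_{j,n}\) concentrates on every neighborhood of \(\alpha_{j0}\) with probability approaching one. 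The map \((a_S,\alpha_W,\theta,\lambda)\mapsto\exp(\lambda[\cdots])\) is continuous on a compact set, hence uniformly continuous and bounded. Splitting the \(a_S\)-integral into the \(\delta\)-ball around \(\alpha_{S0}\) and its complement then gives the uniform bound.

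With the ratio bound in hand, define the two unnormalized densities \(u_n=e^{-Q_n/2}p_W^0R_{n,\lambda}\) and \(v_n=e^{-Q_n/2}p_W^0r_\lambda\). Then \(|u_n-v_n|\le\epsilon_n v_n\), and the elementary inequality
\[
\|u_n/\textstyle\int u_n-v_n/\int v_n\|_{TV}\le 2\epsilon_n/(1-\epsilon_n)
\]
applies. Because the bound holds pointwise relative to \(v_n\), no control of \(\int e^{-Q_n/2}\) is needed. On \(\calW\), \(Q_n\) is finite with probability approaching one by Proposition~\ref{prop:weak-process-main}, so the normalizers are positive. This yields the total-variation statement uniformly in \(\lambda\). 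The statement about the mean follows because \(\calW\) is compact, so \(\|\int\alpha_W(d\Pi^{\mathrm{full},W}-d\Pi_{n,\lambda})\|\le \sup_{\calW}\|\alpha_W\|\cdot\|\cdot\|_{TV}\).

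We expect the main obstacle to be the preliminary step: justifying that the full-space weak marginal is asymptotically supported on the retained space and that densities there match \(p_W^0\mu_W\) via the chart, especially when \(\mathcal Z_{j0}\) is lower-dimensional. Our first attempt would handle the full-dimensional case directly, using the lower bound \(q_{2,j,n_j}\le -c\,n_j\,\mathrm{dist}(a,\mathcal Z_{j0})^2+O_p(1)\) away from \(\mathcal Z_{j0}\). For the lower-dimensional case, we would invoke the reparametrization convention stated in the setup, treating the reduction as given by construction.
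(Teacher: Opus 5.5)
Your proposal is correct, and it reaches the conclusion by a different and more elementary route than the paper.

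\textbf{The paper's route.} The paper integrates out each strong study by a Laplace approximation, using Lemma~\ref{lem:strong-primitive} and Lemma~\ref{lem:deterministic-laplace}. This gives the factor $A_{j,n}\exp[\lambda q_{1,j}(\widehat\alpha_j^{\mathrm{un}},\theta;z_j)][1+o_p(1)]$, uniformly in $(\theta,\lambda)$. After $\theta$ is integrated out, this yields total-variation equivalence with the plug-in posterior $\Pi^{\mathrm{plug}}_{n,\lambda}$ built from $\pi^{\mathrm{plug}}_{n,\lambda}$. The paper then replaces $\pi^{\mathrm{plug}}_{n,\lambda}$ by $\pi_\lambda$ in two steps: Proposition~\ref{prop:primitive-prior-main}, and a Lipschitz bound on the Bayes-update map on events where $\sup_{\calW}|Q_n|$ is bounded.

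\textbf{Your route.} You integrate the strong studies exactly against their unpooled quasi-posteriors. This is legitimate because $q_{2,n}$, $q_1=\sum_j q_{1,j}$ and the prior all separate across studies, so the strong-study normalizers do not depend on $(\alpha_W,\theta,\lambda)$. Comparing $R_{n,\lambda}$ with $r_\lambda$ then needs only two things: consistency of $\Pi^{\mathrm{un}}_{j,n}$, and uniform continuity of the pooling exponential on a compact set. Your pointwise relative bound $|u_n-v_n|\le\epsilon_n v_n$ also removes any need to control $\sup_{\calW}|Q_n|$.

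\textbf{What each route buys.} Yours uses only the consistency content of Assumption~\ref{ass:primitive-strong}. It needs no local quadratic expansion, no $\sqrt{n_j}$-scale tail bounds and no prior differentiability, and it skips the plug-in intermediate step. The paper's route produces the feasible plug-in prior as an explicit intermediate object. More importantly, its Laplace machinery carries over to the mixed within-study case (Proposition~\ref{prop:mixed-weak-marginal}). There the regular block's conditional normalizer depends on the retained weak coordinate, and its Laplace asymptotics generate the $\gamma$-dependent factor $C_S(\gamma)$. Your exact-integration trick would not reduce cleanly in that setting.

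\textbf{The preliminary step.} You do not need to treat this as an obstacle. The paper's proof does not establish it either: it takes the weak marginal on $\calW$, with base measure $p_W^0\mu_W$ and lower-level objective $-Q_n/2$, as given by construction via \eqref{eq:weak-lower-level-link}. The verification you sketch would also not work as stated, for three reasons:
\begin{itemize}
\item The bound $q_{2,j,n_j}\le -c\,n_j\,\mathrm{dist}(a,\mathcal Z_{j0})^2+O_p(1)$ needs a growth condition on $\mu_{j0}$ away from $\mathcal Z_{j0}$, which is not assumed.
\item It also needs empirical-process control over all of $\calA_j$, whereas Assumption~\ref{ass:primitive-weak} supplies it only on $\calW_j$.
\item It cannot discard mass on $\mathcal Z_{j0}\setminus\calW_j$. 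The criterion stays $O_p(1)$ there, so that mass need not vanish unless $\calW_j$ exhausts the relevant part of $\mathcal Z_{j0}$.
\end{itemize}
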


This proposition links the original QBHM to the reduced weak-GMM experiment. The convergence is strong enough to transfer the posterior-mean rule from the reduced weak experiment back to the feasible QBHM weak marginal. Let \(d_W:=\sum_{j\in\mathcal J_W}d_j\), the dimension of \(\alpha_W\). This transfer is shown by the following.

\begin{theorem}\label{thm:main}
Suppose Assumptions~\ref{ass:primitive-setup}, \ref{ass:study-level-split}, \ref{ass:primitive-strong}, and \ref{ass:primitive-weak} hold, and let \((\pi_\lambda)_{\lambda\in\Lambda}\) be the induced prior path in Proposition~\ref{prop:primitive-prior-main}. Then, under the corresponding product triangular-array law,
\[
        (T^{W}_{n,\lambda})_{\lambda\in\Lambda}
        \dto
        (t_\lambda(g))_{\lambda\in\Lambda}
\]
in \(\ell^\infty(\Lambda,\mathbb R^{d_W})\). Together with Proposition~\ref{prop:weak-marginal-main},
\[
        \left(\int_{\calW}\alpha_W\,\Pi^{\mathrm{full},W}_{n,\lambda}(d\alpha_W\mid X,z)\right)_{\lambda\in\Lambda}
        \dto
        (t_\lambda(g))_{\lambda\in\Lambda}
\]
in \(\ell^\infty(\Lambda,\mathbb R^{d_W})\).
\end{theorem}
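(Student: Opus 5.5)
The plan is a continuous-mapping argument. The map that sends a criterion process and a prior to the posterior-mean path is continuous, so weak convergence of \(Q_n\) carries over to the path of posterior means.

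Define, for a bounded continuous function \(q\in C(\calW)\) (more generally \(q\in\ell^\infty(\calW)\)), the map
\[
\Psi(q)(\lambda):=\frac{\int_{\calW}\alpha_W\exp[-q(\alpha_W)/2]\,\pi_\lambda(d\alpha_W)}{\int_{\calW}\exp[-q(u)/2]\,\pi_\lambda(du)},\qquad \lambda\in\Lambda .
\]
Then \(T^W_{n,\lambda}=\Psi(Q_n)(\lambda)\) and \(t_\lambda(g)=\Psi(Q)(\lambda)\), and the first claim follows from Proposition~\ref{prop:weak-process-main} and the continuous mapping theorem once two facts are checked: \(\Psi\) takes values in \(\ell^\infty(\Lambda,\R^{d_W})\), and it is continuous at every \(q\) in the support of \(Q\).

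\textbf{Boundedness.} \(\calW\) is compact, so \(\|\alpha_W\|\le C\) on \(\calW\) and \(\|\Psi(q)(\lambda)\|\le C\) for every \(\lambda\). The denominator is bounded below by \(\exp(-\|q\|_\infty/2)>0\), because each \(\pi_\lambda\) is a probability measure by Proposition~\ref{prop:primitive-prior-main}.

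\textbf{Lipschitz continuity uniformly in \(\lambda\).} Take \(q,q'\) with \(\|q-q'\|_\infty\le\varepsilon\). The weights \(e^{-q/2}\) and \(e^{-q'/2}\) differ pointwise by a factor in \([e^{-\varepsilon/2},e^{\varepsilon/2}]\). Hence both the numerator and the denominator change by a relative amount of order \(\varepsilon\), with constants depending only on \(C\) and \(\|q\|_\infty\), and not on \(\lambda\). This gives
\[
\sup_{\lambda\in\Lambda}\|\Psi(q)(\lambda)-\Psi(q')(\lambda)\|\le C'(e^{\varepsilon}-1),
\]
so \(\Psi\) is continuous from \(\ell^\infty(\calW)\) into \(\ell^\infty(\Lambda,\R^{d_W})\).

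\textbf{Measurability.} The continuous-mapping step also needs \(\Psi(Q)\) to be a tight, measurable element of \(\ell^\infty(\Lambda)\). By Proposition~\ref{prop:primitive-prior-main}, \(\lambda\mapsto\int f\,d\pi_\lambda\) is continuous for bounded continuous \(f\). \(Q\) has continuous paths, by the continuity of \(\mathbb G\), \(m\) and \(W\). So numerator and denominator are continuous in \(\lambda\), and \(\Psi(Q)\in C(\Lambda,\R^{d_W})\), a separable subspace. The continuous mapping theorem in the Hoffmann-Jørgensen sense therefore applies and yields \((T^W_{n,\lambda})_{\lambda}\dto(t_\lambda(g))_{\lambda}\).

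\textbf{Transfer to the full QBHM marginal.} Proposition~\ref{prop:weak-marginal-main} gives
\[
\sup_{\lambda\in\Lambda}\Bigl\|\int\alpha_W\,\Pi^{\mathrm{full},W}_{n,\lambda}-T^W_{n,\lambda}\Bigr\|=o_p(1).
\]
The difference is thus \(o_p(1)\) in \(\ell^\infty(\Lambda,\R^{d_W})\), and Slutsky's lemma in \(\ell^\infty\) (van der Vaart–Wellner, Lemma 1.10.2) delivers the second display.

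\textbf{Main obstacle.} The delicate point is uniformity in \(\lambda\) of the continuity bound. The lower bound on the denominator must not degenerate as \(\lambda\) varies, and here we rely on \(\pi_\lambda\) being a probability measure for every \(\lambda\in\Lambda\). A secondary subtlety is that \(\Psi\) uses the limiting prior path \(\pi_\lambda\), not the plug-in path \(\pi^{\mathrm{plug}}_{n,\lambda}\). This is consistent with the definition of \(T^W_{n,\lambda}\), and in any case the uniform total-variation bound of Proposition~\ref{prop:primitive-prior-main} would make the swap an \(o_p(1)\) perturbation, since \(\alpha_W\) is bounded on compact \(\calW\).
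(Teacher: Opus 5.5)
Your proposal is correct and follows essentially the same route as the paper. The paper applies Lemma~\ref{lem:posterior-continuity}, which shows the posterior-mean functional $\Psi$ is continuous uniformly in $\lambda$ (via the denominator bound $\exp[-\|q\|_\infty/2]$ and weak continuity of $\pi_\lambda$), then uses the continuous mapping theorem, and finally transfers the result to the full QBHM marginal through the total-variation bound of Proposition~\ref{prop:weak-marginal-main} and Slutsky's lemma. The only difference is minor: your exponential-tilt bound gives a modulus of continuity that does not depend on $q$, whereas the paper proves continuity at each continuous $q$ with a constant depending on $\inf_{\lambda}\int e^{-q/2}\,d\pi_\lambda$.
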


The convergence in Theorem~\ref{thm:main} is uniform over
\(\lambda\in\Lambda\).  Hence the same weak limit applies not only at any fixed
choice of \(\lambda\), but also along data-dependent choices
\(\hat\lambda_n\in\Lambda\).  This is useful for the reporting strategy in
Section~\ref{sec:lambda-inference}: one can display posterior means over a
range of hierarchy strengths, or select a value of \(\lambda\) using the data,
without changing the first-order weak asymptotic approximation.

The limit rule \(t_\lambda(g)\) has a direct Bayes-risk interpretation in the weak experiment. Once the weak-GMM process \(g\) has been observed, \(\Pi_\lambda(\cdot\mid g)\) is the quasi-posterior distribution over weak values, and its mean is the posterior squared-loss rule.  For a reported linear estimand, let \(B\) be a fixed matrix; for example, \(B\) may select one component, a subset of components, a contrast across studies, or the full vector. Let a decision rule be a measurable map \(t:g\mapsto t(g)\in\mathbb R^{d_W}\), and use loss \(L_B(a,\alpha_W)=\|B(a-\alpha_W)\|^2\). If \(\mathsf M\) is any probability law on weak-process sample paths for which the quasi-posterior is well defined, define the integrated posterior risk
\[
IPR_{\lambda,B,\mathsf M}(t)
=
\int
\left[
\int_{\calW}\|B(t(g)-\alpha_W)\|^2\Pi_\lambda(d\alpha_W\mid g)
\right]
\mathsf M(dg),
\]
interpreted as an extended nonnegative integral, so the value may be \(+\infty\).

\begin{proposition}\label{prop:posterior-risk}
Fix \(\lambda\in\Lambda\), a matrix \(B\), and a probability law \(\mathsf M\) on weak-process sample paths such that \(\Pi_\lambda(\cdot\mid g)\) is well defined for \(\mathsf M\)-almost every \(g\). \(t_\lambda\) is a\footnote{It is not necessarily unique, and so we cannot refer to it as \textit{the} Bayes rule.} Bayes rule: it minimizes \(IPR_{\lambda,B,\mathsf M}\) among measurable finite-valued rules. For every such rule \(t\),
\[
IPR_{\lambda,B,\mathsf M}(t)-IPR_{\lambda,B,\mathsf M}(t_\lambda)
=
\int\|B(t(g)-t_\lambda(g))\|^2\mathsf M(dg)
\geq 0,\]
with both sides interpreted in the extended sense. If \(IPR_{\lambda,B,\mathsf M}(t)<\infty\), the identity is an ordinary finite equality.
\end{proposition}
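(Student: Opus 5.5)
The proof is a pointwise (in \(g\)) bias--variance decomposition of the inner posterior risk, integrated over \(\mathsf M\). Fix \(g\) in the \(\mathsf M\)-full set on which \(\Pi_\lambda(\cdot\mid g)\) is well defined.

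\emph{Step 1: integrability of the posterior.} Since \(\calW\) is compact, \(\Pi_\lambda(\cdot\mid g)\) is supported on a bounded set. Hence \(t_\lambda(g)=\int\alpha_W\,\Pi_\lambda(d\alpha_W\mid g)\) exists, is finite, and lies in the closed convex hull of \(\calW\). The inner variance term \(V(g):=\int_{\calW}\|B(\alpha_W-t_\lambda(g))\|^2\,\Pi_\lambda(d\alpha_W\mid g)\) is then finite and bounded uniformly in \(g\) by \(\|B\|^2\operatorname{diam}(\calW)^2\).

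\emph{Step 2: pointwise identity.} For any finite value \(a\), expand \(B(a-\alpha_W)=B(a-t_\lambda(g))+B(t_\lambda(g)-\alpha_W)\). Integrating the square against \(\Pi_\lambda(\cdot\mid g)\), the cross term
\[
2\bigl(B(a-t_\lambda(g))\bigr)'B\int(t_\lambda(g)-\alpha_W)\,\Pi_\lambda(d\alpha_W\mid g)
\]
vanishes by the definition of the mean. This gives
\[
\int\|B(a-\alpha_W)\|^2\,\Pi_\lambda(d\alpha_W\mid g)=\|B(a-t_\lambda(g))\|^2+V(g).
\]
Put \(a=t(g)\).

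\emph{Step 3: integrate over \(\mathsf M\).} Both terms on the right are nonnegative and measurable in \(g\). Measurability of \(t_\lambda\) and of \(V\) follows because they are ratios of integrals of jointly measurable integrands: \(\exp[-Q(\alpha_W)/2]\) is measurable in \((g,\alpha_W)\) since \(Q\) depends continuously on the path \(g\) and on \(\alpha_W\). Tonelli's theorem then gives, in the extended sense,
\[
IPR(t)=\int\|B(t-t_\lambda)\|^2\,d\mathsf M+\int V\,d\mathsf M, \qquad IPR(t_\lambda)=\int V\,d\mathsf M\le\|B\|^2\operatorname{diam}(\calW)^2<\infty.
\]
Because \(IPR(t_\lambda)\) is finite, subtracting it from both sides is legitimate even when \(IPR(t)=+\infty\); in that case both sides of the claimed identity equal \(+\infty\). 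The difference is nonnegative, so \(t_\lambda\) minimizes the risk and is a Bayes rule. If \(IPR(t)<\infty\), every term is finite and the identity is an ordinary equality. Non-uniqueness is visible directly: any \(t\) with \(Bt=Bt_\lambda\) \(\mathsf M\)-a.s. attains the same risk.

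\emph{Main obstacle.} The only delicate point is measurability of \(g\mapsto t_\lambda(g)\), so that \(t_\lambda\) is an admissible rule. I would handle it by noting that the normalizing constant is strictly positive: the integrand is positive and \(\pi_\lambda\) is proper (Proposition~\ref{prop:primitive-prior-main}). On the sample-path space the integrands are continuous in \(g\) under the sup norm, so Fubini-type measurability of parametric integrals applies.
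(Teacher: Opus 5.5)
Your proposal is correct and follows the paper's proof essentially step for step. Both arguments use the pathwise decomposition \(\int\|B(a-\alpha_W)\|^2\,\Pi_\lambda(d\alpha_W\mid g)=\|B(a-t_\lambda(g))\|^2+V(g)\), where the cross term vanishes by the definition of the posterior mean; both use the uniform compactness bound on \(V(g)\), which gives \(IPR_{\lambda,B,\mathsf M}(t_\lambda)<\infty\); and both then apply Tonelli to integrate in the extended sense. Your explicit discussion of the measurability of \(g\mapsto t_\lambda(g)\) addresses a point the paper leaves implicit, and it is a sensible addition rather than a different route.
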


Proposition~\ref{prop:posterior-risk} gives the posterior squared-loss meaning of the pooled weak-GMM quasi-posterior. Combined with Proposition~\ref{prop:weak-marginal-main} and Theorem~\ref{thm:main}, it gives a decision-theoretic justification for the use of the QBHM estimator: for each fixed \(\lambda\), the weak marginal of the feasible QBHM estimator is asymptotically equivalent to a Bayes rule under the hierarchy-induced weak-limit prior \(\pi_\lambda\) and squared loss. In the continuously updated case, Proposition~\ref{prop:am-diffuse-limit} further shows that, under the diffuse-likelihood conditions stated there, this weak-limit posterior can be obtained as an Andrews--Mikusheva diffuse-nuisance-prior limit. Corollary~\ref{cor:mixed-reported-targets} in Appendix~\ref{app:mixed-within-study-blocks} generalizes the result to the case where the same study can contain both strongly and weakly identified components. The conclusion is unchanged: fixed pooling does not alter the first-order law of genuinely regular coordinates, while weak coordinates retain the nondegenerate shrinkage comparison.

\section{When does pooling lower MSE?}
\label{sec:mse-improvement}

The previous section gives the posterior decision-theoretic interpretation of the QBHM estimator in the weak-GMM limit experiment.  This section asks the pointwise frequentist question: for a fixed weak-GMM state \((\alpha_W^*,m)\), when does pooling lower MSE relative to an unpooled estimator? An applied version of this is essentially asking: when does borrowing from related sites reduce repeated-sampling error for a weakly estimated treatment effect. As indicated above, the answer comes from a bias-variance trade-off: pooling helps when the variance reduction from shrinkage exceeds the bias introduced by the pooling relation. The aim of this section is to formalize this trade-off for general QBHMs, and provide some more interpretable conditions in salient special cases. Throughout, \(\alpha_W\) denotes the weak-study vector from Section~\ref{sec:weak-components}, and \(\calW\) denotes its compact retained weak space.  The state restriction \(m(\alpha_W^*)=0\) means that the candidate weak value \(\alpha_W^*\) is compatible with the first-order drift of the weak moments. Appendix~\ref{app:mixed-within-study-blocks} gives the corresponding formulas for profiled weak coordinates when a single study contains both regular and weak components.

\subsection{Pointwise MSE}\label{sec:pointwise-risk}

Proposition~\ref{prop:posterior-risk} is a posterior risk statement:
after observing the weak-GMM process \(g\), the quasi-posterior mean is
optimal for posterior squared loss under the hierarchy-induced weak-limit
prior. Here, by contrast, we consider repeated-sampling MSE at a fixed
weak-GMM state \(e=(\alpha_W^*,m)\), with \(m(\alpha_W^*)=0\). Thus
\(\mathbb E_m\) denotes expectation over draws of the weak-GMM limit process
\(g\), holding fixed both the true weak value \(\alpha_W^*\) and the drift
\(m\). The distinction is exactly analogous to the distinction between
Proposition~\ref{prop:normal-normal-risk} and
Corollary~\ref{cor:normal-pointwise-mse} in the normal-normal example:
the former averages risk using the posterior distribution, whereas the
latter evaluates frequentist risk at a fixed underlying state. For any measurable rule \(t:g\mapsto t(g)\in\mathbb R^{d_W}\), and for any fixed matrix \(B\), define
\[
        M_B(t;\alpha_W^*,m)=\mathbb E_m\|B(t(g)-\alpha_W^*)\|^2 .
\]
This is the pointwise asymptotic MSE of the reported estimand \(B\alpha_W\).  The choice of \(B\) records the estimand being reported: \(B=I_{d_W}\) gives the vector MSE, a row vector \(B=b'\) gives the scalar MSE for \(b'\alpha_W^*\), and a coordinate-selection matrix gives the MSE for selected components.  The word pointwise is important because the risk does not average over the hierarchy, over possible values of \(\alpha_W^*\), or over a population distribution of studies.

The finite-sample analogue is \(\mathcal R_{n,\lambda}=\mathbb E_{P_{n,f}}\|B(T_{n,\lambda}-\alpha_W^*)\|^2\), where \(T_{n,\lambda}\) is the feasible estimator path, for example the weak-study component of the QBHM quasi-posterior mean.  Proposition~\ref{prop:transfer} connects this finite-sample risk to the weak-limit risk.  If \(B(T_{n,\lambda}-\alpha_W^*)\dto B(t_\lambda(g)-\alpha_W^*)\), uniformly over \(\lambda\in\Lambda\), and the squared errors are uniformly integrable, then \(\sup_{\lambda\in\Lambda}|\mathcal R_{n,\lambda}-M_B(t_\lambda;\alpha_W^*,m)|\to0\).  Uniform integrability is a tail condition; it is automatic when the reported estimator path is bounded on a compact weak space and also follows from a common \(2+\delta\) moment bound. As such, a strict MSE improvement in the weak-limit experiment transfers to the original finite-sample sequence for all sufficiently large \(n\).

The comparison below is therefore a weak-study comparison.  The comparator is the benchmark rule whose MSE is used as the reference point. In the QBHM comparison, the comparator \(t_0(g)\) is the unpooled weak-limit quasi-posterior mean obtained when \(\lambda=0\) and the baseline prior is proper, while \(t_\lambda(g)\) is the corresponding pooled weak-limit quasi-posterior mean. The same notation also allows \(t_0\) to denote another benchmark, such as an unpooled GMM rule or a leave-one-study-out rule. Define \(e_0(g)=B(t_0(g)-\alpha_W^*)\) as the benchmark estimation error for the reported estimand and \(d_\lambda(g)=t_\lambda(g)-t_0(g)\) as the change in the weak estimate caused by pooling.

\begin{proposition}\label{prop:identity}
Fix a weak-GMM limit experiment \((\alpha_W^*,m)\), a matrix \(B\), a measurable benchmark rule \(t_0\) (the comparator), and a measurable pooled rule \(t_\lambda\).  If \(M_B(t_0;\alpha_W^*,m)<\infty\) and \(M_B(t_\lambda;\alpha_W^*,m)<\infty\), then \(Bd_\lambda\) is square integrable, \(e_0'Bd_\lambda\) is integrable, and
\[
M_B(t_\lambda;\alpha_W^*,m)-M_B(t_0;\alpha_W^*,m)
=
2\mathbb E_m\!\left[e_0(g)'B d_\lambda(g)\right]
+
\mathbb E_m\!\left\|B d_\lambda(g)\right\|^2 .
\]
\end{proposition}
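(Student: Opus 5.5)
The identity is elementary algebra applied to squared errors. The only real content is checking integrability so that the expectations can be split. I will write the pooled error as the benchmark error plus the pooling change. Since \(t_\lambda = t_0 + d_\lambda\), we have pointwise in \(g\)
\[
B(t_\lambda(g)-\alpha_W^*) = e_0(g) + B d_\lambda(g),
\]
and therefore
\[
\|B(t_\lambda(g)-\alpha_W^*)\|^2 = \|e_0(g)\|^2 + 2 e_0(g)'Bd_\lambda(g) + \|Bd_\lambda(g)\|^2 .
\]
All terms are measurable because \(t_0\) and \(t_\lambda\) are measurable and \(B\) is fixed.

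First I will establish integrability. By hypothesis, \(\mathbb E_m\|e_0\|^2 = M_B(t_0;\alpha_W^*,m)<\infty\). Likewise \(\mathbb E_m\|e_\lambda\|^2 = M_B(t_\lambda;\alpha_W^*,m)<\infty\), where \(e_\lambda := B(t_\lambda-\alpha_W^*)\). Since \(Bd_\lambda = e_\lambda - e_0\), the inequality \(\|x-y\|^2\le 2\|x\|^2+2\|y\|^2\) gives
\[
\mathbb E_m\|Bd_\lambda\|^2 \le 2M_B(t_\lambda;\alpha_W^*,m)+2M_B(t_0;\alpha_W^*,m)<\infty,
\]
so \(Bd_\lambda\) is square integrable. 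For the cross term, Cauchy--Schwarz pointwise gives \(|e_0'Bd_\lambda|\le \|e_0\|\,\|Bd_\lambda\|\le \tfrac12(\|e_0\|^2+\|Bd_\lambda\|^2)\). Alternatively, Cauchy--Schwarz in \(L_2(\mathbb E_m)\) applies. Either way \(e_0'Bd_\lambda\) is integrable.

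Second, I will take \(\mathbb E_m\) of the pointwise expansion. Each of the three terms on the right is integrable, so linearity of expectation applies without any \(\infty-\infty\) issue. Subtracting the finite quantity \(M_B(t_0;\alpha_W^*,m)\) yields the stated identity.

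I do not expect any real obstacle here. The only point requiring care is that the finiteness assumptions must be used before splitting the expectation. This is why integrability of \(Bd_\lambda\) and \(e_0'Bd_\lambda\) is proved first rather than assumed.
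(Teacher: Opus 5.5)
Your proposal is correct and follows essentially the same route as the paper: write $Bd_\lambda = e_\lambda - e_0$ to get square integrability from the two finite-MSE hypotheses, use Cauchy--Schwarz for the cross term, and then integrate the pointwise identity $\|e_\lambda\|^2-\|e_0\|^2 = 2e_0'Bd_\lambda+\|Bd_\lambda\|^2$. The only cosmetic difference is that you bound $\mathbb E_m\|Bd_\lambda\|^2$ explicitly via $\|x-y\|^2\le 2\|x\|^2+2\|y\|^2$, a step the paper leaves implicit.
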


Proposition~\ref{prop:identity} says that pooling lowers MSE exactly when the covariance-type term \(2\mathbb E_m[e_0(g)'B d_\lambda(g)]\) is negative enough to offset the nonnegative cost term \(\mathbb E_m\|B d_\lambda(g)\|^2\).  In words, pooling helps when the hierarchy tends to move the unpooled estimate back toward the truth, and it hurts when it tends to move the estimate away from the truth or when the pooling movement is too large relative to the error it corrects.

Two sufficient conditions are useful for interpreting the later results and are stated formally as Proposition~\ref{prop:sufficient} in the appendix.  First, it is trivially true that if the unpooled rule has infinite pointwise asymptotic MSE and the pooled rule has finite pointwise asymptotic MSE, then pooling strictly improves MSE.  This is the extreme version of the variance-reduction logic, where the hierarchy regularizes an otherwise unstable weakly identified rule.  Second, consider local coordinates \(h=\alpha_W-\alpha_W^*\).  Suppose a sequence of hierarchy-induced priors concentrates around a local center \(h_c\), and let \(\delta_v(g)\) be the corresponding quasi-posterior mean.  If \(\delta_v(g)\to h_c\) in \(L^2(P_m)\), then the MSE of the strongly pooled rule converges to \(\|Bh_c\|^2\).  Therefore strong pooling improves on \(t_0\) whenever
\[
        \|Bh_c\|^2
        <
        M_B(t_0;\alpha_W^*,m).
\]
The intuition is that very strong pooling makes the weak-study data almost irrelevant for the reported weak coordinate: the quasi-posterior mean collapses toward the target selected by the hierarchy.  The remaining risk is then not a variance term from weak identification, but the squared distance between that target and the truth, measured after applying \(B\).  Thus strong pooling is helpful only when replacing the noisy unpooled rule by the pooling target creates less squared error than the unpooled rule has on average.  If the target is close to the true weak value, the variance reduction can dominate; if the target is badly centered, strong pooling simply replaces variance with bias.  Lemma~\ref{lem:concentration} verifies the needed concentration step on compact local spaces, and the condition does not require the weak objective itself to become strongly curved.

The conditions above are general but abstract. The next two subsections make them more concrete: first by studying when a small amount of pooling helps relative to no pooling, and then by working through a scalar weak-IV example.

\subsection{A local condition}\label{sec:local}

Starting from no pooling, the relevant local question is the sign of the first derivative of pointwise MSE with respect to \(\lambda\) at zero.  The derivative separates the first movement of the weak quasi-posterior mean from the existing unpooled error, and in the quadratic case it becomes an explicit shrinkage-gain versus centering-cost comparison.

Work in local coordinates \(h=\alpha_W-\alpha_W^*\), and write \(\calH=\calW-\alpha_W^*\).  In these coordinates the true weak value is the origin.  Suppose \(\calH\subset\mathbb R^{d_W}\) is compact with positive finite coordinate measure; lower-dimensional weak spaces are interpreted using a fixed coordinate parametrization.  For a realized weak-GMM limit process \(g\), let \(Q_g(h)=Q(\alpha_W^*+h)\). Using the density notation introduced above, \(p_\lambda(h)\) is the local-coordinate density of the hierarchy-induced prior path \(\pi_\lambda\). The corresponding weak-limit quasi-posterior on local coordinates is the probability measure \(\nu_\lambda^g\) with density proportional to \(\exp[-Q_g(h)/2]p_\lambda(h)\) on \(\calH\).  Its mean
\[
        \delta_\lambda(g)=\int_{\calH} h\nu_\lambda^g(dh)
\]
is the local error of the reported rule, so \(M_{\lambda,B}(\alpha_W^*,m)=\mathbb{E}_m\|B\delta_\lambda(g)\|^2\). At \(\lambda=0\), write \(\bar h(g)=\int h\nu_0^g(dh)\),
\[
        \Omega(g)=\int(h-\bar h(g))(h-\bar h(g))'\nu_0^g(dh),
\]
and, for \(P=P'\succeq0\),
\[
        \tau_P(g)=\int(h-\bar h(g))(h-\bar h(g))'P(h-\bar h(g))\nu_0^g(dh).
\]
Here \(\bar h(g)\) is the unpooled local mean error, \(\Omega(g)\) is the unpooled quasi-posterior variance, and \(\tau_P(g)\) is the third-central-moment correction that appears when the unpooled weak quasi-posterior is not centrally symmetric around \(\bar h(g)\).

The calculation only differentiates the hierarchy-induced prior path.  It does not require differentiability or strong curvature of the weak objective \(Q_g\).

\begin{assumption}\label{ass:smooth-prior}
The following conditions hold for some \(\varepsilon>0\).
\begin{enumerate}[label=(\roman*),leftmargin=2.1em]
\item The densities \(p_\lambda\) satisfy \(0<\inf_{h\in\calH,0\le\lambda\le\varepsilon}p_\lambda(h)\le\sup_{h\in\calH,0\le\lambda\le\varepsilon}p_\lambda(h)<\infty\).
\item For each \(h\), \(\lambda\mapsto\log p_\lambda(h)\) is continuously differentiable on \([0,\varepsilon]\). The derivative \(\ell_\lambda(h)=\partial_\lambda\log p_\lambda(h)\) is jointly measurable in \((\lambda,h)\), uniformly bounded on \([0,\varepsilon]\times\calH\), and satisfies \(\ell_\lambda(h)\to\ell_0(h)\) as \(\lambda\downarrow0\) for every \(h\in\calH\).
\end{enumerate}
\end{assumption}

The score \(\ell_0\) is the first-order direction in which the hierarchy changes the local prior.  Since \(p_\lambda\) is normalized and \(\ell_\lambda\) is uniformly bounded, dominated convergence gives \(\int_{\calH} \ell_0(h)p_0(h)\,dh=0\). Thus the local prior score is automatically centered. If \(p_\lambda(h)\propto p_0(h)\exp[\lambda \tilde\ell_0(h)]\), with \(p_0\) bounded away from zero and infinity and \(\tilde\ell_0\) bounded, the normalized prior has score
\[
        \ell_0(h)=\tilde\ell_0(h)-\int_{\calH} \tilde\ell_0(u)p_0(u)\,du.
\]
The centering constant has no effect on the covariance formulas below.

Under the compact local-coordinate conditions and Assumption~\ref{ass:smooth-prior}, Proposition~\ref{prop:small} in the appendix shows that, as \(\lambda\downarrow0\),
\[
M_{\lambda,B}(\alpha_W^*,m)
=
M_{0,B}(\alpha_W^*,m)
+
2\lambda
\mathbb{E}_m\left[
(B\bar h(g))'B\operatorname{Cov}_{\nu_0^g}(h,\ell_0(h))
\right]
+
o(\lambda).
\]
The covariance term is the instantaneous movement of the quasi-posterior mean caused by the hierarchy.  Small positive pooling lowers pointwise asymptotic MSE whenever this average directional derivative is strictly negative.

We can make this more explicit when the hierarchy induces a quadratic tilt of the local prior. Suppose the pooling contribution in local coordinates is \(q_1^W(h)=-(h-h_c)'P(h-h_c)/2\), where \(h_c\) is the local pooling center and \(P=P'\succeq0\). Since \(p_\lambda(h)\propto p_0(h)\exp[\lambda q_1^W(h)]\), the tilted density satisfies \(p_\lambda(h)\propto p_0(h)\exp[-\lambda(h-h_c)'P(h-h_c)/2]\). If \(p_0\) is the density of \(N(m_0,V_0)\), the exponent is \(-h'(V_0^{-1}+\lambda P)h/2+h'(V_0^{-1}m_0+\lambda P h_c)\), up to a constant, so \(p_\lambda\) is again normal, with precision \(V_0^{-1}+\lambda P\). Thus a normal prior and squared-loss pooling generate the quadratic path.

By the pooling manifold we mean the values that exactly satisfy the pooling relation: they are the values toward which the hierarchy shrinks and that receive no quadratic penalty. If \(P\) is nonsingular, the pooling manifold is the single point \(h_c\); if \(P\) is singular, it is an affine set. In the present local coordinates, the pooling manifold is \(\{h\in\calH:P(h-h_c)=0\}\). In the fixed-center normal-normal hierarchy with \(P=I\), the manifold is the single point \(h_c\). In a common-mean normal hierarchy with \(J\) weak values and a diffuse common mean, the induced penalty is on deviations from the group average, \(P=I_J-J^{-1}\one\one'\), so the manifold is \(\{h:h_1=\cdots=h_J\}\); the hierarchy shrinks cross-study differences but leaves the common level unpenalized. For the quadratic path,
\[
        p_\lambda(h)
        =
        \frac{
        \exp[-\lambda (h-h_c)'P(h-h_c)/2]p_0(h)}
        {\int_{\calH}\exp[-\lambda (u-h_c)'P(u-h_c)/2]p_0(u)\,du}.
\]
The local prior score at no pooling is
\[
        \ell_0(h)
        =
        \left.\partial_\lambda\log p_\lambda(h)\right|_{\lambda=0}
        =
        c+h_c'Ph-\frac12 h'Ph,
\]
where the scalar \(c\) is the derivative of the normalizing constant.  Differentiating only the prior weight gives, for each realized weak-GMM process \(g\),
\[
        \dot\delta_0(g)
        :=
        \left.\partial_\lambda\delta_\lambda(g)\right|_{\lambda=0+}
        =
        \operatorname{Cov}_{\nu_0^g}(h,\ell_0(h))
        =
        \Omega(g)P[h_c-\bar h(g)]-\frac12\tau_P(g).
\]
Define
\[
\begin{aligned}
        A_B(P)&=\mathbb{E}_m\left[(B\bar h(g))'B\left[\Omega(g)P\bar h(g)+\frac12\tau_P(g)\right]\right],\\
        L_B(P)&=\mathbb{E}_m\left[P\Omega(g)B'B\bar h(g)\right].
\end{aligned}
\]
Then
\[
M_{\lambda,B}(\alpha_W^*,m)
=
M_{0,B}(\alpha_W^*,m)
+
2\lambda[h_c'L_B(P)-A_B(P)]
+
o(\lambda).
\]
The term \(A_B(P)\) is the first-order shrinkage gain that would remain if the hierarchy were centered at the true local value \(h=0\).  It is large when the unpooled rule has error in directions where the quasi-posterior still has dispersion to shrink.  The term \(h_c'L_B(P)\) is the first-order centering cost from pulling toward \(h_c\) rather than toward the truth.  Hence small positive pooling lowers asymptotic MSE whenever \(h_c'L_B(P)<A_B(P)\). If the unpooled weak quasi-posterior is centrally symmetric around \(\bar h(g)\), then \(\tau_P(g)=0\), and the gain margin reduces to \(\mathbb E_m[(B\bar h(g))'B\Omega(g)P\bar h(g)]\).

The same derivative calculation yields a uniform local improvement result, stated in \(\alpha_W\)-coordinates.  For a weak-GMM limit experiment \(e=(\alpha_W^*,m)\), let
\[
        M_{\lambda,B}(e)=\mathbb{E}_m\|B(t_\lambda(g)-\alpha_W^*)\|^2 .
\]
Under the unpooled quasi-posterior \(\Pi_0(d\alpha_W\mid g)\), define the unpooled error and dispersion objects
\[
\begin{aligned}
        r_e(g)&=t_0(g)-\alpha_W^*,\\
        \Omega_e(g)&=\int_{\calW}(\alpha_W-t_0(g))(\alpha_W-t_0(g))'\Pi_0(d\alpha_W\mid g),\\
        \tau_{e,P}(g)&=\int_{\calW}(\alpha_W-t_0(g))(\alpha_W-t_0(g))'P(\alpha_W-t_0(g))\Pi_0(d\alpha_W\mid g).
\end{aligned}
\]
Thus \(r_e(g)\) is the unpooled weak-limit error, \(\Omega_e(g)\) is the unpooled quasi-posterior variance, and \(\tau_{e,P}(g)\) is the same skewness correction in the original coordinates.  Set
\[
\begin{aligned}
        A_e(P)&=\mathbb{E}_m\left[(B r_e(g))'B\left[\Omega_e(g)P r_e(g)+\frac12\tau_{e,P}(g)\right]\right],\\
        L_e(P)&=\mathbb{E}_m(\Omega_e(g)B'B r_e(g)).
\end{aligned}
\]
The scalar \(A_e(P)\) is the gain margin from shrinking in the penalized directions.  The vector \(L_e(P)\) is the coefficient multiplying the centering error \(P(\alpha_W^*-\bar\alpha_W)\) in the expansion below.  Finally, let \(D_{\calW}=\sup_{u,v\in\calW}\|u-v\|\) and \(K=\max(1,D_{\calW}^3\|B\|_{\mathrm{op}}^2)\); this \(K\) is only a compactness bound used to state a uniform neighborhood around the pooling manifold.

\begin{theorem}\label{thm:uniform-local}
Let \(\calW\subset\mathbb R^{d_W}\) be compact, let \(B\) be fixed, let \(P=P'\succeq0\), and let \(\bar\alpha_W\in\mathbb R^{d_W}\).  Suppose the induced prior path on \(\calW\) has density proportional to
\(\exp[-\lambda(\alpha_W-\bar\alpha_W)'P(\alpha_W-\bar\alpha_W)/2]p_0(\alpha_W)\), where \(p_0\) is continuous, bounded, and bounded away from zero on \(\calW\).  Let \(\mathcal E\) be any class of weak-GMM limit experiments \(e=(\alpha_W^*,m)\) such that, for every \(e\in\mathcal E\), \(Q(\cdot)\) is finite and continuous almost surely and \(t_\lambda(g)\) is well defined almost surely for every \(\lambda\in[0,\bar\lambda]\).  There is a constant \(C<\infty\), depending only on \(\calW\), \(B\), \(P\), \(\bar\alpha_W\), and \(\bar\lambda\), such that, uniformly over \(e\in\mathcal E\) and uniformly over \(0\le\lambda\le\bar\lambda\),
\[
M_{\lambda,B}(e)-M_{0,B}(e)
=
-2\lambda\left[A_e(P)+[P(\alpha_W^*-\bar\alpha_W)]'L_e(P)\right]+R_e(\lambda),
\qquad
|R_e(\lambda)|\le C\lambda^2 .
\]
If a subclass \(\mathcal C\subset\mathcal E\) satisfies, for some \(a>0\), \(A_e(P)\ge a\) and \(\|P(\alpha_W^*-\bar\alpha_W)\|\le a/(2K)\) for every \(e=(\alpha_W^*,m)\in\mathcal C\), then there is \(\lambda^*>0\) such that, uniformly over \(e\in\mathcal C\) and \(0<\lambda\le\lambda^*\), \(M_{\lambda,B}(e)-M_{0,B}(e)\le -a\lambda/2<0\).
\end{theorem}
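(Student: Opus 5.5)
The plan is to work pathwise in $g$ with a second-order Taylor expansion in $\lambda$ of the quasi-posterior mean $t_\lambda(g)$. All bounds will depend only on compactness of $\calW$, on $P$, on $\bar\alpha_W$ and on $\bar\lambda$, and never on $Q$ or on $e$. Only then will we take $\mathbb E_m$.

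Fix $g$ with $Q$ finite and continuous. Write $\nu_\lambda$ for the probability measure on $\calW$ with density proportional to $\exp[-Q/2-\lambda\psi]p_0$, where $\psi(\alpha)=(\alpha-\bar\alpha_W)'P(\alpha-\bar\alpha_W)/2$. This is an exponential tilt of $\nu_0=\Pi_0(\cdot\mid g)$. Hence $t_\lambda=\mathbb E_{\nu_\lambda}[\alpha]$ is smooth in $\lambda$, with the following derivatives:
\[
\partial_\lambda t_\lambda=-\operatorname{Cov}_{\nu_\lambda}(\alpha,\psi),
\qquad
\partial_\lambda^2 t_\lambda=\kappa_{\nu_\lambda}(\alpha,\psi,\psi),
\]
where $\kappa$ denotes the third joint cumulant. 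Because $\calW$ is compact, $|\psi|$ and $\|\alpha\|$ are bounded uniformly, and centred moments of any order under any probability measure on $\calW$ are bounded by powers of $D_{\calW}$ times constants depending on $P$ and $\bar\alpha_W$. This gives $\|\partial_\lambda^2 t_\lambda\|\le C_1$ uniformly in $g$, $e$ and $\lambda\in[0,\bar\lambda]$. Taylor's theorem then yields
\[
t_\lambda=t_0+\lambda\dot t_0+\rho_\lambda,
\qquad
\|\rho_\lambda\|\le C_1\lambda^2/2 .
\]

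Next I compute $\dot t_0=-\operatorname{Cov}_{\nu_0}(\alpha,\psi)$ explicitly. Expanding $\psi$ around $t_0$ with $u=\alpha-t_0$ gives
\[
\psi=\text{const}+u'P(t_0-\bar\alpha_W)+\tfrac12u'Pu .
\]
So $\dot t_0=-\Omega_eP(t_0-\bar\alpha_W)-\tfrac12\tau_{e,P}$, which matches the paper's local formula $\Omega(g)P[h_c-\bar h]-\tfrac12\tau_P$. Now write $t_0-\bar\alpha_W=r_e+(\alpha_W^*-\bar\alpha_W)$.

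For the MSE, use Proposition~\ref{prop:identity}, or simply expand the square:
\[
\|B(t_\lambda-\alpha_W^*)\|^2-\|Br_e\|^2=2(Br_e)'B(\lambda\dot t_0+\rho_\lambda)+\|B(\lambda\dot t_0+\rho_\lambda)\|^2 .
\]
The linear-in-$\lambda$ term, after taking $\mathbb E_m$, is
\[
-2\lambda\Big\{\mathbb E_m\big[(Br_e)'B(\Omega_ePr_e+\tfrac12\tau_{e,P})\big]+\mathbb E_m\big[(Br_e)'B\Omega_eP(\alpha_W^*-\bar\alpha_W)\big]\Big\}.
\]
The first expectation is $A_e(P)$. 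The second is $[P(\alpha_W^*-\bar\alpha_W)]'\mathbb E_m[\Omega_eB'Br_e]=[P(\alpha_W^*-\bar\alpha_W)]'L_e(P)$, using the symmetry of $\Omega_e$ and $P$.

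The remainder collects three pieces: $2(Br_e)'B\rho_\lambda$, $\lambda^2\|B\dot t_0\|^2$, and the cross and square terms in $\rho_\lambda$. Each is bounded pathwise by a constant times $\lambda^2$. This uses $\|r_e\|\le D_{\calW}$ (both $t_0$ and $\alpha_W^*$ lie in $\calW$, the latter by Assumption~\ref{ass:primitive-weak}(ii); if that is not part of $\mathcal E$, use the convex hull), $\|\dot t_0\|\le C_2$, and $\lambda\le\bar\lambda$. All expectations are finite because everything is bounded, so $|R_e(\lambda)|\le C\lambda^2$ with $C$ free of $e$. Measurability of the quantities in $g$ follows from continuity of the integrals in $Q$.

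For the second claim, bound $|[P(\alpha_W^*-\bar\alpha_W)]'L_e|\le\|P(\alpha_W^*-\bar\alpha_W)\|\,\|L_e\|$. We have $\|L_e\|\le\mathbb E_m\|\Omega_e\|\,\|B\|_{\mathrm{op}}^2\|r_e\|\le D_{\calW}^2\cdot\|B\|^2_{\mathrm{op}}D_{\calW}\le K$, since $\|\Omega_e\|\le D_{\calW}^2$. Under the hypothesis this cross term is at most $a/2$. The bracket is therefore at least $a/2$, and
\[
M_{\lambda,B}-M_{0,B}\le-a\lambda+C\lambda^2\le-a\lambda/2
\]
for $\lambda\le\lambda^*:=\min(\bar\lambda,a/(2C))$.

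The main obstacle is the uniform second-derivative bound. It must hold for arbitrary continuous $Q$, with no curvature or identification, and that is exactly why it has to come purely from compactness via the cumulant representation, not from any expansion of $Q$. I would verify it by writing the third cumulant as a centred triple moment under $\nu_\lambda$ and bounding it by $\sup_{\calW}|\psi-\mathbb E\psi|^2D_{\calW}$.
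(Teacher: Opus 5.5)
Your proposal is correct and follows essentially the same route as the paper. Both arguments exponentially reweight $\Pi_0(\cdot\mid g)$, write the first $\lambda$-derivative of $t_\lambda$ as a covariance and the second as the centred triple moment $\mathbb E_\lambda[(f-T_\lambda f)(a_P-T_\lambda a_P)^2]$ (bounded by compactness alone, independently of $Q$ and $e$), Taylor-expand $t_\lambda$ to second order, compute $\dot t_0=-\Omega_e(g)P[r_e(g)+\alpha_W^*-\bar\alpha_W]-\tau_{e,P}(g)/2$, and bound $\|L_e(P)\|\le D_{\calW}^3\|B\|_{\mathrm{op}}^2\le K$ for the second claim; the only cosmetic difference is that the paper uses $\lambda^*=\min[\bar\lambda,a/(2\max(C,1))]$ to avoid dividing by a possibly zero $C$.
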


A scalar fixed-center specialization makes the two terms in Theorem~\ref{thm:uniform-local} explicit.  Let the weak parameter be scalar, take \(B=1\), and use a quadratic prior path centered at \(c\), so that \(P=1\) and \(\bar\alpha_W=c\).  Write \(r(g)=t_0(g)-\alpha_W^*\), \(\omega(g)=\operatorname{Var}_{\Pi_0(\cdot\mid g)}(\alpha_W)\), and \(\kappa_3(g)=\int(\alpha_W-t_0(g))^3\Pi_0(d\alpha_W\mid g)\).  The expansion becomes
\[
M_\lambda-M_0
=
-2\lambda
\left[
\mathbb E_m\left[r(g)^2\omega(g)+\frac12r(g)\kappa_3(g)\right]
+(\alpha_W^*-c)\mathbb E_m[\omega(g)r(g)]
\right]
+O(\lambda^2).
\]
If the unpooled weak quasi-posterior is locally symmetric, \(\kappa_3(g)=0\).  If the hierarchy is correctly centered, \(c=\alpha_W^*\), small positive pooling lowers MSE whenever \(\mathbb E_m[r(g)^2\omega(g)]>0\): the rule has nonzero weak-limit error and the unpooled quasi-posterior has dispersion to shrink.  If \(c\ne\alpha_W^*\), the extra term \((\alpha_W^*-c)\mathbb E_m[\omega(g)r(g)]\) is the centering-bias cost.

The two inequalities in Theorem~\ref{thm:uniform-local} have a direct interpretation. The first says that there must be residual dispersion for pooling to reduce: the unpooled weak quasi-posterior must still be spread out in directions that matter for the reported target \(B\alpha_W\), and the quadratic pooling term \(P\) must shrink those directions. This is the possible variance gain from pooling. The second says that the true weak value must not be too far from the values favored by the pooling relation. More precisely, it only has to be close in the directions that \(P\) actually penalizes. If \(P\) is singular, some directions are left alone. The set of values left unpenalized is the pooling manifold. In common-mean pooling, for example, the hierarchy shrinks cross-study deviations but does not shrink the common level, so the common level may be far from zero without violating the condition. Corollary~\ref{cor:projection-gain} gives a projection-based sufficient condition for the variance-gain part of the theorem, and the appendix records scalar fixed-center, coordinatewise ridge, and common-mean examples. In applications, large movements of \(\tilde\alpha_{n,\lambda}^{\mathrm{full}}\) along a reported \(\lambda\)-path should be read as evidence that the conclusion depends materially on the chosen pooling relation.

\subsection{Scalar just-identified weak IV}\label{sec:weak-iv}

The scalar weak-IV example isolates a case in which the full comparison with \(\lambda>0\) can be written in closed form. Consider one study and suppress the study index. Let \(Y_i\) be the outcome, \(X_i\) a scalar endogenous regressor, and \(Z_i\) a scalar instrument, after any desired residualization on controls. The familiar just-identified IV model is
\[
        Y_i=\alpha_W^* X_i+u_i,
        \qquad
        X_i=\pi_n Z_i+v_i,
        \qquad
        \pi_n=\gamma/\sqrt n,
\]
with \(\mathbb E[Z_i u_i]=0\) and \(\mathbb E[Z_i v_i]=0\). Here \(\gamma\) is the example-specific local first-stage coefficient. The sequence \(\pi_n=\gamma/\sqrt n\) is the weak-first-stage sequence: the first stage is local to zero, so the IV slope is not pinned down at the usual \(\sqrt n\) scale.

For a candidate slope \(a\), the just-identified IV/GMM moment is \(G_n(a)=n^{-1/2}\sum_{i=1}^n Z_i(Y_i-aX_i)\). Writing \(h=a-\alpha_W^*\), we have \(G_n(\alpha_W^*+h)=G_{u,n}-G_{x,n}h\), where \(G_{u,n}=n^{-1/2}\sum_{i=1}^n Z_i u_i\) and \(G_{x,n}=n^{-1/2}\sum_{i=1}^n Z_i X_i\). Under the weak-IV sequence, \((G_{u,n},G_{x,n})\dto(G_u,G_x)\). Thus the weak-limit moment is \(M(h)=G_u-G_xh\), with scalar GMM weight one. The unpooled IV weak-limit rule solves \(M(h)=0\), giving \(H=G_u/G_x\) when \(G_x\ne0\). An induced normal pooling prior \(h\sim N(h_c,\lambda^{-1})\), with scalar pooling precision \(\lambda>0\), gives the pooled quasi-posterior mean \(\Delta_\lambda=(G_xG_u+\lambda h_c)/(G_x^2+\lambda)\). Here \(\Delta_\lambda\) is the weak-limit error of the pooled slope estimator, so the corresponding slope is \(\alpha_W^*+\Delta_\lambda\). For the conditional comparison, let \(b(G_x)=\mathbb{E}_m(G_u\mid G_x)\) and \(\sigma^2(G_x)=\operatorname{Var}_m(G_u\mid G_x)\).  Conditional on \(G_x\ne0\),
\[
\mathbb{E}_m(H^2\mid G_x)=\frac{b(G_x)^2+\sigma^2(G_x)}{G_x^2},
\qquad
\mathbb{E}_m(\Delta_\lambda^2\mid G_x)
=
\frac{G_x^2\sigma^2(G_x)+[G_xb(G_x)+\lambda h_c]^2}{(G_x^2+\lambda)^2}.
\]
Subtracting the pooled conditional MSE from the unpooled conditional MSE gives
\[
\mathbb{E}_m(H^2\mid G_x)-\mathbb{E}_m(\Delta_\lambda^2\mid G_x)
=
\frac{\lambda D_\lambda(G_x)}{G_x^2(G_x^2+\lambda)^2},
\]
where
\[
D_\lambda(G_x)
=
(2G_x^2+\lambda)(b(G_x)^2+\sigma^2(G_x))
-2G_x^3b(G_x)h_c
-\lambda G_x^2h_c^2 .
\]

\begin{proposition}\label{prop:iv}
In the scalar just-identified weak-IV limit above, suppose the conditional first and second moments defining \(b(G_x)\) and \(\sigma^2(G_x)\) exist at the realized value of \(G_x\).  Condition on a realized \(G_x\ne0\).  For any fixed \(\lambda>0\), the pooled rule has lower conditional MSE than the unpooled IV rule if and only if \(D_\lambda(G_x)>0\). If \(P_m(G_x\ne0)=1\), \(\mathbb{E}_m[(b(G_x)^2+\sigma^2(G_x))/G_x^2]=\infty\), \(\mathbb{E}_m(G_x^2G_u^2)<\infty\), and \(h_c\) is finite, then unpooled weak IV has infinite asymptotic MSE, while the pooled rule has finite asymptotic MSE uniformly over every compact \(\Lambda_{IV}\subset(0,\bar\lambda]\).
\end{proposition}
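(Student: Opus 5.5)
The proof splits into a conditional algebraic comparison and an integrability argument for the unconditional statement.

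\emph{Conditional comparison.} Fix a realized \(G_x\neq0\) and write \(b=b(G_x)\), \(\sigma^2=\sigma^2(G_x)\). Since \(H=G_u/G_x\), conditioning gives \(\mathbb E_m(H^2\mid G_x)=\mathbb E_m(G_u^2\mid G_x)/G_x^2=(b^2+\sigma^2)/G_x^2\). For the pooled rule, \(\Delta_\lambda\) is affine in \(G_u\) given \(G_x\). Its conditional mean is \((G_xb+\lambda h_c)/(G_x^2+\lambda)\) and its conditional variance is \(G_x^2\sigma^2/(G_x^2+\lambda)^2\). Adding squared mean and variance yields the displayed formula.

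Put both terms over the common denominator \(G_x^2(G_x^2+\lambda)^2\). The numerator is
\[
(b^2+\sigma^2)(G_x^2+\lambda)^2-G_x^2\bigl[G_x^2\sigma^2+(G_xb+\lambda h_c)^2\bigr].
\]
Expanding, the \(G_x^4(b^2+\sigma^2)\) terms cancel. What remains is
\[
2\lambda G_x^2(b^2+\sigma^2)+\lambda^2(b^2+\sigma^2)-2\lambda G_x^3bh_c-\lambda^2G_x^2h_c^2=\lambda D_\lambda(G_x).
\]
The denominator and \(\lambda\) are strictly positive, so the difference is positive if and only if \(D_\lambda(G_x)>0\). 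This is a routine check, and it gives the first claim. Only the existence of the conditional first and second moments is used.

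\emph{Unconditional claims.} By iterated expectations and \(P_m(G_x\ne0)=1\),
\[
\mathbb E_m H^2=\mathbb E_m\bigl[(b^2+\sigma^2)/G_x^2\bigr]=\infty,
\]
using the tower property for nonnegative variables, so no integrability is needed. For the pooled rule, use \((x+y)^2\le 2x^2+2y^2\) to bound
\[
\Delta_\lambda^2\le 2G_x^2G_u^2/(G_x^2+\lambda)^2+2\lambda^2h_c^2/(G_x^2+\lambda)^2.
\]
Since \((G_x^2+\lambda)^2\ge\lambda^2\), this gives \(\Delta_\lambda^2\le 2G_x^2G_u^2/\lambda^2+2h_c^2\).

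For uniformity over a compact \(\Lambda_{IV}\subset(0,\bar\lambda]\), let \(\underline\lambda=\min\Lambda_{IV}>0\). Then
\[
\sup_{\lambda\in\Lambda_{IV}}\mathbb E_m\Delta_\lambda^2\le 2\,\mathbb E_m(G_x^2G_u^2)/\underline\lambda^2+2h_c^2<\infty.
\]

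\emph{Main obstacle.} The delicate point is not the algebra but the interpretation of "asymptotic MSE." It should be read as the weak-limit pointwise MSE \(M_B\) with \(B=1\), as in Section~\ref{sec:pointwise-risk}. Compactness away from \(\lambda=0\) is essential, since the bound blows up as \(\lambda\downarrow0\), consistent with the unpooled rule's infinite risk. If a finite-sample transfer were also intended, I would invoke Proposition~\ref{prop:transfer}, with uniform integrability supplied by the same dominating bound applied to the finite-sample analogues.
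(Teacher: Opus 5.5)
Your proposal is correct and follows the paper's proof essentially line for line. It uses the same conditional mean--variance decomposition of \(\Delta_\lambda\), the same common-denominator algebra yielding \(\lambda D_\lambda(G_x)\), iterated expectations for the infinite unpooled risk, and the same envelope \(2\underline\lambda^{-2}G_x^2G_u^2+2h_c^2\) for uniform finiteness of the pooled risk. The only cosmetic difference is that the paper states this envelope as a pathwise bound on \(\sup_{\lambda\in\Lambda_{IV}}\Delta_\lambda^2\) before taking expectations, rather than bounding \(\sup_\lambda\mathbb{E}_m\Delta_\lambda^2\); your pathwise inequality already delivers that stronger form.
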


Uniform improvement over a set \(\Lambda_1\) of positive pooling precisions requires \(\inf_{\lambda\in\Lambda_1}D_\lambda(G_x)>0\).  The conditional mean \(b(G_x)\) records remaining local drift in the structural reduced-form component after conditioning on the weak first stage.  In the orthogonalized case, \(b(G_x)=0\), as in the canonical Gaussian weak-IV limit when the structural reduced-form component is uncorrelated with the first-stage limit, or after replacing it by the residual from its linear projection on \(G_x\).  The sign condition then reduces to
\[
h_c^2<\sigma^2(G_x)(G_x^{-2}+2\lambda^{-1}),
\]
with a positive margin uniformly over \(\lambda\in\Lambda_1\) for uniform improvement on \(\Lambda_1\).  The left side is the squared centering error of the hierarchy.  The right side is the weak-identification variance gain from replacing \(G_x^{-1}\) by the ridge-stabilized factor \(G_x/(G_x^2+\lambda)\).  When \(G_x\) is close to zero, the allowable centering error is large because unpooled IV is very noisy; when \(G_x\) is large, the allowable centering error is smaller because the unpooled ratio is already stable.

The finite-MSE part of Proposition~\ref{prop:iv} allows \(G_x^{-1}\) to have infinite second moment while requiring the ridge-stabilized numerator \(G_xG_u\) to remain square integrable.  Positive pooling then regularizes the weak-IV ratio uniformly on compact subsets bounded away from \(\lambda=0\).

\section{Reporting \texorpdfstring{$\lambda$}{lambda} and conducting inference}
\label{sec:lambda-inference}

\subsection{Pooling Strength}\label{sec:lambda}\label{sec:interpretation-inference}

The MSE results have a direct implication: a QBHM should not be presented as a single pooled estimate indexed by an unexplained value of \(\lambda\). The numerical value of \(\lambda\) is interpretable only after the normalization of \(q_1\) has been fixed. In the normal--normal example, \(\lambda\) determines the effective weight \(c_{j,n}(\lambda)\) on the unpooled group summary in Proposition~\ref{prop:normal-normal-summary}. Equivalently, when \(q_1\) is a Gaussian least-squares term with scale \(\tau_0^2\), the hierarchical precision is \(\lambda/\tau_0^2\). Changing \(\lambda\) and changing the variance scale inside \(q_1\) therefore change the same practical object: the amount of pooling. The role of \(\lambda\) in the general QBHM is both to handle \(q_1\) functions that do not have a natural ``variance'' term and to emphasize that researchers should report results over a grid of \(\lambda\) values, rather than at a single chosen value.

In applications, this should be a grid \(\Lambda_G\) that gives a range of strong to weak pooling under the chosen normalization of \(q_1\). A dense grid near zero is useful when small amounts of pooling are likely to matter, as highlighted by Theorem \ref{thm:uniform-local}. Tables should report marginal quasi-posterior means and quasi-posterior intervals over the grid, together with the exact normalization of the pooling term and the prior distribution used. Ideally, results should be robust to changes in \(\lambda\) and the prior, since these are chosen by the researcher. Theorem \ref{thm:strong-equivalence} suggests that this should be true for strongly identified studies, but for weakly identified studies, both the prior and \(\lambda\) remain asymptotically relevant. It may nevertheless be the case that this does not matter much in practice, and reporting these checks can make the results more convincing to others. If the study's parameter changes substantially with \(\lambda\) and/or the prior, a researcher should note this and try to provide an empirical basis for their main choices, such as genuine prior knowledge from earlier studies, empirical Bayes for the prior, or cross-validation for \(\lambda\).

In such settings, cross-validation can provide a heuristic way to choose a good value for \(\lambda\). It may also be useful if a researcher wants to present one main set of results for policy reasons, provided that they still report results over several different values of \(\lambda\). For leave-one-study-out cross-validation, fit the QBHM on the \(J-1\) studies other than \(j\), and use that fit to predict an object in the left-out study. Let \(x_{-j}\) and \(z_{-j}\) collect the data and covariates from all studies except \(j\), and let \(\Pi^{\theta}_{n,\lambda,-j}(d\theta\mid x_{-j},z_{-j})\) be the marginal quasi-posterior for \(\theta\) computed without study \(j\). Define
\[
        w_{j,\lambda,-j}(a_j)
        =\int \exp[\lambda q_{1,j}(a_j,\theta;z_j)]
        \Pi^{\theta}_{n,\lambda,-j}(d\theta\mid x_{-j},z_{-j}).
\]
The induced predictive law for the left-out group parameter is
\[
        \Pi^{\mathrm{pred}}_{\lambda,-j}(d a_j\mid x_{-j},z)
        =
        \frac{w_{j,\lambda,-j}(a_j)\pi_j(d a_j)}
        {\int w_{j,\lambda,-j}(\tilde a_j)\pi_j(d\tilde a_j)} .
\]
In order to perform cross-validation, we need a way to evaluate how well the predicted law performs. Fortunately, many structural applications already provide such a criterion. For example, it is common to evaluate a model using moments that were not used to estimate its parameters, or more generally to assess whether it matches features of the data that it was not explicitly designed to match. This idea can be formalized through a validation criterion \(\mathcal V_j(a_j;x_j,z_j)\), such as a collection of validation moments, where smaller values indicate better predictive performance on held-out data or validation moments. Then
\[
        \operatorname{CV}(\lambda)
        =\sum_{j=1}^J
        \big \| \, \mathbb{E}_{\Pi^{\mathrm{pred}}_{\lambda,-j}}
        [\mathcal V_j(a_j;x_j,z_j)] \, \big\|,
        \qquad
        \hat\lambda_{\mathrm{CV}}\in\argmin_{\lambda\in\Lambda_G}\operatorname{CV}(\lambda).
\]

The selected value \(\hat\lambda_{\mathrm{CV}}\) is a tuning summary for the reported grid. Since \(J\) is fixed, we should not treat \(\hat\lambda_{\mathrm{CV}}\) as consistently learning some notion of the ``oracle \(\lambda\).'' However, we would generally expect the value selected by cross-validation to be more informative when \(J\) is large and \(\mathcal V_j(a_j;x_j,z_j)\) is an accurate validation criterion for the model. That said, a researcher should report results over a range of \(\lambda\) for robustness purposes.

\subsection{Statistical Inference}\label{sec:testing}

Statistical inference under strongly identified studies is handled in the usual way. Due to Theorem \ref{thm:strong-equivalence}, the QBHM estimator is asymptotically equivalent to the GMM estimator, and so is asymptotically normal. Theorems 3 and 4 of \citet{ChernozhukovHong2003} then apply verbatim; under a generalized information inequality\footnote{See section 4 of \citet{ChernozhukovHong2003} for more information about what this is and how it can be guaranteed to hold. This can be done by appropriately constructing the weighting matrix in a GMM problem.}, draws from the quasi-posterior itself can be used to estimate the asymptotic variance. However, testing under weak identification is more difficult, and requires us to consider the weak limit discussed in Section \ref{sec:main-fixedJ-theory}. If a researcher is concerned that some weakly identified parameter may be part of their hypothesis, they should test using the procedure below, rather than relying on draws from the quasi-posterior.

In that limit, testing a candidate weak value \(a_0\in\calW\) corresponds to \(H_{0,a_0}:m(a_0)=0\): when \(a_0\) is the unique zero of \(m\), this is the usual point null \(\alpha_W^*=a_0\), while more generally it is the moment-validity null that is inverted to form a weak-identification-robust confidence set. In this subsection, \(\Sigma\) denotes the covariance function of the Gaussian weak moment process, not the GMM weight \(W\) used in the objective. Let \(\Sigma_{00}=\Sigma(a_0,a_0)\) be nonsingular, set \(V_{a_0}(\alpha_W)=\Sigma(\alpha_W,a_0)\Sigma_{00}^{-1}\), and define the residual path \(h_{a_0}(\alpha_W)=g(\alpha_W)-V_{a_0}(\alpha_W)g(a_0)\), the component of the process left after the Gaussian projection on \(g(a_0)\).  Under \(H_{0,a_0}\) in the Gaussian weak-GMM limit experiment, \(g(a_0)\sim N(0,\Sigma_{00})\), and the Gaussian projection construction makes \(g(a_0)\) independent of the residual path \(h_{a_0}(\cdot)\).  Hence, conditional on the observed residual path \(h_{a_0}\), the null law of the full process is generated by \(g^*(\alpha_W)=h_{a_0}(\alpha_W)+V_{a_0}(\alpha_W)\xi^*\), where \(\xi^*\sim N(0,\Sigma_{00})\).  This conditional-inference argument follows \citet{AM2016conditional}: the hierarchy can enter the test statistic or the alternative weighting, while the conditional null simulation supplies the critical value. For a statistic \(T_\lambda\), let the randomized conditional critical rule be
\begin{equation}\label{eq:randomized-conditional-test}
        \varphi_{\lambda,a_0}(g)
        =\one[T_\lambda(g)>c_{\zeta,\lambda}(h_{a_0})]
        +\rho_{\zeta,\lambda}(h_{a_0})\one[T_\lambda(g)=c_{\zeta,\lambda}(h_{a_0})],
\end{equation}
where \(c_{\zeta,\lambda}(h)\) and \(\rho_{\zeta,\lambda}(h)\in[0,1]\) are chosen so that the conditional null rejection probability is \(\zeta\).  If the conditional null distribution is continuous at the critical value, the randomization term is immaterial.

\begin{proposition}\label{prop:testing}
Work in the Gaussian weak-GMM limit experiment under $H_{0,a_0}$, suppose \(\Sigma_{00}=\Sigma(a_0,a_0)\) is nonsingular, and let \(T_\lambda\) be measurable for every \(\lambda\in\Lambda\).  Suppose that, on a residual-path set \(\calH_{0,a_0}\) that has probability one under every null nuisance mean,
\begin{equation}\label{eq:conditional-size}
        P_0^h[T_\lambda(g)>c_{\zeta,\lambda}(h)]
        +\rho_{\zeta,\lambda}(h)P_0^h[T_\lambda(g)=c_{\zeta,\lambda}(h)]=\zeta
\end{equation}
for every \(h\in\calH_{0,a_0}\) and every \(\lambda\in\Lambda\), where \(P_0^h\) is the regular conditional null law generated by \(h(\alpha_W)+V_{a_0}(\alpha_W)\xi\), \(\xi\sim N(0,\Sigma_{00})\).  Then \(\varphi_{\lambda,a_0}\) has conditional size \(\zeta\) for every \(\lambda\in\Lambda\), and \(\mathbb{E}_{a_0,m}\varphi_{\lambda,a_0}(g)=\zeta\) for every null nuisance mean with \(m(a_0)=0\).
\end{proposition}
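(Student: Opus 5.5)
The plan is a standard conditioning argument in the Gaussian limit experiment, in three steps.

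\emph{Step 1: the decomposition.} Under \(H_{0,a_0}\) write \(g=m+\mathbb G\). Define \(\xi:=g(a_0)=m(a_0)+\mathbb G(a_0)=\mathbb G(a_0)\sim N(0,\Sigma_{00})\). The residual path is \(h_{a_0}(\alpha_W)=m(\alpha_W)-V_{a_0}(\alpha_W)m(a_0)+\mathbb G(\alpha_W)-V_{a_0}(\alpha_W)\mathbb G(a_0)\), and under the null the second term of its mean vanishes. The process \(\mathbb G(\cdot)-V_{a_0}(\cdot)\mathbb G(a_0)\) is jointly Gaussian with \(\mathbb G(a_0)\), and its cross-covariance is
\[
\Sigma(\alpha_W,a_0)-V_{a_0}(\alpha_W)\Sigma_{00}=0,
\]
by the definition of \(V_{a_0}\). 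For jointly Gaussian families, zero covariance at all finite-dimensional marginals gives independence. Continuity of sample paths and Borel measurability in \(\ell^\infty(\calW)\) then extend this to independence of \(\xi\) and the whole path \(h_{a_0}\), because the cylinder \(\sigma\)-field generated by a countable dense set determines the Borel law. In addition, \(g=h_{a_0}+V_{a_0}\xi\) identically.

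\emph{Step 2: the conditional law.} By independence, a regular conditional law of \(g\) given \(h_{a_0}=h\) is the law of \(h+V_{a_0}\xi\) with \(\xi\sim N(0,\Sigma_{00})\), which is exactly \(P_0^h\). This holds for \(\mathcal L(h_{a_0})\)-a.e.\ \(h\), whatever the nuisance mean \(m\) with \(m(a_0)=0\) is. Only the distribution of \(h_{a_0}\) depends on \(m\); the conditional kernel does not.

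\emph{Step 3: conditional and unconditional size.} The conditional rejection probability of \(\varphi_{\lambda,a_0}\) given \(h_{a_0}=h\) is the left side of \eqref{eq:conditional-size}. This requires \(c_{\zeta,\lambda}\) and \(\rho_{\zeta,\lambda}\) to be measurable functions of \(h\). I would assume this, or obtain it by defining them as conditional quantiles of a measurable kernel. The left side equals \(\zeta\) on \(\calH_{0,a_0}\), a set of probability one under every null mean, which gives conditional size \(\zeta\). Taking expectations over \(h_{a_0}\) via the tower property then yields \(\mathbb E_{a_0,m}\varphi_{\lambda,a_0}(g)=\zeta\) for every \(\lambda\in\Lambda\).

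The main obstacle is the infinite-dimensional independence and disintegration in Step 1, namely passing from finite-dimensional zero covariance to a regular conditional law on path space. I would handle it through separability of \(C(\calW)\) and the existence of regular conditional probabilities on Polish spaces.
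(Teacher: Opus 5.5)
Your proposal is correct and follows essentially the same route as the paper: the projection decomposition \(g=h_{a_0}+V_{a_0}g(a_0)\), zero cross-covariance plus joint Gaussianity giving independence of \(g(a_0)\sim N(0,\Sigma_{00})\) from the residual path, identification of the conditional null law with \(P_0^h\), and iterated expectations over \(h_{a_0}\). Your added care on two points goes beyond the paper, which only says ``work with a separable version'': you extend finite-dimensional independence to the whole path via separability of the continuous path space, and you require measurability of \(c_{\zeta,\lambda}\) and \(\rho_{\zeta,\lambda}\) in \(h\).
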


The test above is one which would be valid in the weak limit, not one which is directly implementable. In practice, we would use a feasible version of this test. This uses a covariance-kernel estimator \(\widehat\Sigma_n\) for this inference step, replaces \(g\) and \(\Sigma\) by \(g_n\) and \(\widehat\Sigma_n\), and forms
\[
h_{n,a_0}(\alpha_W)
=
g_n(\alpha_W)
-
\widehat\Sigma_n(\alpha_W,a_0)
\widehat\Sigma_n(a_0,a_0)^{-1}
g_n(a_0).
\]
The conditional simulation uses
\[
g_n^*(\alpha_W)
=
h_{n,a_0}(\alpha_W)
+
\widehat\Sigma_n(\alpha_W,a_0)
\widehat\Sigma_n(a_0,a_0)^{-1}\xi_n^*,
\qquad
\xi_n^*\sim N(0,\widehat\Sigma_n(a_0,a_0)).
\]
Proposition~\ref{prop:feasible-testing} in the appendix records sufficient covariance-consistency, continuity, and no-boundary conditions for this test to have valid asymptotic size. Confidence sets are then obtained by inverting the conditional tests over candidate values \(a_0\).  If a researcher is interested in testing \(B\alpha_W\) for some matrix $B$, rather than $\alpha_W$ itself, a valid confidence set is obtained by projection: if \(C_{n,W}\) is the inverted confidence set for \(\alpha_W\), then \(C_{n,B}:=\{B\alpha_W:\alpha_W\in C_{n,W}\}\) has at least the coverage of \(C_{n,W}\).

The hierarchy can also be used to form weighted-average-power (WAP) objective functions, which average power over alternatives using weights induced by the pooling relation \citep{AM2022}.  A hierarchy-weighted conditional quasi-likelihood-ratio statistic, formed from the weak-GMM quasi-likelihood under the null and over alternatives, puts more weight on alternatives that are close to the pooling relation when those alternatives are substantively central.  Proposition~\ref{prop:testing-wap} in the appendix gives the conditional WAP calculation, and the main reporting distinction is that \(\lambda\) controls point-estimation shrinkage and can weight alternatives for power, while weak-identification-robust confidence sets deliver coverage.

For hypothesis that contain both strongly and weakly identified components, Appendix~\ref{app:mixed-within-study-blocks}, especially Corollary~\ref{cor:mixed-reported-targets}, gives the separation between the root-\(n\) regular rate and the weak component's \(O(1)\), nondegenerate limit.  Standard inference for strongly identified components applies to the strong block, while weak-identification-robust confidence sets apply to the weak block.  A conservative set for a linear estimand on the original, unscaled parameter level, \(R_S\beta^S+R_W\alpha_W\), can be formed by combining a standard strong confidence set \(C_{n,S}(1-\eta_S)\) with a weak-identification-robust confidence set \(C_{n,W}(1-\eta_W)\), where \(\eta_S+\eta_W=\eta\).  Projecting \(\{R_S\beta+R_W\alpha_W:\beta\in C_{n,S}(1-\eta_S),\,\alpha_W\in C_{n,W}(1-\eta_W)\}\) then gives a set with asymptotic coverage at least \(1-\eta\) under the joint strong/weak convergence conditions.

\section{Monte Carlo evidence}\label{sec:mc}

This section studies the finite-sample behavior of the QBHM estimator in an IV production-function design with ten groups.  The design varies two features: first-stage strength and within-group sample size.  Group parameters are drawn from a common distribution, so the pooling relation is informative but not exact.  We compare the QBHM with unpooled GMM, using the same group-specific objective function in both cases.  The results show the expected pattern: when instruments are weak or groups are small, pooling reduces dispersion and lowers average mean squared error (MSE) across sites.  When within-group information is strong, the unpooled and pooled estimators are close.

\subsection{Data-generating process}

The model estimates a Cobb--Douglas production function with capital as the variable input.  The instrument is an exogenous treatment-induced shock to capital, in the spirit of \citet{de2008returns} and \citet{fafchamps2014microenterprise}.  To ground the simulation, the data-generating process uses micro data from eight randomized controlled trials: \citet{CaiSzeidl2024}, \citet{EggerEtAl2022}, \citet{FinkJackMasiye2020}, \citet{crepon2015estimating}, \citet{augsburg2015impacts}, \citet{attanasio2015impacts}, \citet{karlan2019debt}, and \citet{bari2024asset}.  Each study contains a treatment that shifts firm capital, and from these data we use three variables: a study identifier, pre-treatment capital, and treatment status.  For each synthetic group, we sample a source study with replacement, sample observations from that study with replacement, and add lognormal jitter to capital.  This preserves realistic cross-sectional capital distributions while allowing us to choose the number of groups and within-group sample sizes.  Post-treatment capital is generated as:
\begin{align}
\log K_{ji1}
&=
\mu_{\log K_0}
+
\delta_T T_{ji}
+
\delta_P\bigl(\log K_{ji0}-\mu_{\log K_0}\bigr)
+
\nu_{ji},\\
\nu_{ji}&\sim\mathcal N(0,\sigma_\nu^2).
\end{align}
Changing \(\delta_T\) and \(\sigma_\nu^2\) changes the first-stage strength of treatment as an instrument for post-treatment capital, with smaller \(\delta_T\) and larger \(\sigma_\nu^2\) producing weaker instruments.

Each synthetic group \(j\) is assigned production parameters drawn independently from the hierarchical distributions
\begin{align}
\ell_j:=\log a_{1j} &\sim \mathcal{N}\!\left(\mu_{a_1}, \sigma_{a_1}^2\right), \\
\operatorname{logit}(\rho_j) &\sim \mathcal{N}\!\left(\mu_\rho, \sigma_\rho^2\right).
\end{align}
Pre- and post-treatment revenue are generated from \(Y_{jit}=a_{1j}K_{jit}^{\rho_j}\exp(\varepsilon_{jit})\),
where \(\varepsilon_{jit}\) is drawn independently across pre- and post-treatment periods from a Student-\(t\) distribution.  In the moment objective function below, \(Y_{ji}:=Y_{ji1}\) and \(K_{ji}:=K_{ji1}\) denote post-treatment revenue and capital.

\subsection{Estimation}

We apply the following QBHM to the synthetic data.  The within-group stage uses an affine log-linear IV/GMM parameterization of the kind analyzed in the weak-IV example.  Write \(\ell_j:=\log a_{1j}\) and \(\alpha^{\mathrm{CD}}_j=(\ell_j,\rho_j)^\top\), where CD denotes Cobb--Douglas.  The objective function is written so that larger values are better:
\[
q_{2,j,n_j}(\alpha^{\mathrm{CD}}_j)
=
-\frac{n_j}{2}\,
\hat g_j(\alpha^{\mathrm{CD}}_j)^\top
\hat g_j(\alpha^{\mathrm{CD}}_j).
\]
Here \(\hat u_{ji}(\alpha^{\mathrm{CD}}_j)=\log Y_{ji}-\ell_j-\rho_j\log K_{ji}\), \(\hat g_j(\alpha^{\mathrm{CD}}_j)=n_j^{-1}\sum_{i=1}^{n_j}\zeta_{ji}\hat u_{ji}(\alpha^{\mathrm{CD}}_j)\), and \(\zeta_{ji}=(1,T_{ji})^\top\).  The moment is affine in \(\alpha^{\mathrm{CD}}_j\), and reported summaries for \(a_{1j}\) use the transformation \(a_{1j}=\exp(\ell_j)\).

The Gaussian pooling term follows the hierarchical structure of the data-generating process, with pooling on $\ell_j$ for the affine coefficient and on $\operatorname{logit}(\rho_j)$ to keep the production elasticity inside $(0,1)$:
\begin{align}
\ell_j &\sim \mathcal N(\mu_{a_1},\sigma_{a_1}^2),\\
\operatorname{logit}(\rho_j) &\sim \mathcal N(\mu_\rho,\sigma_\rho^2).
\end{align}
The QBHM quasi-posterior is sampled using Hamiltonian Monte Carlo as implemented in Stan, with capital centered and scaled within groups to improve numerical conditioning.

The comparison estimator is the unpooled GMM estimator obtained by maximizing the same group-specific objective function \(q_{2,j,n_j}(\alpha^{\mathrm{CD}}_j)\) group by group.  The transformation \(a_{1j}=\exp(\ell_j)\) is used when reporting results on the original production-function scale.

\subsection{Simulation design and results}

We run simulations that vary instrument strength and group sample size.  The simulation parameters and weakly informative hierarchical priors are reported in Appendix~\ref{app:mc}, Table~\ref{tab:simple_design_parameters}.  The priors are centered on reasonable but false values, and their standard deviations are broad relative to the data-generating process.  To check sensitivity to the prior specification, we rerun the simulations using very diffuse priors centered at zero, with variances more than 150 times the true data-generating-process scale.  The very diffuse priors change which value of \(\lambda\) performs best in this design, but do not change the main comparison with the unpooled estimator; the sensitivity results are reported in Appendix~\ref{app:mc}.

Table~\ref{tab:lowtech_combined_overall} reports the simulation results, with the top panel varying instrument strength.  The ``Very strong'' scenario corresponds to \((\delta_T=0.5,\sigma_\nu=0.4)\) and an average realized first-stage \(F\)-statistic across groups and replications of \(3813.89\).  The ``Weak'' scenario, \((\delta_T=0.06,\sigma_\nu=1.1)\), and the ``Very weak'' scenario, \((\delta_T=0.03,\sigma_\nu=1.4)\), have average \(F\)-statistics equal to \(8.34\) and \(2.06\), respectively.

The table reports average mean squared error for point estimates and simulation coverage of nominal 95\% intervals, using central quasi-posterior intervals for the QBHM and Wald intervals for the unpooled estimator. The overall pattern is consistent with the theory: when instruments are strong, the unpooled and QBHM estimators are close because within-group information dominates the fixed pooling term.  When instruments are weak, the QBHM estimator reduces dispersion enough to lower root mean squared error (RMSE) even after allowing for centering induced by the hierarchy. The sample-size experiments show the same mechanism, with the bottom panel using the ``Very strong'' instrument setting and varying the common group sample size \((n_j)\). When each group contains little information, pooling reduces dispersion and improves average MSE, while the gains from pooling are small when each group is precise. In addition, the results on coverage show the lessons of Section~\ref{sec:testing}: confidence intervals made by drawing from the quasi-posterior have good coverage when instrument strength is strong and group size is relatively large, but coverage gets generally worse when these do not hold. As such, for weak-identification-robust coverage in empirical work, Section~\ref{sec:testing} recommends weak-identification-robust confidence sets. 

The \(\lambda\)-grid gives the corresponding sensitivity check: in the ``Very strong'' scenario with large sample size, the unpooled estimator has lower RMSE at \(\lambda=1\), because strong within-group information leaves little variance for the hierarchy to remove.  As \(\lambda\) decreases, the QBHM estimator moves toward the unpooled estimator.  Full results for the grid are reported in Appendix~\ref{app:mc}, Table~\ref{tab:lowtech_lambda_sweep_overall}.  The \(\lambda\)-grid result is the finite-sample simulation analogue of the theoretical bias--variance tradeoff: more pooling can lower dispersion when within-group information is weak, but it can increase MSE when the hierarchy adds centering bias to a strongly informative group objective function.  In this design, the central QBHM quasi-posterior intervals tend to be conservative as finite-sample simulation summaries, while the unpooled estimator Wald summaries undercover.

\begin{table}[htbp]
\centering
\footnotesize
\setlength{\tabcolsep}{5pt}
\renewcommand{\arraystretch}{1.12}
\caption{Simple scenario estimator performance for varying instrument strength and group size}
\label{tab:lowtech_combined_overall}
\begin{tabular}{@{}C{1.8cm} C{2.6cm} S[table-format=1.3] S[table-format=1.3] S[table-format=1.3] S[table-format=1.3]@{}}
\toprule
\textbf{Parameter} & \textbf{Setting} & {\shortstack{\textbf{RMSE}\\\textbf{QBHM}}} & {\shortstack{\textbf{RMSE}\\\textbf{Unpooled}}} & {\shortstack{\textbf{Coverage}\\\textbf{QBHM}}} & {\shortstack{\textbf{Coverage}\\\textbf{Unpooled}}} \\
\midrule
\multicolumn{6}{@{}l}{\textit{Panel A: Varying instrument strength}} \\
$a_1$ & Very strong & 0.564 & 0.269 & 0.962 & 0.940 \\
      & Weak        & 1.490 & 4.790 & 0.988 & 0.861 \\
      & Very weak   & 1.560 & 4.610 & 0.989 & 0.868 \\
\addlinespace
$\rho$ & Very strong & 0.056 & 0.038 & 0.962 & 0.958 \\
      & Weak        & 0.082 & 0.223 & 0.988 & 0.947 \\
      & Very weak   & 0.081 & 0.232 & 0.990 & 0.945 \\
\midrule
\multicolumn{6}{@{}l}{\textit{Panel B: Varying group size}} \\
1000 & $a_1$    & 0.564 & 0.269 & 0.962 & 0.940 \\
     & $\rho$ & 0.056 & 0.038 & 0.962 & 0.958 \\
\addlinespace
50   & $a_1$    & 1.770 & 3.700 & 0.995 & 0.843 \\
     & $\rho$ & 0.084 & 0.199 & 0.995 & 0.964 \\
\addlinespace
20   & $a_1$    & 1.870 & 5.870 & 0.999 & 0.776 \\
     & $\rho$ & 0.087 & 0.266 & 0.999 & 0.919 \\
\bottomrule
\end{tabular}
\end{table}

\section{Empirical illustration}\label{sec:structural}

We illustrate the QBHM approach using microenterprise production functions from randomized controlled trials in the capital-drop literature.  The empirical exercise uses micro data from four studies: \citet{EggerEtAl2022}, denoted GE; \citet{bari2024asset}, denoted AB; \citet{de2008returns}, denoted RC; and \citet{fafchamps2014microenterprise}, denoted FY.  All four studies analyze microenterprises, include a treatment that induces a capital shock, and record firm capital as a primary or secondary variable.  The purpose is to show how the estimator behaves in a nonlinear grouped objective function with unstable study-specific features.

The model is based on the production-function component of the microenterprise poverty-trap model in \citet{banerjee2019can}.\footnote{\citet{banerjee2019can} is a working paper at the time of writing and the dataset is not yet publicly available.} Entrepreneurs choose between two technologies.  One has diminishing returns to capital, while the other has constant returns but requires a fixed cost to be paid.  This choice is represented by a production function with a Cobb--Douglas component at low levels of capital and a linear component at high levels of capital.  A nonconvex production function is necessary for a poverty trap in this class of models, and the location of the nonconvexity determines both whether a trap exists and how important it is.

Estimating the production function is challenging because the point of nonconvexity is unknown, depends on the parameters of the function, and may lie outside the support of observed capital for some groups.  When the switching threshold lies on the edge of, or outside, the observed support for a group, the model parameters are weakly identified or not identified.  \citet{banerjee2019can} addresses this difficulty in a well-powered experimental setting by assuming the existence of a trap and selecting from a discrete grid of parameter values using a GMM objective function. Because the QBHM approach pools information across studies while preserving study-level heterogeneity, the application makes the theory's shrinkage mechanism visible in a nonlinear moment problem.

The nonconvex production function we estimate is the following:

\begin{align}
y_{1,jit} &= a_{1j} K_{jit}^{\rho_j}, \\
y_{2,jit} &= a_{2j}\!\left(K_{jit} - K_{\mathrm{thresh},j}\right), \\
Y_{jit} &= \max\!\left(y_{1,jit}, y_{2,jit}\right)\exp(\varepsilon_{jit}).
\end{align}
Here \(Y_{jit}\) is observed revenue, \(K_{jit}\) is observed capital, and \(K_{\mathrm{thresh},j}\) is an unobserved fixed cost required to access the higher-return technology.  The QBHM and unpooled estimators estimate the parameter vector \(\alpha^{\mathrm{NC}}_j=(a_{1j},a_{2j},\rho_j,K_{\mathrm{thresh},j})^\top\), where NC denotes nonconvex.

\subsection{Estimation}
The estimation process is the same as in Section~\ref{sec:mc}, except that the residual \(\hat u_{ji}(\alpha^{\mathrm{NC}}_j)\) is computed from the nonconvex production function and the QBHM additionally assumes
\begin{align}
\log a_{2j} &\sim \mathcal{N}\!\left(\mu_{\log a_2}, \sigma_{a_2}^2\right), \\
K_{\mathrm{thresh},j} &\sim \mathcal{N}\!\left(\mu_K, \sigma_K^2\right).
\end{align}
The compact-support theory in Section~\ref{sec:formal-grouped-problem} treats parameter spaces as bounded, so these Gaussian specifications can be read as their restrictions to the finite parameter sets used for estimation; the display reports the underlying centers and scales.

To evaluate finite-sample performance in a setting close to the application, we run Monte Carlo simulations with a data-generating process whose structure and parameters mimic the applied setting.  The details and results of the simulations are reported in Appendix~\ref{app:mc}.  In this application-calibrated simulation design, the QBHM has lower RMSE and better finite-sample simulation interval coverage than the unpooled estimator for all four parameters.

Both estimators recover the proportion of firms below the switching threshold, with RMSE for the trapped proportion ranging from 0.014 to 0.056 and a slight edge to the QBHM.\footnote{The proportion of trapped firms depends on the location of the nonconvexity, which in turn depends on multiple parameter values.  It is possible for two models to have very different parameter estimates while still choosing the same nonconvexity location.}  The result reflects the fact that most simulated groups have a switching threshold inside the observed support, so the threshold is disciplined by observed variation rather than by extrapolation.

We use $\lambda=1$ as the reported pooled reference because the pooling terms are normalized as Gaussian hierarchical log densities.  We also report $\lambda=0$ as the unpooled reference and rerun the analysis with $\lambda=0.5$ and $\lambda=2$ to assess sensitivity.  The sensitivity checks, reported in Appendix~\ref{app:emp}, show that pooling affects estimates relative to $\lambda=0$, while estimates are stable across the positive pooling values $\lambda\in\{0.5,1,2\}$.  The $\lambda$ values index alternative induced prior paths in the sense of Section~\ref{sec:lambda}.

The empirical application does not select \(\lambda\) by cross-validation.  Instead, it uses the prespecified grid \(\{0,0.5,1,2\}\) to make the normalization and sensitivity transparent.  Stable conclusions across the grid are driven mainly by within-study moment information, while conclusions that appear only for larger $\lambda$ rely more heavily on the pooling relation.  When cross-validation is implemented, the CV-selected tuning value should be chosen from the same grid using a prespecified left-out-study score, such as predictive GMM fit or accuracy for the switching threshold.

\subsection{Empirical results}

Both the QBHM and unpooled estimators are used to estimate the production-function parameters in the four study data sets.  Parameter estimates are reported in Table~\ref{tab:real_fullmodel_4strongiv_parameter_estimates} in Appendix~\ref{app:emp}. Using the production-function point estimates, we calculate the implied switching threshold for capital, where \(a_{1j} K^{\rho_j}=a_{2j}\!\left(K - K_{\mathrm{thresh},j}\right)\), and the proportion of firms in the sample operating below this threshold.  Table~\ref{tab:real_fullmodel_4strongiv_switching_thresholds} reports the switching capital threshold and the proportion of firms below it for each of the four studies.

The estimator behavior is consistent with the weak-identification motivation, and we interpret the estimates as evidence of a poverty trap in the observed support only when the fitted nonconvexity switch lies within the study sample.  If all observed firms are fitted with the Cobb--Douglas component, we treat the exercise as finding no within-support switch, rather than as precisely estimating a trap outside the support.

The QBHM estimates a within-support nonconvexity switch, and hence a poverty-trap pattern under the maintained production-function model, in one study, \citet{fafchamps2014microenterprise} (FY).  Approximately 80\% of the sample lies below the switching threshold, and the estimated fixed cost creating the trap is \$446 in purchasing-power-parity terms.  In the other three studies, the QBHM places the switching threshold above or near the edge of the study support. Care is needed when interpreting parameter values that fall outside the study support, as illustrated by the very high GE and AB switching-capital estimates.  We interpret these results as evidence of no estimated switch within the support, not as precise estimates of distant economically meaningful thresholds.  A switching threshold far outside the sample is consistent with the model behavior expected when no within-support switch is present.\footnote{In both studies the QBHM has little direct evidence about the linear-branch parameters of the production function, $(a_2,K_{\mathrm{thresh}})$, so those components are disciplined mainly by the pooling relation and the other studies.}

The comparison with unpooled estimation shows where pooling matters.  The unpooled estimator agrees with the QBHM in the AB and RC studies in detecting no within-support switching threshold, although it assigns the observed support to a different branch of the production function.  The unpooled estimator also estimates a within-support trap in FY, but at a lower capital threshold than the QBHM.  It also estimates a within-support trap in \citet{EggerEtAl2022} (GE), while the QBHM places the switch outside or near the edge of the observed support.  Taken as an implementation exercise, the estimates indicate a trap in FY, no trap in RC or AB, and a less stable conclusion in GE.

Although the model implies that treatment effects should be larger for firms raised above the threshold by the RCT intervention, the present empirical section does not estimate that heterogeneity; its role is to show how QBHM organizes an unstable nonlinear grouped objective function.

\begin{table}[htbp]
\centering
\footnotesize
\setlength{\tabcolsep}{4pt}
\renewcommand{\arraystretch}{1.12}
\caption{Implied switching capital by study and estimator}
\label{tab:real_fullmodel_4strongiv_switching_thresholds}
\begin{tabular}{@{}C{1.4cm} C{1.8cm} S[table-format=3.3] S[table-format=7.3] S[table-format=1.3]@{}}
\toprule
\textbf{Group} & \textbf{Estimator} & {\textbf{$K_{\mathrm{thresh}}$}} & {\textbf{Switching capital}} & {\textbf{Prop. below}} \\
\midrule
GE & QBHM & 435.655 & 582199.232 & 1.000 \\
GE & Unpooled & 71.078 & 146.796 & 0.880 \\
AB & QBHM & 439.799 & 5213516.072 & 1.000 \\
AB & Unpooled & 0.330 & 0.336 & 0.000 \\
FY & QBHM & 446.301 & 2405.249 & 0.809 \\
FY & Unpooled & 16.151 & 269.617 & 0.227 \\
RC & QBHM & 435.031 & 22791.259 & 0.998 \\
RC & Unpooled & 0.038 & 0.743 & 0.000 \\
\bottomrule
\end{tabular}
\end{table}

\section{Conclusion}
\label{sec:conclusion}

This paper develops the QBHM as a way to borrow information across related econometric studies without forcing all studies into the same model.  Each study keeps its own objective function, instruments, controls, weights, fixed effects, and parameter dimension, with the hierarchy only added for components that the researcher believes are meaningfully comparable.  With a fixed number of studies, the goal is not to learn one common cross-study parameter; instead, it is to learn the study-specific estimands while using the hierarchy as a transparent pooling relation.

The results imply that, when a study is already precise, fixed pooling has little first-order effect.  When a study is noisy or weakly informative, pooling can materially affect estimates: it can reduce error if the hierarchy is well centered, but it can add bias if the pooling relation is wrong.  The decision-theoretic result should be read in this operational sense: for weakly identified blocks, the feasible QBHM quasi-posterior mean converges to the weak-GMM Bayes rule under squared loss for the reported estimand. As such, there is a decision-theoretic foundation for recommending the use of the QBHM estimator, even when some studies may be weakly identified. Since $\lambda$ and prior choice may potentially affect results, empirical work should report estimates over a path of pooling strengths, not just one pooled estimate.  A cross-validation choice of \(\lambda\), when used, should be treated as a reference point on that path.  Findings that are stable across the path are mainly driven by within-study information; findings that appear only under strong pooling rely more heavily on the hierarchy. The simulations and application show the intended use: QBHM behaves much like unpooled estimation when the group evidence is strong, and is most useful when some group objectives are noisy, nonlinear, or weakly informative.  Taken together, QBHM gives researchers a disciplined way to adaptively combine information from various studies, without making any strong distributional assumptions. Further applying this approach to substantive questions in economics remains an exciting direction for future research.

\appendix
\section*{Appendix}
\addcontentsline{toc}{section}{Appendix}

The appendix gives the formal proofs, regularity checks, auxiliary calculations, and additional simulation and empirical tables.

\section{Proofs}
\label{app:normal-summary-example}

\begin{proof}[Proof of Proposition~\ref{prop:normal-normal-summary}]
Fix the observed summaries \(\Delta_1,\ldots,\Delta_J\) and fix \(\lambda\ge0\). Write \(a_{j,n}:=\mathcal I_{j,n}/\sigma_{\Delta,j}^2\), \(b_\lambda:=\lambda/\tau_0^2\), and \(s:=s_0^{-2}\). The negative log quasi-posterior kernel, up to an additive constant, is
\[
\mathcal J_N(\alpha,\theta)
=
\sum_{j=1}^J a_{j,n}(\alpha_j-\Delta_j)^2
+b_\lambda\sum_{j=1}^J(\alpha_j-\theta)^2
+s(\theta-g_0)^2 .
\]
For a perturbation \((u_1,\ldots,u_J,v)\), the associated quadratic form is \(\sum_{j=1}^J a_{j,n}u_j^2+b_\lambda\sum_{j=1}^J(u_j-v)^2+sv^2\). Since \(a_{j,n}>0\) for all \(j\) and \(s>0\), this form is zero only at the origin. Hence the precision matrix is positive definite for every \(\lambda\ge0\), including the no-pooling endpoint \(\lambda=0\). The quasi-posterior is therefore a proper multivariate Gaussian law, and its mean is the unique minimizer of \(\mathcal J_N\).

The first-order condition for \(\alpha_j\), holding \(\theta\) fixed, is \(a_{j,n}(\alpha_j-\Delta_j)+b_\lambda(\alpha_j-\theta)=0\). Thus
\[
\alpha_j(\theta)
=
\frac{a_{j,n}\Delta_j+b_\lambda\theta}{a_{j,n}+b_\lambda}
=
c_{j,n}(\lambda)\Delta_j+
\bigl[1-c_{j,n}(\lambda)\bigr]\theta,
\qquad
c_{j,n}(\lambda)=\frac{a_{j,n}}{a_{j,n}+b_\lambda}.
\]
Substituting \(a_{j,n}=\mathcal I_{j,n}/\sigma_{\Delta,j}^2\) and \(b_\lambda=\lambda/\tau_0^2\) gives \(c_{j,n}(\lambda)=\left(1+\lambda\sigma_{\Delta,j}^2/(\mathcal I_{j,n}\tau_0^2)\right)^{-1}\). The first-order condition for \(\theta\) is \(b_\lambda\sum_{j=1}^J(\theta-\alpha_j)+s(\theta-g_0)=0\). At the minimizer, \(\theta-\alpha_j(\theta)=c_{j,n}(\lambda)(\theta-\Delta_j)\), so
\[
\left[s+b_\lambda\sum_{k=1}^J c_{k,n}(\lambda)\right]\theta
=
sg_0+b_\lambda\sum_{k=1}^Jc_{k,n}(\lambda)\Delta_k .
\]
Because the bracketed coefficient is strictly positive when \(s>0\), solving gives \eqref{eq:normal-normal-summary-theta}, and substituting this solution into \(\alpha_j(\theta)\) gives \eqref{eq:normal-normal-summary-alpha}. If \(\lambda>0\), then \(b_\lambda>0\) and \(\sum_{k=1}^Jc_{k,n}(\lambda)>0\). Letting \(s=s_0^{-2}\downarrow0\) in \eqref{eq:normal-normal-summary-theta} gives \eqref{eq:normal-normal-summary-diffuse}. This last step is a fixed-\(\lambda>0\) diffuse-prior limit; it is not a proper-prior definition of \(\tilde\theta_\lambda\) at \(\lambda=0\).
\end{proof}

\medskip

\begin{proof}[Proof of Corollary~\ref{cor:normal-normal-scale}]
By Proposition~\ref{prop:normal-normal-summary}, \(\tilde\alpha_{j,\lambda}-\Delta_j=[1-c_{j,n}(\lambda)](\tilde\theta_\lambda-\Delta_j)\). First control the common-mean path uniformly over \(\Lambda\). With \(b_\lambda=\lambda/\tau_0^2\), \(s=s_0^{-2}\), and \(D_\lambda=s+b_\lambda\sum_{k=1}^Jc_{k,n}(\lambda)\), equation \eqref{eq:normal-normal-summary-theta} can be written as \(\tilde\theta_\lambda=\omega_{0,n}(\lambda)g_0+\sum_{k=1}^J\omega_{k,n}(\lambda)\Delta_k\), where \(\omega_{0,n}(\lambda)=s/D_\lambda\) and \(\omega_{k,n}(\lambda)=b_\lambda c_{k,n}(\lambda)/D_\lambda\). The denominator is strictly positive, the weights are nonnegative, and their sum is one. Hence $\sup_{\lambda\in\Lambda}|\tilde\theta_\lambda|\le |g_0|+\max_{1\le k\le J}|\Delta_k|=O_p(1)$, and therefore \(\sup_{\lambda\in\Lambda}|\tilde\theta_\lambda-\Delta_j|=O_p(1)\). If \(\mathcal I_{j,n}=n_j\), then
\(1-c_{j,n}(\lambda)=\frac{x_{j,n}(\lambda)}{1+x_{j,n}(\lambda)}\), where \(x_{j,n}(\lambda)=\lambda\sigma_{\Delta,j}^2/(n_j\tau_0^2)\).
Since \(\Lambda\subset[0,\bar\lambda]\), \(\sup_{\lambda\in\Lambda}[1-c_{j,n}(\lambda)]\le\bar\lambda\sigma_{\Delta,j}^2/(n_j\tau_0^2)\). Multiplying by the preceding \(O_p(1)\) bound gives \(\sup_{\lambda\in\Lambda}\sqrt{n_j}|\tilde\alpha_{j,\lambda}-\Delta_j|=O_p(n_j^{-1/2})=o_p(1)\). If instead \(\mathcal I_{j,n}\le M<\infty\) along a subsequence and \(\lambda>0\) is fixed, then
\[
1-c_{j,n}(\lambda)
=
\frac{\lambda\sigma_{\Delta,j}^2}{\mathcal I_{j,n}\tau_0^2+\lambda\sigma_{\Delta,j}^2}
\ge
\frac{\lambda\sigma_{\Delta,j}^2}{M\tau_0^2+\lambda\sigma_{\Delta,j}^2}>0
\]
along that subsequence. Thus the shrinkage coefficient does not vanish in the bounded-information problem.
\end{proof}

\medskip

\begin{proof}[Proof of Proposition~\ref{prop:normal-normal-risk}]
All expectations in this proof are under the benchmark law conditional on the fixed center \(\theta\). Thus the integrated risk averages over both the random effect \(u_j\) and the sampling error \(\varepsilon_j\). Write \(v_j:=\sigma_{\Delta,j}^2/\mathcal I_{j,n}\). Since \(\alpha_j=\theta+u_j\) and \(\Delta_j=\alpha_j+\varepsilon_j\),
\(\delta_j(\eta_j)-\alpha_j=\eta_j(\alpha_j+\varepsilon_j)+(1-\eta_j)\theta-\alpha_j=-(1-\eta_j)u_j+\eta_j\varepsilon_j\).
The variables \(u_j\) and \(\varepsilon_j\) are independent and mean zero, with variances \(\tau^2\) and \(v_j\). Hence the cross term has expectation zero and \(R_j(\eta_j)=(1-\eta_j)^2\tau^2+\eta_j^2v_j\). This is a strictly convex quadratic because \(R_j''(\eta_j)=2(\tau^2+v_j)>0\). The first-order condition \(-2(1-\eta_j)\tau^2+2\eta_jv_j=0\) gives the unique minimizer \(\eta_j^\star=\tau^2/(\tau^2+v_j)\). Finally, \(c_{j,n}(\lambda)=1/[1+(\lambda/\tau_0^2)v_j]\), so \(c_{j,n}(\lambda)=\eta_j^\star=1/(1+v_j/\tau^2)\) if and only if \(\lambda/\tau_0^2=1/\tau^2\).
\end{proof}

\medskip

\begin{proof}[Proof of Corollary~\ref{cor:normal-pointwise-mse}]
Here \(\alpha_{j0}\), \(\theta\), and \(\eta\) are fixed, and the expectation is only over the sampling error \(\varepsilon_j\). The estimation error is $\delta_j(\eta)-\alpha_{j0}
=
\eta\varepsilon_j+(1-\eta)(\theta-\alpha_{j0})$. Since \(\mathbb E\varepsilon_j=0\) and \(\mathbb E\varepsilon_j^2=v_j\), the pointwise MSE is \(\eta^2v_j+(1-\eta)^2(\theta-\alpha_{j0})^2\). The unpooled rule corresponds to \(\eta=1\) and has MSE \(v_j\). Therefore
\(\MSE(\delta_j(\eta),\alpha_{j0})-v_j=-(1-\eta^2)v_j+(1-\eta)^2(\theta-\alpha_{j0})^2\).
Strict improvement is exactly negativity of this difference. If \(\eta<1\), then \(1-\eta^2=(1-\eta)(1+\eta)\), and division by \((1-\eta)^2>0\) gives \((\theta-\alpha_{j0})^2<[(1+\eta)/(1-\eta)]v_j\). If \(\eta=1\), the shrinkage and unpooled rules coincide and strict improvement is impossible.
\end{proof}

\medskip

\subsection{Auxiliary results}

We now prove the reductions and limit results used in the main text: the strong-study subsection derives the local quadratic reduction used for strongly identified studies, and the weak-experiment subsection gives the reduced weak-GMM process, the primitive prior-path result, and the study-level weak-marginal reduction. The posterior-continuity subsection then proves the main weak-limit theorem and the posterior-risk identity, and the Andrews--Mikusheva subsection records the diffuse-nuisance-prior interpretation of the continuously updated weak quasi-posterior. The mixed within-study subsection gives the profiled reduction for studies containing both regular and retained weak coordinates. MSE transfer follows from path convergence and uniform integrability, and the final proof subsections give the pointwise MSE tools, local-pooling expansion, weak-IV calculation, weak-identification testing arguments, and auxiliary calculations used in the text.

All expectations without an explicit subscript in a proof are taken under the probability measure identified at the start of that proof. Constants denoted by \(C\) may change from line to line but do not depend on the weak-GMM limit experiment when a uniform statement is being proved.

\subsection{Strong studies}

This subsection derives the local objects used in the strong-study and weak-marginal reductions from Assumptions~\ref{ass:primitive-setup}, \ref{ass:study-level-split}, and \ref{ass:primitive-strong}. Fix \(j\in\mathcal J_S\) and a compact set \(\Lambda\subset[0,\bar\lambda]\). Let \(\omega\) collect all compactly supported variables other than \(\alpha_j\) that are held fixed when the strong study is integrated out; this includes \(\theta\) when conditioning on the hierarchy parameter. For the local chart, set \(\alpha_j=\widehat\alpha_j^{\mathrm{un}}+h/\sqrt{n_j}\). The next lemma records the primitive-to-local reduction needed for both Theorem~\ref{thm:strong-equivalence} and Proposition~\ref{prop:weak-marginal-main}.

\begin{lemma}\label{lem:strong-primitive}
Suppose Assumptions~\ref{ass:primitive-setup}, \ref{ass:study-level-split}, and \ref{ass:primitive-strong} hold. Fix \(j\in\mathcal J_S\), and fix any deterministic sequence \(R_{j,n}\uparrow\infty\) with \(R_{j,n}=o(\sqrt{n_j})\) and \(R_{j,n}^2/\log n_j\to\infty\). There are events whose probabilities tend to one, a measurable local maximizer \(\widehat\alpha_j^{\mathrm{un}}\), and a sequence \(K_{j,n}\uparrow\infty\) with \(2K_{j,n}\le R_{j,n}\) and \(K_{j,n}=o(\sqrt{n_j})\), such that the following statements hold uniformly over retained \(\omega\) and \(\lambda\in\Lambda\). First, \(\sqrt{n_j}\|\widehat\alpha_j^{\mathrm{un}}-\alpha_{j0}\|=O_p(1)\). Second, for \(\|h\|\le K_{j,n}\),
\[
q_{2,j,n_j}(X_j,\widehat\alpha_j^{\mathrm{un}}+h/\sqrt{n_j})
-q_{2,j,n_j}(X_j,\widehat\alpha_j^{\mathrm{un}})
=-\frac12 h'\widehat H_{j,n}h+r^q_{j,n}(h),
\]
where \(\|\widehat H_{j,n}-H_j\|=o_p(1)\), the eigenvalues of \(\widehat H_{j,n}\) are bounded away from zero and infinity with probability approaching one, and \(\sup_{\|h\|\le K_{j,n}}|r^q_{j,n}(h)|=o_p(1)\). Third, writing \(p_j\) for the prior density,
\[
        \sup_{\|h\|\le K_{j,n}}
        \left|
        \frac{p_j(\widehat\alpha_j^{\mathrm{un}}+h/\sqrt{n_j})}
        {p_j(\widehat\alpha_j^{\mathrm{un}})}-1
        \right|=o_p(1),
\]
and
\[
        \sup_{\lambda\in\Lambda,\omega,\|h\|\le K_{j,n}}
        \left|
        \lambda\left[q_1(\widehat\alpha_j^{\mathrm{un}}+h/\sqrt{n_j},\omega;z)-q_1(\widehat\alpha_j^{\mathrm{un}},\omega;z)\right]
        \right|=o_p(1).
\]
Finally, if \(L_{j,n,\lambda}(h,\omega)\) denotes the full conditional integrand in the \(h\)-coordinate after division by its value at \(h=0\), and if \(\calH_{j,n}\) is the transformed support of \(h\), then
\[
\sup_{\lambda\in\Lambda,\omega}
\frac{\int_{\calH_{j,n}\cap\{\|h\|>K_{j,n}\}}(1+\|h\|)L_{j,n,\lambda}(h,\omega)\,dh}
{\int_{\calH_{j,n}\cap\{\|h\|\le K_{j,n}\}}L_{j,n,\lambda}(h,\omega)\,dh}
=o_p(1).
\]
\end{lemma}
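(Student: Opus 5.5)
The plan is to follow the standard Laplace-type argument of \citet{ChernozhukovHong2003}, written in the local chart \(\alpha_j=\widehat\alpha_j^{\mathrm{un}}+h/\sqrt{n_j}\), and to control the pooling term and the prior uniformly in the retained variables \(\omega\) and in \(\lambda\in\Lambda\).

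The first claim is obtained by constructing \(\widehat\alpha_j^{\mathrm{un}}\) as a measurable maximizer of \(q_{2,j,n_j}\) over the compact ball \(\mathcal U_j(\delta_j)\); existence and measurability follow from continuity and a measurable selection theorem. Consistency comes from the uniform convergence \(\bar q_{2,j,n_j}\to M_j\) in Assumption~\ref{ass:primitive-strong}(iii), together with the well-separated unique interior maximum in (i). For the rate, I use the local uniform convergence of the Hessian: with probability approaching one, \(-\nabla^2\bar q_{2,j,n_j}\) has eigenvalues bounded below near \(\alpha_{j0}\). Combined with the \(O_p(1)\) bound on \(n_j^{-1/2}\nabla q_{2,j,n_j}(\alpha_{j0})\), a Taylor expansion gives \(\sqrt{n_j}\|\widehat\alpha_j^{\mathrm{un}}-\alpha_{j0}\|=O_p(1)\). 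This also shows that \(\widehat\alpha_j^{\mathrm{un}}\) is interior, so its gradient vanishes.

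For the second claim, I take a second-order Taylor expansion about \(\widehat\alpha_j^{\mathrm{un}}\) with zero gradient. This yields \(-\tfrac12 h'\widehat H_{j,n}h\) with \(\widehat H_{j,n}=-\nabla^2\bar q_{2,j,n_j}(\widehat\alpha_j^{\mathrm{un}})\to H_j\). The remainder is bounded by \(\|h\|^2\) times the modulus of continuity of the sample Hessian over a ball of radius \(K_{j,n}/\sqrt{n_j}\). That modulus is \(o_p(1)\) by uniform convergence of \(\nabla^2\bar q\) to the continuous \(\nabla^2M_j\). Write \(\epsilon_n\) for this \(o_p(1)\) quantity and choose \(K_{j,n}\uparrow\infty\) slowly enough that \(K_{j,n}^2\epsilon_n=o_p(1)\). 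The standard diagonal argument handles this: pick deterministic \(K\) growing slower than the inverse of any rate that dominates \(\epsilon_n\) in probability, and also \(K_{j,n}\le R_{j,n}/2\).

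The third claim, on the prior ratio and the pooling term, is a mean-value bound. Assumption~\ref{ass:primitive-strong}(iv) gives \(C^1\) continuity of \(p_j\), which is bounded below by Assumption~\ref{ass:primitive-setup}(iv), and gives derivatives of \(q_1\) jointly continuous on a compact set, hence bounded. Both differences are therefore \(O(K_{j,n}/\sqrt{n_j})=o(1)\), uniformly in \(\omega\) and \(\lambda\le\bar\lambda\).

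The tail claim is the main obstacle. I split \(\{\|h\|>K_{j,n}\}\) into an intermediate shell \(K_{j,n}<\|h\|\le\delta\sqrt{n_j}\) and the far region.
\begin{itemize}[leftmargin=2.1em]
\item On the shell, the lower Hessian bound gives \(q_2\) differences \(\le -c\|h\|^2\).
\item On the far region, uniform convergence and identifiability give \(\bar q_2(\alpha)-\bar q_2(\widehat\alpha)\le-\eta\), so the integrand is at most \(e^{-n_j\eta/2}\).
\item The prior ratio and \(\exp(\lambda\Delta q_1)\) are bounded uniformly by compactness, by constants \(C\).
\end{itemize}
The numerator is then at most
\[
C\int_{\|h\|>K_{j,n}}(1+\|h\|)e^{-c\|h\|^2}dh+C\,n_j^{(d_j+1)/2}e^{-n_j\eta/2}=o(1).
\]
The denominator is bounded below by a constant times \(\int_{\|h\|\le1}e^{-C\|h\|^2}dh>0\), using the local quadratic expansion and the ratio bounds. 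Uniformity over \(\omega,\lambda\) holds because every constant came from compactness. The condition \(R_{j,n}^2/\log n_j\to\infty\) is the room needed if one instead ties \(K_{j,n}\) to \(R_{j,n}\) and must beat polynomial Jacobian factors. The delicate point is to ensure the shell bound holds on all of \(\|h\|\le\delta\sqrt{n_j}\), not only near \(K_{j,n}\). I handle this by using the uniform Hessian convergence over \(\mathcal U_j(\delta_j)\) rather than the fine Taylor remainder.
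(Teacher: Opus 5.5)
Your proposal is correct and follows essentially the same route as the paper: uniform strict concavity plus the \(O_p(1)\) score gives the rate and the first-order condition; a Taylor expansion at \(\widehat\alpha_j^{\mathrm{un}}\) with a slowly growing \(K_{j,n}\) gives the quadratic expansion; mean-value bounds handle the prior and pooling terms; and the tail combines a unit-ball denominator bound with a Gaussian shell and an exponentially small far region. The one genuine variation is that your shell runs out to \(\delta\sqrt{n_j}\), which makes the \(R_{j,n}^2/\log n_j\) condition unnecessary, whereas the paper cuts at \(R_{j,n}/2\) and does need it. The only point to tidy is the choice of \(K_{j,n}\): define the Hessian modulus over the fixed radius \(R_{j,n}/\sqrt{n_j}\) first, as the paper does with \(\eta_n(R_{j,n})\), and then pick \(K_{j,n}\le R_{j,n}/2\) from it, so that the definition is not circular.
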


\begin{proof}[Proof of Lemma \ref{lem:strong-primitive}]
Fix the group \(j\) and suppress the subscript \(j\) where doing so does not create ambiguity. Let \(n=n_j\), \(\alpha_0=\alpha_{j0}\), and \(S_n=n^{-1/2}\nabla_\alpha q_{2,j,n}(X_j,\alpha_0)\). By Assumption~\ref{ass:primitive-strong}, \(S_n=O_p(1)\). Since \(H_j=-\nabla_{\alpha\alpha}^2M_j(\alpha_0)\) is positive definite and \(H_j(a)=-\nabla_{\alpha\alpha}^2M_j(a)\) is continuous near \(\alpha_0\), reduce \(\delta_j\), if needed, so that \(v'H_j(a)v\ge c\|v\|^2\) on \(\mathcal U_j(\delta_j)\) for some \(c>0\). On the event \(\sup_{a\in\mathcal U_j(\delta_j)}\| -\nabla_{\alpha\alpha}^2\bar q_{2,j,n}(X_j,a)-H_j(a)\|\le c/2\), every unit vector \(v\) satisfies \(v'[-\nabla_{\alpha\alpha}^2\bar q_{2,j,n}(X_j,a)]v\ge c/2\), uniformly in \(a\in\mathcal U_j(\delta_j)\). The derivative convergence in Assumption~\ref{ass:primitive-strong} makes this event have probability approaching one, so the sample criterion is uniformly strictly concave on \(\mathcal U_j(\delta_j)\) on high-probability events.

For each fixed \(M<\infty\), Taylor's theorem at \(\alpha_0\), the derivative convergence, and the score bound give, uniformly over \(\|h\|\le M\),
\(q_{2,j,n}(X_j,\alpha_0+h/\sqrt n)-q_{2,j,n}(X_j,\alpha_0)=S_n'h-\frac12h'H_jh+o_p(1)\).
Fix \(B<\infty\) such that \(\mathbb P(\|S_n\|\le B)>1-\varepsilon/2\) for all large \(n\), and choose \(M\) so large that \(BM-cM^2/2+cM^2/8<0\). With probability at least \(1-\varepsilon\) for all large \(n\), \(\|S_n\|\le B\) and the remainder in the preceding expansion has absolute value at most \(cM^2/8\) on \(\|h\|\le M\). On this event, for \(\|h\|=M\), the increment is at most \(BM-cM^2/2+cM^2/8<0\), while it is zero at \(h=0\). Because the sample objective is continuous on the compact local ball, it has a maximizer in that ball; the negativity on the boundary places at least one maximizer in the interior. The measurable maximum theorem applied to the random continuous objective on the compact ball gives a measurable selection, denoted \(\widehat\alpha_j^{\mathrm{un}}\). Since \(\varepsilon\) is arbitrary, \(\sqrt n\|\widehat\alpha_j^{\mathrm{un}}-\alpha_0\|=O_p(1)\). On the same high-probability events, the selected maximizer is interior and satisfies \(\nabla_\alpha q_{2,j,n}(X_j,\widehat\alpha_j^{\mathrm{un}})=0\).

The preceding construction gives a local maximizer, and we also need a separation bound away from \(\alpha_0\). For \(\varepsilon>0\), let \(A(\varepsilon)=\{a\in\calA_j:\|a-\alpha_0\|\ge\varepsilon\}\). If this set is nonempty, compactness, continuity of \(M_j\), and uniqueness of the maximizer imply \(\Delta(\varepsilon):=M_j(\alpha_0)-\sup_{a\in A(\varepsilon)}M_j(a)>0\). On the event \(\sup_{a\in\calA_j}|\bar q_{2,j,n}(X_j,a)-M_j(a)|\le\Delta(\varepsilon)/4\), it follows that \(\sup_{a\in A(\varepsilon)}\bar q_{2,j,n}(X_j,a)\le\bar q_{2,j,n}(X_j,\alpha_0)-\Delta(\varepsilon)/2\), and hence \(q_{2,j,n}(X_j,\alpha_0)-q_{2,j,n}(X_j,a)\ge n\Delta(\varepsilon)/2\) for every \(a\in A(\varepsilon)\). Since \(q_{2,j,n}(X_j,\widehat\alpha_j^{\mathrm{un}})\ge q_{2,j,n}(X_j,\alpha_0)\) on the local-maximizer event, the same lower bound holds relative to \(\widehat\alpha_j^{\mathrm{un}}\). If \(A(\varepsilon)\) is empty, the corresponding separation statement is vacuous.

Define \(\widehat H_{j,n}:=-\nabla_{\alpha\alpha}^2\bar q_{2,j,n}(X_j,\widehat\alpha_j^{\mathrm{un}})\). The rate of \(\widehat\alpha_j^{\mathrm{un}}\), the local derivative convergence, and continuity of \(H_j(a)\) at \(\alpha_0\) imply \(\|\widehat H_{j,n}-H_j\|=o_p(1)\). Hence the eigenvalues of \(\widehat H_{j,n}\) are bounded away from zero and infinity with probability approaching one. Because \(R_{j,n}=o(\sqrt n)\) and \(\widehat\alpha_j^{\mathrm{un}}-\alpha_0=O_p(n^{-1/2})\), the random set \(\{\widehat\alpha_j^{\mathrm{un}}+u/\sqrt n:\|u\|\le R_{j,n}\}\) is contained in \(\mathcal U_j(\delta_j)\) with probability approaching one. On that event set
\[
\eta_{n}(R_{j,n})
:=
\sup_{\|u\|\le R_{j,n}}
\left\|-
\nabla_{\alpha\alpha}^2\bar q_{2,j,n}(X_j,\widehat\alpha_j^{\mathrm{un}}+u/\sqrt n)
-\widehat H_{j,n}\right\| .
\]
Uniform derivative convergence and continuity of \(H_j(a)\) on the shrinking neighborhood imply \(\eta_n(R_{j,n})=o_p(1)\). Indeed, on the local event,
\[
\eta_n(R_{j,n})
\le
2\sup_{a\in\mathcal U_j(\delta_j)}\left\|-\nabla_{\alpha\alpha}^2\bar q_{2,j,n}(X_j,a)-H_j(a)\right\|
+\sup_{\|u\|\le R_{j,n}}\left\|H_j(\widehat\alpha_j^{\mathrm{un}}+u/\sqrt n)-H_j(\widehat\alpha_j^{\mathrm{un}})\right\|,
\]
and the second term is bounded by the continuity modulus of \(H_j\) at radius \(R_{j,n}/\sqrt n=o(1)\). Choose deterministic \(b_n\downarrow0\) such that \(\mathbb P(\eta_n(R_{j,n})>b_n)\to0\), and set \(a_n=b_n^{-1/4}\) and \(K_{j,n}=\min(R_{j,n}/2,a_n)\). Then \(K_{j,n}\uparrow\infty\), \(2K_{j,n}\le R_{j,n}\), \(K_{j,n}=o(\sqrt n)\), and \(K_{j,n}^2\eta_n(R_{j,n})=o_p(1)\). Taylor's theorem at \(\widehat\alpha_j^{\mathrm{un}}\), using the first-order condition, gives for \(\|h\|\le K_{j,n}\)
\[
q_{2,j,n}(X_j,\widehat\alpha_j^{\mathrm{un}}+h/\sqrt n)
-q_{2,j,n}(X_j,\widehat\alpha_j^{\mathrm{un}})
=-\frac12h'\widehat H_{j,n}h+r^q_{j,n}(h),
\]
with \(\sup_{\|h\|\le K_{j,n}}|r^q_{j,n}(h)|\le K_{j,n}^2\eta_n(R_{j,n})/2=o_p(1)\).

Assumption~\ref{ass:primitive-setup} gives a positive continuous prior density, and Assumption~\ref{ass:primitive-strong} gives continuous differentiability of \(p_j\) on \(\mathcal U_j(\delta_j)\). On the high-probability local event, \(\log p_j\) has bounded derivative on the relevant compact neighborhood. Hence
\[
\sup_{\|h\|\le K_{j,n}}
\left|
\log p_j(\widehat\alpha_j^{\mathrm{un}}+h/\sqrt n)-\log p_j(\widehat\alpha_j^{\mathrm{un}})
\right|
\le C K_{j,n}/\sqrt n=o(1),
\]
which implies the displayed prior-density ratio. The same mean-value argument, using bounded first derivatives of \(q_1\) in the \(\alpha_j\) coordinate uniformly over compact retained variables and \(\lambda\in\Lambda\), gives
\[
\sup_{\lambda,\omega,\|h\|\le K_{j,n}}
\left|\lambda\left[q_1(\widehat\alpha_j^{\mathrm{un}}+h/\sqrt n,\omega;z)-q_1(\widehat\alpha_j^{\mathrm{un}},\omega;z)\right]\right|
\le C K_{j,n}/\sqrt n=o(1).
\]

It remains to prove the first-moment tail ratio. Write \(D^{\mathrm{loc}}_{n,\lambda,\omega}\) for the denominator over \(\calH_{j,n}\cap\{\|h\|\le K_{j,n}\}\). On the high-probability event already fixed, the local expansion, bounded eigenvalues, and bounded prior and hierarchy ratios give \(L_{j,n,\lambda}(h,\omega)\ge C^{-1}\exp[-C\|h\|^2]\) on \(\|h\|\le1\), and therefore \(D^{\mathrm{loc}}_{n,\lambda,\omega}\ge C^{-1}\int_{\|h\|\le1}\exp[-C\|h\|^2]dh=:d_0>0\), uniformly over \(\lambda\) and \(\omega\). For the annulus \(K_{j,n}<\|h\|\le R_{j,n}/2\), the line segment from \(\widehat\alpha_j^{\mathrm{un}}\) to \(\widehat\alpha_j^{\mathrm{un}}+h/\sqrt n\) remains in \(\mathcal U_j(\delta_j)\). Taylor's theorem and the first-order condition give \(q_{2,j,n}(X_j,\widehat\alpha_j^{\mathrm{un}}+h/\sqrt n)-q_{2,j,n}(X_j,\widehat\alpha_j^{\mathrm{un}})=-h'[-\nabla_{\alpha\alpha}^2\bar q_{2,j,n}(X_j,\tilde a)]h/2\le-c\|h\|^2\), after reducing \(c\) if necessary, where \(\tilde a\) lies on the segment. Bounded prior and hierarchy factors then give \(L_{j,n,\lambda}(h,\omega)\le C\exp[-c\|h\|^2]\) on the annulus. Hence the annulus contribution to the numerator divided by \(D^{\mathrm{loc}}_{n,\lambda,\omega}\) is bounded by \((C/d_0)\int_{\|h\|>K_{j,n}}(1+\|h\|)\exp[-c\|h\|^2]dh=o(1)\).

For the outer region \(\|h\|>R_{j,n}/2\), write \(a=\widehat\alpha_j^{\mathrm{un}}+h/\sqrt n\). If \(a\in\mathcal U_j(\delta_j)\), the same local concavity argument gives an objective gap at least \(cR_{j,n}^2/4\). If \(a\notin\mathcal U_j(\delta_j)\), then \(\|a-\alpha_0\|\ge\delta_j/2\) for all large \(n\) on the local event. When that nonlocal region is nonempty, the fixed-distance separation with \(\varepsilon=\delta_j/2\), together with uniform convergence of \(\bar q_{2,j,n}\) to \(M_j\), gives a sample objective gap at least \(n\Delta(\delta_j/2)/2\) relative to \(\alpha_0\). The selected local maximizer has sample objective value at least the value at \(\alpha_0\), so the same lower bound holds relative to \(\widehat\alpha_j^{\mathrm{un}}\). If the nonlocal region is empty, set \(\Delta_*=1\) and interpret the nonlocal bound as vacuous; otherwise set \(\Delta_*=\Delta(\delta_j/2)\). The lower-level objective gap throughout the outer region is therefore at least \(D_n=\min(cR_{j,n}^2/4,n\Delta_*/2)\). Since \(\calA_j\) is compact and \(h=\sqrt n(a-\widehat\alpha_j^{\mathrm{un}})\), the transformed support is contained in a ball of radius \(C\sqrt n\), so \(\int_{\calH_{j,n}}(1+\|h\|)dh\le Cn^{(d_j+1)/2}\). Therefore the outer contribution is bounded by \(Cn^{(d_j+1)/2}\exp[-D_n]\), which is \(o(1)\) because \(R_{j,n}^2/\log n\to\infty\). Combining the denominator bound, the local annulus, and the outer region proves the displayed tail ratio, uniformly over \(\lambda\) and retained \(\omega\).
\end{proof}

\medskip

The proof uses the following deterministic normalization fact, which is stated separately because the same ratio calculation turns a local log-density approximation into a quasi-posterior-mean approximation.

\begin{lemma}\label{lem:local-moment-ratio}
Let \(T_n\) be an index set. For each \(t\in T_n\), let \(I_n(t)\) be symmetric with eigenvalues in \([c,C]\), let \(b_n(t)\in\R^d\) satisfy \(\sup_{t\in T_n}\|b_n(t)\|\le C\), let \(H_n\to\infty\), and let \(\rho_n(h,t)\) satisfy \(\sup_{t,\|h\|\le H_n}|\rho_n(h,t)|\to0\). If \(P_{n,t}\) is the probability law on \(\|h\|\le H_n\) with density proportional to \(\exp[-h'I_n(t)h/2+b_n(t)'h+\rho_n(h,t)]\), then
\[
\sup_{t\in T_n}\left\| \int h\,P_{n,t}(dh)-I_n(t)^{-1}b_n(t) \right\|\to0.
\]
Moreover,
\[
        \sup_{t\in T_n}
        d_{\mathrm{BL}}\!\left(
        P_{n,t},
        N(I_n(t)^{-1}b_n(t),I_n(t)^{-1})
        \right)
        \to0 .
\]
The conclusions are unchanged if all preceding deterministic bounds hold on events whose probabilities tend to one.
\end{lemma}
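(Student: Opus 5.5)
Throughout, fix $t$ and complete the square. Set $\mu_n(t)=I_n(t)^{-1}b_n(t)$; because $\|b_n(t)\|\le C$ and the eigenvalues of $I_n(t)$ are at least $c$, we have $\|\mu_n(t)\|\le C/c$ uniformly in $t$. The exponent becomes $-\tfrac12(h-\mu_n)'I_n(h-\mu_n)+\tfrac12 b_n'I_n^{-1}b_n+\rho_n$. The middle term is constant in $h$ and drops out after normalization. Hence $P_{n,t}$ has density proportional to $\phi_{n,t}(h)e^{\rho_n(h,t)}\mathbf 1\{\|h\|\le H_n\}$, where $\phi_{n,t}$ is the $N(\mu_n(t),I_n(t)^{-1})$ density. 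The plan is to compare this truncated, perturbed Gaussian with the untruncated Gaussian $N_{n,t}:=N(\mu_n(t),I_n(t)^{-1})$ in total variation, and additionally to control first moments. Both the mean statement and the bounded-Lipschitz statement then follow, since $d_{\mathrm{BL}}\le 2\|\cdot\|_{TV}$.

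\textbf{Step 1 (truncation).} Because $\|\mu_n\|$ is bounded and the covariance eigenvalues lie in $[C^{-1},c^{-1}]$, uniformly in $t$, we get
$$\epsilon_n:=\sup_t N_{n,t}(\|h\|>H_n)\to0 \quad\text{and}\quad \sup_t\int_{\|h\|>H_n}\|h\|\,dN_{n,t}\to0,$$
by a uniform Gaussian tail bound with $H_n\to\infty$.

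\textbf{Step 2 (perturbation).} Put $\eta_n=\sup_{t,\|h\|\le H_n}|\rho_n(h,t)|\to0$. On the ball, $e^{-\eta_n}\le e^{\rho_n}\le e^{\eta_n}$. Let $Z_n(t)=\int_{\|h\|\le H_n}\phi_{n,t}e^{\rho_n}$. This normalizer lies in $[e^{-\eta_n}(1-\epsilon_n),\,e^{\eta_n}]$, so the ratio of the density of $P_{n,t}$ to $\phi_{n,t}$ on the ball is uniformly within $[e^{-2\eta_n},e^{2\eta_n}/(1-\epsilon_n)]$. It follows that
$$\|P_{n,t}-N_{n,t}\|_{TV}\le \sup\bigl|{\rm ratio}-1\bigr|+\epsilon_n\to0$$
uniformly in $t$. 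This gives the $d_{\mathrm{BL}}$ claim.

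\textbf{Step 3 (means).} Total variation alone does not control means, so I bound the mean difference directly:
$$\int h\,dP_{n,t}-\mu_n=\int_{\|h\|\le H_n}h\,({\rm ratio}-1)\,dN_{n,t}-\int_{\|h\|>H_n}h\,dN_{n,t}.$$
The first term is at most $\sup|{\rm ratio}-1|\cdot\sup_t\int\|h\|\,dN_{n,t}$. Here the Gaussian first absolute moment is uniformly bounded, since $\|\mu_n\|$ is bounded and the variances are bounded. The second term is the tail quantity from Step 1. Both therefore vanish uniformly in $t$.

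\textbf{Step 4 (random version).} On the events where the deterministic bounds hold, the above bounds are deterministic functions of $(c,C,\eta_n,H_n)$, which yields the $o_p(1)$ conclusions.

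The only delicate point is Step 3. Its key ingredient is the uniform-in-$t$ first-moment tail bound, which rests on the uniform eigenvalue bounds on $I_n(t)$; everything else is routine.
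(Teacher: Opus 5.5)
Your proof is correct and takes essentially the same route as the paper. You complete the square, use the uniform eigenvalue bounds to get uniform Gaussian tail and first-moment control, and bound the perturbation factor \(e^{\rho_n}\) between \(e^{-\eta_n}\) and \(e^{\eta_n}\) to control both total variation and the mean. The one cosmetic difference is that the paper passes through an intermediate law, the normal \(N(\mu_n(t),I_n(t)^{-1})\) conditioned on \(\|h\|\le H_n\), and gets its tail bounds from a uniform second-moment bound via Markov and Cauchy--Schwarz; you instead compare \(P_{n,t}\) directly with the untruncated normal through the density ratio on the ball.
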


\begin{proof}[Proof of Lemma \ref{lem:local-moment-ratio}]
Work first with deterministic arrays satisfying the stated bounds. Let \(A_n=\{h:\|h\|\le H_n\}\), \(\mu_n(t)=I_n(t)^{-1}b_n(t)\), and let \(N_t\) denote \(N(\mu_n(t),I_n(t)^{-1})\). The eigenvalue bounds and the uniform bound on \(b_n(t)\) imply \(\sup_t\|\mu_n(t)\|<\infty\) and \(\sup_t\mathbb E_{N_t}\|h\|^2<\infty\). Markov's inequality gives \(\sup_tN_t(A_n^c)\to0\), and Cauchy--Schwarz gives \(\sup_t\mathbb E_{N_t}[\|h\|\one(A_n^c)]\to0\).

Let \(P^0_{n,t}\) be \(N_t\) conditional on \(A_n\). The preceding tail bounds imply \(\sup_t\|\int h\,P^0_{n,t}(dh)-\mu_n(t)\|\to0\). For every bounded Lipschitz test function \(\varphi\) with \(\|\varphi\|_\infty\le1\),
\(\left|\int\varphi\,dP^0_{n,t}-\int\varphi\,dN_t\right|\le \frac{2N_t(A_n^c)}{1-N_t(A_n^c)}\)
for all large \(n\), uniformly in \(t\). Hence \(\sup_td_{\mathrm{BL}}(P^0_{n,t},N_t)\to0\).

Write \(R_{n,t}(h)=\exp[\rho_n(h,t)]\) and \(\delta_n=\sup_{t,h\in A_n}|\rho_n(h,t)|\). Relative to \(P^0_{n,t}\), the law \(P_{n,t}\) has density \(R_{n,t}(h)/E_{P^0_{n,t}}R_{n,t}\). Since \(e^{-\delta_n}\le E_{P^0_{n,t}}R_{n,t}\le e^{\delta_n}\),
\(\sup_{t,h\in A_n}\left|\frac{R_{n,t}(h)}{E_{P^0_{n,t}}R_{n,t}}-1\right|\le e^{2\delta_n}-1=o(1)\).
Consequently \(\|P_{n,t}-P^0_{n,t}\|_{TV}\le\int\left|R_{n,t}/E_{P^0_{n,t}}R_{n,t}-1\right|dP^0_{n,t}\le e^{2\delta_n}-1\), uniformly in \(t\). Hence \(\sup_t\|P_{n,t}-P^0_{n,t}\|_{TV}\to0\) and, more specifically,
\[
\sup_t\left\|\int h\,P_{n,t}(dh)-\int h\,P^0_{n,t}(dh)\right\|
\le
(e^{2\delta_n}-1)\sup_t\int\|h\|\,P^0_{n,t}(dh)=o(1).
\]
The first-moment conclusion follows by combining this display with the truncated-normal mean bound. Total variation controls bounded-Lipschitz distance, so the bounded-Lipschitz conclusion follows by the triangle inequality. The stochastic version applies the deterministic argument on events whose probabilities tend to one.
\end{proof}

\medskip

\begin{proof}[Proof of Theorem \ref{thm:strong-equivalence}]
Work on the high-probability events supplied by Lemma~\ref{lem:strong-primitive}. Conditional on retained variables \(\omega\), and on \(\theta\) as well when \(\theta\) is not already included in \(\omega\), use the local coordinate \(h=\sqrt{n_j}(\alpha_j-\widehat\alpha_j^{\mathrm{un}})\). The Jacobian factor \(n_j^{-d_j/2}\) is common to numerator and denominator of every conditional law in this coordinate and therefore cancels.

On \(\|h\|\le K_{j,n}\), Lemma~\ref{lem:strong-primitive} gives the conditional log density after subtracting its value at zero as
\[
        -\frac12h'\widehat H_{j,n}h+\rho_{n,\lambda}(h,\omega),
        \qquad
        \sup_{\lambda,\omega,\|h\|\le K_{j,n}}|\rho_{n,\lambda}(h,\omega)|=o_p(1),
\]
where \(\rho_{n,\lambda}\) consists of the quadratic remainder, the logarithm of the prior-density ratio, and the fixed-\(\lambda\) hierarchy increment. Lemma~\ref{lem:local-moment-ratio}, applied with \(b_n=0\), \(I_n=\widehat H_{j,n}\), and \(H_n=K_{j,n}\), gives a local conditional first moment \(o_p(1)\) and bounded-Lipschitz distance \(o_p(1)\) from \(N(0,\widehat H_{j,n}^{-1})\), uniformly in \(\lambda\) and \(\omega\). Because \(\widehat H_{j,n}\to_p H_j\) and the eigenvalues are uniformly bounded away from zero and infinity, the Gaussian laws \(N(0,\widehat H_{j,n}^{-1})\) converge to \(N(0,H_j^{-1})\) in bounded-Lipschitz distance.

The first-moment tail bound in Lemma~\ref{lem:strong-primitive} implies that replacing the truncated conditional law by the full conditional law changes bounded-Lipschitz expectations and first moments by \(o_p(1)\), uniformly over \(\lambda\) and \(\omega\). To see the normalization step, write \(P^{\mathrm{loc}}_{n,\lambda,\omega}\) and \(P^{\mathrm{full}}_{n,\lambda,\omega}\) for the truncated and full conditional laws, and let \(r_{n,\lambda,\omega}\) be the tail-to-local ratio with integrand \((1+\|h\|)L_{j,n,\lambda}(h,\omega)\). Then \(\sup_{\lambda,\omega}r_{n,\lambda,\omega}=o_p(1)\), and, for every \(\varphi\) with \(\|\varphi\|_\infty\le1\), \(|\int\varphi\,dP^{\mathrm{full}}_{n,\lambda,\omega}-\int\varphi\,dP^{\mathrm{loc}}_{n,\lambda,\omega}|\le2r_{n,\lambda,\omega}/(1+r_{n,\lambda,\omega})\). Also \(\|\int h\,dP^{\mathrm{full}}_{n,\lambda,\omega}\|\le \|\int h\,dP^{\mathrm{loc}}_{n,\lambda,\omega}\|+r_{n,\lambda,\omega}\{1+\|\int h\,dP^{\mathrm{loc}}_{n,\lambda,\omega}\|\}\), so the local first-moment bound transfers to the full conditional law. Since the conditional error bounds are uniform in the retained variables, integrating them with respect to the conditional quasi-posterior law of those variables preserves the same bounds for the marginal law \(P^S_{j,n,\lambda}\). This proves the marginal Gaussian approximation and the marginal first-moment bound for the full QBHM quasi-posterior. The unpooled law is the special case \(\lambda=0\) with no retained hierarchy term, so the same argument gives the two unpooled conclusions.

Finally,
\(\sqrt{n_j}(\tilde\alpha^{\mathrm{full}}_{j,n,\lambda}-\widehat\alpha_j^{\mathrm{un}})=\int h\,P^S_{j,n,\lambda}(dh)\).
The uniform first-moment bound gives \(\sup_{\lambda\in\Lambda}\sqrt{n_j}\|\tilde\alpha^{\mathrm{full}}_{j,n,\lambda}-\widehat\alpha_j^{\mathrm{un}}\|=o_p(1)\), which is equivalent to the displayed centered statement.
\end{proof}

\medskip

The following deterministic Laplace tools are used in the study-level weak-marginal proof and in the mixed within-study profiling argument.

\begin{lemma}\label{lem:deterministic-laplace}
Let \(T_n\) be an index set. Suppose \(\sup_{t\in T_n}\|\ell_n(t)\|\le C\), the eigenvalues of \(I_n(t)\) lie in \([c,C]\), \(H_n\to\infty\), \(a_n(0,t)\in[c,C]\),
\[
\sup_{t,\|h\|\le H_n}|r_n(h,t)|\to0,
\qquad
\sup_{t,\|h\|\le H_n}\left|\frac{a_n(h,t)}{a_n(0,t)}-1\right|\to0.
\]
Then, uniformly over \(t\in T_n\),
\[
\begin{aligned}
&\int_{\|h\|\le H_n} a_n(h,t)
        \exp[\ell_n(t)'h-h'I_n(t)h/2+r_n(h,t)]\,dh \\
&\quad =a_n(0,t)(2\pi)^{d/2}\det(I_n(t))^{-1/2}
        \exp[\ell_n(t)'I_n(t)^{-1}\ell_n(t)/2][1+o(1)].
\end{aligned}
\]
The same conclusion holds in probability when the preceding bounds hold on events whose probabilities tend to one.
\end{lemma}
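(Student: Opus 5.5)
The plan is to reduce the integral to a Gaussian integral by two multiplicative comparisons and then control the Gaussian tail outside the ball. Work first with deterministic arrays; the in-probability version follows by applying the argument on the high-probability events.

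\emph{Step 1: isolate the perturbations.} Write \(\delta_n:=\sup_{t,\|h\|\le H_n}|r_n(h,t)|\) and \(\epsilon_n:=\sup_{t,\|h\|\le H_n}|a_n(h,t)/a_n(0,t)-1|\). Both tend to zero, and neither depends on \(t\). For every \(t\) and every \(\|h\|\le H_n\), the integrand equals
\[
a_n(0,t)\exp[\ell_n(t)'h-h'I_n(t)h/2]
\]
times a factor lying in \([(1-\epsilon_n)e^{-\delta_n},(1+\epsilon_n)e^{\delta_n}]\). This factor is nonnegative and uniformly \(1+o(1)\). It therefore suffices to show that
\[
\int_{\|h\|\le H_n}\exp[\ell_n(t)'h-h'I_n(t)h/2]\,dh
=(2\pi)^{d/2}\det(I_n(t))^{-1/2}\exp[\ell_n(t)'I_n(t)^{-1}\ell_n(t)/2]\,[1+o(1)]
\]
uniformly in \(t\).

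\emph{Step 2: complete the square.} Set \(\mu_n(t)=I_n(t)^{-1}\ell_n(t)\). Completing the square gives
\[
\ell_n(t)'h-h'I_n(t)h/2=\ell_n(t)'I_n(t)^{-1}\ell_n(t)/2-(h-\mu_n(t))'I_n(t)(h-\mu_n(t))/2 .
\]
The untruncated integral over \(\mathbb R^d\) is then exactly the displayed Gaussian normalization. The truncated integral equals that quantity multiplied by \(N_t(\|h\|\le H_n)\), where \(N_t:=N(\mu_n(t),I_n(t)^{-1})\).

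\emph{Step 3: uniform tail bound.} This is the only point needing care, and it is mild. The eigenvalue bounds together with \(\|\ell_n(t)\|\le C\) give \(\|\mu_n(t)\|\le C/c\) and \(\operatorname{tr}I_n(t)^{-1}\le d/c\) for every \(t\). Hence \(\sup_t\mathbb E_{N_t}\|h\|^2\le (C/c)^2+d/c\). By Markov's inequality,
\[
\sup_tN_t(\|h\|>H_n)\le\bigl[(C/c)^2+d/c\bigr]/H_n^2\to0 ,
\]
so \(N_t(\|h\|\le H_n)=1+o(1)\) uniformly in \(t\). This is the same tail step as in the proof of Lemma~\ref{lem:local-moment-ratio}.

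Combining the three steps proves the display, since the \(o(1)\) terms are uniform in \(t\) and multiply. For the stochastic version, the bounds are assumed to hold on events whose probabilities tend to one. On those events the deterministic argument gives a remainder bounded by a deterministic \(o(1)\) sequence, which yields the conclusion in probability.
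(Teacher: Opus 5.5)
Your proposal is correct and matches the paper's proof step for step. Both arguments sandwich the perturbation factor $a_n(h,t)e^{r_n(h,t)}/a_n(0,t)$ between $(1-\varepsilon_n)e^{-\delta_n}$ and $(1+\varepsilon_n)e^{\delta_n}$, complete the square with $\mu_n(t)=I_n(t)^{-1}\ell_n(t)$, and control $N_t(\|h\|>H_n)$ uniformly by Markov's inequality using $\|\mu_n(t)\|\le C/c$ and $\operatorname{tr}I_n(t)^{-1}\le d/c$. Your constant $(C/c)^2+d/c$ is slightly sharper than the paper's $2(C/c)^2+2d/c$, since $\mathbb E_{N_t}\|h\|^2=\|\mu_n(t)\|^2+\operatorname{tr}I_n(t)^{-1}$ holds exactly.
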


\begin{proof}[Proof of Lemma \ref{lem:deterministic-laplace}]
First consider a deterministic sequence satisfying the stated bounds. Write $\mu_n(t)=I_n(t)^{-1}\ell_n(t)$, $A_n=\{h:\|h\|\le H_n\}$, $\delta_n=\sup_{t,h\in A_n}|r_n(h,t)|$, and $\varepsilon_n=\sup_{t,h\in A_n}|a_n(h,t)/a_n(0,t)-1|$. Completing the square gives
\(\ell_n(t)'h-\frac12h'I_n(t)h=-\frac12(h-\mu_n(t))'I_n(t)(h-\mu_n(t))+\frac12\ell_n(t)'I_n(t)^{-1}\ell_n(t)\).
The eigenvalue and bounded-linear-term assumptions imply \(\|\mu_n(t)\|\le C/c\) and \(\operatorname{tr}(I_n(t)^{-1})\le d/c\), so, for \(N_t=N(\mu_n(t),I_n(t)^{-1})\), \(\sup_t\mathbb E_{N_t}\|h\|^2\le 2(C/c)^2+2d/c<\infty\). Markov's inequality then gives \(\sup_tN_t(A_n^c)\le H_n^{-2}\sup_t\mathbb E_{N_t}\|h\|^2\to0\).

Let $G_n(t)$ denote the full Gaussian integral over $\mathbb R^d$ of $\exp[\ell_n(t)'h-h'I_n(t)h/2]$. This gives
\(G_n(t)=(2\pi)^{d/2}\det(I_n(t))^{-1/2}\exp[\ell_n(t)'I_n(t)^{-1}\ell_n(t)/2]\).
The Gaussian integral over $A_n$ equals $G_n(t)N_t(A_n)$. On $A_n$, the amplitude and remainder bounds give
\((1-\varepsilon_n)e^{-\delta_n}a_n(0,t)\le a_n(h,t)e^{r_n(h,t)}\le(1+\varepsilon_n)e^{\delta_n}a_n(0,t)\),
for all large $n$, uniformly in $t$. Therefore the target integral is trapped between the same two multiplicative factors times $a_n(0,t)G_n(t)N_t(A_n)$. Since $\varepsilon_n\to0$, $\delta_n\to0$, and $\sup_t|N_t(A_n)-1|\to0$, the expansion follows uniformly in $t$. The stochastic version is obtained by applying this deterministic calculation on events whose probabilities tend to one.
\end{proof}

\medskip

For the generic Laplace reduction, write the original local parameter as \((\beta,\gamma)\), where \(\gamma\in\Gamma\) is the retained weak component and \(\beta=\beta_0(\gamma)+r_n^{-1}h\) collects the strong component, with \(r_n\to\infty\). Here \(\gamma\) is a generic retained weak coordinate, playing the role of \(\alpha_W\) after any strong coordinates have been profiled or integrated out; \(\beta\) is the strong coordinate removed by the Laplace approximation. Suppose that, on events whose probabilities tend to one, there is a sequence \(K_n\to\infty\) such that, uniformly over \(\gamma\), \(\theta\), \(\lambda\in\Lambda\), and \(\|h\|\le K_n\),
\[
\begin{aligned}
&q_{2,n}((\beta_0(\gamma)+r_n^{-1}h,\gamma);x)
   +\lambda q_1((\beta_0(\gamma)+r_n^{-1}h,\gamma),\theta;z) \\
&\qquad =q^{\mathrm{red}}_{2,n}(\gamma)+\lambda q^{\mathrm{red}}_1(\gamma,\theta;z)
   +\ell_n(\gamma)'h-\frac12h'I_n(\gamma)h+r_{n,\lambda}(h,\gamma,\theta),
\end{aligned}
\]
where \(\sup_{\lambda,\gamma,\theta,\|h\|\le K_n}|r_{n,\lambda}(h,\gamma,\theta)|=o_p(1)\), \(\sup_\gamma\|\ell_n(\gamma)\|=O_p(1)\), and the eigenvalues of \(I_n(\gamma)\) are bounded away from zero and infinity uniformly in \(\gamma\). Suppose also that the conditional prior-density ratio in the \(\beta\) component is \(1+o_p(1)\) on the same local set, and that the contribution of \(\|h\|>K_n\) and of the complement of the product local chart to the normalized original quasi-posterior is \(o_p(1)\), uniformly over \(\lambda\in\Lambda\). Define
\[
F_n^{\mathrm{Lap}}(\gamma)
=
 p_{\beta\mid\gamma}(\beta_0(\gamma)\mid \gamma)\det(I_n(\gamma))^{-1/2}
 \exp[\ell_n(\gamma)'I_n(\gamma)^{-1}\ell_n(\gamma)/2].
\]

\begin{lemma}\label{lem:weak-laplace-reduction}
In the generic local-reduction setting just described, suppose the displayed local expansion, conditional-prior ratio, uniform curvature, bounded-score, and tail conditions hold uniformly in \((\gamma,\theta,\lambda)\). Let \(P^{\mathrm{marg}}_{n,\lambda}\) be the marginal quasi-posterior for \((\gamma,\theta)\) obtained from the original quasi-posterior after integrating out \(\beta\). Let \(R_{n,\lambda}\) be the probability law with unnormalized measure
\[
\exp[q^{\mathrm{red}}_{2,n}(\gamma)+\lambda q^{\mathrm{red}}_1(\gamma,\theta;z)]
F_n^{\mathrm{Lap}}(\gamma)\,\pi_\Gamma(d\gamma)\pi_\theta(d\theta).
\]
Then \(\sup_{\lambda\in\Lambda}\|P^{\mathrm{marg}}_{n,\lambda}-R_{n,\lambda}\|_{TV}=o_p(1)\).
\end{lemma}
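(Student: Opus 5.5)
The plan is to compute the marginal unnormalized density of \((\gamma,\theta)\) by integrating out \(\beta\) in the local chart, and to apply Lemma~\ref{lem:deterministic-laplace} uniformly in \(t=(\gamma,\theta,\lambda)\). The Laplace factor produced in this way will then be compared with \(F_n^{\mathrm{Lap}}(\gamma)\). Normalization will be handled through a ratio argument of the same kind used in Lemma~\ref{lem:local-moment-ratio}.

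First, fix \((\gamma,\theta,\lambda)\) and change variables \(\beta=\beta_0(\gamma)+r_n^{-1}h\). The Jacobian \(r_n^{-d_\beta}\) does not depend on \((\gamma,\theta,\lambda)\), so it cancels after normalization. Write \(D_{n,\lambda}(\gamma,\theta)\) for the integral of the original kernel over \(\beta\), with the conditional prior density \(p_{\beta\mid\gamma}(\beta\mid\gamma)\) included, and restrict to \(\|h\|\le K_n\). On this set, the local expansion factors out \(\exp[q^{\mathrm{red}}_{2,n}(\gamma)+\lambda q^{\mathrm{red}}_1(\gamma,\theta;z)]\). What remains is an integral of exactly the form in Lemma~\ref{lem:deterministic-laplace}, with \(a_n(h,t)=p_{\beta\mid\gamma}(\beta_0(\gamma)+r_n^{-1}h\mid\gamma)\), \(a_n(0,t)=p_{\beta\mid\gamma}(\beta_0(\gamma)\mid\gamma)\), \(\ell_n(t)=\ell_n(\gamma)\), \(I_n(t)=I_n(\gamma)\), and \(r_n(h,t)=r_{n,\lambda}(h,\gamma,\theta)\).

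Before applying the lemma I need \(a_n(0,t)\in[c,C]\). This follows from positivity and continuity of the prior density on the compact set in Assumption~\ref{ass:primitive-setup}\textup{(iv)}, possibly after a harmless rescaling. The remaining hypotheses (amplitude ratio, remainder, eigenvalue bounds, bounded score) are assumed to hold on events with probability approaching one. Lemma~\ref{lem:deterministic-laplace} then gives, uniformly in \(t\),
\[
D^{\mathrm{loc}}_{n,\lambda}(\gamma,\theta)=(2\pi)^{d_\beta/2}r_n^{-d_\beta}\exp[q^{\mathrm{red}}_{2,n}(\gamma)+\lambda q^{\mathrm{red}}_1(\gamma,\theta;z)]F_n^{\mathrm{Lap}}(\gamma)[1+\epsilon_{n}(\gamma,\theta,\lambda)],
\]
with \(\sup|\epsilon_n|=o_p(1)\). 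Write \(\tilde R_{n,\lambda}\) for the law obtained by normalizing \(D^{\mathrm{loc}}_{n,\lambda}\,\pi_\Gamma\pi_\theta\), i.e.\ the marginal of the quasi-posterior truncated to the local chart. Because the multiplicative error is uniform, the normalized densities of \(\tilde R_{n,\lambda}\) and \(R_{n,\lambda}\) differ by a ratio lying in \([(1-\bar\epsilon)/(1+\bar\epsilon),(1+\bar\epsilon)/(1-\bar\epsilon)]\), where \(\bar\epsilon=\sup|\epsilon_n|\). This gives \(\|\tilde R_{n,\lambda}-R_{n,\lambda}\|_{TV}\le C\bar\epsilon\) uniformly in \(\lambda\).

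Finally, I compare \(P^{\mathrm{marg}}_{n,\lambda}\) with \(\tilde R_{n,\lambda}\). The assumed tail condition says that the mass of \(\|h\|>K_n\) and of the complement of the product chart under the normalized original quasi-posterior is some \(\eta_{n,\lambda}\) with \(\sup_\lambda\eta_{n,\lambda}=o_p(1)\). Conditioning a law on an event of probability \(1-\eta\) moves it by at most \(2\eta\) in total variation. Since marginalization does not increase total variation, \(\|P^{\mathrm{marg}}_{n,\lambda}-\tilde R_{n,\lambda}\|_{TV}\le2\eta_{n,\lambda}\), and the triangle inequality finishes the proof.

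The main obstacle is uniformity over the joint index \((\gamma,\theta,\lambda)\) in the Laplace step. In particular, I must ensure the amplitude bounds \(a_n(0,t)\in[c,C]\) hold uniformly. I also need to check that the tail condition is stated relative to the normalized full quasi-posterior, not pointwise in \((\gamma,\theta)\). If it is only available conditionally, I would integrate the conditional bound against the marginal, as in the proof of Theorem~\ref{thm:strong-equivalence}.
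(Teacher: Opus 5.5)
Your proposal is correct and follows the paper's proof essentially step for step. You change variables to the local chart, apply Lemma~\ref{lem:deterministic-laplace} with index $t=(\gamma,\theta,\lambda)$ to obtain the local marginal kernel $\propto\exp[q^{\mathrm{red}}_{2,n}(\gamma)+\lambda q^{\mathrm{red}}_1(\gamma,\theta;z)]F_n^{\mathrm{Lap}}(\gamma)[1+\eta_{n,\lambda}(\gamma,\theta)]$, remove the uniform relative error by a normalization-ratio bound, and discard the $o_p(1)$ tail mass. There are only two cosmetic differences. The paper factors $p_{\beta\mid\gamma}(\beta_0(\gamma)\mid\gamma)$ out first and applies the Laplace lemma to the prior-density ratio, so $a_n(0,t)=1$ and your amplitude-bound check is not needed. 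The paper also controls the tail through $B^{\mathrm{tail}}/B^{\mathrm{loc}}$ rather than through your conditioning-plus-marginalization argument.
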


\begin{proof}[Proof of Lemma \ref{lem:weak-laplace-reduction}]
Work on events whose probabilities tend to one and on which all local-reduction bounds hold uniformly. For fixed \((\gamma,\theta)\), change variables from \(\beta\) to \(h\) by \(\beta=\beta_0(\gamma)+r_n^{-1}h\). The Jacobian contributes \(r_n^{-d_\beta}\), which is common to all \((\gamma,\theta)\). On \(\|h\|\le K_n\), the conditional integrand equals
\[
\begin{aligned}
&r_n^{-d_\beta}\exp[q^{\mathrm{red}}_{2,n}(\gamma)+\lambda q^{\mathrm{red}}_1(\gamma,\theta;z)]
        p_{\beta\mid\gamma}(\beta_0(\gamma)\mid\gamma) \\
&\quad\times
\exp[\ell_n(\gamma)'h-h'I_n(\gamma)h/2+r_{n,\lambda}(h,\gamma,\theta)]
        a_n(h,\gamma,\theta),
\end{aligned}
\]
where \(a_n\) is the conditional prior-density ratio and \(\sup_{\gamma,\theta,\|h\|\le K_n}|a_n(h,\gamma,\theta)-1|=o_p(1)\). Lemma~\ref{lem:deterministic-laplace}, with index \(t=(\gamma,\theta,\lambda)\), gives the local \(\beta\)-integral uniformly in \((\gamma,\theta,\lambda)\). The resulting local marginal density is
\[
        c_n\exp[q^{\mathrm{red}}_{2,n}(\gamma)+\lambda q^{\mathrm{red}}_1(\gamma,\theta;z)]
        F_n^{\mathrm{Lap}}(\gamma)[1+\eta_{n,\lambda}(\gamma,\theta)],
\]
where \(c_n=r_n^{-d_\beta}(2\pi)^{d_\beta/2}\) is common to all \((\gamma,\theta)\) and \(\sup_{\gamma,\theta,\lambda}|\eta_{n,\lambda}(\gamma,\theta)|=o_p(1)\). The constant \(c_n\) cancels from normalized probabilities. If \(B^{\mathrm{loc}}_{n,\lambda}\) and \(B^{\mathrm{tail}}_{n,\lambda}\) are the integrated local and tail masses before normalization, the assumed local-reduction tail bound gives \(\sup_\lambda B^{\mathrm{tail}}_{n,\lambda}/B^{\mathrm{loc}}_{n,\lambda}=o_p(1)\). Therefore the normalized tail mass is \(B^{\mathrm{tail}}_{n,\lambda}/(B^{\mathrm{loc}}_{n,\lambda}+B^{\mathrm{tail}}_{n,\lambda})\le B^{\mathrm{tail}}_{n,\lambda}/B^{\mathrm{loc}}_{n,\lambda}=o_p(1)\), uniformly in \(\lambda\).

It remains to remove the relative error \(1+\eta_{n,\lambda}\). Let \(Q_{n,\lambda}\) be the normalized law obtained by setting \(\eta_{n,\lambda}=0\), and let \(Q^{\eta}_{n,\lambda}\) be the normalized local law with the factor \(1+\eta_{n,\lambda}\). On events where \(\delta_n:=\sup_{\gamma,\theta,\lambda}|\eta_{n,\lambda}(\gamma,\theta)|<1/2\), write \(\bar\eta_{n,\lambda}=\int\eta_{n,\lambda}\,dQ_{n,\lambda}\). Then \(|\bar\eta_{n,\lambda}|\le\delta_n\), and for every measurable set \(A\),
\[
\left|Q^\eta_{n,\lambda}(A)-Q_{n,\lambda}(A)\right|
\le
\int_A\left|\frac{1+\eta_{n,\lambda}}{1+\bar\eta_{n,\lambda}}-1\right|dQ_{n,\lambda}
\le
\frac{2\delta_n}{1-\delta_n}.
\]
The bound is uniform in \(\lambda\). Combining this relative-density comparison with the \(o_p(1)\) mass outside the local chart proves the total-variation claim.
\end{proof}

\medskip

\subsection{Weak-experiment and prior primitives}

The results below verify one DQM primitive route to the reduced weak-GMM experiment and the hierarchy-induced prior path. The next lemmas record sufficient conditions for the empirical-process clauses in Assumption~\ref{ass:primitive-weak}; they are not additional assumptions.

\begin{lemma}\label{lem:primitive-donsker-route}
Suppose \(\Gamma_j\subset\mathbb R^{d_{\gamma j}}\) is compact, \(\gamma_j\mapsto\varphi_j(x,\gamma_j)\) is almost surely continuous, the class has a square-integrable envelope, and, for some \(\delta>0\), \(\|\varphi_j(X_{ji},\gamma_j)-\varphi_j(X_{ji},\tilde\gamma_j)\|\le L_j(X_{ji})\|\gamma_j-\tilde\gamma_j\|\), \(\mathbb E_{P_{j0}}L_j^{2+\delta}<\infty\), and \(\sup_{\gamma_j\in\Gamma_j}\mathbb E_{P_{j0}}\|\varphi_j(X_{ji},\gamma_j)\|^{2+\delta}<\infty\). Then \(\{\varphi_j(\cdot,\gamma_j):\gamma_j\in\Gamma_j\}\) is \(P_{j0}\)-Donsker, the map \(\gamma_j\mapsto\varphi_j(\cdot,\gamma_j)\) is continuous in \(L_2(P_{j0})\), and the Gaussian limit has a version with continuous sample paths.
\end{lemma}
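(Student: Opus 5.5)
The plan is to reduce the claim to standard empirical-process results for Lipschitz-in-parameter classes, applied coordinate by coordinate. Since \(\varphi_j\) is vector valued with fixed dimension, I will treat each coordinate \(\varphi_{j,r}\) separately. A finite vector of Donsker classes is jointly Donsker in \(\ell^\infty(\Gamma_j,\mathbb R^{k})\), because marginal tightness of the coordinates implies joint tightness.

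\textbf{Step 1: the Donsker property.} For a fixed coordinate, the Lipschitz condition places the class \(\{\varphi_{j,r}(\cdot,\gamma_j)\}\) in the framework of classes that are Lipschitz in a parameter ranging over a compact subset of \(\mathbb R^{d_{\gamma j}}\) (van der Vaart and Wellner, Theorem 2.7.11). The bracketing number then satisfies
\[
N_{[\,]}\bigl(2\varepsilon\|L_j\|_{L_2(P_{j0})},\mathcal F,L_2(P_{j0})\bigr)\le N(\varepsilon,\Gamma_j,\|\cdot\|)\le C\varepsilon^{-d_{\gamma j}} .
\]
Here \(\|L_j\|_{L_2}<\infty\) because the \(2+\delta\) moment bound on \(L_j\) implies a finite second moment. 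The resulting bracketing entropy integral \(\int_0^1\sqrt{\log N_{[\,]}(\varepsilon)}\,d\varepsilon\) is finite, since the entropy is logarithmic in \(1/\varepsilon\). Combined with the square-integrable envelope, the bracketing CLT (Theorem 2.5.6) gives the Donsker property. The argument needs only \(L_2\) integrability, so the extra \(\delta\) is a convenience; I will note that it could also support a Jain–Marcus or moment-bracketing alternative.

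\textbf{Step 2: \(L_2\)-continuity of the parameter map.} This follows immediately from the Lipschitz bound:
\[
\|\varphi_j(\cdot,\gamma)-\varphi_j(\cdot,\tilde\gamma)\|_{L_2(P_{j0})}\le \|L_j\|_{L_2}\|\gamma-\tilde\gamma\| .
\]

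\textbf{Step 3: continuous sample paths of the Gaussian limit.} The Donsker limit \(\mathbb G\) is tight in \(\ell^\infty(\Gamma_j)\), so it has a version with sample paths that are almost surely uniformly continuous with respect to the intrinsic standard-deviation semimetric \(\rho(\gamma,\tilde\gamma)=\mathrm{sd}(\varphi(\gamma)-\varphi(\tilde\gamma))\) (Example 1.5.10). By Step 2, \(\rho\le\|L_j\|_{L_2}\|\cdot\|\), so \(\rho\)-uniform continuity implies Euclidean continuity. Applying this coordinatewise gives the vector statement.

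I expect the main obstacle to be measurability and the precise form of the bracketing theorem. Specifically, one has to confirm that almost-sure continuity in \(\gamma\) makes the class pointwise measurable, using a countable dense subset of \(\Gamma_j\), so that suprema are measurable. One also has to check that the brackets built from \(L_j\) remain valid when the Lipschitz inequality holds only almost surely. My first attempt will be to handle both issues by working with a countable dense subset of \(\Gamma_j\).
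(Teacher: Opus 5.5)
Your proposal is correct and follows the paper's proof essentially step for step. Both arguments work coordinatewise with brackets $\varphi_{j\ell}(\cdot,\gamma_k)\pm\eta L_j$ built on a Euclidean net of $\Gamma_j$, use polynomial covering numbers to get a finite bracketing integral, apply the bracketing CLT with finite stacking, and read off $L_2(P_{j0})$-continuity from the Lipschitz bound. The only differences are minor: you get continuous sample paths from tightness of the Donsker limit plus the bound on the intrinsic semimetric, whereas the paper appeals to the same entropy bound, and your measurability worry is moot because the bracketing CLT needs no measurability hypotheses.
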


\begin{proof}[Proof of Lemma \ref{lem:primitive-donsker-route}]
Work coordinate by coordinate and let \(F\) denote the coordinate envelope. The Lipschitz condition implies \(\|\varphi_{j\ell}(\cdot,\gamma)-\varphi_{j\ell}(\cdot,\tilde\gamma)\|_{P_{j0},2}\le\|L_j\|_{P_{j0},2}\|\gamma-\tilde\gamma\|\). For any Euclidean \(\eta\)-net of the compact set \(\Gamma_j\), the lower and upper functions \(\varphi_{j\ell}(\cdot,\gamma_k)\pm\eta L_j\) bracket every coordinate function assigned to \(\gamma_k\), and the \(L_2(P_{j0})\) width of each bracket is at most \(2\eta\|L_j\|_{P_{j0},2}\). After rescaling \(\eta\), the covering number of a compact subset of \(\mathbb R^{d_{\gamma j}}\) is bounded by \(C\eta^{-d_{\gamma j}}\), so the bracketing entropy integral \(\int_0^1\sqrt{\log N_{[]}(\rho,\mathcal F_{j\ell},L_2(P_{j0}))}\,d\rho\) is finite. The bracketing central limit theorem gives the Donsker property for each coordinate, and finite coordinate stacking gives the vector result. The same Lipschitz inequality gives \(L_2(P_{j0})\)-continuity. The canonical semimetric of the Gaussian limit is bounded by a constant multiple of the \(L_2(P_{j0})\) distance, and the same entropy bound gives a version with continuous sample paths on compact \(\Gamma_j\).
\end{proof}

\medskip

\begin{lemma}\label{lem:dqm-drift}
Fix group \(j\), and let \(\varphi_j:\mathcal X_j\times\Gamma_j\to\R^{q_j}\). Suppose \(\mathbb E_{P_{j0}}\varphi_j(X_{ji},\gamma_j)=0\) on compact \(\Gamma_j\), the group moment class is \(P_{j0}\)-Donsker with a continuous Gaussian version, \(\gamma_j\mapsto\varphi_j(\cdot,\gamma_j)\) is continuous into \(L_2(P_{j0})\), and the local alternatives satisfy Definition~\ref{def:dqm-path} with score \(f_j\in L_2(P_{j0})\). Define \(m_j(\gamma_j)=\mathbb E_{P_{j0}}[f_j(X_{ji})\varphi_j(X_{ji},\gamma_j)]\). Then \(m_j\) is continuous and \(n_j^{-1/2}\sum_{i=1}^{n_j}\varphi_j(X_{ji},\cdot) \dto \mathbb G_j(\cdot)+m_j(\cdot)\) in \(\ell^\infty(\Gamma_j,\R^{q_j})\) under \(P_{j,n_j,f_j}^{n_j}\), where \(\mathbb G_j\) is the reference-law Gaussian limit with covariance function \(\Sigma_j(\gamma_j,\tilde\gamma_j)=\mathbb E_{P_{j0}}[\varphi_j(X_{ji},\gamma_j)\varphi_j(X_{ji},\tilde\gamma_j)']\).
\end{lemma}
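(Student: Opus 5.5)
The plan is to obtain this as Le Cam's third lemma applied at the process level. Write $\mathbb G_{n}(\cdot)=n_j^{-1/2}\sum_{i}\varphi_j(X_{ji},\cdot)$, and first work under the reference law $P_{j0}^{n_j}$. There, $\mathbb E_{P_{j0}}\varphi_j=0$ and the Donsker hypothesis give $\mathbb G_n\dto\mathbb G_j$ in $\ell^\infty(\Gamma_j,\R^{q_j})$, where $\mathbb G_j$ has covariance $\Sigma_j$ and continuous paths. Continuity of $m_j$ is immediate from Cauchy--Schwarz:
\[
\|m_j(\gamma)-m_j(\tilde\gamma)\|\le\|f_j\|_{P_{j0},2}\,\|\varphi_j(\cdot,\gamma)-\varphi_j(\cdot,\tilde\gamma)\|_{P_{j0},2},
\]
which tends to zero by $L_2(P_{j0})$-continuity.

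Second, the DQM condition in Definition~\ref{def:dqm-path} gives local asymptotic normality of the log-likelihood ratio under $P_{j0}^{n_j}$:
\[
\Lambda_n:=\log\frac{dP^{n_j}_{j,n_j,f_j}}{dP^{n_j}_{j0}}=n_j^{-1/2}\sum_i f_j(X_{ji})-\tfrac12 \mathbb E_{P_{j0}}f_j^2+o_{P_{j0}}(1).
\]
This is the standard van der Vaart Theorem 7.2 argument, with $\mathbb E f_j=0$.

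Third, establish joint weak convergence of $(\mathbb G_n,\Lambda_n)$ under $P_{j0}^{n_j}$ in $\ell^\infty(\Gamma_j)\times\R$. The class $\{\varphi_j(\cdot,\gamma)\}\cup\{f_j\}$ is still Donsker, because adding a single square-integrable function preserves the property. The limit is therefore a jointly Gaussian pair $(\mathbb G_j,Z)$ with $Z\sim N(-\tfrac12\sigma_f^2,\sigma_f^2)$ and cross-covariance $\Cov(\mathbb G_j(\gamma),Z)=\mathbb E[\varphi_j(\cdot,\gamma)f_j]=m_j(\gamma)$. Because $\exp(Z)$ has mean one, Le Cam's first lemma gives mutual contiguity of $P^{n_j}_{j,n_j,f_j}$ and $P^{n_j}_{j0}$.

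Finally, apply the general (process) form of Le Cam's third lemma, van der Vaart and Wellner Theorem 3.10.7, or prove it via finite-dimensional marginals plus tightness. Under $P^{n_j}_{j,n_j,f_j}$, $\mathbb G_n$ converges to the law $L(B)=\mathbb E[\one_B(\mathbb G_j)e^{Z}]$. For a Gaussian pair this exponential tilt shifts the mean by the cross-covariance, so $L$ is the law of $\mathbb G_j+m_j$. Tightness transfers because asymptotic tightness and asymptotic equicontinuity are preserved under contiguous alternatives, so convergence in probability to zero of moduli transfers. The finite-dimensional marginals are identified by the ordinary multivariate third lemma, applied to $(\mathbb G_n(\gamma_1),\dots,\mathbb G_n(\gamma_k),\Lambda_n)$.

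I expect the main obstacle to be the tightness transfer in $\ell^\infty$ under the alternative, that is, making the contiguity argument rigorous for non-measurable suprema. I would handle it through outer probabilities, using the fact that contiguity applies to outer probabilities of events defined by the modulus of continuity. Measurability issues are covered by the paper's stated convention.
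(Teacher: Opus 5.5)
Your proposal is correct and follows the paper's route: Cauchy--Schwarz for continuity of \(m_j\), joint Donsker convergence after adjoining \(f_j\), the LAN expansion implied by DQM, Le Cam's first lemma for contiguity, and Le Cam's third lemma to shift \(\mathbb G_j\) by its covariance \(m_j\) with the score limit. Your added detail (identifying finite-dimensional marginals and transferring asymptotic tightness under contiguity via outer probabilities) spells out steps the paper leaves implicit when it invokes the third lemma at the process level.
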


\begin{proof}[Proof of Lemma \ref{lem:dqm-drift}]
Write \(\mathbb G_{j,n}(\gamma_j)=n_j^{-1/2}\sum_{i=1}^{n_j}\varphi_j(X_{ji},\gamma_j)\) and \(\Delta_{j,n}=n_j^{-1/2}\sum_{i=1}^{n_j}f_j(X_{ji})\). If \(\gamma_{j,r}\to\gamma_j\), Cauchy--Schwarz gives
\[
\|m_j(\gamma_{j,r})-m_j(\gamma_j)\|
\le
\|f_j\|_{P_{j0},2}\,
\|\varphi_j(\cdot,\gamma_{j,r})-\varphi_j(\cdot,\gamma_j)\|_{P_{j0},2}\to0,
\]
so \(m_j\) is continuous. Adding the single square-integrable function \(f_j\) to the Donsker class preserves Donsker convergence. Under \(P_{j0}^{n_j}\), therefore, \((\mathbb G_{j,n},\Delta_{j,n})\) converges jointly in \(\ell^\infty(\Gamma_j,\R^{q_j})\times\R\) to a tight Gaussian vector \((\mathbb G_j,\Delta_j)\). The reference moments and the score have mean zero, and the cross-covariance is
\(\operatorname{Cov}(\mathbb G_j(\gamma_j),\Delta_j)=\mathbb E_{P_{j0}}[f_j(X_{ji})\varphi_j(X_{ji},\gamma_j)]=m_j(\gamma_j)\).

The DQM expansion in Definition~\ref{def:dqm-path} implies the local likelihood-ratio expansion
\(\log \frac{dP_{j,n_j,f_j}^{n_j}}{dP_{j0}^{n_j}}=\Delta_{j,n}-\frac12\|f_j\|_{P_{j0},2}^2+o_{P_{j0}}(1)\).
The right side converges under \(P_{j0}^{n_j}\) to \(\Delta_j-\|f_j\|_{P_{j0},2}^2/2\), and the limiting exponential has mean one. Le Cam's first lemma gives contiguity of \(P_{j,n_j,f_j}^{n_j}\) with respect to \(P_{j0}^{n_j}\). Le Cam's third lemma then shifts the reference-law Gaussian process by its covariance with \(\Delta_j\). Hence \(\mathbb G_{j,n}\dto\mathbb G_j+m_j\) in \(\ell^\infty(\Gamma_j,\R^{q_j})\) under \(P_{j,n_j,f_j}^{n_j}\). The continuity of \(m_j\) and the continuous version of \(\mathbb G_j\) give a continuous shifted limit.
\end{proof}

\medskip

\begin{proof}[Proof of Proposition \ref{prop:weak-process-main}]
For each \(j\in\mathcal J_W\), Assumption~\ref{ass:primitive-weak}\textup{(ii)} gives \(\mathbb E_{P_{j0}}\phi_j(X_{ji},\alpha_j)=0\) for all retained values \(\alpha_j\in\calW_j\). Assumption~\ref{ass:primitive-weak}\textup{(iii)} gives the Donsker, \(L_2(P_{j0})\)-continuity, and continuous-Gaussian-version conditions required by Lemma~\ref{lem:dqm-drift}. Combining these conditions with the DQM path in Assumption~\ref{ass:primitive-weak}\textup{(i)}, and applying Lemma~\ref{lem:dqm-drift} with \(\Gamma_j=\calW_j\) and \(\varphi_j=\phi_j\), gives
\[
        g_{j,n}(\cdot)
        =n_j^{-1/2}\sum_{i=1}^{n_j}\phi_j(X_{ji},\cdot)
        \dto
        \mathbb G_j(\cdot)+m_j(\cdot)
        \quad\text{in }\ell^\infty(\calW_j,\R^{k_j})
\]
under \(P_{j,n_j,f_j}^{n_j}\), where \(m_j(\alpha_j)=\mathbb E_{P_{j0}}[f_j(X_{ji})\phi_j(X_{ji},\alpha_j)]\). The continuity of \(m_j\) follows from the same Cauchy--Schwarz argument used in Lemma~\ref{lem:dqm-drift}.

Because the weak-study samples are independent under the product local law \(P^W_{n,f}\) and \(J\) is fixed, the finite product of the group convergences gives \((g_{j,n})_{j\in\mathcal J_W}\dto(\mathbb G_j+m_j)_{j\in\mathcal J_W}\) in \(\prod_{j\in\mathcal J_W}\ell^\infty(\calW_j,\R^{k_j})\). Work with the separable continuous versions of the limits. Define the restriction-and-stacking map \(R\) by \(R((x_j)_{j\in\mathcal J_W})(\alpha_W)=(x_j(\alpha_j)')_{j\in\mathcal J_W}'\). For two product elements \(x\) and \(y\),
\(\|R(x)-R(y)\|_{\ell^\infty(\calW)}\le\sum_{j\in\mathcal J_W}\|x_j-y_j\|_{\ell^\infty(\calW_j)}\),
so \(R\) is continuous. The continuous mapping theorem gives \(g_n\dto g=m+\mathbb G\) in \(\ell^\infty(\calW,\R^{k_W})\), and the limit has continuous sample paths on compact \(\calW\).

Assumption~\ref{ass:primitive-weak}\textup{(iv)} gives \(\widehat W_n\pto W\) in \(\ell^\infty(\calW,\R^{k_W\times k_W})\). Slutsky's theorem gives joint convergence of \((g_n,\widehat W_n)\) in the product sup-norm space. Define \(\Phi(x,V)(\alpha_W)=x(\alpha_W)'V(\alpha_W)x(\alpha_W)\). If \((x_r,V_r)\to(x,V)\) in sup norm, then \(\sup_r\|x_r\|_\infty<\infty\) and \(\sup_r\|V_r\|_\infty<\infty\) for all large \(r\), and
\[
\begin{aligned}
&\sup_{\alpha_W\in\calW}
|x_r(\alpha_W)'V_r(\alpha_W)x_r(\alpha_W)-x(\alpha_W)'V(\alpha_W)x(\alpha_W)| \\
&\quad\le
\|V_r\|_\infty\|x_r-x\|_\infty(\|x_r\|_\infty+\|x\|_\infty)
+\|x\|_\infty^2\|V_r-V\|_\infty,
\end{aligned}
\]
which tends to zero. Thus \(\Phi\) is continuous at \((g,W)\) almost surely, and the continuous mapping theorem gives \(Q_n=\Phi(g_n,\widehat W_n)\dto\Phi(g,W)=Q\).
\end{proof}

\medskip

The next lemma verifies that a hierarchy-induced prior path is proper and weakly continuous on a compact weak parameter space. The final sentence gives the prior score used in the local MSE derivative calculation in Section~\ref{sec:mse-improvement} when \(p_\lambda(h)\propto p_0(h)\exp[\lambda \tilde\ell_0(h)]\) on a compact local chart.

\begin{lemma}\label{lem:primitive-prior}
Suppose \(\Gamma_0\) is compact, \(\mu_0\) is finite with full support, and \(w_\lambda(\gamma)\) is strictly positive and jointly continuous on \(\Gamma_0\times[0,\bar\lambda]\). Define \(\pi_\lambda(d\gamma)\propto w_\lambda(\gamma)\mu_0(d\gamma)\). Then \((\pi_\lambda)_{0\le\lambda\le\bar\lambda}\) is proper and weakly continuous on every compact \(\Lambda\subset[0,\bar\lambda]\). The quadratic path \(p_\lambda(\gamma)\propto\exp[\lambda a_P(\gamma)]p_0(\gamma)\) satisfies the same conclusion on compact \(\Gamma_0\) when \(a_P\) is continuous and \(p_0\) is continuous, bounded, and bounded away from zero. If \(\calH\) is compact, \(\tilde\ell_0\) is bounded and measurable, and \(p_0\) is bounded above and bounded away from zero on \(\calH\), then \(p_\lambda(h)\propto p_0(h)\exp[\lambda \tilde\ell_0(h)]\) satisfies Assumption~\ref{ass:smooth-prior} with prior score \(\ell_0\) equal to \(\tilde\ell_0\) minus its \(p_0\)-mean at \(\lambda=0\).
\end{lemma}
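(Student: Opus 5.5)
The lemma has three parts: properness and weak continuity of a general tilted path, the quadratic special case, and the prior score for the exponential tilt. I would treat each by direct compactness arguments.

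\emph{General path.} Set \(Z(\lambda)=\int_{\Gamma_0}w_\lambda(\gamma)\mu_0(d\gamma)\). Joint continuity of \(w\) on the compact set \(\Gamma_0\times[0,\bar\lambda]\) gives
\[
0<c\le w_\lambda(\gamma)\le C<\infty \quad\text{for all }(\gamma,\lambda).
\]
Because \(\mu_0\) is finite and nonzero (it has full support on a nonempty set), this yields \(0<c\,\mu_0(\Gamma_0)\le Z(\lambda)\le C\,\mu_0(\Gamma_0)<\infty\). Hence each \(\pi_\lambda\) is a well-defined probability measure.

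For weak continuity, fix \(f\in C_b(\Gamma_0)\) and \(\lambda_r\to\lambda\). Joint continuity on a compact set is uniform continuity, so \(\sup_\gamma|w_{\lambda_r}(\gamma)-w_\lambda(\gamma)|\to0\). Dominated convergence (or the uniform bound directly) then gives
\[
\int f w_{\lambda_r}\,d\mu_0\to\int f w_\lambda\,d\mu_0, \qquad Z(\lambda_r)\to Z(\lambda).
\]
The ratio converges because \(Z\) is bounded below. In fact the total-variation distance \(\|\pi_{\lambda_r}-\pi_\lambda\|_{TV}\) tends to zero, which is stronger than weak continuity.

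\emph{Quadratic path.} Take \(w_\lambda(\gamma)=\exp[\lambda a_P(\gamma)]p_0(\gamma)\). This is strictly positive, and it is jointly continuous because \(a_P\) and \(p_0\) are continuous and the exponential is continuous. So the first part applies with \(\mu_0\) the base measure.

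\emph{Prior score.} This is the only step needing care, since here \(\tilde\ell_0\) is merely bounded and measurable, not continuous. Write
\[
Z(\lambda)=\int_{\calH}p_0(u)e^{\lambda\tilde\ell_0(u)}\,du, \qquad \log p_\lambda(h)=\log p_0(h)+\lambda\tilde\ell_0(h)-\log Z(\lambda).
\]
\begin{enumerate}[label=(\alph*),leftmargin=2.1em]
\item \emph{Part (i) of Assumption~\ref{ass:smooth-prior}.} With \(|\tilde\ell_0|\le B\), the bounds \(e^{-\varepsilon B}\le e^{\lambda\tilde\ell_0}\le e^{\varepsilon B}\) hold for \(\lambda\in[0,\varepsilon]\). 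Together with the bounds on \(p_0\) and finite Lebesgue measure of the compact set \(\calH\), \(Z\) is bounded above and away from zero. Hence \(p_\lambda\) is bounded above and away from zero uniformly in \((h,\lambda)\).
\item \emph{Differentiating \(Z\).} The difference quotients of \(e^{\lambda\tilde\ell_0}\) are dominated by \(Be^{\varepsilon B}\). Dominated convergence lets us differentiate under the integral:
\[
Z'(\lambda)=\int\tilde\ell_0\,p_0\,e^{\lambda\tilde\ell_0}, \qquad Z'(\lambda)/Z(\lambda)=\mathbb E_{p_\lambda}\tilde\ell_0.
\]
\item \emph{The score and part (ii).} Continuity of \(\lambda\mapsto Z'(\lambda)/Z(\lambda)\) follows by the same domination. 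Therefore
\[
\ell_\lambda(h)=\tilde\ell_0(h)-\mathbb E_{p_\lambda}\tilde\ell_0 .
\]
This is jointly measurable, bounded by \(2B\), continuous in \(\lambda\), and at \(\lambda=0\) equals \(\tilde\ell_0(h)\) minus its \(p_0\)-mean.
\end{enumerate}
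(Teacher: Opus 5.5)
Your proposal is correct and follows essentially the same route as the paper's proof. Compactness gives uniform two-sided bounds on the weights and the normalizer, dominated convergence gives continuity of the normalized integrals, the quadratic path is handled as a special case of the general tilt, and the score comes from differentiating \(\log Z_\lambda\) under the integral sign; your version is slightly more explicit, since you also check clause (i) of Assumption~\ref{ass:smooth-prior}, joint measurability, and continuity of \(\lambda\mapsto\ell_\lambda(h)\), which the paper leaves implicit.
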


\begin{proof}[Proof of Lemma \ref{lem:primitive-prior}]
Since \(\Gamma_0\times[0,\bar\lambda]\) is compact, joint continuity and strict positivity imply that \(w_\lambda(\gamma)\) is bounded above and bounded away from zero uniformly on this product set. The normalizing constant \(Z_\lambda=\int_{\Gamma_0}w_\lambda(u)\mu_0(du)\) is finite and strictly positive because \(\mu_0\) is finite with full support. For each \(f\in C_b(\Gamma_0)\), dominated convergence gives continuity in \(\lambda\) of both \(Z_\lambda\) and \(\int f(\gamma)w_\lambda(\gamma)\mu_0(d\gamma)\). Therefore \(\lambda\mapsto\int f(\gamma)\pi_\lambda(d\gamma)\) is continuous, which is weak continuity.

For the quadratic path, \(a_P\) is continuous and bounded on compact \(\Gamma_0\). The function \(\exp[\lambda a_P(\gamma)]p_0(\gamma)\) is jointly continuous and bounded above and below away from zero on \(\Gamma_0\times[0,\bar\lambda]\), so the preceding argument applies.

For the local path, write \(Z_\lambda=\int_\calH p_0(u)\exp[\lambda \tilde\ell_0(u)]du\). Boundedness of \(\tilde\ell_0\), compactness of \(\calH\), and the upper and lower bounds on \(p_0\) make \(Z_\lambda\) finite and bounded away from zero on compact \(\lambda\)-intervals. The density satisfies \(\log p_\lambda(h)=\log p_0(h)+\lambda \tilde\ell_0(h)-\log Z_\lambda\), and dominated convergence gives
\[
\partial_\lambda\log p_\lambda(h)
=
\tilde\ell_0(h)-
\frac{\int_\calH \tilde\ell_0(u)p_0(u)\exp[\lambda \tilde\ell_0(u)]du}
{\int_\calH p_0(u)\exp[\lambda \tilde\ell_0(u)]du}.
\]
This derivative is uniformly bounded on compact \(\lambda\)-intervals. At \(\lambda=0\), it is \(\tilde\ell_0(h)\) minus its \(p_0\)-mean, and Assumption~\ref{ass:smooth-prior} denotes this centered score by \(\ell_0\). Additive constants do not affect the quasi-posterior covariance calculations in Proposition~\ref{prop:small}. If \(\tilde\ell_0(h)=h_c'Ph-h'Ph/2\), then \(\ell_0(h)=c+h_c'Ph-h'Ph/2\) for a constant \(c\).
\end{proof}

\medskip

\begin{proof}[Proof of Proposition \ref{prop:primitive-prior-main}]
The retained spaces are compact, \(\Theta\) is compact, \(q_1\) is continuous on \(\calA\times\Theta\), and the baseline retained measure \(p_W^0\mu_W\) is finite with full support by construction and Assumption~\ref{ass:primitive-setup}. Hence
the map \((\alpha_W,\theta,\lambda)\mapsto\exp\{\lambda[q_{1,W}(\alpha_W,\theta;z_W)+q_{1,S}(\alpha_{S0},\theta;z_S)]\}\)
is bounded, strictly positive, and jointly continuous on \(\calW\times\Theta\times\Lambda\). Dominated convergence with respect to \(\pi_\theta\) gives joint continuity and strict positivity of \(r_\lambda(\alpha_W)\). Lemma~\ref{lem:primitive-prior}, applied with \(\Gamma_0=\calW\), \(\mu_0(d\alpha_W)=p_W^0(\alpha_W)\mu_W(d\alpha_W)\), and \(w_\lambda=r_\lambda\), gives properness and weak continuity of \((\pi_\lambda)_{\lambda\in\Lambda}\).

Lemma~\ref{lem:strong-primitive} gives \(\widehat\alpha_j^{\mathrm{un}}-\alpha_{j0}=o_p(1)\) for every \(j\in\mathcal J_S\). Since \(\mathcal J_S\) is finite, \(\widehat\alpha_{S,n}^{\mathrm{un}}-\alpha_{S0}=o_p(1)\). Uniform continuity of \(q_{1,S}\) on compact \(\calA_S\times\Theta\) gives
\[
        \Delta_n
        :=
        \sup_{\theta\in\Theta}
        \left|q_{1,S}(\widehat\alpha_{S,n}^{\mathrm{un}},\theta;z_S)-q_{1,S}(\alpha_{S0},\theta;z_S)\right|
        =o_p(1).
\]
For all \(\alpha_W\in\calW\) and \(\lambda\in\Lambda\),
\(e^{-\bar\lambda\Delta_n}\le r_{n,\lambda}(\alpha_W)/r_\lambda(\alpha_W)\le e^{\bar\lambda\Delta_n}\)
on events where \(\Delta_n\) is finite. Let \(\varepsilon_n=\bar\lambda\Delta_n\), write \(R_{n,\lambda}=r_{n,\lambda}/r_\lambda\), and set \(\bar R_{n,\lambda}=\int R_{n,\lambda}\,d\pi_\lambda\). The same bound gives \(e^{-\varepsilon_n}\le\bar R_{n,\lambda}\le e^{\varepsilon_n}\), so \(d\pi_{n,\lambda}^{\mathrm{plug}}/d\pi_\lambda=R_{n,\lambda}/\bar R_{n,\lambda}\in[e^{-2\varepsilon_n},e^{2\varepsilon_n}]\), uniformly over \(\alpha_W\) and \(\lambda\). Therefore \(\sup_{\lambda\in\Lambda}\|\pi_{n,\lambda}^{\mathrm{plug}}-\pi_\lambda\|_{TV}\le e^{2\varepsilon_n}-1=o_p(1)\).
\end{proof}

\medskip

\begin{proof}[Proof of Proposition \ref{prop:weak-marginal-main}]
First integrate out the strong studies conditionally on \((\alpha_W,\theta)\). For \(j\in\mathcal J_S\), Lemma~\ref{lem:strong-primitive} gives the local quadratic representation centered at \(\widehat\alpha_j^{\mathrm{un}}\), local prior flatness, hierarchy flatness, and a zeroth-moment tail bound. The last implication is immediate from the first-moment tail ratio, since \(0\le\int_{\|h\|>K_{j,n}}L_{j,n,\lambda}(h,\omega)dh/\int_{\|h\|\le K_{j,n}}L_{j,n,\lambda}(h,\omega)dh\le\int_{\|h\|>K_{j,n}}(1+\|h\|)L_{j,n,\lambda}(h,\omega)dh/\int_{\|h\|\le K_{j,n}}L_{j,n,\lambda}(h,\omega)dh\). Lemma~\ref{lem:deterministic-laplace} therefore yields, uniformly over the remaining variables and \(\lambda\in\Lambda\),
\[
        A_{j,n}
        \exp[\lambda q_{1,j}(\widehat\alpha_j^{\mathrm{un}},\theta;z_j)]
        [1+\eta_{j,n}(\theta,\lambda)],
        \qquad
        \sup_{\theta,\lambda}|\eta_{j,n}(\theta,\lambda)|=o_p(1),
\]
where
\[
A_{j,n}
=
n_j^{-d_j/2}
\exp[q_{2,j,n_j}(X_j,\widehat\alpha_j^{\mathrm{un}})]
p_j(\widehat\alpha_j^{\mathrm{un}})
(2\pi)^{d_j/2}\det(\widehat H_{j,n})^{-1/2}.
\]
The factor \(A_{j,n}\) depends on the strong-study data but not on \(\alpha_W\), \(\theta\), or \(\lambda\). Because \(\mathcal J_S\) is finite, multiplying the groupwise expansions preserves a uniform relative error: with \(\delta_n=\max_{j\in\mathcal J_S}\sup_{\theta,\lambda}|\eta_{j,n}(\theta,\lambda)|\), \(\prod_{j\in\mathcal J_S}[1+\eta_{j,n}(\theta,\lambda)]=1+\tilde\eta_n(\theta,\lambda)\) and \(\sup_{\theta,\lambda}|\tilde\eta_n(\theta,\lambda)|\le(1+\delta_n)^{|\mathcal J_S|}-1=o_p(1)\). The product \(\prod_{j\in\mathcal J_S}A_{j,n}\) cancels from normalized weak marginals.

By construction, the weak lower-level baseline prior contributes the measure \(p_W^0(\alpha_W)\mu_W(d\alpha_W)\) on \(\calW\). By \eqref{eq:weak-lower-level-link}, the weak lower-level objective is \(-Q_n(\alpha_W)/2\) up to a sample-dependent constant that does not depend on \(\alpha_W\). After integrating out the strong studies, the joint marginal law of \((\alpha_W,\theta)\) is therefore uniformly equivalent in total variation to the law with unnormalized measure
\[
\begin{aligned}
&\exp[-Q_n(\alpha_W)/2]
\exp\!\Bigl(\lambda\bigl[q_{1,W}(\alpha_W,\theta;z_W)
        +q_{1,S}(\widehat\alpha_{S,n}^{\mathrm{un}},\theta;z_S)\bigr]\Bigr)\\
&\qquad\times p_W^0(\alpha_W)\mu_W(d\alpha_W)\pi_\theta(d\theta).
\end{aligned}
\]
Integrating out \(\theta\) gives the weak law, up to the same uniformly vanishing relative error, with unnormalized measure \(\exp[-Q_n(\alpha_W)/2]r_{n,\lambda}(\alpha_W)p_W^0(\alpha_W)\mu_W(d\alpha_W)\).
Let \(\Pi^{\mathrm{plug}}_{n,\lambda}\) denote the normalized weak quasi-posterior obtained from \(\Pi_{n,\lambda}\) by replacing \(\pi_\lambda\) with \(\pi_{n,\lambda}^{\mathrm{plug}}\). The relative-density normalization argument in Lemma~\ref{lem:weak-laplace-reduction} gives
\[
        \sup_{\lambda\in\Lambda}
        \|\Pi^{\mathrm{full},W}_{n,\lambda}-\Pi^{\mathrm{plug}}_{n,\lambda}\|_{TV}=o_p(1).
\]

Proposition~\ref{prop:primitive-prior-main} gives \(\sup_{\lambda\in\Lambda}\|\pi_{n,\lambda}^{\mathrm{plug}}-\pi_\lambda\|_{TV}=o_p(1)\). On events where \(\sup_{\alpha_W\in\calW}|Q_n(\alpha_W)|\le M\), the likelihood factor \(\ell_n(\alpha_W)=\exp[-Q_n(\alpha_W)/2]\) satisfies \(m_M\le\ell_n\le M_M\) for positive constants depending only on \(M\). For a probability measure \(\nu\), write \(T_\ell\nu(A)=\int_A\ell_n\,d\nu/\int\ell_n\,d\nu\). If \(\nu_1\) and \(\nu_2\) are probability measures on \(\calW\), set \(Z_i=\int\ell_n\,d\nu_i\), so \(m_M\le Z_i\le M_M\). For every measurable \(A\), \(|\int_A\ell_n\,d(\nu_1-\nu_2)|\le 2M_M\|\nu_1-\nu_2\|_{TV}\) and \(|Z_1-Z_2|\le 2M_M\|\nu_1-\nu_2\|_{TV}\). The quotient identity \(a/Z_1-b/Z_2=(a-b)/Z_1+b(Z_2-Z_1)/(Z_1Z_2)\), with \(0\le b\le M_M\), gives \(|T_\ell\nu_1(A)-T_\ell\nu_2(A)|\le[2M_M/m_M+2M_M^2/m_M^2]\|\nu_1-\nu_2\|_{TV}\). Thus \(\|T_\ell\nu_1-T_\ell\nu_2\|_{TV}\le C_M\|\nu_1-\nu_2\|_{TV}\). Since Proposition~\ref{prop:weak-process-main} implies \(\sup_{\alpha_W}|Q_n(\alpha_W)|=O_p(1)\), this Lipschitz comparison gives \(\sup_{\lambda\in\Lambda}\|\Pi^{\mathrm{plug}}_{n,\lambda}-\Pi_{n,\lambda}\|_{TV}=o_p(1)\). The triangle inequality proves the stated total-variation equivalence.

Finally, \(\calW\) is compact, so the coordinate map \(\alpha_W\mapsto\alpha_W\) is bounded. If \(D_\calW=\sup_{\alpha_W\in\calW}\|\alpha_W\|\) and \(P,Q\) are probability laws on \(\calW\), then \(\|\int\alpha_W\,dP-\int\alpha_W\,dQ\|\le2D_\calW\|P-Q\|_{TV}\). The uniform total-variation bound therefore implies the stated uniform bound on weak marginal quasi-posterior means.
\end{proof}

\medskip

\subsection{Posterior continuity and main results}

The following lemma is stated for a generic compact weak parameter space. It applies to the study-level weak experiment in the main theorem with \(\Gamma=\calW\), and it also applies to the profiled weak experiment in Appendix~\ref{app:mixed-within-study-blocks} with \(\Gamma=\Gamma_0\).

\begin{lemma}\label{lem:posterior-continuity}
Let \(\Gamma\subset\R^d\) and \(\Lambda\subset[0,\bar\lambda]\) be compact. Let \((\pi_\lambda)_{\lambda\in\Lambda}\) be a path of proper probability measures on \(\Gamma\) such that \(\lambda\mapsto\int f(\gamma)\pi_\lambda(d\gamma)\) is continuous for every \(f\in C_b(\Gamma)\). Let \((x_n,V_n)\) and \((x,V)\) be random elements of \(\ell^\infty(\Gamma,\R^k)\times\ell^\infty(\Gamma,\R^{k\times k})\) such that \((x_n,V_n)\dto(x,V)\), and suppose the sample paths of \(x\) and \(V\) are continuous almost surely. Define \(Q_n(\gamma)=x_n(\gamma)'V_n(\gamma)x_n(\gamma)\), \(Q(\gamma)=x(\gamma)'V(\gamma)x(\gamma)\),
\[
        \Pi_{n,\lambda}^\Gamma(d\gamma)
        =
        \frac{\exp[-Q_n(\gamma)/2]\pi_\lambda(d\gamma)}
        {\int_\Gamma\exp[-Q_n(u)/2]\pi_\lambda(du)},
        \qquad
        \Pi_{\lambda}^\Gamma(d\gamma\mid x,V)
        =
        \frac{\exp[-Q(\gamma)/2]\pi_\lambda(d\gamma)}
        {\int_\Gamma\exp[-Q(u)/2]\pi_\lambda(du)} .
\]
Then \(Q_n\dto Q\) in \(\ell^\infty(\Gamma)\), and
\[
        \left(\int_\Gamma\gamma\,\Pi_{n,\lambda}^\Gamma(d\gamma)\right)_{\lambda\in\Lambda}
        \dto
        \left(\int_\Gamma\gamma\,\Pi_{\lambda}^\Gamma(d\gamma\mid x,V)\right)_{\lambda\in\Lambda}
\]
in \(\ell^\infty(\Lambda,\R^d)\).
\end{lemma}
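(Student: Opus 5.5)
The plan is to apply the continuous mapping theorem to a single deterministic map from the sup-norm space into \(\ell^\infty(\Lambda,\R^d)\). First, \(Q_n\dto Q\) follows exactly as in the proof of Proposition~\ref{prop:weak-process-main}. The map \(\Phi(x,V)(\gamma)=x(\gamma)'V(\gamma)x(\gamma)\) satisfies the same sup-norm Lipschitz-type bound on bounded sets. It is therefore continuous at every \((x,V)\), and the continuous mapping theorem gives \(Q_n=\Phi(x_n,V_n)\dto\Phi(x,V)=Q\) in \(\ell^\infty(\Gamma)\).

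Next, define \(\Psi:\ell^\infty(\Gamma)\to\ell^\infty(\Lambda,\R^d)\) by
\[
\Psi(q)(\lambda)=\frac{\int_\Gamma\gamma\,e^{-q(\gamma)/2}\pi_\lambda(d\gamma)}{\int_\Gamma e^{-q(\gamma)/2}\pi_\lambda(d\gamma)} .
\]
Then the claimed convergence is \(\Psi(Q_n)\dto\Psi(Q)\). Measurability issues are handled by the paper's convention that weak convergence is understood in the outer sense. For any bounded \(q\) the denominator is at least \(e^{-\|q\|_\infty/2}>0\), so \(\Psi(q)(\lambda)\) is well defined and bounded by \(\sup_{\gamma\in\Gamma}\|\gamma\|\).

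The key step is to show that \(\Psi\) is continuous on all of \(\ell^\infty(\Gamma)\), uniformly in \(\lambda\). Suppose \(\|q_r-q\|_\infty\to0\). Then \(e^{-q_r/2}/e^{-q/2}\in[e^{-\epsilon_r},e^{\epsilon_r}]\) with \(\epsilon_r=\|q_r-q\|_\infty/2\). This is the relative-density comparison already used in Proposition~\ref{prop:primitive-prior-main}, and it gives
\[
\sup_\lambda\Bigl\|\Psi(q_r)(\lambda)-\Psi(q)(\lambda)\Bigr\|\le 2\sup_{\gamma\in\Gamma}\|\gamma\|\,(e^{2\epsilon_r}-1)\to0 .
\]
This bound does not depend on \(\lambda\), so continuity of \(\lambda\mapsto\pi_\lambda\) is not needed for it. With \(\Psi\) continuous everywhere, the continuous mapping theorem gives the result.

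The remaining obstacle is checking that \(\Psi(q)\) actually lies in \(\ell^\infty(\Lambda,\R^d)\) and, ideally, that the limit is tight, i.e.\ has continuous paths in \(\lambda\). Boundedness is immediate. For continuity in \(\lambda\) at the limit, note that \(Q\) has continuous sample paths almost surely. Both \(\gamma\mapsto\gamma e^{-Q(\gamma)/2}\) and \(\gamma\mapsto e^{-Q(\gamma)/2}\) are then in \(C_b(\Gamma)\), so the assumed weak continuity of \(\lambda\mapsto\pi_\lambda\) makes the numerator and the (positive) denominator continuous in \(\lambda\). Hence \(\Psi(Q)\in C(\Lambda,\R^d)\), which is separable, so the limit is tight. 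This is exactly where the hypothesis on \(\pi_\lambda\) and the continuity of the sample paths of \(x\) and \(V\) enter.
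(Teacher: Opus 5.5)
Your proposal is correct and follows the paper's proof. Both arguments apply the continuous mapping theorem to \(\Phi(x,V)=x'Vx\) and then to the quasi-posterior-mean map \(\Psi\) into \(\ell^\infty(\Lambda,\R^d)\). The difference is in how the continuity of \(\Psi\) is estimated. You use the multiplicative relative-density bound \(e^{\pm 2\epsilon_r}\), whereas the paper uses an additive numerator/denominator quotient bound whose constant \(C_q\) depends on the denominator's lower bound. Your bound gives a \(q\)-free modulus, and you invoke weak continuity of \(\pi_\lambda\) only to place the limit path in \(C(\Lambda,\R^d)\). That is a harmless and slightly cleaner refinement of the same argument.
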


\begin{proof}[Proof of Lemma \ref{lem:posterior-continuity}]
Let \(\Phi\) map a process--weight pair \((x,V)\) into \(\Phi(x,V)(\gamma)=x(\gamma)'V(\gamma)x(\gamma)\). If \((x_r,V_r)\to(x,V)\) in the product sup norm, then \((x_r)\) and \((V_r)\) are uniformly bounded for all large \(r\), and
\[
\begin{aligned}
&\sup_{\gamma\in\Gamma}
|x_r(\gamma)'V_r(\gamma)x_r(\gamma)-x(\gamma)'V(\gamma)x(\gamma)| \\
&\quad\le
\|V_r\|_\infty\|x_r-x\|_\infty(\|x_r\|_\infty+\|x\|_\infty)
+\|x\|_\infty^2\|V_r-V\|_\infty .
\end{aligned}
\]
The right side tends to zero, so \(\Phi\) is continuous at every bounded pair and the continuous mapping theorem gives \(Q_n=\Phi(x_n,V_n)\dto\Phi(x,V)=Q\). Since the limiting sample paths of \(x\) and \(V\) are continuous on compact \(\Gamma\), \(Q\) is continuous almost surely.

It remains to show that the quasi-posterior mean is continuous as a functional of the objective. For bounded \(q\in\ell^\infty(\Gamma)\), define
\[
\Psi_\lambda(q)
=
\frac{\int_\Gamma\gamma\exp[-q(\gamma)/2]\pi_\lambda(d\gamma)}
{\int_\Gamma\exp[-q(u)/2]\pi_\lambda(du)},
\qquad
\Psi(q)=(\Psi_\lambda(q))_{\lambda\in\Lambda} .
\]
Fix a continuous \(q\); for each \(\lambda\), the denominator is strictly positive. Since \(\pi_\lambda\) is a probability measure and \(q\) is bounded on compact \(\Gamma\), \(\int\exp[-q(\gamma)/2]\pi_\lambda(d\gamma)\ge\exp[-\|q\|_\infty/2]\), so the denominator is bounded away from zero uniformly over \(\lambda\). Weak continuity of \(\lambda\mapsto\pi_\lambda\) implies continuity in \(\lambda\) of both numerator and denominator, since the functions \(\gamma\mapsto\gamma_\ell\exp[-q(\gamma)/2]\), \(\ell=1,\ldots,d\), and \(\gamma\mapsto\exp[-q(\gamma)/2]\) are bounded and continuous. Hence \(\Psi(q)\in\ell^\infty(\Lambda,\R^d)\).

Suppose \(q_r\to q\) uniformly. Put \(\Delta_r=\sup_{\gamma\in\Gamma}|\exp[-q_r(\gamma)/2]-\exp[-q(\gamma)/2]|\), so \(\Delta_r\to0\). Let \(D_\Gamma=\sup_{\gamma\in\Gamma}\|\gamma\|\), and let \(c_q=\inf_{\lambda\in\Lambda}\int\exp[-q(\gamma)/2]\pi_\lambda(d\gamma)>0\). Uniformly over \(\lambda\), the numerator difference is bounded by \(D_\Gamma\Delta_r\), and the denominator difference is bounded by \(\Delta_r\). For all large \(r\), the denominator at \(q_r\) is at least \(c_q/2\). Writing the two numerators and denominators as \(N_{r,\lambda},N_\lambda,D_{r,\lambda},D_\lambda\), the identity \(N_{r,\lambda}/D_{r,\lambda}-N_\lambda/D_\lambda=(N_{r,\lambda}-N_\lambda)/D_{r,\lambda}+N_\lambda(D_\lambda-D_{r,\lambda})/(D_{r,\lambda}D_\lambda)\), together with \(\|N_\lambda\|\le D_\Gamma D_\lambda\), gives \(\sup_{\lambda\in\Lambda}\|\Psi_\lambda(q_r)-\Psi_\lambda(q)\|\le C_qD_\Gamma\Delta_r\to0\), where \(C_q<\infty\) depends only on \(c_q\). The map \(q\mapsto\Psi(q)\) is therefore continuous at every continuous \(q\). Applying the continuous mapping theorem to \(Q_n\dto Q\) proves the displayed convergence of quasi-posterior mean paths.
\end{proof}

\medskip

\begin{proof}[Proof of Theorem \ref{thm:main}]
Proposition~\ref{prop:weak-process-main} gives \((g_n,\widehat W_n)\dto(g,W)\), with continuous limiting sample paths, and Proposition~\ref{prop:primitive-prior-main} gives the proper weakly continuous prior path \((\pi_\lambda)_{\lambda\in\Lambda}\). Applying Lemma~\ref{lem:posterior-continuity} with \(\Gamma=\calW\), \(x_n=g_n\), \(V_n=\widehat W_n\), \(x=g\), and \(V=W\) yields
\((T^{W}_{n,\lambda})_{\lambda\in\Lambda}\dto(t_\lambda(g))_{\lambda\in\Lambda}\)
in \(\ell^\infty(\Lambda,\R^{d_W})\). Proposition~\ref{prop:weak-marginal-main} gives \(\sup_{\lambda\in\Lambda}\|\Pi^{\mathrm{full},W}_{n,\lambda}-\Pi_{n,\lambda}\|_{TV}=o_p(1)\). Using the bounded-coordinate inequality from the proof of that proposition, there is a finite constant \(C_\calW\) such that the difference between the corresponding weak marginal means is bounded by \(C_\calW\|\Pi^{\mathrm{full},W}_{n,\lambda}-\Pi_{n,\lambda}\|_{TV}\) for every \(\lambda\). The full weak marginal mean path and \((T^{W}_{n,\lambda})_{\lambda\in\Lambda}\) therefore differ by \(o_p(1)\) in sup norm. Slutsky's theorem gives the second displayed convergence in the theorem.
\end{proof}

\medskip

\begin{proof}[Proof of Proposition \ref{prop:posterior-risk}]
Fix \(\lambda\in\Lambda\) and a realized process \(g\) for which \(\Pi_\lambda(\cdot\mid g)\) is well defined, and write \(\Pi=\Pi_\lambda(\cdot\mid g)\) and \(\bar a=t_\lambda(g)=\int \alpha_W\Pi(d\alpha_W)\), which is finite by compactness of \(\calW\). For any finite action \(a\in\R^{d_W}\),
\(B(a-\alpha_W)=B(a-\bar a)+B(\bar a-\alpha_W)\).
Expanding the squared norm and integrating with respect to \(\Pi\) gives
\[
\begin{aligned}
&\int_{\calW}\|B(a-\alpha_W)\|^2\Pi(d\alpha_W)
-
\int_{\calW}\|B(\bar a-\alpha_W)\|^2\Pi(d\alpha_W) \\
&\qquad =\|B(a-\bar a)\|^2
+2 B(a-\bar a)'B\int_{\calW}(\bar a-\alpha_W)\Pi(d\alpha_W)
=\|B(a-\bar a)\|^2 .
\end{aligned}
\]
The conditional integrals are finite for every finite \(a\), because \(\calW\) is compact. The calculation does not require \(B\) to have full column rank; if \(B\) has a nontrivial null space, every action satisfying \(B(a-\bar a)=0\) has the same conditional posterior risk as \(\bar a\).

Applying the identity pathwise with \(a=t(g)\), the posterior risk of \(t_\lambda(g)\) is bounded uniformly in \(g\), since both \(t_\lambda(g)\) and \(\alpha_W\) lie in the convex hull of compact \(\calW\). Thus \(IPR_{\lambda,B,\mathsf M}(t_\lambda)<\infty\). Tonelli's theorem yields
\[
        IPR_{\lambda,B,\mathsf M}(t)
        =
        IPR_{\lambda,B,\mathsf M}(t_\lambda)
        +
        \int\|B(t(g)-t_\lambda(g))\|^2\mathsf M(dg),
\]
where the rightmost integral may be infinite. This proves the extended-valued identity in the proposition and shows that \(t_\lambda\) is a Bayes rule among measurable finite-valued rules. The set of Bayes rules is generally nonunique: any measurable finite-valued \(t\) with \(B(t(g)-t_\lambda(g))=0\) for \(\mathsf M\)-almost every \(g\) attains the same integrated posterior risk.
\end{proof}

\medskip

\subsection{Diffuse nuisance priors in the continuously updated weak-GMM limit}
\label{app:am-diffuse-limit}

This subsection records the connection between the weak quasi-posterior used in the main text and the diffuse-nuisance-prior construction in \citet{AM2022}. It applies only to the continuously updated weak criterion. This restriction is substantive: the proportional Gaussian-process prior on the nuisance mean in \citet{AM2022} yields the continuously updated weak-GMM quasi-likelihood, up to determinant factors that do not depend on the observed process. The preceding weak-limit and MSE results allow more general weight processes \(W\); the result below does not give those other criteria an Andrews--Mikusheva prior interpretation.

To keep notation short, this subsection writes \(\alpha\) for a candidate retained weak value \(\alpha_W\). Let \(\Sigma(\alpha,\tilde\alpha):=\Cov(\mathbb G(\alpha),\mathbb G(\tilde\alpha))\), and suppose \(\alpha\mapsto\Sigma(\alpha,\alpha)\) is continuous and uniformly nonsingular on \(\calW\). In this subsection \(W(\alpha)=\Sigma(\alpha,\alpha)^{-1}\), and the continuously updated weak criterion is \(Q^{\mathrm{CU}}_g(\alpha)=g(\alpha)'\Sigma(\alpha,\alpha)^{-1}g(\alpha)\). The corresponding weak quasi-posterior is
\[
        \Pi_\lambda^{\mathrm{CU}}(d\alpha\mid g)
        =
        \frac{\exp[-Q^{\mathrm{CU}}_g(\alpha)/2]\pi_\lambda(d\alpha)}
        {\int_{\calW}\exp[-Q^{\mathrm{CU}}_g(u)/2]\pi_\lambda(du)} .
\]
Let \(A\) be an Andrews--Mikusheva reparameterizing linear functional, which separates the structural restriction \(m(\alpha)=0\) from the remaining nuisance mean, and let \(L_{\kappa,A}(\alpha;g)\) denote the structural integrated likelihood obtained after integrating the nuisance mean under the proportional Gaussian-process prior with variance scale \(\kappa\). The scale \(\kappa\) is distinct from the pooling strength \(\lambda\): \(\lambda\) controls shrinkage across retained weak values, while \(\kappa\) controls the diffuseness of the nuisance-mean prior. The proportional-prior restriction corresponds to \citet[Theorem~3]{AM2022}, and the diffuse integrated-likelihood limit corresponds to \citet[Section~3.2, equation~(8)]{AM2022}. We use the following uniform version of that limit on \(\calW\): there is a finite, positive, continuous function \(c_A:\calW\to(0,\infty)\), independent of \(g\), such that, for every continuous process path \(g\),
\[
        \varepsilon_{\kappa,A}(g)
        :=
        \sup_{\alpha\in\calW}
        \left|
        \frac{L_{\kappa,A}(\alpha;g)}
        {c_A(\alpha)\exp[-Q^{\mathrm{CU}}_g(\alpha)/2]}
        -1
        \right|
        \to0
        \qquad\text{as } \kappa\to\infty .
\]
The factor \(c_A\) collects the determinant terms in \citet[Section~3.2, equation~(8)]{AM2022}. Because it does not depend on \(g\), it can be absorbed into the structural prior. For each \(\lambda\), define
\[
        \pi_{\lambda,A}^{\mathrm{AM}}(d\alpha)
        =
        \frac{c_A(\alpha)^{-1}\pi_\lambda(d\alpha)}
        {\int_{\calW}c_A(u)^{-1}\pi_\lambda(du)} ,
\]
and let the finite-\(\kappa\) structural posterior be
\[
        \Pi_{\lambda,\kappa,A}^{\mathrm{AM}}(d\alpha\mid g)
        =
        \frac{L_{\kappa,A}(\alpha;g)\pi_{\lambda,A}^{\mathrm{AM}}(d\alpha)}
        {\int_{\calW}L_{\kappa,A}(u;g)\pi_{\lambda,A}^{\mathrm{AM}}(du)} .
\]

\begin{proposition}\label{prop:am-diffuse-limit}
Suppose Assumptions~\ref{ass:primitive-setup}, \ref{ass:study-level-split}, \ref{ass:primitive-strong}, and \ref{ass:primitive-weak} hold, let \((\pi_\lambda)_{\lambda\in\Lambda}\) be the induced prior path in Proposition~\ref{prop:primitive-prior-main}, and suppose the continuously updated covariance and diffuse-likelihood conditions stated in this subsection hold. Then, for every continuous weak-process path \(g\),
\[
        \sup_{\lambda\in\Lambda}
        \left\|
        \Pi_{\lambda,\kappa,A}^{\mathrm{AM}}(\cdot\mid g)
        -
        \Pi_\lambda^{\mathrm{CU}}(\cdot\mid g)
        \right\|_{TV}
        \to0
        \qquad\text{as } \kappa\to\infty .
\]
Consequently, the corresponding posterior means converge uniformly over \(\lambda\in\Lambda\):
\[
        \sup_{\lambda\in\Lambda}
        \left\|
        \int_{\calW}\alpha\,\Pi_{\lambda,\kappa,A}^{\mathrm{AM}}(d\alpha\mid g)
        -
        \int_{\calW}\alpha\,\Pi_\lambda^{\mathrm{CU}}(d\alpha\mid g)
        \right\|
        \to0 .
\]
\end{proposition}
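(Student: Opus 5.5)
The plan is to rewrite the finite-$\kappa$ Andrews--Mikusheva posterior as a bounded multiplicative perturbation of $\Pi^{\mathrm{CU}}_\lambda(\cdot\mid g)$, with the perturbation controlled by $\varepsilon_{\kappa,A}(g)$ uniformly in $\alpha$ and $\lambda$. We then reuse the relative-density normalization argument from the proofs of Proposition~\ref{prop:primitive-prior-main} and Lemma~\ref{lem:weak-laplace-reduction}.

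Fix a continuous path $g$ and write $\ell_g(\alpha)=\exp[-Q^{\mathrm{CU}}_g(\alpha)/2]$. The factor $c_A^{-1}$ has been absorbed into $\pi^{\mathrm{AM}}_{\lambda,A}$, so its normalizing constant $\int c_A^{-1}d\pi_\lambda$ cancels between numerator and denominator. This lets us write
\[
\Pi^{\mathrm{AM}}_{\lambda,\kappa,A}(d\alpha\mid g)=\frac{R_\kappa(\alpha)\,\ell_g(\alpha)\pi_\lambda(d\alpha)}{\int_{\calW}R_\kappa(u)\ell_g(u)\pi_\lambda(du)},
\qquad
R_\kappa(\alpha):=\frac{L_{\kappa,A}(\alpha;g)}{c_A(\alpha)\ell_g(\alpha)} .
\]
Some well-definedness facts are needed first. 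By continuity of $g$ and of $\Sigma(\alpha,\alpha)^{-1}$ on compact $\calW$, $Q^{\mathrm{CU}}_g$ is bounded, so $\ell_g$ is bounded above and below away from zero. Also $c_A$ is continuous and positive, hence bounded and bounded away from zero, so $\pi^{\mathrm{AM}}_{\lambda,A}$ is a proper probability measure. The path $(\pi_\lambda)$ is proper by Proposition~\ref{prop:primitive-prior-main}. Consequently $\Pi^{\mathrm{CU}}_\lambda(\cdot\mid g)$ is well defined, and it is exactly the normalized measure obtained by setting $R_\kappa\equiv1$.

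Next, the uniform diffuse-likelihood condition gives $\sup_\alpha|R_\kappa(\alpha)-1|\le\varepsilon_\kappa:=\varepsilon_{\kappa,A}(g)\to0$. This bound does not depend on $\lambda$. Once $\varepsilon_\kappa<1/2$, write $\bar R_{\kappa,\lambda}=\int R_\kappa\,d\Pi^{\mathrm{CU}}_\lambda(\cdot\mid g)\in[1-\varepsilon_\kappa,1+\varepsilon_\kappa]$. The Radon--Nikodym derivative of $\Pi^{\mathrm{AM}}_{\lambda,\kappa,A}$ with respect to $\Pi^{\mathrm{CU}}_\lambda$ is $R_\kappa/\bar R_{\kappa,\lambda}$, and it satisfies
\[
\left|\frac{R_\kappa}{\bar R_{\kappa,\lambda}}-1\right|\le \frac{2\varepsilon_\kappa}{1-\varepsilon_\kappa}.
\]
Integrating this against $\Pi^{\mathrm{CU}}_\lambda$ gives
\[
\sup_{\lambda\in\Lambda}\bigl\|\Pi^{\mathrm{AM}}_{\lambda,\kappa,A}(\cdot\mid g)-\Pi^{\mathrm{CU}}_\lambda(\cdot\mid g)\bigr\|_{TV}\le\frac{2\varepsilon_\kappa}{1-\varepsilon_\kappa}\to0 .
\]

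The mean statement then follows from the bounded-coordinate inequality used at the end of the proof of Proposition~\ref{prop:weak-marginal-main}: $\|\int\alpha\,dP-\int\alpha\,dQ\|\le2D_{\calW}\|P-Q\|_{TV}$ on compact $\calW$.

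The main obstacle is not analytic, since uniformity in $\lambda$ is automatic once $\varepsilon_\kappa$ is free of $\lambda$. The delicate point is bookkeeping: checking that $c_A$ is independent of $g$ and bounded away from zero and infinity, so that absorbing $c_A^{-1}$ into the prior yields a proper measure. One must also confirm that $L_{\kappa,A}$ is finite and positive for large $\kappa$, which the uniform ratio bound guarantees.
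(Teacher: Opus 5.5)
Your proposal is correct and follows essentially the same route as the paper. The paper also cancels $c_A$ so that the finite-$\kappa$ posterior becomes a $(1+e_{\kappa,A})$-reweighting of $\Pi^{\mathrm{CU}}_\lambda(\cdot\mid g)$, with $e_{\kappa,A}$ your $R_\kappa-1$. It then bounds the normalized discrepancy by $2\varepsilon_{\kappa,A}(g)/(1-\varepsilon_{\kappa,A}(g))$ uniformly over sets and $\lambda$, and passes to means through the $2D_{\calW}$ total-variation inequality.
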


\begin{proof}[Proof of Proposition \ref{prop:am-diffuse-limit}]
Fix a continuous path \(g\), and write \(\ell(\alpha)=\exp[-Q^{\mathrm{CU}}_g(\alpha)/2]\). Since \(g\) is continuous, \(\calW\) is compact, and \(\Sigma(\alpha,\alpha)\) is continuous and uniformly nonsingular, \(Q^{\mathrm{CU}}_g\) is continuous and bounded. Hence \(0<\underline\ell\le \ell(\alpha)\le\bar\ell<\infty\) on \(\calW\). Let \(e_{\kappa,A}(\alpha;g)=L_{\kappa,A}(\alpha;g)/(c_A(\alpha)\ell(\alpha))-1\), so \(\|e_{\kappa,A}(\cdot;g)\|_\infty=\varepsilon_{\kappa,A}(g)\to0\).

For every measurable \(E\subseteq\calW\) and all sufficiently large \(\kappa\), the definition of \(\pi_{\lambda,A}^{\mathrm{AM}}\) cancels the factor \(c_A\), so
\[
        \Pi_{\lambda,\kappa,A}^{\mathrm{AM}}(E\mid g)
        =
        \frac{\int_E[1+e_{\kappa,A}(\alpha;g)]\ell(\alpha)\pi_\lambda(d\alpha)}
        {\int_{\calW}[1+e_{\kappa,A}(u;g)]\ell(u)\pi_\lambda(du)} .
\]
Let \(P_\lambda^g\) denote \(\Pi_\lambda^{\mathrm{CU}}(\cdot\mid g)\), and define \(\bar e_{\lambda,\kappa}=\int e_{\kappa,A}(\alpha;g)P_\lambda^g(d\alpha)\). Then \(|\bar e_{\lambda,\kappa}|\le\varepsilon_{\kappa,A}(g)\), and the preceding display gives
\[
        \Pi_{\lambda,\kappa,A}^{\mathrm{AM}}(E\mid g)-P_\lambda^g(E)
        =
        \frac{\int_E [e_{\kappa,A}(\alpha;g)-\bar e_{\lambda,\kappa}]P_\lambda^g(d\alpha)}
        {1+\bar e_{\lambda,\kappa}} .
\]
For \(\kappa\) large enough that \(\varepsilon_{\kappa,A}(g)<1\), the absolute value of the right side is bounded by \(2\varepsilon_{\kappa,A}(g)/(1-\varepsilon_{\kappa,A}(g))\), uniformly over \(E\) and \(\lambda\). Taking the supremum over measurable \(E\) gives the total-variation convergence. Since \(\calW\) is compact, \(D_\calW:=\sup_{\alpha\in\calW}\|\alpha\|<\infty\), and the mean difference is bounded by \(2D_\calW\|\Pi_{\lambda,\kappa,A}^{\mathrm{AM}}(\cdot\mid g)-\Pi_\lambda^{\mathrm{CU}}(\cdot\mid g)\|_{TV}\). The displayed convergence of posterior means follows uniformly over \(\lambda\).
\end{proof}

The proposition is limited to the continuously updated specialization. It does not alter the general weak-limit approximation or the pointwise MSE results, which are stated for the weight process \(W\) allowed in the main text. The result shows that, when the lower-level weak criterion is continuously updated, the hierarchy-induced structural prior \(\pi_\lambda\) can be used inside the Andrews--Mikusheva diffuse-nuisance-prior argument without changing the main estimator.

\medskip

\subsection{Mixed within-study blocks}
\label{app:mixed-within-study-blocks}

The main fixed-\(J\) theorem imposes a study-level identification split, and this subsection treats the case in which a single study contains both a DQM-regular block and a retained weak block. The regular block is integrated out conditional on the retained weak coordinate, and the resulting weak experiment is a reduced weak-GMM problem with the hierarchy evaluated on the population profile and with a smooth positive baseline factor.

\paragraph{Relation to the homogeneous cases.}
This subsection is a profiling extension of the study-level split used in the main text, not a third identification category. With the standard empty-coordinate conventions, the homogeneous cases are nested by the same notation. If a group has no retained weak coordinate, the retained space can be read as a singleton and the conditional regular-block approximation reduces to the strong-study approximation in Theorem~\ref{thm:strong-equivalence}. If a group has no regular coordinate, there is no Laplace step; the corresponding factor in \(C_S\) below is one, and the retained weak marginal reduces to the weak-study construction in Proposition~\ref{prop:weak-marginal-main}, with any deterministic baseline factor collected in \(C_W\). The main text states the homogeneous cases separately for readability, while this appendix records the profiled notation needed when both coordinate types appear in the same study.

For group \(j\), write \(\alpha_j=(\beta_j',\gamma_j')'\), where \(\beta_j\in\calB_j\subset\R^{d_{\beta j}}\) is the DQM-regular block and \(\gamma_j\in\Gamma_j\subset\R^{d_{\gamma j}}\) is the retained weak block. In this subsection, every occurrence of \(\gamma_j\) refers to this retained weak coordinate, while \(\theta\) remains the upper-level hierarchy parameter. In the split notation, write \(M_j(\beta,\gamma)\) for \(M_j(a)\) with \(a=(\beta',\gamma')'\). For each fixed \(\gamma\), let \(\beta_{j0}(\gamma)\) be the population maximizer of \(M_j(\cdot,\gamma)\) in the regular block, and define \(H_j(\gamma)=-\partial_{\beta\beta}^2M_j(\beta_{j0}(\gamma),\gamma)\). When a proof conditions on retained variables, \(\omega\) denotes all coordinates in the full quasi-posterior except the current regular coordinate \(\beta_j\); in particular, \(\gamma_j(\omega)\) is the retained weak coordinate of group \(j\), and the retained-coordinate set is compact by Assumption~\ref{ass:primitive-setup}.

\begin{assumption}\label{ass:mixed-strong}
For each group with \(d_{\beta j}>0\), the following conditions hold. All stochastic convergence statements are under the triangular-array law used for that group in the statement combining regular and weak rates.
\begin{enumerate}[label=(\roman*),leftmargin=2.1em]
\item For every \(\gamma\in\Gamma_j\), \(\beta_{j0}(\gamma)\) is the unique interior maximizer of \(M_j(\cdot,\gamma)\). The function \(M_j\) is continuous on \(\calB_j\times\Gamma_j\) and twice continuously differentiable in \(\beta\) on a neighborhood of the graph \(\{(\beta_{j0}(\gamma),\gamma):\gamma\in\Gamma_j\}\).
\item The curvature in the regular block is uniformly nonsingular: \(H_j(\gamma)\) is continuous in \(\gamma\), and there is a constant \(c_j>0\) such that \(v'H_j(\gamma)v\ge c_j\|v\|^2\) for all \(v\in\R^{d_{\beta j}}\) and all \(\gamma\in\Gamma_j\).
\item The sample objective and its first two \(\beta\)-derivatives converge to their population counterparts. Specifically, \(\sup_{(\beta,\gamma)\in\calB_j\times\Gamma_j}|\bar q_{2,j,n_j}(X_j,(\beta,\gamma))-M_j(\beta,\gamma)|\pto0\). There is \(\delta_j>0\) such that \(\mathcal U_j(\delta_j)=\{(\beta,\gamma):\gamma\in\Gamma_j,\ \|\beta-\beta_{j0}(\gamma)\|\le\delta_j\}\) lies in the differentiability neighborhood and, for \(r=1,2\),
\[
        \sup_{(\beta,\gamma)\in\mathcal U_j(\delta_j)}
        \left\|
        \nabla_\beta^r\bar q_{2,j,n_j}(X_j,(\beta,\gamma))
        -
        \nabla_\beta^r M_j(\beta,\gamma)
        \right\|
        \pto0.
\]
The sample score at the population profile has the regular order needed for the local expansion:
\[
        \sup_{\gamma\in\Gamma_j}
        \left\|
        n_j^{-1/2}
        \nabla_\beta q_{2,j,n_j}(X_j,(\beta_{j0}(\gamma),\gamma))
        \right\|
        =O_p(1).
\]
\item Write the lower-level prior in split coordinates as \(p_{\beta j\mid\gamma j}(\beta_j\mid\gamma_j)\mu_{\gamma j}(d\gamma_j)\), where \(\mu_{\gamma j}\) is the marginal weak-coordinate prior measure. The conditional density \(p_{\beta j\mid\gamma j}\) is positive, continuous in \((\beta_j,\gamma_j)\), and continuously differentiable in \(\beta_j\) on \(\mathcal U_j(\delta_j)\). For every fixed \(z\), the pooling objective \(q_1\), viewed as a function of \(\beta_j\) with all retained weak coordinates, other group coordinates, and \(\theta\) held fixed, has first and second partial derivatives on a neighborhood of \(\mathcal U_j(\delta_j)\); these derivatives are jointly continuous on the corresponding compact retained-coordinate set.
\end{enumerate}
\end{assumption}

Assumption~\ref{ass:mixed-strong} is the regular-block condition applied conditionally on each retained weak value. The baseline theorem uses the study-level split, while this subsection applies Assumption~\ref{ass:primitive-weak} to the profiled weak experiment after the regular coordinates have been integrated out. The reduced weak criterion replaces the study-level weak criterion from Section~\ref{sec:weak-components}, and any deterministic positive baseline factor from that reduction is collected in \(C_W\).

\begin{lemma}\label{lem:mixed-strong-primitive}
Suppose Assumptions~\ref{ass:primitive-setup} and \ref{ass:mixed-strong} hold for group~\(j\), and fix any deterministic sequence \(R_{j,n}\to\infty\) with \(R_{j,n}=o(\sqrt{n_j})\) and \(R_{j,n}^2/\log n_j\to\infty\). There are events with probability tending to one, measurable local maximizers \(\widehat\beta_j(\omega)\), and a sequence \(K_{j,n}\to\infty\) satisfying \(2K_{j,n}\le R_{j,n}\) and \(K_{j,n}=o(\sqrt{n_j})\), such that the following statements hold uniformly over retained \(\omega\). First,
\[
        \sup_\omega
        \sqrt{n_j}\|\widehat\beta_j(\omega)-\beta_{j0}(\gamma_j(\omega))\|=O_p(1),
        \qquad
        \sup_\omega\|\widehat H_{j,n}(\omega)-H_j(\gamma_j(\omega))\|=o_p(1),
\]
where \(\widehat H_{j,n}(\omega):=-\partial_{\beta\beta}^2\bar q_{2,j,n_j}(X_j,(\widehat\beta_j(\omega),\gamma_j(\omega)))\). Second,
\[
\begin{aligned}
&\sup_{\omega,\|h\|\le K_{j,n}}
\Big|q_{2,j,n_j}(X_j,(\widehat\beta_j(\omega)+h/\sqrt{n_j},\gamma_j(\omega)))
-q_{2,j,n_j}(X_j,(\widehat\beta_j(\omega),\gamma_j(\omega))) \\
&\hspace{6em}
+\frac12h'\widehat H_{j,n}(\omega)h\Big|=o_p(1).
\end{aligned}
\]
Third, the conditional lower-level prior-density ratio is \(1+o_p(1)\), and the hierarchy increment is \(o_p(1)\), uniformly over \(\lambda\in\Lambda\), retained \(\omega\), and \(\|h\|\le K_{j,n}\). Finally, the first-moment contribution of \(\|h\|>K_{j,n}\) to the conditional regular-block quasi-posterior kernel is \(o_p(1)\) relative to the contribution of \(\|h\|\le K_{j,n}\), uniformly over \(\lambda\in\Lambda\) and retained \(\omega\).
\end{lemma}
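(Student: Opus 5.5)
The plan is to repeat the proof of Lemma~\ref{lem:strong-primitive} conditionally on the retained weak value \(\gamma=\gamma_j(\omega)\). Everything must now hold uniformly in \(\gamma\in\Gamma_j\), and the fixed center \(\alpha_{j0}\) is replaced by the population profile \(\beta_{j0}(\gamma)\).

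\textbf{Step 1: uniform concavity and local existence.} Continuity of \(H_j(\gamma)\), compactness of \(\Gamma_j\), and the uniform bound in Assumption~\ref{ass:mixed-strong}(ii) allow us to shrink \(\delta_j\) so that \(v'[-\partial^2_{\beta\beta}M_j(\beta,\gamma)]v\ge (c_j/2)\|v\|^2\) on the tube \(\mathcal U_j(\delta_j)\). This uses uniform continuity of the second derivative on the compact tube, together with continuity of \(\gamma\mapsto\beta_{j0}(\gamma)\), which follows from uniqueness of the maximizer and Berge's theorem. The uniform \(\beta\)-derivative convergence in (iii) then gives uniform strict concavity of the sample criterion on the tube with probability approaching one.

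Next, a Taylor expansion at \(\beta_{j0}(\gamma)\) uses the uniform score bound \(\sup_\gamma\|S_n(\gamma)\|=O_p(1)\) and the uniform Hessian convergence. Exactly as in Lemma~\ref{lem:strong-primitive}, the increment on the sphere \(\|h\|=M\) is strictly negative uniformly in \(\gamma\). Hence, for each \(\gamma\), there is a unique interior maximizer \(\widehat\beta_j(\gamma)\) in the ball. Uniqueness follows from strict concavity, and measurability comes from the measurable maximum theorem applied to the jointly continuous random objective. This gives \(\sup_\gamma\sqrt{n_j}\|\widehat\beta_j-\beta_{j0}(\gamma)\|=O_p(1)\) and the first-order condition.

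Consistency of \(\widehat H_{j,n}(\omega)\) then follows from the rate, the derivative convergence, and the uniform continuity of \(H_j\) along the tube.

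\textbf{Step 2: quadratic expansion, prior ratio, and hierarchy increment.} Define \(\eta_n(R_{j,n})\) as a supremum over both \(\gamma\) and \(\|u\|\le R_{j,n}\). Bound it by twice the uniform Hessian error plus the modulus of continuity of \(\partial^2_{\beta\beta}M_j\) on the compact tube at radius \(R_{j,n}/\sqrt{n_j}\). Choosing \(K_{j,n}=\min(R_{j,n}/2,b_n^{-1/4})\) as before yields the uniform quadratic remainder. The prior-ratio and hierarchy bounds are mean-value arguments of order \(CK_{j,n}/\sqrt{n_j}\). These rely on bounded \(\beta\)-derivatives of \(\log p_{\beta\mid\gamma}\) and of \(q_1\), which hold by continuity on the compact retained set in (iv).

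\textbf{Step 3: tails.} The annulus \(K_{j,n}<\|h\|\le R_{j,n}/2\) is handled by uniform concavity, giving integrand at most \(C e^{-c\|h\|^2}\) against a local denominator bounded below by \(d_0>0\). For the outer region, the main new point, and the expected obstacle, is the separation gap: we need \(\inf_\gamma\bigl[M_j(\beta_{j0}(\gamma),\gamma)-\sup_{\|\beta-\beta_{j0}(\gamma)\|\ge\varepsilon}M_j(\beta,\gamma)\bigr]>0\) uniformly in \(\gamma\). I would prove it by contradiction. Take sequences \(\gamma_r,\beta_r\) with vanishing gap, extract convergent subsequences by compactness, and use joint continuity of \(M_j\) and continuity of \(\beta_{j0}\) to produce a second maximizer at the limit \(\gamma\), contradicting (i).

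With this uniform gap, the uniform convergence of \(\bar q_{2,j,n_j}\) gives an objective deficit of at least \(\min(cR_{j,n}^2/4,\,n\Delta_*/2)\). Since \(R_{j,n}^2/\log n_j\to\infty\), this dominates the polynomial volume \(Cn^{(d_{\beta j}+1)/2}\) of the transformed support, which gives the first-moment tail ratio uniformly in \(\lambda\) and \(\omega\).
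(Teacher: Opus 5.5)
Your proposal is correct and follows the paper's proof essentially step for step. The shared structure is continuity of the profile $\gamma\mapsto\beta_{j0}(\gamma)$ via the maximum theorem, uniform concavity plus boundary negativity on the ball $\|h\|=M$ to get the local maximizers and their uniform rate, the same choice $K_{j,n}=\min(R_{j,n}/2,b_n^{-1/4})$, mean-value bounds for the prior and hierarchy factors, and the annulus/outer-region tail split against the $n^{(d_{\beta j}+1)/2}$ volume bound. Two differences are cosmetic: the paper obtains the uniform separation gap directly as the positive infimum of a continuous, strictly positive function on the compact set $\{\|\beta-\beta_{j0}(\gamma)\|\ge\varepsilon\}$ rather than by your sequential contradiction, which is equivalent; and your uniqueness of the local maximizer is a harmless strengthening that the paper does not need.
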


\begin{proof}[Proof of Lemma \ref{lem:mixed-strong-primitive}]
Write \(\gamma=\gamma_j(\omega)\), and let \(\Omega_j\) denote the compact retained index set over which \(\omega\) ranges. Only the coordinate \(\gamma\) enters the lower-level profile, but the prior and hierarchy bounds below are uniform over all retained variables in \(\Omega_j\). The maximum theorem, compactness of \(\Gamma_j\), uniqueness of the population profile, and continuity of \(M_j\) imply that \(\gamma\mapsto\beta_{j0}(\gamma)\) is continuous, so the profile graph is compact. For each \(\varepsilon>0\), let
\(S_\varepsilon=\{(\beta,\gamma)\in\calB_j\times\Gamma_j:\|\beta-\beta_{j0}(\gamma)\|\ge\varepsilon\}\).
If \(S_\varepsilon\) is nonempty, it is compact, and the function \((\beta,\gamma)\mapsto M_j(\beta_{j0}(\gamma),\gamma)-M_j(\beta,\gamma)\) is continuous and strictly positive on \(S_\varepsilon\). Hence
\(\Delta_j(\varepsilon):=\inf_{(\beta,\gamma)\in S_\varepsilon}\left[M_j(\beta_{j0}(\gamma),\gamma)-M_j(\beta,\gamma)\right]>0\). On the event \(\sup_{(\beta,\gamma)\in\calB_j\times\Gamma_j}|\bar q_{2,j,n_j}(X_j,(\beta,\gamma))-M_j(\beta,\gamma)|\le\Delta_j(\varepsilon)/4\), this gives \(\bar q_{2,j,n_j}(X_j,(\beta_{j0}(\gamma),\gamma))-\bar q_{2,j,n_j}(X_j,(\beta,\gamma))\ge\Delta_j(\varepsilon)/2\) for every \((\beta,\gamma)\in S_\varepsilon\), and hence the corresponding sample objective gap is at least \(n_j\Delta_j(\varepsilon)/2\). If \(S_\varepsilon\) is empty, no admissible regular coordinate is that far from its profile.

The score condition in Assumption~\ref{ass:mixed-strong} gives
\(\sup_{\gamma\in\Gamma_j}\left\|n_j^{-1/2}\nabla_\beta q_{2,j,n_j}(X_j,(\beta_{j0}(\gamma),\gamma))\right\|=O_p(1)\).
The uniform positive-definiteness condition implies that, after reducing \(\delta_j\) if necessary, \(-\partial_{\beta\beta}^2M_j(\beta,\gamma)\) is uniformly positive definite on \(\mathcal U_j(\delta_j)\), with population lower eigenvalue at least \(2c>0\). On the event \(\sup_{(\beta,\gamma)\in\mathcal U_j(\delta_j)}\|-\partial_{\beta\beta}^2\bar q_{2,j,n_j}(X_j,(\beta,\gamma))+\partial_{\beta\beta}^2M_j(\beta,\gamma)\|\le c\), every unit vector \(v\) satisfies \(v'[-\partial_{\beta\beta}^2\bar q_{2,j,n_j}(X_j,(\beta,\gamma))]v\ge c\), uniformly on \(\mathcal U_j(\delta_j)\). The derivative laws make this event have probability approaching one, so the sample objective is uniformly strictly concave in the regular coordinate on the local chart.

For any fixed \(M<\infty\), Taylor's theorem around the profile gives, uniformly over \(\gamma\in\Gamma_j\) and \(\|h\|\le M\),
\[
\begin{aligned}
&q_{2,j,n_j}(X_j,(\beta_{j0}(\gamma)+h/\sqrt{n_j},\gamma))
-q_{2,j,n_j}(X_j,(\beta_{j0}(\gamma),\gamma)) \\
&\qquad =S_{j,n}(\gamma)'h-\frac12h'H_j(\gamma)h+o_p(1),
\end{aligned}
\]
where \(S_{j,n}(\gamma)=n_j^{-1/2}\nabla_\beta q_{2,j,n_j}(X_j,(\beta_{j0}(\gamma),\gamma))\). Since \(\sup_\gamma\|S_{j,n}(\gamma)\|=O_p(1)\) and the eigenvalues of \(H_j(\gamma)\) are uniformly bounded below, fix \(B<\infty\) so that \(\sup_\gamma\|S_{j,n}(\gamma)\|\le B\) with probability arbitrarily close to one, and choose \(M\) so large that \(BM-cM^2/2+cM^2/8<0\). On the event where the score bound holds and the Taylor remainder is bounded by \(cM^2/8\) on \(\Gamma_j\times\{\|h\|\le M\}\), the boundary increment is at most \(BM-cM^2/2+cM^2/8<0\) for every \(\gamma\) and every \(\|h\|=M\). Continuity of the sample objective on the compact local ball then gives, for each retained value, a local maximizer inside that ball. A measurable selection exists by the measurable maximum theorem. Therefore
\(\sup_{\omega\in\Omega_j}\sqrt{n_j}\left\|\widehat\beta_j(\omega)-\beta_{j0}(\gamma_j(\omega))\right\|=O_p(1)\).
The selected local maximizer lies in the interior of the local chart with probability approaching one, so the first-order condition in the \(\beta\)-coordinate holds uniformly over retained values.

Define
\(\widehat H_{j,n}(\omega)=-\partial_{\beta\beta}^2\bar q_{2,j,n_j}(X_j,(\widehat\beta_j(\omega),\gamma_j(\omega)))\).
The uniform rate for \(\widehat\beta_j(\omega)\), the derivative laws, and continuity of the population Hessian give \(\sup_{\omega\in\Omega_j}\|\widehat H_{j,n}(\omega)-H_j(\gamma_j(\omega))\|=o_p(1)\). To see the uniformity explicitly, let \(\omega_H\) be the continuity modulus of \(-\partial_{\beta\beta}^2M_j(\beta,\gamma)\) on \(\mathcal U_j(\delta_j)\). On the local event,
\[
\begin{aligned}
\sup_\omega\|\widehat H_{j,n}(\omega)-H_j(\gamma_j(\omega))\|
&\le
\sup_{(\beta,\gamma)\in\mathcal U_j(\delta_j)}
\Bigl\|-\partial_{\beta\beta}^2\bar q_{2,j,n_j}(X_j,(\beta,\gamma)) \\
&\qquad\qquad\qquad\qquad
+\partial_{\beta\beta}^2M_j(\beta,\gamma)\Bigr\| \\
&\quad
+\omega_H\!\bigl(\sup_\omega\|\widehat\beta_j(\omega)-\beta_{j0}(\gamma_j(\omega))\|\bigr),
\end{aligned}
\]
and both terms are \(o_p(1)\). The eigenvalues of \(\widehat H_{j,n}(\omega)\) are therefore bounded away from zero and infinity uniformly in \(\omega\) on high-probability events. Since \(R_{j,n}=o(\sqrt{n_j})\), \(\sup_{\omega,\|h\|\le R_{j,n}}\|\widehat\beta_j(\omega)+h/\sqrt{n_j}-\beta_{j0}(\gamma_j(\omega))\|\le O_p(n_j^{-1/2})+R_{j,n}/\sqrt{n_j}=o_p(1)\). Thus the local chart \(\|h\|\le R_{j,n}\) remains inside \(\mathcal U_j(\delta_j)\) for all large \(n\) on those events.

On the same event define \(\eta_{j,n}(R_{j,n})=\sup_{\omega,\|u\|\le R_{j,n}}\|-\partial_{\beta\beta}^2\bar q_{2,j,n_j}(X_j,(\widehat\beta_j(\omega)+u/\sqrt{n_j},\gamma_j(\omega)))-\widehat H_{j,n}(\omega)\|\). The derivative law and continuity modulus give \(\eta_{j,n}(R_{j,n})=o_p(1)\); explicitly, it is bounded by the sum of two copies of the uniform Hessian-convergence error on \(\mathcal U_j(\delta_j)\) and \(\omega_H(\sup_{\omega,\|u\|\le R_{j,n}}\|\widehat\beta_j(\omega)+u/\sqrt{n_j}-\beta_{j0}(\gamma_j(\omega))\|)\). Choose deterministic \(b_n\downarrow0\) with \(P(\eta_{j,n}(R_{j,n})>b_n)\to0\), and set \(K_{j,n}=\min(R_{j,n}/2,b_n^{-1/4})\). Then \(K_{j,n}\uparrow\infty\), \(2K_{j,n}\le R_{j,n}\), \(K_{j,n}=o(\sqrt{n_j})\), and \(K_{j,n}^2\eta_{j,n}(R_{j,n})=o_p(1)\). Taylor's theorem around \(\widehat\beta_j(\omega)\), using the first-order condition, gives
\[
\begin{aligned}
&q_{2,j,n_j}(X_j,(\widehat\beta_j(\omega)+h/\sqrt{n_j},\gamma_j(\omega)))
-q_{2,j,n_j}(X_j,(\widehat\beta_j(\omega),\gamma_j(\omega))) \\
&\qquad =-\frac12h'\widehat H_{j,n}(\omega)h+r^q_{j,n}(h,\omega),
\end{aligned}
\]
with \(\sup_{\omega,\|h\|\le K_{j,n}}|r^q_{j,n}(h,\omega)|\le K_{j,n}^2\eta_{j,n}(R_{j,n})/2=o_p(1)\).

It remains to control the nonquadratic factors and the tails. Positivity and continuous differentiability of the conditional prior density in the regular coordinate on the compact profile neighborhood imply a uniform bound on \(\|\nabla_\beta\log p_{\beta j\mid\gamma j}(\beta\mid\gamma)\|\). Hence \(\sup_{\omega,\|h\|\le K_{j,n}}|\log p_{\beta j\mid\gamma j}(\widehat\beta_j(\omega)+h/\sqrt{n_j}\mid\gamma_j(\omega))-\log p_{\beta j\mid\gamma j}(\widehat\beta_j(\omega)\mid\gamma_j(\omega))|\le CK_{j,n}/\sqrt{n_j}=o(1)\), which gives the conditional prior-density ratio \(1+o_p(1)\) uniformly over \(\omega\) and \(\|h\|\le K_{j,n}\). The continuous first and second \(\beta_j\)-derivatives of \(q_1\) on the compact support give, uniformly over \(\lambda\in\Lambda\), retained variables, and \(\|h\|\le K_{j,n}\),
\[
        \left|\lambda[q_1(\widehat\beta_j(\omega)+h/\sqrt{n_j},\omega;z)-q_1(\widehat\beta_j(\omega),\omega;z)]\right|
        \le C K_{j,n}/\sqrt{n_j}+C K_{j,n}^2/n_j=o(1).
\]
This proves the local prior-flatness and hierarchy-control statements.

For the tail ratio, let \(L_{j,n,\lambda}(h,\omega)\) denote the conditional regular-block integrand in the \(h\)-coordinate after division by its value at \(h=0\). First consider \(K_{j,n}<\|h\|\le R_{j,n}/2\). The local denominator is bounded below uniformly by the same unit-ball calculation used above: the local expansion, uniform eigenvalue bound, and bounded nonquadratic factors give \(L_{j,n,\lambda}(h,\omega)\ge C^{-1}\exp[-C\|h\|^2]\) on \(\|h\|\le1\), so the denominator is at least \(C^{-1}\int_{\|h\|\le1}\exp[-C\|h\|^2]dh>0\). Taylor's theorem around \(\widehat\beta_j(\omega)\), the first-order condition, and the uniform Hessian lower bound give \(q_{2,j,n_j}(X_j,(\widehat\beta_j(\omega)+h/\sqrt{n_j},\gamma_j(\omega)))-q_{2,j,n_j}(X_j,(\widehat\beta_j(\omega),\gamma_j(\omega)))\le-c\|h\|^2\), after reducing \(c\) if needed, on the annulus uniformly over \(\omega\). The conditional prior density and hierarchy factor are bounded above and below on compact support, uniformly in \(\lambda\), so \(L_{j,n,\lambda}(h,\omega)\le C\exp[-c\|h\|^2]\) on the annulus. The contribution of this first tail region is therefore bounded by \(C\int_{\|h\|>K_{j,n}}(1+\|h\|)\exp[-c\|h\|^2]dh=o(1)\).

For \(\|h\|>R_{j,n}/2\), write \(\beta=\widehat\beta_j(\omega)+h/\sqrt{n_j}\). If \((\beta,\gamma_j(\omega))\in\mathcal U_j(\delta_j)\), the same Taylor inequality gives an objective-function gap at least \(cR_{j,n}^2/4\). If \((\beta,\gamma_j(\omega))\notin\mathcal U_j(\delta_j)\), then, for all large \(n\) on the high-probability events, \(\|\beta-\beta_{j0}(\gamma_j(\omega))\|\ge\delta_j/2\). If this nonlocal region is nonempty, take \(\varepsilon=\delta_j/2\) and write \(\Delta_j:=\Delta_j(\delta_j/2)>0\). The uniform sample separation above gives \(q_{2,j,n_j}(X_j,(\beta_{j0}(\gamma_j(\omega)),\gamma_j(\omega)))-q_{2,j,n_j}(X_j,(\beta,\gamma_j(\omega)))\ge n_j\Delta_j/2\). Since the selected local maximizer is taken over a ball containing \(\beta_{j0}(\gamma_j(\omega))\), \(q_{2,j,n_j}(X_j,(\widehat\beta_j(\omega),\gamma_j(\omega)))\ge q_{2,j,n_j}(X_j,(\beta_{j0}(\gamma_j(\omega)),\gamma_j(\omega)))\), and hence the same \(n_j\Delta_j/2\) gap holds relative to \(\widehat\beta_j(\omega)\), uniformly in \(\omega\). If the nonlocal region is empty, set \(\Delta_j=1\); that part of the tail bound is vacuous. Hence the lower-level objective-function gap outside \(\|h\|\le R_{j,n}/2\) is at least \(D_{j,n}=\min(cR_{j,n}^2/4,n_j\Delta_j/2)\), with \(D_{j,n}-(d_{\beta j}+1)\log n_j/2\to\infty\). Since the transformed support is contained in a ball of radius \(C\sqrt{n_j}\), \(\int(1+\|h\|)dh\) over that support is at most \(Cn_j^{(d_{\beta j}+1)/2}\). The remaining tail contribution is therefore bounded by \(C n_j^{(d_{\beta j}+1)/2}\exp[-D_{j,n}]=o(1)\). Combining the denominator bound and the two tail regions gives the displayed first-moment tail ratio uniformly over retained values and \(\lambda\in\Lambda\).
\end{proof}

\begin{proposition}\label{prop:mixed-strong-equivalence}
Suppose Assumptions~\ref{ass:primitive-setup} and \ref{ass:mixed-strong} hold for group \(j\). Let \(P^S_{j,n,\lambda}(\cdot\mid\omega)\) be the conditional law under the full QBHM quasi-posterior of \(h_j=\sqrt{n_j}(\beta_j-\widehat\beta_j(\omega))\), given the retained variables \(\omega\). Let \(P^{S,0}_{j,n}(\cdot\mid\omega)\) be the corresponding unpooled conditional law, and write \(H_j(\omega)=H_j(\gamma_j(\omega))\). For the bounded-Lipschitz metric \(d_{\mathrm{BL}}\), the usual metric for weak convergence, uniformly over \(\lambda\in\Lambda\) and retained \(\omega\),
\[
        d_{\mathrm{BL}}\!\big(P^S_{j,n,\lambda}(\cdot\mid\omega),
        N(0,H_j(\omega)^{-1})\big)=o_p(1),
        \qquad
        d_{\mathrm{BL}}\!\big(P^{S,0}_{j,n}(\cdot\mid\omega),
        N(0,H_j(\omega)^{-1})\big)=o_p(1),
\]
and
\[
        \sup_{\lambda\in\Lambda,\omega}
        \left\|
        \int h_j\,P^S_{j,n,\lambda}(dh_j\mid\omega)
        \right\|
        +
        \sup_\omega
        \left\|
        \int h_j\,P^{S,0}_{j,n}(dh_j\mid\omega)
        \right\|
        =o_p(1).
\]
If \(\widehat\beta^{\mathrm{un}}_j\) denotes the reported unpooled profiled estimator and
\[
        \sup_{\lambda\in\Lambda}
        \int
        \sqrt{n_j}
        \left\|
        \widehat\beta_j(\gamma_j)-\widehat\beta^{\mathrm{un}}_j
        \right\|
        \Pi^{\mathrm{full},W}_{n,\lambda}(d\gamma)
        =o_p(1),
\]
then
\[
        \sup_{\lambda\in\Lambda}
        \sqrt{n_j}
        \left\|
        \tilde\beta^{\mathrm{full}}_{j,n,\lambda}
        -
        \widehat\beta^{\mathrm{un}}_j
        \right\|
        =o_p(1).
\]
\end{proposition}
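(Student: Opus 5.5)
The argument mirrors the proof of Theorem~\ref{thm:strong-equivalence}, applied conditionally on the retained variables \(\omega\) and using Lemma~\ref{lem:mixed-strong-primitive} in place of Lemma~\ref{lem:strong-primitive}.

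\emph{Step 1: local Gaussian approximation.} Fix \(\omega\) and change variables to \(h=\sqrt{n_j}(\beta_j-\widehat\beta_j(\omega))\). The Jacobian \(n_j^{-d_{\beta j}/2}\) is common to the numerator and denominator of every conditional law, so it cancels. On \(\|h\|\le K_{j,n}\), Lemma~\ref{lem:mixed-strong-primitive} gives the conditional log kernel, minus its value at zero, as
\[
-\tfrac12 h'\widehat H_{j,n}(\omega)h+\rho_{n,\lambda}(h,\omega).
\]
Here \(\rho_{n,\lambda}\) collects the quadratic remainder, the log conditional prior ratio and the hierarchy increment, and \(\sup_{\lambda,\omega,\|h\|\le K_{j,n}}|\rho_{n,\lambda}|=o_p(1)\). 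Apply Lemma~\ref{lem:local-moment-ratio} with index \(t=(\lambda,\omega)\), \(b_n=0\), \(I_n=\widehat H_{j,n}(\omega)\) and \(H_n=K_{j,n}\). Its eigenvalue hypotheses hold on high-probability events, by the uniform Hessian convergence in Lemma~\ref{lem:mixed-strong-primitive} and Assumption~\ref{ass:mixed-strong}(ii). This yields, uniformly in \((\lambda,\omega)\):
\begin{itemize}
\item the truncated conditional law has first moment \(o_p(1)\);
\item it lies within bounded-Lipschitz distance \(o_p(1)\) of \(N(0,\widehat H_{j,n}(\omega)^{-1})\).
\end{itemize}
Since \(\sup_\omega\|\widehat H_{j,n}(\omega)-H_j(\gamma_j(\omega))\|=o_p(1)\) and the eigenvalues are uniformly bounded, \(N(0,\widehat H_{j,n}(\omega)^{-1})\) is uniformly \(d_{\mathrm{BL}}\)-close to \(N(0,H_j(\omega)^{-1})\). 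The inverse and the square root are uniformly continuous on a compact set of matrices with bounded spectrum, so a coupling bound \(\mathbb E\|(A^{1/2}-B^{1/2})Z\|\) gives this.

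\emph{Step 2: remove the truncation.} Write \(r_{n,\lambda,\omega}\) for the first-moment tail ratio in Lemma~\ref{lem:mixed-strong-primitive}. Exactly as in the proof of Theorem~\ref{thm:strong-equivalence}:
\begin{itemize}
\item bounded test functions change by at most \(2r/(1+r)\);
\item first moments change by at most \(r(1+\|\text{local mean}\|)\).
\end{itemize}
Both are \(o_p(1)\) uniformly in \((\lambda,\omega)\). This gives the two displayed \(d_{\mathrm{BL}}\) statements and the first-moment statement. The unpooled law is the case \(\lambda=0\) with no hierarchy term, so it is covered by the same bounds.

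\emph{Step 3: the final equivalence.} By the tower property, \(\tilde\beta^{\mathrm{full}}_{j,n,\lambda}\) equals the average of the conditional means \(\widehat\beta_j(\gamma_j)+n_j^{-1/2}\int h\,P^S_{j,n,\lambda}(dh\mid\omega)\) over the marginal quasi-posterior of \(\omega\). Hence
\[
\sqrt{n_j}\,\bigl\|\tilde\beta^{\mathrm{full}}_{j,n,\lambda}-\widehat\beta^{\mathrm{un}}_j\bigr\|
\le
\sup_\omega\Bigl\|\int h\,dP^S_{j,n,\lambda}(\cdot\mid\omega)\Bigr\|
+\int\sqrt{n_j}\,\bigl\|\widehat\beta_j(\gamma_j)-\widehat\beta^{\mathrm{un}}_j\bigr\|\,d\Pi(\omega).
\]
The first term is \(o_p(1)\) uniformly in \(\lambda\) by Step 2. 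For the second term, \(\widehat\beta_j(\omega)\) depends on \(\omega\) only through \(\gamma_j(\omega)\), because the lower-level objective involves only \((\beta_j,\gamma_j)\) and the local maximizer is selected from it. So the integral reduces to the weak-marginal integral in the hypothesis, which is \(o_p(1)\) uniformly in \(\lambda\).

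\emph{Main obstacle.} The hard part is uniformity over the retained coordinates \(\omega\), which range over a compact set that includes weak coordinates and \(\theta\), and not only over \(\lambda\). Lemma~\ref{lem:mixed-strong-primitive} already supplies the required uniform bounds. What remains is to check that Lemma~\ref{lem:local-moment-ratio} is applied with a single deterministic sequence \(K_{j,n}\) and a remainder that is uniform over the joint index \((\lambda,\omega)\). That lemma is stated for an arbitrary index set \(T_n\), so this should go through directly. One further point needs care: measurability of \(\omega\mapsto\widehat\beta_j(\omega)\), needed to integrate in Step 3. It comes from the measurable selection in Lemma~\ref{lem:mixed-strong-primitive}.
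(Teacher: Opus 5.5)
Your proposal is correct and follows the paper's proof essentially step for step. The paper uses the same ingredients in the same order: the conditional local quadratic expansion from Lemma~\ref{lem:mixed-strong-primitive} fed into Lemma~\ref{lem:local-moment-ratio} uniformly over $(\lambda,\omega)$; the Gaussian coupling bound that replaces $\widehat H_{j,n}(\omega)$ by $H_j(\gamma_j(\omega))$; the tail-ratio transfer from local to full conditional laws for both bounded test functions and first moments; and the tower-property decomposition, in which the profile term depends on $\omega$ only through $\gamma_j(\omega)$ and is controlled by the profile-variation condition. Your remark that $\omega\mapsto\widehat\beta_j(\omega)$ must be jointly measurable before you integrate over $\omega$ is something the paper leaves implicit, not a change of route.
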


\paragraph{Interpreting the profile-variation condition.}
Conditional on a retained weak value, \(\beta_j\) has root-\(n_j\) Gaussian uncertainty around \(\widehat\beta_j(\gamma_j)\). Marginally, however, uncertainty about \(\gamma_j\) can move that center. The profile-variation condition in Proposition~\ref{prop:mixed-strong-equivalence} requires this movement to be \(o_p(n_j^{-1/2})\), on average under the retained weak marginal, relative to the reported profiled estimator \(\widehat\beta^{\mathrm{un}}_j\). It is automatic when the reported regular target is fixed across the retained weak space, or when the retained weak marginal places asymptotically all relevant mass on values for which the profile agrees with \(\widehat\beta^{\mathrm{un}}_j\) at the \(n_j^{-1/2}\) scale. If instead the profile varies at order one over retained weak values with nonnegligible posterior mass, then the reported object is mixed-rate: conditionally on \(\gamma_j\) its regular uncertainty is root-\(n_j\), but marginally its uncertainty is driven by the retained weak coordinate.

\begin{proof}[Proof of Proposition \ref{prop:mixed-strong-equivalence}]
Work on the high-probability events supplied by Lemma~\ref{lem:mixed-strong-primitive}. Conditional on a retained value \(\omega\), write \(h=\sqrt{n_j}(\beta_j-\widehat\beta_j(\omega))\). After subtracting the conditional log kernel at \(h=0\), the conditional log density of \(h\) under the pooled quasi-posterior has the form
\(-\frac12h'\widehat H_{j,n}(\omega)h+\rho_{n,\lambda}(h,\omega)\), for \(\|h\|\le K_{j,n}\),
where \(\sup_{\lambda\in\Lambda,\omega,\|h\|\le K_{j,n}}|\rho_{n,\lambda}(h,\omega)|=o_p(1)\). This is the local quadratic expansion in Lemma~\ref{lem:mixed-strong-primitive} plus the uniform prior-flatness and hierarchy-flatness terms. Lemma~\ref{lem:local-moment-ratio}, applied conditionally and uniformly over the retained index set, gives a local conditional first moment \(o_p(1)\) and bounded-Lipschitz distance \(o_p(1)\) from \(N(0,\widehat H_{j,n}(\omega)^{-1})\). Since \(\sup_\omega\|\widehat H_{j,n}(\omega)-H_j(\gamma_j(\omega))\|=o_p(1)\) and the eigenvalues are bounded away from zero uniformly, the Gaussian laws with covariance matrices \(\widehat H_{j,n}(\omega)^{-1}\) and \(H_j(\gamma_j(\omega))^{-1}\) are uniformly close in bounded-Lipschitz distance. Indeed, with \(Z\sim N(0,I)\), \(d_{\mathrm{BL}}(N(0,\widehat H^{-1}),N(0,H^{-1}))\le \mathbb E\| (\widehat H^{-1/2}-H^{-1/2})Z\|\le C\|\widehat H-H\|\) on the uniform eigenvalue-bounded event, because inversion and the symmetric square-root map are Lipschitz there. To pass from local laws to full conditional laws, let \(r_{n,\lambda,\omega}\) be the tail-to-local mass ratio from Lemma~\ref{lem:mixed-strong-primitive}; then \(\sup_{\lambda,\omega}r_{n,\lambda,\omega}=o_p(1)\), and for every \(\|\varphi\|_\infty\le1\), \(|\int\varphi\,dP^{\mathrm{full}}_{n,\lambda,\omega}-\int\varphi\,dP^{\mathrm{loc}}_{n,\lambda,\omega}|\le2r_{n,\lambda,\omega}/(1+r_{n,\lambda,\omega})\). The same decomposition with \(\varphi(h)=h\), using the displayed first-moment tail ratio, gives an \(o_p(1)\) difference between full and local first moments. The unpooled conditional law is the same argument with \(\lambda=0\).

Let \(\Pi^{\mathrm{full},R}_{n,\lambda}\) denote the full marginal quasi-posterior of the retained variables \(\omega\), after integrating out the current regular block. The quasi-posterior mean decomposition is
\[
\begin{aligned}
\sqrt{n_j}(\tilde\beta^{\mathrm{full}}_{j,n,\lambda}-\widehat\beta^{\mathrm{un}}_j)
&=
\int \sqrt{n_j}(\widehat\beta_j(\gamma_j(\omega))-\widehat\beta^{\mathrm{un}}_j)
        \Pi^{\mathrm{full},R}_{n,\lambda}(d\omega) \\
&\quad+
\int \mathbb E_{n,\lambda}[h\mid\omega]\,
        \Pi^{\mathrm{full},R}_{n,\lambda}(d\omega).
\end{aligned}
\]
The second term is \(o_p(1)\) uniformly in \(\lambda\) by the conditional first-moment bound. The integrand in the first term depends on retained variables only through \(\gamma_j(\omega)\), so its integral under \(\Pi^{\mathrm{full},R}_{n,\lambda}\) equals the corresponding integral under the retained weak marginal. The profile-variation condition in the proposition therefore makes the first term \(o_p(1)\), uniformly over \(\lambda\in\Lambda\), which proves the reported strong-block equivalence.
\end{proof}

Let \(\Gamma_0:=\prod_{j:d_{\gamma j}>0}\Gamma_j\), and let \(\mu_0:=\prod_{j:d_{\gamma j}>0}\mu_{\gamma j}\) denote the retained weak-coordinate prior measure. The variable \(\gamma\in\Gamma_0\) is the stacked retained weak value in the profiled problem. Let \(\alpha^\circ(\gamma)\) denote the vector obtained by substituting the regular population profiles \(\beta_{j0}(\gamma_j)\) and retaining \(\gamma\). Let \(q_1^\circ(\gamma,\theta;z)=q_1(\alpha^\circ(\gamma),\theta;z)\), \(r_\lambda^\circ(\gamma)=\int_\Theta\exp[\lambda q_1^\circ(\gamma,\theta;z)]\pi_\theta(d\theta)\), and define the deterministic profile factor
\(C_S(\gamma)=\prod_{j:d_{\beta j}>0}p_{\beta j\mid\gamma j}(\beta_{j0}(\gamma_j)\mid\gamma_j)\det(H_j(\gamma_j))^{-1/2}\),
with the empty product interpreted as one. Let \(C_W\) denote any deterministic positive continuous factor generated by the weak-GMM reduction, and put \(C(\gamma)=C_S(\gamma)C_W(\gamma)\). Stochastic \(\gamma\)-dependent score-square terms generated by sample profiling are part of the profiled weak-GMM criterion below, not part of \(C\).

\begin{proposition}\label{prop:mixed-primitive-prior}
Suppose Assumptions~\ref{ass:primitive-setup}, \ref{ass:mixed-strong}, and \ref{ass:primitive-weak} hold for the profiled weak experiment. Suppose \(C_W\) is continuous and bounded away from zero and infinity on \(\Gamma_0\). Then \(C\) is continuous and bounded away from zero and infinity on \(\Gamma_0\), and the mixed prior path \(\pi_\lambda^\circ(d\gamma)\propto C(\gamma)r_\lambda^\circ(\gamma)\mu_0(d\gamma)\) is proper and weakly continuous on \(\Lambda\).
\end{proposition}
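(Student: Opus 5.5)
The plan is to reduce the statement to Lemma~\ref{lem:primitive-prior}, applied with \(\Gamma_0\) as the compact space, \(\mu_0\) as the base measure and \(w_\lambda(\gamma)=C(\gamma)r^\circ_\lambda(\gamma)\) as the weight. The work is to verify that lemma's hypotheses: \(\Gamma_0\) compact, \(\mu_0\) finite with full support, and \(w_\lambda\) strictly positive and jointly continuous on \(\Gamma_0\times[0,\bar\lambda]\). Properness and weak continuity on every compact \(\Lambda\subset[0,\bar\lambda]\) then follow from the lemma's dominated-convergence argument.

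\textbf{Step 1: the profile factor \(C_S\).} First I show \(\gamma_j\mapsto\beta_{j0}(\gamma_j)\) is continuous on \(\Gamma_j\). This uses Berge's maximum theorem with Assumption~\ref{ass:mixed-strong}(i): \(M_j\) is continuous on the compact set \(\calB_j\times\Gamma_j\) and the maximizer is unique. The same argument appears at the start of the proof of Lemma~\ref{lem:mixed-strong-primitive}. It follows that the profile graph is compact and contained in \(\mathcal U_j(\delta_j)\).

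\begin{itemize}
\item By Assumption~\ref{ass:mixed-strong}(iv), \(p_{\beta j\mid\gamma j}\) is positive and jointly continuous. So \(\gamma_j\mapsto p_{\beta j\mid\gamma j}(\beta_{j0}(\gamma_j)\mid\gamma_j)\) is continuous and positive on compact \(\Gamma_j\), hence bounded away from zero and infinity.
\item By Assumption~\ref{ass:mixed-strong}(ii), \(H_j(\gamma)\) is continuous with \(v'H_j(\gamma)v\ge c_j\|v\|^2\). On compact \(\Gamma_j\), continuity also bounds its eigenvalues above. So \(\det(H_j(\gamma_j))^{-1/2}\) is continuous and lies in \([C^{-d_{\beta j}/2},c_j^{-d_{\beta j}/2}]\).
\end{itemize}

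A finite product of such factors keeps both properties, and so does the empty product. Multiplying by \(C_W\), which is assumed continuous and bounded away from zero and infinity, gives the first claim for \(C\).

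\textbf{Step 2: the hierarchy factor \(r^\circ_\lambda\).} The map \(\gamma\mapsto\alpha^\circ(\gamma)\) is continuous by Step 1. Since \(q_1\) is continuous on the compact set \(\calA\times\Theta\) (Assumption~\ref{ass:primitive-setup}(ii)--(iii)), \(q_1^\circ\) is continuous and bounded on \(\Gamma_0\times\Theta\), say \(|q_1^\circ|\le K\). Then \(\exp[\lambda q_1^\circ]\) is jointly continuous in \((\gamma,\theta,\lambda)\) and lies in \([e^{-\bar\lambda K},e^{\bar\lambda K}]\). Dominated convergence under the probability measure \(\pi_\theta\) gives joint continuity and two-sided bounds for \(r^\circ_\lambda\), exactly as in the proof of Proposition~\ref{prop:primitive-prior-main}.

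\textbf{Step 3: base measure and conclusion.} It remains to confirm that \(\mu_0\) is finite with full support on \(\Gamma_0\). Each \(\mu_{\gamma j}\) is the marginal weak-coordinate prior measure from the product decomposition in Assumption~\ref{ass:mixed-strong}(iv). It is finite because \(\pi_j\) is proper. Full support should follow from the positive continuous prior density in Assumption~\ref{ass:primitive-setup}(iv), read in the retained-space sense used for \(\mu_j^W\) in Section~\ref{sec:weak-components}.

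I expect this full-support check to be the main obstacle. If needed, I would take full support as part of the retained-space convention, as the paper does for \(\mu_j^W\). Note also that strict positivity of the normalizing constant needs only \(\mu_0(\Gamma_0)>0\), given that \(w_\lambda\) is bounded below.

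With these hypotheses in place, Lemma~\ref{lem:primitive-prior} gives properness of \(\pi^\circ_\lambda\) and continuity of \(\lambda\mapsto\int f\,d\pi^\circ_\lambda\) for every \(f\in C_b(\Gamma_0)\).
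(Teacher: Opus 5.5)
Your proposal is correct and follows the paper's proof essentially step for step. The shared steps are: continuity of the profile maps via the maximum theorem; positivity and two-sided bounds for the conditional-prior and determinant factors in \(C_S\); dominated convergence under \(\pi_\theta\) for \(r^\circ_\lambda\); and then Lemma~\ref{lem:primitive-prior} with \(w_\lambda=C\,r^\circ_\lambda\). Your remark that only \(\mu_0(\Gamma_0)>0\), rather than full support, is needed for a strictly positive normalizing constant tightens a point the paper leaves unchecked when it applies that lemma.
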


\begin{proof}[Proof of Proposition \ref{prop:mixed-primitive-prior}]
The maximum-theorem argument in Lemma~\ref{lem:mixed-strong-primitive} gives continuity of each profile map \(\gamma_j\mapsto\beta_{j0}(\gamma_j)\). Assumption~\ref{ass:mixed-strong} gives continuity of \(H_j(\gamma_j)\) and a uniform lower eigenvalue bound. Since \(\Gamma_j\) is compact, \(\det(H_j(\gamma_j))^{-1/2}\) is continuous and bounded above and away from zero. The conditional-prior clause of Assumption~\ref{ass:mixed-strong} gives positivity and continuity of \(p_{\beta j\mid\gamma j}(\beta_{j0}(\gamma_j)\mid\gamma_j)\). Therefore each group factor in \(C_S\) is continuous, positive, and bounded above and away from zero, and the finite product \(C_S\) has the same properties on compact \(\Gamma_0\). Multiplication by the specified factor \(C_W\) gives the stated properties of \(C\).

The map \(\gamma\mapsto\alpha^\circ(\gamma)\) is continuous because its regular coordinates are continuous profile maps and its weak coordinates are identities. Continuity of \(q_1\) on the compact support therefore implies joint continuity of \((\gamma,\theta,\lambda)\mapsto\exp[\lambda q_1^\circ(\gamma,\theta;z)]\), and the function is bounded above and away from zero on \(\Gamma_0\times\Theta\times\Lambda\). Dominated convergence with respect to \(\pi_\theta\) gives joint continuity and strict positivity of \(r_\lambda^\circ(\gamma)\). Lemma~\ref{lem:primitive-prior}, applied with \(\Gamma_0\), \(\mu_0\), and \(w_\lambda(\gamma)=C(\gamma)r_\lambda^\circ(\gamma)\), gives properness and weak continuity of \((\pi_\lambda^\circ)_{\lambda\in\Lambda}\).
\end{proof}

\begin{proposition}\label{prop:mixed-weak-marginal}
Suppose the local-reduction conditions in Lemma~\ref{lem:weak-laplace-reduction} are generated by Assumption~\ref{ass:mixed-strong}. Suppose the profiled weak lower-level criterion, including any \(\gamma\)-dependent stochastic score-square term from regular-block profiling, has the reduced weak-GMM form \(-Q_n^\circ(\gamma)/2\) up to a constant independent of \(\gamma\), with any deterministic weak-GMM baseline factor collected in \(C_W\), and suppose Proposition~\ref{prop:mixed-primitive-prior} holds. Let
\[
        \Pi^{\circ}_{n,\lambda}(d\gamma\mid X,z)
        =
        \frac{\exp[-Q_n^\circ(\gamma)/2]C(\gamma)r_\lambda^\circ(\gamma)\mu_0(d\gamma)}
        {\int_{\Gamma_0}\exp[-Q_n^\circ(u)/2]C(u)r_\lambda^\circ(u)\mu_0(du)}
\]
be the resulting reduced weak quasi-posterior. Then the retained weak marginal of the full QBHM quasi-posterior is asymptotically equivalent, uniformly over \(\lambda\in\Lambda\), to \(\Pi^{\circ}_{n,\lambda}\):
\[
        \sup_{\lambda\in\Lambda}
        \|\Pi^{\mathrm{full},W}_{n,\lambda}-\Pi^{\circ}_{n,\lambda}\|_{TV}=o_p(1).
\]
\end{proposition}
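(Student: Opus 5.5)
We follow the two-step structure used in the proof of Proposition~\ref{prop:weak-marginal-main}. First we integrate out the regular blocks, then we replace the feasible hierarchy factor by its population-profile version.

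\emph{Step 1 (Laplace step).} For the regular coordinates \(\beta=(\beta_j)_{j:d_{\beta j}>0}\), conditionally on \((\gamma,\theta)\) and the other retained variables, we apply Lemma~\ref{lem:weak-laplace-reduction}. The required inputs come from Lemma~\ref{lem:mixed-strong-primitive}: a local quadratic expansion with curvature \(\widehat H_{j,n}(\omega)\), prior flatness, hierarchy flatness of order \(o_p(1)\), and the tail bound, all uniform in \(\omega\) and \(\lambda\). Because \(J\) is finite, the relative errors multiply to \(1+o_p(1)\), as in the earlier proof.

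The resulting marginal of \((\gamma,\theta)\) is then total-variation close, uniformly in \(\lambda\), to the law with unnormalized density
\[
\exp[-Q_n^\circ(\gamma)/2]\,\widehat C_{S,n}(\gamma)\exp[\lambda q_1(\widehat\alpha_n(\gamma),\theta;z)]\,\mu_0(d\gamma)\pi_\theta(d\theta).
\]
Here \(\widehat\alpha_n(\gamma)\) substitutes the sample profiles \(\widehat\beta_j(\gamma_j)\). The factor \(\widehat C_{S,n}\) collects the terms \(p_{\beta j\mid\gamma j}(\widehat\beta_j\mid\gamma_j)\det\widehat H_{j,n}^{-1/2}\). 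By hypothesis, the sample profile value and the score-square terms are absorbed into \(-Q_n^\circ/2\), up to a \(\gamma\)-free constant, and \(C_W\) is collected separately.

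\emph{Step 2 (replacing feasible factors).} We show that \(\widehat C_{S,n}/C_S\to1\) and \(\sup_{\gamma,\theta}|q_1(\widehat\alpha_n(\gamma),\theta;z)-q_1^\circ(\gamma,\theta;z)|=o_p(1)\).

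For the first ratio, Lemma~\ref{lem:mixed-strong-primitive} gives
\[
\sup_\gamma\|\widehat\beta_j(\gamma)-\beta_{j0}(\gamma)\|=O_p(n_j^{-1/2})
\quad\text{and}\quad
\sup_\gamma\|\widehat H_{j,n}-H_j(\gamma)\|=o_p(1).
\]
Together with the uniform continuity of \(p_{\beta j\mid\gamma j}\) and of \(\det^{-1/2}\) on the compact graph neighbourhood, where the eigenvalues are bounded, this gives \(\widehat C_{S,n}/C_S\to1\).

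For the second bound, we use the uniform continuity of \(q_1\) on the compact space \(\calA\times\Theta\). Integrating over \(\theta\) then gives \(r_{n,\lambda}/r^\circ_\lambda\in[e^{-\varepsilon_n},e^{\varepsilon_n}]\) with \(\varepsilon_n=\bar\lambda\cdot o_p(1)\).

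So the density relative to \(\Pi^\circ_{n,\lambda}\) is \(1+\eta_{n,\lambda}(\gamma)\) with \(\sup_{\gamma,\lambda}|\eta_{n,\lambda}|=o_p(1)\). The normalization bound \(2\delta_n/(1-\delta_n)\) from the proof of Lemma~\ref{lem:weak-laplace-reduction} then gives total-variation closeness. A final triangle inequality with Step 1 yields the claim.

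\emph{Main obstacle.} The hardest part is Step 1, specifically the uniformity over the retained weak coordinate inside the Laplace integral. The \(\gamma\)-dependent stochastic terms \(\ell_n(\gamma)'I_n(\gamma)^{-1}\ell_n(\gamma)/2\) and \(q_{2,j,n_j}\) at the sample profile must be bookkept consistently. The stated hypothesis assigns all of them to \(Q_n^\circ\), so we center the expansion at \(\widehat\beta_j(\gamma)\). At that point the first-order condition makes the linear term vanish, which avoids a separate score-square factor.

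We also need \(\sup_\gamma|Q_n^\circ(\gamma)|=O_p(1)\) for the Lipschitz normalization step. We take this from the profiled version of Assumption~\ref{ass:primitive-weak} through Proposition~\ref{prop:weak-process-main}. If boundedness is not needed, it suffices to observe that the relative density error is uniform, so no bound on \(Q_n^\circ\) is required.
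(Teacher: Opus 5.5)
Your proposal is correct and follows essentially the paper's route. You Laplace-integrate the regular blocks via Lemma~\ref{lem:weak-laplace-reduction} with inputs from Lemma~\ref{lem:mixed-strong-primitive}, then replace the sample-profile prior-density, determinant and hierarchy factors by their population-profile versions with uniform relative error, integrate out \(\theta\), and normalize. The differences are cosmetic: you obtain the hierarchy replacement from uniform continuity of \(q_1\) together with the uniform \(O_p(n_j^{-1/2})\) profile rate rather than citing hierarchy flatness, and your concern about needing \(\sup_\gamma|Q_n^\circ(\gamma)|=O_p(1)\) is moot, since every replacement is a uniform relative-density perturbation, exactly as in the paper.
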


\begin{proof}[Proof of Proposition \ref{prop:mixed-weak-marginal}]
Apply Lemma~\ref{lem:weak-laplace-reduction} to the finite collection of regular blocks generated by Lemma~\ref{lem:mixed-strong-primitive}. For a fixed retained value \(\gamma\) and hierarchy value \(\theta\), the local changes of variables in the regular coordinates give, up to a \(1+o_p(1)\) relative error uniform in \((\gamma,\theta,\lambda)\), the post-Laplace unnormalized kernel
\[
\begin{aligned}
&\exp[-Q_n^\circ(\gamma)/2]
  \exp[\lambda q_1(\widehat\alpha_n^\circ(\gamma),\theta;z)]
  C_W(\gamma) \\
&\quad\times
  \prod_{j:d_{\beta j}>0}
  p_{\beta j\mid\gamma j}(\beta_{j0}(\gamma_j)\mid\gamma_j)
  \det(H_j(\gamma_j))^{-1/2}
  \mu_0(d\gamma)\pi_\theta(d\theta),
\end{aligned}
\]
The replacement of sample profile factors by their population-profile limits is uniform: writing \(\widehat\beta_j(\gamma_j)\) and \(\widehat H_{j,n}(\gamma_j)\) for the sample profile and Hessian from Lemma~\ref{lem:mixed-strong-primitive} at retained value \(\gamma_j\), continuity and positivity of the conditional prior density give \(\sup_{\gamma_j\in\Gamma_j}|p_{\beta j\mid\gamma j}(\widehat\beta_j(\gamma_j)\mid\gamma_j)/p_{\beta j\mid\gamma j}(\beta_{j0}(\gamma_j)\mid\gamma_j)-1|=o_p(1)\), while \(\sup_{\gamma_j\in\Gamma_j}|\det(\widehat H_{j,n}(\gamma_j))^{-1/2}/\det(H_j(\gamma_j))^{-1/2}-1|=o_p(1)\) because the determinant map is Lipschitz on uniformly eigenvalue-bounded positive-definite matrices.
The kernel is multiplied by constants \(\prod_{j:d_{\beta j}>0}n_j^{-d_{\beta j}/2}(2\pi)^{d_{\beta j}/2}\) and sample objective constants that do not depend on \((\gamma,\theta,\lambda)\). These constants cancel from normalized retained weak marginals. The displayed deterministic \(\gamma\)-dependent prior-density and determinant terms are exactly \(C_S(\gamma)\), and multiplication by the deterministic weak-GMM baseline \(C_W(\gamma)\) gives \(C(\gamma)\). Thus, after the regular coordinates are integrated out, the retained weak marginal is uniformly equivalent to the law with unnormalized measure \(\exp[-Q_n^\circ(\gamma)/2]C(\gamma)\exp[\lambda q_1(\widehat\alpha_n^\circ(\gamma),\theta;z)]\mu_0(d\gamma)\pi_\theta(d\theta)\),
where \(\widehat\alpha_n^\circ(\gamma)\) denotes the vector formed from the sample regular profiles and the retained weak coordinate. Lemma~\ref{lem:mixed-strong-primitive} gives hierarchy flatness uniformly over the retained set. Equivalently, with \(\Delta_n=\sup_{\gamma,\theta}|q_1(\widehat\alpha_n^\circ(\gamma),\theta;z)-q_1(\alpha^\circ(\gamma),\theta;z)|\), we have \(\Delta_n=o_p(1)\), and hence \(e^{-\bar\lambda\Delta_n}\le \exp[\lambda q_1(\widehat\alpha_n^\circ(\gamma),\theta;z)]/\exp[\lambda q_1(\alpha^\circ(\gamma),\theta;z)]\le e^{\bar\lambda\Delta_n}\), uniformly over \((\gamma,\theta,\lambda)\). Thus replacing \(\widehat\alpha_n^\circ(\gamma)\) by \(\alpha^\circ(\gamma)\) changes the preceding unnormalized density by a multiplicative factor \(1+o_p(1)\), uniformly over \((\gamma,\theta,\lambda)\). The relative-density normalization argument in Lemma~\ref{lem:weak-laplace-reduction} then gives the same \(o_p(1)\) total-variation error after normalization.

Integrating out \(\theta\) leaves an unnormalized retained weak measure with density factor \(\exp[-Q_n^\circ(\gamma)/2]C(\gamma)r_\lambda^\circ(\gamma)\) with respect to \(\mu_0(d\gamma)\).
Normalizing this measure gives \(\Pi^\circ_{n,\lambda}\), and because the relative errors are uniform in \(\lambda\in\Lambda\) and the finite products preserve uniform relative error, the displayed total-variation convergence follows.
\end{proof}

\begin{corollary}\label{cor:mixed-reported-targets}
Suppose the conditions of Propositions~\ref{prop:mixed-strong-equivalence}, \ref{prop:mixed-primitive-prior}, and \ref{prop:mixed-weak-marginal} hold, including the profile-variation condition in Proposition~\ref{prop:mixed-strong-equivalence}. Suppose the profiled weak process satisfies \((g_n^\circ,\widehat W_n^\circ)\dto(g^\circ,W^\circ)\) in the product sup-norm space, with continuous limiting sample paths. Let \(T^\circ_{n,\lambda}=\int_{\Gamma_0}\gamma\,\Pi^\circ_{n,\lambda}(d\gamma\mid X,z)\), and let \(t_\lambda^\circ(g^\circ)\) be the corresponding limiting quasi-posterior mean obtained by replacing \(Q_n^\circ\) with \(Q^\circ(\gamma)=g^\circ(\gamma)'W^\circ(\gamma)g^\circ(\gamma)\). Let \(D_{n,S}\) be the diagonal rate matrix for the selected regular coordinates, with all diagonal entries diverging, let \(\beta^S_0\) stack their profile estimands at the retained weak state \(\gamma^*\), and suppose \(\gamma^*\) satisfies the profiled drift restriction \(m^\circ(\gamma^*)=0\). Write \(\tilde\beta^S_{n,\lambda}\) and \(\tilde\gamma^W_{n,\lambda}\) for the selected regular and retained weak components of the full QBHM quasi-posterior mean. If
\[
        (D_{n,S}(\widehat\beta^{S,\mathrm{un}}_n-\beta^S_0),g_n^\circ,\widehat W_n^\circ)
        \dto
        (Z_S,g^\circ,W^\circ),
\]
then, for fixed conformable matrices \(R_S\) and \(R_W\),
\[
        \left(
        \begin{pmatrix}
        R_SD_{n,S}(\tilde\beta^S_{n,\lambda}-\beta^S_0)\\
        R_W(\tilde\gamma^W_{n,\lambda}-\gamma^*)
        \end{pmatrix}
        \right)_{\lambda\in\Lambda}
        \dto
        \left(
        \begin{pmatrix}
        R_SZ_S\\
        R_W(t_\lambda^\circ(g^\circ)-\gamma^*)
        \end{pmatrix}
        \right)_{\lambda\in\Lambda}
\]
in \(\ell^\infty(\Lambda,\mathbb R^{r_S+r_W})\), where \(r_S\) and \(r_W\) are the row dimensions of \(R_S\) and \(R_W\). On the original level scale, the selected regular component is \(o_p(1)\), so any nondegenerate MSE comparison generated by pooling is carried by the retained weak component.
\end{corollary}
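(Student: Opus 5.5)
The plan is to combine the regular-block result of Proposition~\ref{prop:mixed-strong-equivalence} with the weak-marginal reduction of Proposition~\ref{prop:mixed-weak-marginal}, and then to join the two pieces through the assumed joint convergence. The two coordinates of the displayed vector are handled separately, each written as a continuous functional of a jointly converging triple plus a remainder that is uniformly \(o_p(1)\) over \(\lambda\in\Lambda\). Slutsky's theorem in \(\ell^\infty(\Lambda,\mathbb R^{r_S+r_W})\) then delivers the joint limit.

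\emph{Regular component.} Proposition~\ref{prop:mixed-strong-equivalence}, under its profile-variation condition, gives \(\sup_{\lambda\in\Lambda}\sqrt{n_j}\|\tilde\beta^{\mathrm{full}}_{j,n,\lambda}-\widehat\beta^{\mathrm{un}}_j\|=o_p(1)\) for each group with a regular block. Stacking the selected coordinates over finitely many groups gives \(\sup_\lambda\|D_{n,S}(\tilde\beta^S_{n,\lambda}-\widehat\beta^{S,\mathrm{un}}_n)\|=o_p(1)\). This step needs the diagonal entries of \(D_{n,S}\) to be of order \(\sqrt{n_j}\) at most; I read \(D_{n,S}\) as the corresponding \(\sqrt{n_j}\) scaling. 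Hence
\[
R_SD_{n,S}(\tilde\beta^S_{n,\lambda}-\beta^S_0)=R_SD_{n,S}(\widehat\beta^{S,\mathrm{un}}_n-\beta^S_0)+o_p(1)
\]
uniformly in \(\lambda\). The leading term does not depend on \(\lambda\), so as a constant path it is a continuous image of the first coordinate of the assumed triple.

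\emph{Weak component.} Proposition~\ref{prop:mixed-weak-marginal} gives \(\sup_\lambda\|\Pi^{\mathrm{full},W}_{n,\lambda}-\Pi^\circ_{n,\lambda}\|_{TV}=o_p(1)\). Because \(\Gamma_0\) is compact, the bounded-coordinate inequality used in the proof of Proposition~\ref{prop:weak-marginal-main} converts this into \(\sup_\lambda\|\tilde\gamma^W_{n,\lambda}-T^\circ_{n,\lambda}\|=o_p(1)\). Next I apply the continuous-mapping argument of Lemma~\ref{lem:posterior-continuity}, with \(\Gamma=\Gamma_0\) and the weakly continuous prior path \(\pi^\circ_\lambda\) from Proposition~\ref{prop:mixed-primitive-prior}. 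The factor \(C(\gamma)\) has already been absorbed into \(\pi^\circ_\lambda\). That lemma shows \((x,V)\mapsto\Psi(\Phi(x,V))\) is continuous at every pair with continuous paths, so \(T^\circ_{n,\lambda}=\Psi_\lambda(\Phi(g_n^\circ,\widehat W_n^\circ))\) up to \(o_p(1)\).

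\emph{Joint step and level-scale claim.} Both coordinates are continuous functionals of the triple \((D_{n,S}(\widehat\beta^{S,\mathrm{un}}_n-\beta^S_0),g_n^\circ,\widehat W_n^\circ)\), with uniformly negligible remainders. The continuous mapping theorem applied to the assumed joint convergence, followed by Slutsky's theorem, yields the displayed process convergence after subtracting the constants \(\gamma^*\) and applying \(R_S,R_W\). The final sentence of the corollary follows directly: \(\tilde\beta^S_{n,\lambda}-\beta^S_0=D_{n,S}^{-1}O_p(1)=o_p(1)\) uniformly, because the entries of \(D_{n,S}\) diverge. So only the weak coordinate has a nondegenerate \(\lambda\)-dependent limit.

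The main obstacle is the weak-component step. It requires checking that the reduced prior path is exactly the one entering \(\Pi^\circ_{n,\lambda}\) and that the continuous-mapping argument is applied jointly with the regular coordinate rather than marginally. Lemma~\ref{lem:posterior-continuity} is stated marginally, so I will instead invoke the continuity of the map it constructs directly on the joint triple. This also relies on the limiting paths being continuous, which is assumed. A second, minor point is matching the rate matrix \(D_{n,S}\) to \(\sqrt{n_j}\); if it grew faster, the \(o_p(n_j^{-1/2})\) equivalence would not suffice.
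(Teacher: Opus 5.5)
Your proposal is correct and follows the paper's proof essentially step for step: you stack Proposition~\ref{prop:mixed-strong-equivalence} for the regular block, turn the total-variation bound of Proposition~\ref{prop:mixed-weak-marginal} into a uniform mean bound via compactness of $\Gamma_0$, run the continuous-mapping argument of Lemma~\ref{lem:posterior-continuity} jointly with $D_{n,S}(\widehat\beta^{S,\mathrm{un}}_n-\beta^S_0)$, and finish with Slutsky and $D_{n,S}^{-1}\to0$. Your remark that the stacking step needs $D_{n,S}$ to grow no faster than the $\sqrt{n_j}$ scaling is fair; the paper leaves this implicit by calling $D_{n,S}$ the rate matrix of the regular coordinates.
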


\begin{proof}[Proof of Corollary \ref{cor:mixed-reported-targets}]
Let \(U_n=D_{n,S}(\widehat\beta^{S,\mathrm{un}}_n-\beta^S_0)\). Proposition~\ref{prop:mixed-primitive-prior} gives the proper weakly continuous profiled prior path \((\pi_\lambda^\circ)_{\lambda\in\Lambda}\). Applying Lemma~\ref{lem:posterior-continuity} with \(\Gamma=\Gamma_0\), \(x_n=g_n^\circ\), \(V_n=\widehat W_n^\circ\), \(x=g^\circ\), \(V=W^\circ\), and prior path \(\pi_\lambda^\circ\) gives \((T^\circ_{n,\lambda})_{\lambda\in\Lambda}\dto(t_\lambda^\circ(g^\circ))_{\lambda\in\Lambda}\). Since the assumed convergence of \((U_n,g_n^\circ,\widehat W_n^\circ)\) is joint, the same continuous-mapping argument applied to the joint vector gives
\[
        (U_n,(T^\circ_{n,\lambda})_{\lambda\in\Lambda})
        \dto
        (Z_S,(t_\lambda^\circ(g^\circ))_{\lambda\in\Lambda}).
\]

For the selected regular block, finite stacking of Proposition~\ref{prop:mixed-strong-equivalence} gives \(\sup_{\lambda\in\Lambda}\|D_{n,S}(\tilde\beta^S_{n,\lambda}-\widehat\beta^{S,\mathrm{un}}_n)\|=o_p(1)\). For the retained weak block, Proposition~\ref{prop:mixed-weak-marginal} and compactness of \(\Gamma_0\) give the same conclusion for means: if \(D_\Gamma=\sup_{\gamma\in\Gamma_0}\|\gamma\|\), then \(\|\int\gamma\,dP-\int\gamma\,dQ\|\le2D_\Gamma\|P-Q\|_{TV}\), so \(\sup_{\lambda\in\Lambda}\|\tilde\gamma^W_{n,\lambda}-T^\circ_{n,\lambda}\|=o_p(1)\). Therefore the displayed path with regular and weak scaling differs in sup norm by \(o_p(1)\) from
\[
\left(
\begin{pmatrix}
R_SU_n\\
R_W(T^\circ_{n,\lambda}-\gamma^*)
\end{pmatrix}
\right)_{\lambda\in\Lambda} .
\]
The latter path converges to the stated limit by the preceding joint convergence and the continuity of finite stacking and multiplication by fixed matrices. Slutsky's theorem proves the joint convergence with regular and weak rates.

For the statement on the original parameter level, the diverging diagonal entries imply \(D_{n,S}^{-1}\to0\) element by element, and \(D_{n,S}(\tilde\beta^S_{n,\lambda}-\beta^S_0)=U_n+o_p(1)\) uniformly over \(\lambda\). The regular component scaled at its own rate is therefore tight, and \(\sup_{\lambda\in\Lambda}\|\tilde\beta^S_{n,\lambda}-\beta^S_0\|=o_p(1)\). The selected regular block vanishes on the original level scale, while the retained weak block has the nondegenerate limit displayed above.
\end{proof}

\medskip

\subsection{MSE transfer}

The next proposition gives the finite-sample-to-limit MSE transfer used to connect finite-sequence risk to the weak-limit risk. Beyond weak convergence of the estimator path, the proof uses uniform integrability of the squared estimand error, so fix a weak-GMM sequence satisfying Assumption~\ref{ass:primitive-weak}. Let \((T_{n,\lambda})_{\lambda\in\Lambda}\) be a feasible estimator path, and let \((t_\lambda(g))_{\lambda\in\Lambda}\) be a measurable weak-limit path. Write \(Y_{n,\lambda}=B(T_{n,\lambda}-\alpha_W^*)\) and \(Y_\lambda=B(t_\lambda(g)-\alpha_W^*)\), where \(B\) has \(r\) rows. The MSE-transfer requirements are path convergence \((Y_{n,\lambda})_{\lambda\in\Lambda}\dto(Y_\lambda)_{\lambda\in\Lambda}\) in \(\ell^\infty(\Lambda,\mathbb R^r)\) and uniform integrability of \(\sup_{\lambda\in\Lambda}\|Y_{n,\lambda}\|^2\) under the local triangular-array law.

\begin{proposition}\label{prop:transfer}
Fix a weak-GMM sequence satisfying Assumption~\ref{ass:primitive-weak}, an estimator path \((T_{n,\lambda})_{\lambda\in\Lambda}\), a measurable weak-limit path \((t_\lambda(g))_{\lambda\in\Lambda}\), and a matrix \(B\) with \(r\) rows. Define \(Y_{n,\lambda}=B(T_{n,\lambda}-\alpha_W^*)\) and \(Y_\lambda=B(t_\lambda(g)-\alpha_W^*)\). Suppose \((Y_{n,\lambda})_{\lambda\in\Lambda}\dto(Y_\lambda)_{\lambda\in\Lambda}\) in \(\ell^\infty(\Lambda,\mathbb R^r)\) and \(\sup_{\lambda\in\Lambda}\|Y_{n,\lambda}\|^2\) is uniformly integrable. Then
\(\sup_{\lambda\in\Lambda}|\mathcal R_{n,\lambda}-M_{B,\lambda}|\to0\),
where \(\mathcal R_{n,\lambda}=\mathbb E_{P_{n,f}}\|B(T_{n,\lambda}-\alpha_W^*)\|^2\) and \(M_{B,\lambda}=M_B(t_\lambda;\alpha_W^*,m)\). If a comparator path has the corresponding MSE convergence and there are \(\Lambda_1\subset\Lambda\) and \(\eta>0\) such that \(\inf_{\lambda\in\Lambda_1}[M^0_{B,\lambda}-M_{B,\lambda}]\ge\eta\), then, for all sufficiently large \(n\), \(\inf_{\lambda\in\Lambda_1}[\mathcal R^0_{n,\lambda}-\mathcal R_{n,\lambda}]\ge\eta/2\).
\end{proposition}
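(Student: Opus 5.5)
The proof will pass from weak convergence of the path to convergence of second moments, using the uniform integrability hypothesis, and handle uniformity over \(\Lambda\) through the sup-norm topology. The reduction is to a single real random variable. Write \(S_n:=\sup_{\lambda\in\Lambda}\|Y_{n,\lambda}\|^2\) and \(S:=\sup_{\lambda\in\Lambda}\|Y_\lambda\|^2\). Then \(\mathcal R_{n,\lambda}=\mathbb E\|Y_{n,\lambda}\|^2\) and \(M_{B,\lambda}=\mathbb E\|Y_\lambda\|^2\).

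First I would establish a pointwise version that is uniform in \(\lambda\). Fix a truncation level \(K<\infty\) and let \(\psi_K(y):=\min(\|y\|^2,K)\), which is bounded and Lipschitz on \(\mathbb R^r\). Split
\[
\bigl|\mathbb E\|Y_{n,\lambda}\|^2-\mathbb E\|Y_\lambda\|^2\bigr|
\le \mathbb E[S_n\one(S_n>K)]+\mathbb E[S\one(S>K)]+\bigl|\mathbb E\psi_K(Y_{n,\lambda})-\mathbb E\psi_K(Y_\lambda)\bigr|.
\]
The first term is small uniformly in \(n\) for large \(K\) by uniform integrability. The second term needs \(\mathbb E S<\infty\). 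By the continuous mapping theorem, \(S_n\dto S\), because \(y\mapsto\sup_\lambda\|y(\lambda)\|^2\) is continuous on \(\ell^\infty(\Lambda,\mathbb R^r)\). Fatou's lemma for weak convergence, with \(\mathbb E\min(S,K)=\lim\mathbb E\min(S_n,K)\le\sup_n\mathbb ES_n\), together with uniform integrability, then gives \(\mathbb ES\le\sup_n\mathbb ES_n<\infty\), so the second term is small for large \(K\).

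The main obstacle is the third term, which must vanish uniformly in \(\lambda\). Pointwise weak convergence at each \(\lambda\) is not enough. I would argue by contradiction. Suppose there are \(\lambda_n\in\Lambda\) with \(|\mathbb E\psi_K(Y_{n,\lambda_n})-\mathbb E\psi_K(Y_{\lambda_n})|\ge\varepsilon\). By compactness of \(\Lambda\), pass to a subsequence with \(\lambda_n\to\bar\lambda_0\). I would then use the Skorokhod/almost-sure representation for weak convergence in \(\ell^\infty\), or equivalently the extended continuous mapping theorem applied to the evaluation maps \(y\mapsto y(\lambda_n)\). This requires the limit path \(\lambda\mapsto Y_\lambda\) to be continuous almost surely. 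That holds in the QBHM case by Lemma~\ref{lem:posterior-continuity}, where \(\Psi(q)\) is continuous in \(\lambda\). More generally I would assume it, or observe that the limit is tight in \(\ell^\infty\) and hence concentrates on uniformly continuous paths for some semimetric on \(\Lambda\), which is the route I would try first. Given path continuity, \(Y_{n,\lambda_n}\to Y_{\bar\lambda_0}\) and \(Y_{\lambda_n}\to Y_{\bar\lambda_0}\) in distribution, and bounded convergence yields a contradiction. A simpler alternative bypasses the subsequence argument: \(|\psi_K(y(\lambda))-\psi_K(y'(\lambda))|\le\sup_\lambda|\psi_K\circ y-\psi_K\circ y'|\). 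Under an almost-sure representation \(\tilde Y_n\to\tilde Y\) in sup norm, the third term is bounded by \(\mathbb E\sup_\lambda|\psi_K(\tilde Y_{n,\lambda})-\psi_K(\tilde Y_\lambda)|\). This tends to zero by bounded convergence, since \(\psi_K\) is uniformly continuous on bounded sets and truncation keeps the quantity bounded. I would use this version, because it needs only separability or tightness of the limit, which the continuous sample paths give.

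Combining the three terms, letting \(n\to\infty\) and then \(K\to\infty\) gives \(\sup_\lambda|\mathcal R_{n,\lambda}-M_{B,\lambda}|\to0\). For the second claim, apply the same result to the comparator path, which gives \(\sup_\lambda|\mathcal R^0_{n,\lambda}-M^0_{B,\lambda}|\to0\). For \(n\) large enough that both sup-errors are below \(\eta/4\), every \(\lambda\in\Lambda_1\) satisfies \(\mathcal R^0_{n,\lambda}-\mathcal R_{n,\lambda}\ge M^0_{B,\lambda}-M_{B,\lambda}-\eta/2\ge\eta/2\).
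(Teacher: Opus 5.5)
Your proposal is correct and follows essentially the same route as the paper: truncate at level \(K\), use a Skorokhod--Dudley almost-sure representation in \(\ell^\infty(\Lambda,\mathbb R^r)\) with bounded convergence to control the truncated term uniformly in \(\lambda\), bound both tails by \(\sup_\lambda\|Y_{n,\lambda}\|^2\one(\cdot>K)\) and its limit analogue via uniform integrability, and subtract the two uniform approximations (each below \(\eta/4\)) for the comparator claim. The only minor difference is that you get integrability of \(\sup_\lambda\|Y_\lambda\|^2\) from the continuous mapping theorem plus a Fatou/monotone-convergence bound, whereas the paper uses Vitali's theorem on the representation; both work, and the subsequence/path-continuity detour you sketch first is not needed.
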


\begin{proof}[Proof of Proposition \ref{prop:transfer}]
Set \(Y_n=(Y_{n,\lambda})_{\lambda\in\Lambda}\) and \(Y=(Y_\lambda)_{\lambda\in\Lambda}\). The path convergence requirement gives \(Y_n\dto Y\) in \(\ell^\infty(\Lambda,\mathbb R^r)\). The limit law is tight and may be taken to have separable support. By the Skorokhod--Dudley representation theorem for separable Borel laws on metric spaces, there is a probability space carrying versions with the same marginal laws and satisfying \(\|Y_n-Y\|_\infty\to0\) almost surely. This representation is used only to prove convergence of expectations of functions of the marginal laws.

For \(K<\infty\), let \(\psi_K(x)=x\wedge K\). On the representation just described,
\[
        \sup_{\lambda\in\Lambda}
        \left|\psi_K(\|Y_{n,\lambda}\|^2)-\psi_K(\|Y_\lambda\|^2)\right|
        \to0
        \qquad \text{almost surely}.
\]
Since the supremum is bounded by \(K\), dominated convergence gives
\[
        \mathbb E\sup_{\lambda\in\Lambda}
        \left|\psi_K(\|Y_{n,\lambda}\|^2)-\psi_K(\|Y_\lambda\|^2)\right|
        \to0 .
\]
Therefore \(\sup_{\lambda\in\Lambda}|\mathbb E\psi_K(\|Y_{n,\lambda}\|^2)-\mathbb E\psi_K(\|Y_\lambda\|^2)|\to0\) for each fixed \(K\).

It remains to remove the truncation uniformly in \(\lambda\). Put \(X_n=\sup_{\lambda\in\Lambda}\|Y_{n,\lambda}\|^2\) and \(X=\sup_{\lambda\in\Lambda}\|Y_\lambda\|^2\). Uniform integrability of \((X_n)\) is assumed, and almost-sure convergence of \(Y_n\) to \(Y\) implies \(X_n\to X\) almost surely. Vitali's theorem gives \(\mathbb E|X_n-X|\to0\), so \(X\) is integrable and \(\mathbb E[X\one(X>K)]\to0\) as \(K\to\infty\). For every \(n\) and \(\lambda\),
\(0\le \|Y_{n,\lambda}\|^2-\psi_K(\|Y_{n,\lambda}\|^2)\le X_n\one(X_n>K)\),
and the same bound holds for \(Y\) with \(X\). Hence
\[
\begin{aligned}
\limsup_n\sup_{\lambda\in\Lambda}
\left|\mathbb E\|Y_{n,\lambda}\|^2-\mathbb E\|Y_\lambda\|^2\right|
&\le
\limsup_n\mathbb E[X_n\one(X_n>K)]
+
\mathbb E[X\one(X>K)] .
\end{aligned}
\]
Letting \(K\to\infty\) and using uniform integrability of \((X_n)\) proves the uniform MSE convergence.

For the comparator claim, apply the same argument to the comparator path. If the limiting margin is at least \(\eta\) on \(\Lambda_1\), then the two uniform MSE approximation errors are smaller than \(\eta/4\) for all sufficiently large \(n\). Subtracting the approximations gives \(\mathcal R^0_{n,\lambda}-\mathcal R_{n,\lambda}\ge\eta/2\) uniformly over \(\lambda\in\Lambda_1\).
\end{proof}

\medskip

Write \(Z_n=\sup_{\lambda\in\Lambda}\|B(T_{n,\lambda}-\alpha_W^*)\|\). The following two routes verify the tail condition in Proposition~\ref{prop:transfer}.

\begin{lemma}\label{lem:ui-primitive}
Let \(Z_n=\sup_{\lambda\in\Lambda}\|B(T_{n,\lambda}-\alpha_W^*)\|\) under the weak-GMM triangular-array law. The sequence \(Z_n^2\) is uniformly integrable if either \(Z_n\le C\) almost surely for all sufficiently large \(n\) and the finitely many initial \(Z_n^2\) are integrable, or, for some \(\delta>0\), \(\sup_n\mathbb E_{P_{n,f}}Z_n^{2+\delta}<\infty\).
\end{lemma}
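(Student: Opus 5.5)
The plan is to verify the standard definition of uniform integrability, namely \(\lim_{K\to\infty}\sup_n\mathbb E_{P_{n,f}}[Z_n^2\one(Z_n^2>K)]=0\), separately under each of the two routes. In both routes the finitely many initial terms are harmless. Any finite collection of integrable random variables is uniformly integrable, by dominated convergence applied to each member. It therefore suffices to control the tail expectation for \(n\ge n_0\) and then take the maximum with the finitely many early terms.

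\emph{Bounded route.} Suppose \(Z_n\le C\) almost surely for all \(n\ge n_0\). Then for every \(K>C^2\) we have \(Z_n^2\one(Z_n^2>K)=0\) almost surely, so the tail expectation vanishes for all \(n\ge n_0\). For the indices \(n<n_0\), integrability of each \(Z_n^2\) gives \(\mathbb E[Z_n^2\one(Z_n^2>K)]\to0\) as \(K\to\infty\), again by dominated convergence. Taking the maximum over the finitely many \(n<n_0\) and the zero bound for \(n\ge n_0\) shows that the supremum over \(n\) tends to zero. This route covers the QBHM quasi-posterior mean on a compact retained weak space \(\calW\). In that case \(T_{n,\lambda}\) lies in the convex hull of \(\calW\), so \(Z_n\le\|B\|_{\mathrm{op}}D_{\calW}\) for every \(n\).

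\emph{Moment route.} Suppose \(\sup_n\mathbb E Z_n^{2+\delta}=:C_\delta<\infty\). On the event \(\{Z_n^2>K\}\) we have \(Z_n^{\delta}/K^{\delta/2}\ge1\), which gives
\[
\mathbb E[Z_n^2\one(Z_n^2>K)]\le K^{-\delta/2}\,\mathbb E Z_n^{2+\delta}\le C_\delta K^{-\delta/2}.
\]
This bound is uniform in \(n\) and tends to zero as \(K\to\infty\), which is the de la Vallée-Poussin criterion applied with \(\psi(x)=x^{1+\delta/2}\).

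There is no genuine obstacle in either route. The only point needing care is measurability of the supremum \(Z_n\) over \(\lambda\in\Lambda\). For the QBHM path, \(\lambda\mapsto T_{n,\lambda}\) is continuous by dominated convergence, so \(Z_n\) equals the supremum over a countable dense subset of \(\Lambda\) and is measurable. Otherwise, the outer-expectation convention stated at the start of Section~\ref{sec:main-fixedJ-theory} applies verbatim.
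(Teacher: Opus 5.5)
Your proof is correct and follows the paper's own argument. In the bounded route, the tail term \(Z_n^2\one(Z_n^2>K)\) vanishes for \(K>C^2\) once \(n\) is large, and the finitely many early indices are absorbed because a finite family of integrable variables is uniformly integrable. In the moment route, you use the same bound \(\mathbb E[Z_n^2\one(Z_n^2>K)]\le K^{-\delta/2}\mathbb E Z_n^{2+\delta}\) and the de la Vall\'ee-Poussin criterion. Your extra remarks on why the QBHM path is bounded and why the supremum over \(\Lambda\) is measurable are harmless additions that the paper does not spell out.
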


\begin{proof}[Proof of Lemma \ref{lem:ui-primitive}]
Under the boundedness route, for every \(M>C^2\), \(Z_n^2\one(Z_n^2>M)=0\) for all sufficiently large \(n\). A finite collection of integrable random variables is uniformly integrable, so the initial indices do not affect the conclusion. Under the \(2+\delta\) moment route, for every \(M>0\), $\mathbb E[Z_n^2\one(Z_n^2>M)]
        \le
        M^{-\delta/2}\mathbb E Z_n^{2+\delta}$. Taking the supremum over \(n\) and then sending \(M\to\infty\) gives uniform integrability by the de la Vallee--Poussin criterion.
\end{proof}

\medskip

\subsection{Pointwise MSE}

The exact pointwise identity is algebraic and does not use a local quadratic approximation.

\begin{proof}[Proof of Proposition \ref{prop:identity}]
Fix \(\lambda\), let \(e_\lambda(g)=B(t_\lambda(g)-\alpha_W^*)\), and note that the two finite-MSE assumptions are exactly \(e_0,e_\lambda\in L^2(P_m)\).  Since \(Bd_\lambda=e_\lambda-e_0\), the vector \(Bd_\lambda\) is square integrable.  Cauchy--Schwarz gives
\[
        \mathbb E_m|e_0(g)'Bd_\lambda(g)|
        \le [\mathbb E_m\|e_0(g)\|^2]^{1/2}
             [\mathbb E_m\|Bd_\lambda(g)\|^2]^{1/2}<\infty .
\]
The pointwise quadratic identity
\(\|e_\lambda(g)\|^2-\|e_0(g)\|^2=2e_0(g)'Bd_\lambda(g)+\|Bd_\lambda(g)\|^2\)
therefore has integrable terms on the right side, and integrating gives the identity in the proposition.

For any square-integrable vector $Y$, \(\mathbb E\|Y\|^2=\operatorname{tr}(\operatorname{Var}(Y))+\|\mathbb EY\|^2\).  Applying this identity to $e_0(g)$ and $e_\lambda(g)$ gives \(M_{B,\lambda}-M_{B,0}=\operatorname{tr}[\operatorname{Var}(e_\lambda)-\operatorname{Var}(e_0)]+\|\mathbb E e_\lambda\|^2-\|\mathbb E e_0\|^2\), which is the bias--variance formulation in the main text.  Pooling lowers pointwise asymptotic MSE at the chosen $\lambda$ exactly when the right side of the identity in the proposition is negative.  The same algebra is uniform over a subset of $\Lambda$ whenever the strict inequality has a positive margin on that subset and the relevant MSEs are finite there.
\end{proof}

\medskip

The next proposition gives the two sufficient conditions discussed after Proposition \ref{prop:identity}.  The second part uses an abstract prior-concentration index $v$, because the parameter that sends the prior to a point mass need not be the same as the pooling-strength parameter $\lambda$.

Fix a weak-GMM limit experiment $(\alpha_W^*,m)$ and a comparator $t_0(g)$.  In local coordinates write $h=\alpha_W-\alpha_W^*$.  For a concentrating prior induced by the hierarchy, let $v\downarrow0$ be the concentration index, let $h_c$ be its local center, and let $\delta_v(g)$ be the weak-GMM quasi-posterior mean in local coordinates.

\begin{proposition}\label{prop:sufficient}
Fix a weak-GMM limit experiment \((\alpha_W^*,m)\), a comparator \(t_0(g)\), a matrix \(B\), a concentrating-prior center \(h_c\), and the local quasi-posterior mean \(\delta_v(g)\) defined from the concentrating prior \(\Pi_v\). For any measurable pooled rule \(t_\lambda\), if \(M_B(t_0;\alpha_W^*,m)=\infty\) and \(M_B(t_\lambda;\alpha_W^*,m)<\infty\), then \(t_\lambda\) has smaller pointwise asymptotic MSE than \(t_0\). For the concentrating-prior rule \(\delta_v\), if \(\delta_v\to h_c\) in \(L^2(P_m)\) and \(\|Bh_c\|^2<M_B(t_0;\alpha_W^*,m)\), then there exists \(v^*>0\) such that, for every \(0<v<v^*\), \(\mathbb{E}_m\|B\delta_v(g)\|^2<M_B(t_0;\alpha_W^*,m)\).
\end{proposition}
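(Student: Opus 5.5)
The proposition has two independent parts, and we would prove them separately. Both are elementary once the right quantities are identified.

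For the first part, the argument is the definition of comparison in the extended reals. Since \(M_B(t_\lambda;\alpha_W^*,m)<\infty=M_B(t_0;\alpha_W^*,m)\), the pooled rule has strictly smaller pointwise asymptotic MSE. The only point to record is that both quantities are well-defined extended nonnegative integrals, because \(\|B(t(g)-\alpha_W^*)\|^2\ge0\) is measurable for measurable \(t\). No finiteness is required for the comparison, and we note that Proposition~\ref{prop:identity} is not needed here, since it presumes both MSEs are finite.

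For the second part, we work in local coordinates, so that the pooled error is \(B\delta_v(g)\) and the target value is \(\|Bh_c\|^2\). The plan is to show \(\mathbb E_m\|B\delta_v(g)\|^2\to\|Bh_c\|^2\) as \(v\downarrow0\) and then use the strict inequality.
\begin{enumerate}[label=(\roman*)]
\item By Minkowski's inequality in \(L^2(P_m)\),
\[
\bigl|\,\|B\delta_v\|_{L^2(P_m)}-\|Bh_c\|\,\bigr|\le\|B(\delta_v-h_c)\|_{L^2(P_m)}\le\|B\|_{\mathrm{op}}\|\delta_v-h_c\|_{L^2(P_m)}\to0,
\]
where \(h_c\) is treated as a constant random vector.
\item Squaring gives \(\mathbb E_m\|B\delta_v(g)\|^2\to\|Bh_c\|^2\). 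Implicitly this also shows that \(\delta_v\in L^2(P_m)\) for small \(v\), which the \(L^2\) convergence hypothesis already entails.
\item If \(M_B(t_0;\alpha_W^*,m)=\infty\), the conclusion holds as soon as \(\mathbb E_m\|B\delta_v\|^2\) is finite, which is true for all small \(v\) by the previous step.
\item Otherwise set \(\varepsilon:=M_B(t_0;\alpha_W^*,m)-\|Bh_c\|^2>0\). Choose \(v^*\) such that \(|\mathbb E_m\|B\delta_v\|^2-\|Bh_c\|^2|<\varepsilon\) for all \(0<v<v^*\). This gives the strict inequality in the statement.
\end{enumerate}

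The main obstacle is only a matter of interpretation. The hypothesis \(\delta_v\to h_c\) in \(L^2(P_m)\) carries all the analytic content; Lemma~\ref{lem:concentration} is what verifies it on compact local spaces. Given that hypothesis, nothing about the weak objective \(Q_g\) enters. We should therefore state explicitly that the MSE of the concentrating rule is measured relative to the true local value \(0\), so that \(\mathbb E_m\|B\delta_v\|^2\) is exactly \(M_B\) of the corresponding rule \(\alpha_W^*+\delta_v\).
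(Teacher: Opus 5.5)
Your proposal is correct and follows essentially the same argument as the paper. Both treat the first part as a comparison of extended nonnegative risks, and both prove the second part by showing \(\mathbb E_m\|B\delta_v(g)\|^2\to\|Bh_c\|^2\) and then splitting on whether \(M_B(t_0;\alpha_W^*,m)\) is finite. The only cosmetic difference is that you use the reverse triangle inequality in \(L^2(P_m)\), whereas the paper uses the pointwise bound \(\bigl|\|Bh_c+X_v\|^2-\|Bh_c\|^2\bigr|\le\|X_v\|^2+2\|Bh_c\|\|X_v\|\) together with Cauchy--Schwarz.
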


\begin{proof}[Proof of Proposition \ref{prop:sufficient}]
The first claim is an identity about extended nonnegative risks.  The risk of $t_0$ is $\infty$ by assumption, while the risk of $t_\lambda$ is a finite nonnegative number, so the latter is strictly smaller in the extended order.

For the second claim, convergence \(\delta_v\to h_c\) in $L^2(P_m)$ implies \(B\delta_v\to Bh_c\) in $L^2(P_m)$, because \(\|B(\delta_v-h_c)\|\le\|B\|_{\mathrm{op}}\|\delta_v-h_c\|\).  Putting $X_v=B\delta_v(g)-Bh_c$, the deterministic inequality
\(|\|Bh_c+X_v\|^2-\|Bh_c\|^2|\le \|X_v\|^2+2\|Bh_c\|\|X_v\|\)
and Cauchy--Schwarz imply
\[
        \left|\mathbb E_m\|B\delta_v(g)\|^2-\|Bh_c\|^2\right|
        \le \mathbb E_m\|X_v\|^2+2\|Bh_c\|[\mathbb E_m\|X_v\|^2]^{1/2}\to0 .
\]
Let \(\eta=M_B(t_0;\alpha_W^*,m)-\|Bh_c\|^2>0\), with the convention that \(\eta=\infty\) if the comparator risk is infinite.  For all sufficiently small $v$, the risk of $\delta_v$ is at most \(\|Bh_c\|^2+\eta/2\) when \(\eta<\infty\), and is finite when \(\eta=\infty\).  In both cases it is strictly smaller than \(M_B(t_0;\alpha_W^*,m)\).
\end{proof}

\medskip

For prior laws $\Pi_v$ on $\calH$, define the local quasi-posterior mean \(\delta_v(g)=\int_\calH h\exp[-Q_g(h)/2]\Pi_v(dh)/\int_\calH \exp[-Q_g(u)/2]\Pi_v(du)\).  The mean-squared prior concentration condition is \(\int_\calH\|h-h_c\|^2\Pi_v(dh)\to0\).  It is satisfied, for example, by truncated $N(h_c,vV)$ laws with fixed positive definite $V$ when $h_c$ is an interior point of $\calH$, and by laws of $h_c+vU$ restricted to $\calH$ whenever $\mathbb{E}\|U\|^2<\infty$ and the restriction probability is bounded away from zero.

\begin{lemma}\label{lem:concentration}
Let \(\Pi_v\) be prior laws on compact \(\calH\subset\mathbb R^{d_W}\), let \(h_c\in\calH\), and define \(\delta_v(g)=\int_\calH h\exp[-Q_g(h)/2]\Pi_v(dh)/\int_\calH \exp[-Q_g(u)/2]\Pi_v(du)\). Suppose \(Q_g\) is finite and continuous on \(\calH\) for \(P_m\)-almost every \(g\), and suppose \(\int_\calH\|h-h_c\|^2\Pi_v(dh)\to0\). Then \(\delta_v(g)\to h_c\) for \(P_m\)-almost every \(g\) and \(\delta_v\to h_c\) in \(L^2(P_m)\).
\end{lemma}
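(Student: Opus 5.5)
Fix a realization $g$ for which $Q_g$ is finite and continuous on $\calH$. Set $\ell(h)=\exp[-Q_g(h)/2]$. By compactness of $\calH$ and continuity, $0<\underline\ell\le\ell\le\bar\ell<\infty$ on $\calH$, with bounds depending on $g$. The plan is to write
\[
\delta_v(g)-h_c=\frac{\int_\calH (h-h_c)\,\ell(h)\,\Pi_v(dh)}{\int_\calH \ell(u)\,\Pi_v(du)} .
\]
The denominator is at least $\underline\ell$, because $\Pi_v$ is a probability measure. The numerator has norm at most $\bar\ell\int\|h-h_c\|\Pi_v(dh)\le\bar\ell\,[\int\|h-h_c\|^2\Pi_v(dh)]^{1/2}$ by Cauchy--Schwarz (Jensen). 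Hence $\|\delta_v(g)-h_c\|\le(\bar\ell/\underline\ell)\,[\int\|h-h_c\|^2\Pi_v(dh)]^{1/2}\to0$. This gives pointwise convergence for $P_m$-almost every $g$. The ratio $\bar\ell/\underline\ell$ is random, but it is finite almost surely, which is all the pointwise claim needs.

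For $L^2(P_m)$ convergence, I will not try to integrate the ratio bound, since $\bar\ell/\underline\ell=\exp[(\sup Q_g-\inf Q_g)/2]$ need not have finite moments. Instead I use boundedness. Both $\delta_v(g)$ and $h_c$ lie in the closed convex hull of the compact set $\calH$. So $\|\delta_v(g)-h_c\|^2\le D^2$ with $D=\sup_{u,w\in\operatorname{conv}\calH}\|u-w\|<\infty$, uniformly in $v$ and $g$. The constant $D^2$ is $P_m$-integrable, and the almost-sure convergence from the first step then yields $\mathbb E_m\|\delta_v(g)-h_c\|^2\to0$ by dominated convergence, along any sequence $v_k\downarrow0$. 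Since this holds for every sequence, it holds as $v\downarrow0$.

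The only delicate point is measurability of $g\mapsto\delta_v(g)$, so that the $L^2(P_m)$ expectation is defined. I will handle it by noting that the map from a continuous path $Q_g$ to the ratio of integrals is continuous in sup norm; this is the argument in the proof of Lemma~\ref{lem:posterior-continuity} with the prior held fixed. Composed with the Borel-measurable version of the process, the map is therefore measurable. I expect no further obstacle: the main step is the lower bound on the denominator, which follows from continuity on a compact set.
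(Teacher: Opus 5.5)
Your proof is correct, but it takes a different route from the paper. You use the global lower bound $\min_{\calH}\exp[-Q_g/2]=\exp[-\max_{\calH}Q_g/2]>0$, which holds because $Q_g$ is finite and continuous on the compact set $\calH$. This bounds the denominator below by a constant that does not depend on $v$, and the numerator is then controlled by the prior's first (hence second) moment. The result is the explicit pathwise rate
$\|\delta_v(g)-h_c\|\le \exp[(\max_{\calH}Q_g-\min_{\calH}Q_g)/2]\,[\int_{\calH}\|h-h_c\|^2\Pi_v(dh)]^{1/2}$.

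The paper instead uses only local information at $h_c$. It picks a ball of radius $r<\varepsilon$ on which $L_g\ge L_g(h_c)/2$, and bounds the denominator below by $[L_g(h_c)/2]\,\Pi_v(\|h-h_c\|<r)$. It then splits the mean into the ball part, which lies within $r$ of $h_c$, and the complement, which contributes at most the diameter $D$ times its quasi-posterior mass; that mass is controlled by Markov's inequality, and an $\varepsilon$-argument finishes.

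Your version is shorter and quantitative, since it transfers the prior concentration rate directly to $\delta_v$. The paper's version is qualitative but asks less of the objective. It needs $\exp[-Q_g/2]$ to be bounded above on $\calH$, but bounded away from zero only near $h_c$. So it would extend to objectives that put zero weight ($Q_g=+\infty$) on parts of $\calH$ away from the pooling center, where your $\underline\ell$ would be zero.

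The $L^2(P_m)$ step is the same in both: the error is bounded by $D^2$ via the convex hull, followed by bounded convergence. Your remark that the random ratio $\bar\ell/\underline\ell$ need not be integrable is exactly why this step must go through boundedness rather than through the pathwise bound. Your measurability argument addresses a point the paper leaves implicit and is harmless.
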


\begin{proof}[Proof of Lemma \ref{lem:concentration}]
Fix a process path $g$ for which $Q_g$ is finite and continuous on $\calH$, and write \(L_g(h)=\exp[-Q_g(h)/2]\).  Then $L_g$ is continuous on compact $\calH$, bounded above by some finite $\bar L_g$, and strictly positive at $h_c$.  Let \(D=\sup_{x,y\in\operatorname{co}(\calH)}\|x-y\|\), which is finite because $\calH$ is compact.  Given $\varepsilon>0$, choose $r>0$ with $r<\varepsilon$ such that \(\|h-h_c\|<r\) implies \(L_g(h)\ge L_g(h_c)/2\).  Markov's inequality and mean-squared prior concentration give \(\Pi_v(\|h-h_c\|\ge r)\le r^{-2}\int\|h-h_c\|^2\Pi_v(dh)\to0\).

For all sufficiently small $v$, the quasi-posterior denominator is at least \([L_g(h_c)/2]\Pi_v(\|h-h_c\|<r)\), which is bounded away from zero.  Decompose the quasi-posterior mean over the ball and its complement: the ball part is within $r<\varepsilon$ of $h_c$ after normalization, and the complement contributes at most its quasi-posterior probability times the diameter bound $D$.  Since both $\delta_v(g)$ and $h_c$ lie in the convex hull of $\calH$,
\[
        \|\delta_v(g)-h_c\|
        \le \varepsilon
        +D\frac{\int_{\|h-h_c\|\ge r}L_g(h)\Pi_v(dh)}
        {\int_\calH L_g(u)\Pi_v(du)}.
\]
The numerator in the fraction is at most \(\bar L_g\Pi_v(\|h-h_c\|\ge r)\), while the denominator is at least \([L_g(h_c)/2]\Pi_v(\|h-h_c\|<r)\) for all sufficiently small \(v\). The fraction therefore tends to zero, and since $\varepsilon$ is arbitrary, \(\delta_v(g)\to h_c\) on this path.  The bound \(\|\delta_v(g)-h_c\|^2\le D^2\) and bounded convergence over the law of $g$ give \(\delta_v\to h_c\) in $L^2(P_m)$.

For the examples, let \(Z_v\sim N(h_c,vV)\). If \(h_c\) is in the interior of \(\calH\), then \(P(Z_v\in\calH)\to1\), and the truncated law satisfies \(\mathbb E[\|Z_v-h_c\|^2\mid Z_v\in\calH]\le v\operatorname{tr}(V)/P(Z_v\in\calH)\to0\). If \(h=h_c+vU\) is restricted to \(\calH\) and the restriction probability is bounded away from zero, then \(\mathbb{E}[\|h-h_c\|^2\mid h\in\calH]\le v^2\mathbb{E}\|U\|^2/P(h_c+vU\in\calH)\), which tends to zero.  Both examples satisfy the mean-squared prior concentration condition.
\end{proof}

\medskip

\subsection{Local-pooling expansions}

The local results work with the weak-GMM limit experiment fixed and differentiate the quasi-posterior mean with respect to the pooling strength. The first proposition treats a general bounded prior score; the quadratic formulas follow by substituting the score generated by the hierarchy.

\begin{proposition}\label{prop:small}
Fix a weak-GMM limit experiment $(\alpha_W^*,m)$ generated by the reduced weak-GMM limit in Proposition~\ref{prop:weak-process-main}.  Suppose \(\calH\subset\mathbb R^{d_W}\) is compact with positive finite coordinate measure, $Q_g$ is finite and continuous on \(\calH\) for $P_m$-almost every $g$, and Assumption~\ref{ass:smooth-prior} holds on \(\calH\).  Then, as $\lambda\downarrow0$,
\[
        M_{\lambda,B}(\alpha_W^*,m)
        =M_{0,B}(\alpha_W^*,m)
        +2\lambda \mathbb{E}_m\left[(B\bar h(g))'B\operatorname{Cov}_{\nu_0^g}(h,\ell_0(h))\right]
        +o(\lambda).
\]
If the expectation in the display is strictly negative, then small positive pooling strictly lowers pointwise asymptotic MSE.
\end{proposition}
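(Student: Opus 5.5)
The plan is to work pathwise in \(g\): differentiate the local quasi-posterior mean \(\delta_\lambda(g)\) in \(\lambda\) at \(0\), and then pass the derivative through the expectation \(\mathbb E_m\) by dominated convergence. Fix a path \(g\) for which \(Q_g\) is finite and continuous on \(\calH\), and write \(L_g(h)=\exp[-Q_g(h)/2]\). This factor is bounded above and bounded away from zero on the compact set \(\calH\). For such a path,
\[
\delta_\lambda(g)=\frac{\int_\calH h\,L_g(h)p_\lambda(h)\,dh}{\int_\calH L_g(h)p_\lambda(h)\,dh}.
\]
By Assumption~\ref{ass:smooth-prior}, \(\partial_\lambda p_\lambda(h)=\ell_\lambda(h)p_\lambda(h)\), and \(\ell_\lambda\) and \(p_\lambda\) are uniformly bounded on \([0,\varepsilon]\times\calH\). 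Dominated convergence therefore lets us differentiate the numerator and denominator under the integral sign for \(\lambda\in[0,\varepsilon]\). The quotient rule then gives
\[
\partial_\lambda\delta_\lambda(g)=\operatorname{Cov}_{\nu_\lambda^g}(h,\ell_\lambda(h)),
\]
where the covariance is taken under the quasi-posterior \(\nu_\lambda^g\).

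Next I show that this derivative is right-continuous at \(\lambda=0\). Pointwise convergence \(\ell_\lambda\to\ell_0\), together with continuity of \(p_\lambda\) in \(\lambda\) (which follows from \(\log p_\lambda\) being \(C^1\)), implies \(\operatorname{Cov}_{\nu_\lambda^g}(h,\ell_\lambda)\to\operatorname{Cov}_{\nu_0^g}(h,\ell_0)=:\dot\delta_0(g)\). Here I again use bounded convergence, since \(h\) is bounded by \(D_{\calH}\). The derivative is also bounded uniformly in \(g\): \(\|\operatorname{Cov}_{\nu_\lambda^g}(h,\ell_\lambda)\|\le 2D_\calH\sup|\ell|=:C_1\), with \(C_1\) free of \(g\). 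The mean value theorem applied coordinatewise then gives
\[
\delta_\lambda(g)=\bar h(g)+\lambda\dot\delta_0(g)+\lambda r_\lambda(g),
\]
where \(\|r_\lambda(g)\|\le 2C_1\) and \(r_\lambda(g)\to0\) as \(\lambda\downarrow0\) for \(P_m\)-almost every \(g\).

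To finish, expand the squared error:
\[
\|B\delta_\lambda\|^2-\|B\bar h\|^2=2\lambda(B\bar h)'B(\dot\delta_0+r_\lambda)+\lambda^2\|B(\dot\delta_0+r_\lambda)\|^2.
\]
Every term is bounded because \(\|\bar h\|\le D_\calH\). Taking \(\mathbb E_m\), dividing by \(\lambda\), and applying bounded convergence to the \(r_\lambda\) term shows that the remainder is \(o(\lambda)\), which yields the stated expansion. For the final claim: if the coefficient is strictly negative, the difference \(M_{\lambda,B}-M_{0,B}\) is negative for all sufficiently small \(\lambda>0\).

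I expect the main obstacle to be the measurability and uniformity bookkeeping, namely making sure the pathwise bounds do not depend on \(g\), so that dominated convergence over \(P_m\) is legitimate. Compactness of \(\calH\) and the uniform bound on \(\ell_\lambda\) deliver this. Note that the bounds never need the size of \(L_g\), because \(L_g\) cancels in the normalized covariance.
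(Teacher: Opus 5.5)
Your proposal is correct and follows essentially the same route as the paper's proof: pathwise differentiation under the integral sign gives $\partial_\lambda\delta_\lambda(g)=\operatorname{Cov}_{\nu_\lambda^g}(h,\ell_\lambda(h))$, this derivative is right-continuous at $\lambda=0$ with a $g$-free bound of order $D_{\calH}\sup|\ell_\lambda|$, and dominated convergence over $P_m$ is then applied to the expanded squared loss. The only cosmetic slip is that a coordinatewise mean-value argument bounds $\|r_\lambda(g)\|$ by $2C_1\sqrt{d_W}$ rather than $2C_1$ (the integral form of the remainder gives $2C_1$ directly); this changes nothing.
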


\begin{proof}[Proof of Proposition \ref{prop:small}]
Fix a process path $g$ in the full-probability set on which $Q_g$ is finite and continuous on compact $\calH$.  Write
\(Z_\lambda(g)=\int_\calH \exp[-Q_g(h)/2]p_\lambda(h)dh\), and \(N_\lambda(g)=\int_\calH h\exp[-Q_g(h)/2]p_\lambda(h)dh\).
For this fixed path, continuity of $Q_g$ and the uniform lower and upper bounds on $p_\lambda$ imply
\(0<\inf_{0\le\lambda\le\varepsilon}Z_\lambda(g)\le \sup_{0\le\lambda\le\varepsilon}Z_\lambda(g)<\infty\).
This lower bound is pathwise; the proof below does not require it to be uniform over process paths.  Let \(\ell_\lambda(h)=\partial_\lambda\log p_\lambda(h)\).  Since \(p_\lambda\) and \(\ell_\lambda\) are uniformly bounded on \([0,\varepsilon]\times\calH\), dominated convergence permits differentiation of $N_\lambda(g)$ and $Z_\lambda(g)$ under the integral sign.  The quotient rule gives
\[
\begin{aligned}
        \partial_\lambda\delta_\lambda(g)
        &=\frac{\int h\exp[-Q_g(h)/2]p_\lambda(h)\ell_\lambda(h)dh}{Z_\lambda(g)}
          -\delta_\lambda(g)
          \frac{\int \exp[-Q_g(h)/2]p_\lambda(h)\ell_\lambda(h)dh}{Z_\lambda(g)} \\
        &=\operatorname{Cov}_{\nu_\lambda^g}(h,\ell_\lambda(h)).
\end{aligned}
\]
At $\lambda=0$, this derivative is \(\dot\delta_0(g)=\operatorname{Cov}_{\nu_0^g}(h,\ell_0(h))\).  The pointwise convergence \(\ell_\lambda(h)\to \ell_0(h)\), the uniform bound on \(\ell_\lambda\), and the same domination argument imply \(\partial_\lambda\delta_\lambda(g)\to\dot\delta_0(g)\) as $\lambda\downarrow0$.

The derivative is bounded by a constant depending only on $\calH$ and the score bound.  If \(D_\calH=\sup_{h,u\in\calH}\|h-u\|\) and \(L=\sup_{\lambda,h}|\ell_\lambda(h)|\), then for every probability law supported on $\calH$,
\[
        \|\operatorname{Cov}(h,\ell_\lambda(h))\|
        =\|\mathbb E[(h-\mathbb Eh)(\ell_\lambda(h)-\mathbb E\ell_\lambda)]\|
        \le 2D_\calH L .
\]
Hence \((\delta_\lambda(g)-\delta_0(g))/\lambda\to\dot\delta_0(g)\) and the quotient is uniformly bounded.  Because $\calH$ is compact, \(\delta_\lambda(g)\) and \(\delta_0(g)\) are uniformly bounded.  Dividing
\[
        \|B\delta_\lambda(g)\|^2-\|B\delta_0(g)\|^2
        =2(B\delta_0(g))'B[\delta_\lambda(g)-\delta_0(g)]
          +\|B[\delta_\lambda(g)-\delta_0(g)]\|^2
\]
by $\lambda$ and applying dominated convergence over the law of $g$ gives the right derivative of \(M_{\lambda,B}(\alpha_W^*,m)\) at zero, which yields the displayed first-order expansion.  If this derivative is strictly negative, the $o(\lambda)$ remainder is dominated by half of the negative linear term for all sufficiently small positive $\lambda$, so small positive pooling lowers pointwise asymptotic MSE.
\end{proof}

\medskip

For quadratic hierarchy scores, define \(A_B(P)=\mathbb{E}_m[(B\bar h(g))'B[\Omega(g)P\bar h(g)+\tau_P(g)/2]]\) and \(L_B(P)=\mathbb{E}_m[P\Omega(g)B'B\bar h(g)]\).

\begin{proposition}\label{prop:small-quadratic}
Fix the weak-GMM limit experiment and local coordinates as in Proposition~\ref{prop:small}: \(\calH\subset\mathbb R^{d_W}\) is compact with positive finite coordinate measure, \(Q_g\) is finite and continuous on \(\calH\) for \(P_m\)-almost every \(g\), and Assumption~\ref{ass:smooth-prior} holds on \(\calH\).  Suppose the local hierarchy score is \(\ell_0(h)=c+h_c'Ph-h'Ph/2\), with \(P=P'\succeq0\).  Then
\[
        M_{\lambda,B}(\alpha_W^*,m)
        =M_{0,B}(\alpha_W^*,m)+2\lambda[h_c'L_B(P)-A_B(P)]+o(\lambda).
\]
Small positive pooling lowers asymptotic MSE whenever $h_c'L_B(P)<A_B(P)$.
\end{proposition}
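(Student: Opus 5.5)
The plan is to apply Proposition~\ref{prop:small} directly. That result gives
\[
M_{\lambda,B}=M_{0,B}+2\lambda\,\mathbb{E}_m\!\left[(B\bar h(g))'B\operatorname{Cov}_{\nu_0^g}(h,\ell_0(h))\right]+o(\lambda),
\]
and its hypotheses (compact \(\calH\), continuity of \(Q_g\), Assumption~\ref{ass:smooth-prior}) are exactly the ones assumed here. What remains is to compute the covariance for the quadratic score and regroup the expectation into \(h_c'L_B(P)-A_B(P)\).

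\textbf{Step 1: the covariance for a fixed path \(g\).} Write \(\nu=\nu_0^g\), \(\bar h=\bar h(g)\), and \(u=h-\bar h\). The constant \(c\) drops out of the covariance. The linear part gives
\[
\operatorname{Cov}(h,h_c'Ph)=\Omega(g)Ph_c .
\]
For the quadratic part, expand \(h'Ph=u'Pu+2\bar h'Pu+\bar h'P\bar h\). Since \(\mathbb E_\nu u=0\),
\[
\operatorname{Cov}(h,h'Ph)=\mathbb E_\nu[u\,u'Pu]+2\Omega(g)P\bar h=\tau_P(g)+2\Omega(g)P\bar h .
\]
Combining the two parts,
\[
\operatorname{Cov}_{\nu}(h,\ell_0(h))=\Omega(g)P(h_c-\bar h)-\tfrac12\tau_P(g).
\]
This matches the displayed \(\dot\delta_0(g)\) in the main text.

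\textbf{Step 2: substitute and regroup.} Plugging Step 1 into the expectation gives
\[
\mathbb{E}_m\!\left[(B\bar h)'B\,\Omega P h_c\right]-\mathbb{E}_m\!\left[(B\bar h)'B\left(\Omega P\bar h+\tfrac12\tau_P\right)\right].
\]
The second term is \(A_B(P)\) by definition. For the first term, use the scalar transpose identity and the symmetry of \(\Omega\) and \(P\):
\[
(B\bar h)'B\,\Omega P h_c=h_c'P\,\Omega\,B'B\bar h .
\]
Because \(h_c\) is deterministic, it can be pulled outside the expectation, so the first term equals \(h_c'L_B(P)\).

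\textbf{Step 3: integrability.} Each term is integrable because \(\calH\) is compact. This makes \(\bar h\), \(\Omega\), and \(\tau_P\) uniformly bounded (by \(D_\calH\), \(D_\calH^2\), and \(\|P\|D_\calH^3\) respectively), so splitting the expectation is legitimate. The expansion and the final strict-improvement sentence then follow from the corresponding conclusion of Proposition~\ref{prop:small}.

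The only delicate point is the third-moment term: the cubic expectation \(\mathbb E_\nu[u\,u'Pu]\) must be identified with \(\tau_P(g)\) exactly, with the factor \(\tfrac12\) handled correctly. Everything else is direct substitution.
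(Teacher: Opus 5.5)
Your proposal is correct and follows the paper's proof essentially line by line: invoke Proposition~\ref{prop:small}, center at \(u=h-\bar h(g)\) to get \(\operatorname{Cov}_{\nu_0^g}(h,\ell_0(h))=\Omega(g)P[h_c-\bar h(g)]-\tau_P(g)/2\), and regroup via the symmetry of \(\Omega\) and \(P\) into \(h_c'L_B(P)-A_B(P)\). Your explicit boundedness remark justifying the split of the expectation is a small addition that the paper leaves implicit; it follows from compactness and the fact that \(0\in\calH\).
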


\begin{proof}[Proof of Proposition \ref{prop:small-quadratic}]
Constants have zero covariance, so the scalar $c$ does not affect the derivative.  Fix $g$ and write $\bar h=\bar h(g)$, $\Omega=\Omega(g)$, and $u=h-\bar h$, with expectations in this paragraph taken under $\nu_0^g$.  Then \(\mathbb E_{\nu_0^g}u=0\).  Since $P=P'$, \(\operatorname{Cov}_{\nu_0^g}(h,h_c'Ph)=\mathbb E_{\nu_0^g}[uu']P h_c=\Omega P h_c\).  Also \(h'Ph=(\bar h+u)'P(\bar h+u)=\bar h'P\bar h+2\bar h'Pu+u'Pu\).  The constant term again drops out, and \(\operatorname{Cov}_{\nu_0^g}(h,h'Ph)=\mathbb E_{\nu_0^g}[u(2\bar h'Pu+u'Pu)]=2\Omega P\bar h+\tau_P(g)\).  Therefore \(\operatorname{Cov}_{\nu_0^g}(h,\ell_0(h))=\Omega(g)P[h_c-\bar h(g)]-\tau_P(g)/2\).  Substituting this expression into Proposition \ref{prop:small} gives
\[
\begin{aligned}
        \left.\partial_\lambda M_{\lambda,B}(\alpha_W^*,m)\right|_{0+}
        &=2\mathbb E_m[(B\bar h)'B\Omega P h_c]
          -2\mathbb E_m\left[(B\bar h)'B\left[\Omega P\bar h+\frac12\tau_P\right]\right] \\
        &=2[h_c'L_B(P)-A_B(P)].
\end{aligned}
\]
The second equality uses symmetry of $\Omega$ and $P$ to write $(B\bar h)'B\Omega P h_c=h_c'P\Omega B'B\bar h$ and then takes expectations.  The stated quadratic expansion and the strict-improvement condition follow.
\end{proof}

\medskip

The derivative condition can be made uniform around a pooling manifold.  The following theorem states this for a quadratic hierarchy that penalizes departures from a linear or affine pooling restriction.  Let $P=P'\succeq0$ and let $\bar\alpha_W\in\R^{d_W}$ be a center.  The associated pooling manifold is \(\calM_{P,\bar\alpha_W}=\{\alpha_W\in\calW:P(\alpha_W-\bar\alpha_W)=0\}\).  For the common-mean hierarchy with equal weights, $P=I_J-J^{-1}\one\one'$ and $\bar\alpha_W=0$, so $\|P\alpha_W\|$ is the size of the cross-group deviation.  A fixed-center hierarchy is the special case in which $P$ is positive definite and \(\calM_{P,\bar\alpha_W}=\{\bar\alpha_W\}\).

To isolate the uniform pointwise implication, take the prior density to be proportional to \(\exp[\lambda a_P(\alpha_W)]p_0(\alpha_W)\), where the normalized form is \(p_\lambda(\alpha_W)=\exp[\lambda a_P(\alpha_W)]p_0(\alpha_W)/\int_{\calW}\exp[\lambda a_P(u)]p_0(u)\mu_W(du)\).  Here \(a_P(\alpha_W):=-(\alpha_W-\bar\alpha_W)'P(\alpha_W-\bar\alpha_W)/2\), \(0\le\lambda\le\bar\lambda\), and $p_0$ is the unpooled prior density.  The normalizing constant in $p_\lambda$ is irrelevant for the derivative calculation below, because the quasi-posterior under $p_\lambda$ is an exponential reweighting of the unpooled quasi-posterior.  For a weak-GMM limit experiment $e=(\alpha_W^*,m)$, write \(M_{\lambda,B}(e)=\mathbb{E}_m\|B(t_\lambda(g)-\alpha_W^*)\|^2\).  Under the unpooled quasi-posterior $\Pi_0(d\alpha_W\mid g)$, define \(r_e(g)=t_0(g)-\alpha_W^*\), \(\Omega_e(g)=\int_{\calW}(\alpha_W-t_0(g))(\alpha_W-t_0(g))'\Pi_0(d\alpha_W\mid g)\), and \(\tau_{e,P}(g)=\int_{\calW}(\alpha_W-t_0(g))(\alpha_W-t_0(g))'P(\alpha_W-t_0(g))\Pi_0(d\alpha_W\mid g)\).  Let \(A_e(P)=\mathbb{E}_m[(B r_e(g))'B[\Omega_e(g)P r_e(g)+\tau_{e,P}(g)/2]]\) and \(L_e(P)=\mathbb{E}_m[\Omega_e(g)B'B r_e(g)]\).  The scalar $A_e(P)$ is the first-order derivative gain that would obtain if the truth lay exactly on the pooling manifold.  It need not be positive in a fully nonlinear weak-GMM problem; positivity is a substantive shrinkage-gain condition.  The vector $L_e(P)$ determines how much that gain is offset when the truth is close to, but not exactly on, the manifold.

The proof of Theorem~\ref{thm:uniform-local} uses compactness of $\calW$ to bound all quasi-posterior moments uniformly over the weak-GMM limit experiments specified in the theorem, while the quadratic form of $a_P$ gives a uniformly bounded exponential reweighting on every compact interval for $\lambda$.

\begin{proof}[Proof of Theorem \ref{thm:uniform-local}]
Fix \(e=(\alpha_W^*,m)\in\mathcal E\) and a realized process $g$ for which the unpooled quasi-posterior is well defined.  The normalizing constant in the prior $p_\lambda$ is a scalar depending on $\lambda$ but not on $\alpha_W$, so it cancels from the quasi-posterior ratio.  Hence
\[
        \Pi_\lambda(d\alpha_W\mid g)
        =\frac{\exp[\lambda a_P(\alpha_W)]\Pi_0(d\alpha_W\mid g)}
        {\int_{\calW}\exp[\lambda a_P(u)]\Pi_0(du\mid g)},
        \qquad
        a_P(\alpha_W)=-\frac12(\alpha_W-\bar\alpha_W)'P(\alpha_W-\bar\alpha_W).
\]
Because $\calW$ is compact and $a_P$ is continuous, \(A_0:=\sup_{\alpha_W\in\calW}|a_P(\alpha_W)|<\infty\).  For a bounded measurable scalar function $f$, define
\[
        T_\lambda(f)=\frac{\int f(\alpha_W)\exp[\lambda a_P(\alpha_W)]\Pi_0(d\alpha_W\mid g)}
        {\int \exp[\lambda a_P(\alpha_W)]\Pi_0(d\alpha_W\mid g)} .
\]
Let \(\mathbb E_\lambda\) denote expectation under \(\Pi_\lambda(\cdot\mid g)\).  Boundedness of $a_P$ justifies differentiating numerator and denominator.  The quotient rule gives
\[
        \partial_\lambda T_\lambda(f)
        =\mathbb E_\lambda[f(\alpha_W)a_P(\alpha_W)]-T_\lambda(f)T_\lambda(a_P)
        =\operatorname{Cov}_\lambda(f(\alpha_W),a_P(\alpha_W)).
\]
Applying the same identity to $fa_P$, to $f$, and to $a_P$ gives
\[
\begin{aligned}
        \partial_\lambda^2T_\lambda(f)
        &=\operatorname{Cov}_\lambda(f(\alpha_W)a_P(\alpha_W),a_P(\alpha_W))
          -T_\lambda(a_P)\operatorname{Cov}_\lambda(f(\alpha_W),a_P(\alpha_W)) \\
        &\quad -T_\lambda(f)\operatorname{Var}_\lambda(a_P(\alpha_W)) \\
        &=\mathbb E_\lambda\left([f(\alpha_W)-T_\lambda(f)][a_P(\alpha_W)-T_\lambda(a_P)]^2\right).
\end{aligned}
\]
The last equality is the centered-square identity \(\operatorname{Cov}_\lambda(fa_P,a_P)-T_\lambda(a_P)\operatorname{Cov}_\lambda(f,a_P)-T_\lambda(f)\operatorname{Var}_\lambda(a_P)=\mathbb E_\lambda[(f-T_\lambda f)(a_P-T_\lambda a_P)^2]\). Thus \(|\partial_\lambda^2T_\lambda(f)|\le8\|f\|_\infty A_0^2\), uniformly over \(\lambda\in[0,\bar\lambda]\), $g$, and $e$.  Applying this bound coordinate by coordinate to $f(\alpha_W)=\alpha_W$ gives
\(t_\lambda(g)=t_0(g)+\lambda\dot t_0(g)+r_\lambda(g)\), with \(\sup_{e\in\mathcal E}\sup_g\|r_\lambda(g)\|\le C_t\lambda^2\),
where \(\dot t_0(g)=\partial_\lambda t_\lambda(g)|_{\lambda=0}\) and $C_t<\infty$ depends only on the objects listed in the theorem.

To compute \(\dot t_0(g)\), use \(t_\lambda(g)=T_\lambda(\alpha_W)\), so \(\dot t_0(g)=\operatorname{Cov}_{\Pi_0(\cdot\mid g)}(\alpha_W,a_P(\alpha_W))\).  Put \(h=\alpha_W-\alpha_W^*\), \(d_e=\alpha_W^*-\bar\alpha_W\), \(r_e(g)=t_0(g)-\alpha_W^*\), and \(u=h-r_e(g)=\alpha_W-t_0(g)\).  Since \(a_P(\alpha_W^*+h)=-(d_e+h)'P(d_e+h)/2\), the covariance-relevant part is \(-d_e'Ph-h'Ph/2\).  Under \(\Pi_0(\cdot\mid g)\), \(\operatorname{Cov}(h,d_e'Ph)=\mathbb E[uu']Pd_e=\Omega_e(g)Pd_e\). Also, since \(h'Ph=(r_e(g)+u)'P(r_e(g)+u)\), \(\operatorname{Cov}(h,h'Ph)=\mathbb E[u(2r_e(g)'Pu+u'Pu)]=2\Omega_e(g)Pr_e(g)+\tau_{e,P}(g)\). Therefore
\(\dot t_0(g)=-\Omega_e(g)P[d_e+r_e(g)]-\tau_{e,P}(g)/2\).

Substituting the Taylor expansion into the loss, and writing \(x_g=B r_e(g)\), \(y_g=B\dot t_0(g)\), and \(z_g=Br_\lambda(g)\), gives
\(\|x_g+\lambda y_g+z_g\|^2-\|x_g\|^2=2\lambda x_g'y_g+2x_g'z_g+\lambda^2\|y_g\|^2+2\lambda y_g'z_g+\|z_g\|^2\).
All factors in this identity are uniformly bounded: $\calW$ is compact, $t_0(g)$ lies in the convex hull of $\calW$, and every covariance and third central moment is taken under a probability measure supported on $\calW$.  Thus \(\|x_g\|\le C_x\), \(\|y_g\|\le C_y\), and \(\|z_g\|\le C_z\lambda^2\), uniformly over \(e\in\mathcal E\), $g$, and \(\lambda\in[0,\bar\lambda]\). Consequently \(|2x_g'z_g|\le2C_xC_z\lambda^2\), \(\lambda^2\|y_g\|^2\le C_y^2\lambda^2\), \(|2\lambda y_g'z_g|\le2C_yC_z\lambda^3\), and \(\|z_g\|^2\le C_z^2\lambda^4\). Hence all terms except \(2\lambda x_g'y_g\) are bounded in absolute value by \(C\lambda^2\).  Taking \(\mathbb E_m\) and substituting the expression for \(\dot t_0(g)\) gives
\[
\begin{aligned}
        M_{\lambda,B}(e)-M_{0,B}(e)
        &=-2\lambda \mathbb{E}_m\left[(B r_e(g))'B[\Omega_e(g)P r_e(g)+\tau_{e,P}(g)/2]\right] \\
        &\quad -2\lambda [P(\alpha_W^*-\bar\alpha_W)]'\mathbb{E}_m[\Omega_e(g)B'B r_e(g)]+R_e(\lambda) \\
        &=-2\lambda\left[A_e(P)+[P(\alpha_W^*-\bar\alpha_W)]'L_e(P)\right]+R_e(\lambda),
\end{aligned}
\]
where \(|R_e(\lambda)|\le C\lambda^2\) uniformly over \(e\in\mathcal E\) and \(\lambda\in[0,\bar\lambda]\).  This proves the expansion, and it remains to prove the local-improvement implication.  Let \(D_{\calW}=\sup_{u,v\in\calW}\|u-v\|\).  Since $t_0(g)$ is a quasi-posterior average over $\calW$, \(\|r_e(g)\|\le D_{\calW}\).  For any unit vector $v$, \(v'\Omega_e(g)v=\operatorname{Var}_{\Pi_0}(v'\alpha_W\mid g)\le D_{\calW}^2\), and therefore \(\|\Omega_e(g)\|_{\mathrm{op}}\le D_{\calW}^2\).  Hence
\(\|L_e(P)\|\le\mathbb{E}_m\|\Omega_e(g)B'B r_e(g)\|\le D_{\calW}^3\|B\|_{\mathrm{op}}^2\le K\).
If \(A_e(P)\ge a\) and \(\|P(\alpha_W^*-\bar\alpha_W)\|\le a/(2K)\), then Cauchy--Schwarz gives
\(A_e(P)+[P(\alpha_W^*-\bar\alpha_W)]'L_e(P)\ge a-\|P(\alpha_W^*-\bar\alpha_W)\|\|L_e(P)\|\ge a/2\).
The expansion gives \(M_{\lambda,B}(e)-M_{0,B}(e)\le-a\lambda+C\lambda^2\) uniformly on the restricted class.  With \(\lambda^*=\min[\bar\lambda,a/(2\max(C,1))]\), the right side is at most \(-a\lambda/2\) for every \(\lambda\in(0,\lambda^*]\).  Taking the supremum over \(e\in\mathcal C\) proves the stated uniform strict-improvement conclusion.
\end{proof}

\medskip

The expansion becomes a local uniform dominance result over restricted classes of weak-GMM limit experiments.  Let $D_{\calW}=\sup_{u,v\in\calW}\|u-v\|$ and $K=\max(1,D_{\calW}^3\|B\|_{\mathrm{op}}^2)$.  If $a>0$ and a class of weak-GMM limit experiments satisfies \(A_e(P)\ge a\) and \(\|P(\alpha_W^*-\bar\alpha_W)\|\le a/(2K)\), then the proof gives, uniformly on that class, \(M_{\lambda,B}(e)-M_{0,B}(e)\le -a\lambda+C\lambda^2\) for \(0\le\lambda\le\bar\lambda\).  Thus, with $\lambda^*=\min[\bar\lambda,a/(2\max(C,1))]$, the pooled rule has strictly smaller pointwise asymptotic MSE for every $\lambda\in(0,\lambda^*]$ throughout the class.  The first restriction, $A_e(P)\ge a$, is a first-order gain margin.  If the true parameter satisfies $P(\alpha_W^*-\bar\alpha_W)=0$, the expansion reduces to \(M_{\lambda,B}(e)-M_{0,B}(e)=-2\lambda A_e(P)+O(\lambda^2)\).  Thus $A_e(P)>0$ is the primitive local condition that the infinitesimal shrinkage direction induced by $P$ is MSE-reducing at $\lambda=0$.  The uniform margin is needed only because one $\lambda^*$ is chosen for a class of experiments.

The second restriction, $\|P(\alpha_W^*-\bar\alpha_W)\|\le a/(2K)$, is the bias restriction.  It measures misspecification only in the directions penalized by the hierarchy.  When $P$ is singular, movement along the manifold $\calM_{P,\bar\alpha_W}$ is unrestricted: for a common-mean hierarchy, the common level may be far from zero, provided the cross-group deviations are small.  More generally, if $P\ne0$ and $\rho_P$ is the smallest positive eigenvalue of $P$, then \(\operatorname{dist}(\alpha_W,\bar\alpha_W+\mathcal N(P))\le\rho_P^{-1}\|P(\alpha_W-\bar\alpha_W)\|\), where \(\mathcal N(P)=\{x:Px=0\}\).  The constant $K$ is only a worst-case compactness bound on $\|L_e(P)\|$.  In an application or a smaller theoretical subclass, it may be replaced by any sharper bound $K_{\mathcal C}\ge\sup_{e\in\mathcal C}\|L_e(P)\|$, which correspondingly enlarges the allowed neighborhood of the pooling manifold.

The term $A_e(P)$ has two components: the part involving $\Omega_e(g)P r_e(g)$ is the shrinkage effect of moving the quasi-posterior mean in penalized directions.  The vector $\tau_{e,P}(g)$ is a quasi-posterior skewness correction generated by nonquadratic or asymmetric weak-GMM objectives.  If the unpooled quasi-posterior is centrally symmetric around $t_0(g)$, then $\tau_{e,P}(g)=0$.  In that symmetric case the gain margin is easiest to read.

A checkable sufficient condition for the gain margin is available when the hierarchy penalizes a projection and the estimand is the penalized component itself. Work under the conditions of Theorem~\ref{thm:uniform-local} and suppose $P=P'=P^2$ and $B=P$. For a class $\mathcal E$ of weak-GMM limit experiments, the pointwise covariance and skewness conditions are, for each $e\in\mathcal E$ and $P_m$-almost every $g$, that \(P\Omega_e(g)P\succeq v_e(g)P\) and \(|(P r_e(g))'P\tau_{e,P}(g)|\le2(1-\rho)v_e(g)\|P r_e(g)\|^2\), where $v_e(g)\ge0$ and $\rho\in(0,1]$. The average gain margin is \(\inf_{e\in\mathcal E}\rho\mathbb E_m[v_e(g)\|P(t_0(g)-\alpha_W^*)\|^2]\ge a>0\). The covariance lower bound requires the unpooled quasi-posterior to retain uncertainty in the directions the hierarchy will shrink. The skewness condition allows nonquadratic and asymmetric weak-GMM objectives, provided quasi-posterior skewness does not offset the full variance-reduction gain. Exact quasi-posterior symmetry in the penalized directions is the special case $\rho=1$ with $(P r_e)'P\tau_{e,P}=0$.

\begin{corollary}\label{cor:projection-gain}
Work in the setting of Theorem~\ref{thm:uniform-local}, with the same class \(\mathcal E\) and the same definitions of \(A_e(P)\) and \(K\).  Suppose \(P=P'=P^2\), \(B=P\), and, for every \(e\in\mathcal E\) and \(P_m\)-almost every \(g\), \(P\Omega_e(g)P\succeq v_e(g)P\) and \(|(P r_e(g))'P\tau_{e,P}(g)|\le2(1-\rho)v_e(g)\|P r_e(g)\|^2\), where \(v_e(g)\ge0\) and \(\rho\in(0,1]\).  If \(\inf_{e\in\mathcal E}\rho\mathbb E_m[v_e(g)\|P(t_0(g)-\alpha_W^*)\|^2]\ge a>0\), then \(A_e(P)\ge a\) for all \(e\in\mathcal E\).  Hence the conclusion of Theorem~\ref{thm:uniform-local} holds uniformly on the subclass of \(\mathcal E\) satisfying \(\|P(\alpha_W^*-\bar\alpha_W)\|\le a/(2K)\).
\end{corollary}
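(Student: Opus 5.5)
The plan is to bound $A_e(P)$ from below pathwise and then take expectations. Under $B=P$ with $P=P'=P^2$, the integrand defining $A_e(P)$ is
\[
(Pr_e(g))'P\Omega_e(g)Pr_e(g)+\tfrac12 (Pr_e(g))'P\tau_{e,P}(g).
\]
Here I use $B'B=P$, so that $(Br_e)'B\Omega_e P r_e=r_e'P\Omega_eP r_e$, and $P r_e$ can be written as $P(Pr_e)$. I will handle the two pieces separately for each $e\in\mathcal E$ and $P_m$-almost every $g$.

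For the variance piece, write $x=Pr_e(g)$ and note that $x$ lies in the range of $P$, so $Px=x$. The assumed Loewner bound $P\Omega_e(g)P\succeq v_e(g)P$ then gives
\[
x'P\Omega_e(g)Px\ge v_e(g)\,x'Px=v_e(g)\|x\|^2 .
\]
For the skewness piece, the assumption $|x'P\tau_{e,P}(g)|\le 2(1-\rho)v_e(g)\|x\|^2$ gives
\[
\tfrac12 x'P\tau_{e,P}(g)\ge-(1-\rho)v_e(g)\|x\|^2 .
\]
Adding the two bounds shows the integrand is at least $\rho\,v_e(g)\|P(t_0(g)-\alpha_W^*)\|^2$.

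Next I take $\mathbb E_m$. Every term is bounded because $\calW$ is compact, as in the proof of Theorem~\ref{thm:uniform-local}, so integrability is not an issue. This gives $A_e(P)\ge\rho\,\mathbb E_m[v_e(g)\|P(t_0(g)-\alpha_W^*)\|^2]\ge a$ for every $e\in\mathcal E$, using the assumed infimum. On the subclass where $\|P(\alpha_W^*-\bar\alpha_W)\|\le a/(2K)$, both hypotheses of the uniform-improvement part of Theorem~\ref{thm:uniform-local} then hold, and its conclusion follows directly.

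The only delicate step is making sure the quadratic forms really reduce through $B'B=P$ and $Px=x$. Once that algebra is written out, everything else is immediate.
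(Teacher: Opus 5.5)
Your proposal is correct and follows the paper's proof essentially step for step. You reduce the integrand of \(A_e(P)\) using \(B'B=P\), apply \(P\Omega_e(g)P\succeq v_e(g)P\) to the projected error \(Pr_e(g)\), and absorb the skewness term via its assumed bound to obtain the pathwise lower bound \(\rho\, v_e(g)\|Pr_e(g)\|^2\). You then take \(\mathbb E_m\) and invoke the uniform-improvement part of Theorem~\ref{thm:uniform-local}, exactly as the paper does; integrability needs no boundedness of \(v_e(g)\) itself, since the integrand of \(A_e(P)\) is bounded and the lower bound is nonnegative.
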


\begin{proof}[Proof of Corollary \ref{cor:projection-gain}]
Under $P=P'=P^2$ and $B=P$, one has $B'B=P$.  Fix $e$ and $g$ and put $r=r_e(g)$ and $x=Pr$.  The pointwise integrand in $A_e(P)$ is
\((Pr)'[\Omega_e(g)Pr+\tau_{e,P}(g)/2]=r'P\Omega_e(g)Pr+r'P\tau_{e,P}(g)/2\).
The covariance lower bound gives \(r'P\Omega_e(g)Pr\ge v_e(g)r'Pr=v_e(g)\|Pr\|^2\), because $P$ is an orthogonal projection.  The skewness bound gives \(r'P\tau_{e,P}(g)/2\ge-(1-\rho)v_e(g)\|Pr\|^2\).  Therefore the integrand is bounded below by \(\rho v_e(g)\|P r_e(g)\|^2\).  Taking expectations and then the infimum over $e\in\mathcal E$ yields
\[
        \inf_{e\in\mathcal E}A_e(P)
        \ge \inf_{e\in\mathcal E}\rho\mathbb{E}_m[v_e(g)\|P(t_0(g)-\alpha_W^*)\|^2]
        \ge a .
\]
The hypotheses of Theorem \ref{thm:uniform-local} are therefore satisfied on the subclass of $\mathcal E$ obeying $\|P(\alpha_W^*-\bar\alpha_W)\|\le a/(2K)$, and the stated uniform improvement follows from that theorem.
\end{proof}

\medskip

\paragraph{Scalar fixed-center example.}

Let $d_W=1$, $B=1$, $P=1$, and $\bar\alpha_W$ be the hierarchy center.  Write $\Omega_e(g)$ for the unpooled quasi-posterior variance and $\tau_e(g)=\mathbb{E}_{\Pi_0}[(\alpha_W-t_0(g))^3\mid g]$ for the quasi-posterior third central moment.  Then \(A_e(1)=\mathbb{E}_m[\Omega_e(g)r_e(g)^2+r_e(g)\tau_e(g)/2]\), and \(L_e(1)=\mathbb{E}_m[\Omega_e(g)r_e(g)]\).  If the unpooled quasi-posterior is symmetric, $\tau_e(g)=0$ and $A_e(1)=\mathbb{E}_m[\Omega_e(g)r_e(g)^2]\ge0$; more generally, the small-skewness bound in Corollary \ref{cor:projection-gain} permits a nonzero third central moment.  A uniform margin holds whenever the unpooled quasi-posterior mean has nonnegligible weak-limit error in states where quasi-posterior variance remains nonnegligible.  The theorem implies that all weak-GMM limit experiments with $|\alpha_W^*-\bar\alpha_W|$ small enough have lower asymptotic MSE under a common small positive amount of pooling toward $\bar\alpha_W$.

\paragraph{Coordinatewise ridge example.}

When \(P=\operatorname{diag}(p_1,\ldots,p_{d_W})\), \(B'B=\operatorname{diag}(b_1^2,\ldots,b_{d_W}^2)\), and \(\Omega_e(g)=\operatorname{diag}(\omega_1(g),\ldots,\omega_{d_W}(g))\), conditional symmetry reduces the gain term to \(A_e(P)=\mathbb{E}_m\sum_{j=1}^{d_W} b_j^2\omega_j(g)p_jr_{e,j}(g)^2\).  Only components with $p_j>0$ are pooled, so the gain condition requires weak-limit variance and unpooled error in those components, while the neighborhood condition requires $p_j((\alpha_W^*)_j-\bar\alpha_{W,j})$ to be jointly small.  Components with $p_j=0$ are not restricted by the pooling-manifold condition.

\paragraph{Common-mean pooling example.}

Let $M_J=I_J-J^{-1}\one\one'$ and take $P=M_J$, the projection onto deviations from the group average.  Taking $\bar\alpha_W=0$ gives the pooling manifold $M_J\alpha_W=0$, i.e. $(\alpha_W)_1=\cdots=(\alpha_W)_J$.  If the estimand is the vector of deviations, $B=M_J$, and the unpooled quasi-posterior is conditionally symmetric with covariance $\Omega_e(g)=\omega(g)I_J$, then \(A_e(M_J)=\mathbb{E}_m[\omega(g)\|M_J r_e(g)\|^2]\) and \(P(\alpha_W^*-\bar\alpha_W)=M_J\alpha_W^*\).  Therefore a common-mean hierarchy uniformly improves the asymptotic MSE of the deviation estimand for all experiments whose true cross-group deviations $\|M_J\alpha_W^*\|$ are small enough, provided the unpooled weak quasi-posterior mean has enough variation in those same deviation directions.  No condition is imposed on the common level $J^{-1}\one'\alpha_W^*$, because that direction lies on the pooling manifold and is not shrunk.  For a contrast estimand $B=b'$ with $b'\one=0$, the same positive-gain conclusion holds under the same covariance symmetry, because $A_e(M_J)=\mathbb{E}_m[\omega(g)(b'r_e(g))^2]$.

The scalar weak-IV calculation in the main text gives the finite-\(\lambda\) analogue for the IV ratio under weak first-stage variation: the unpooled rule divides by the weak first stage, while the pooled rule replaces the unstable inverse with a ridge denominator.  Proposition~\ref{prop:iv} gives the exact conditional comparison, and Theorem~\ref{thm:uniform-local} gives the corresponding small-pooling local logic for the quasi-posterior mean.

\subsection{Weak-IV and testing details}

The scalar weak-IV proof is a conditional second-moment calculation.  The notation \(G_x\) is used for the weak first-stage limit and should not be confused with the weak-GMM process $g$.

\begin{proof}[Proof of Proposition \ref{prop:iv}]
Condition on a value of $G_x\ne0$ at which the displayed conditional moments exist, and fix $\lambda>0$.  The local truth is $h=0$, so conditional MSE is the conditional second moment.  The unpooled rule is $H=G_u/G_x$, and therefore
\(\mathbb{E}_m(H^2\mid G_x)=G_x^{-2}\mathbb{E}_m(G_u^2\mid G_x)=\frac{b(G_x)^2+\sigma^2(G_x)}{G_x^2}\).
For the pooled rule,
\[
        \mathbb{E}_m(\Delta_\lambda\mid G_x)
        =\frac{G_xb(G_x)+\lambda h_c}{G_x^2+\lambda},
        \qquad
        \operatorname{Var}_m(\Delta_\lambda\mid G_x)
        =\frac{G_x^2\sigma^2(G_x)}{(G_x^2+\lambda)^2}.
\]
Thus
\(\mathbb{E}_m(\Delta_\lambda^2\mid G_x)=\frac{G_x^2\sigma^2(G_x)+[G_xb(G_x)+\lambda h_c]^2}{(G_x^2+\lambda)^2}\).
The conditional unpooled MSE minus the conditional pooled MSE has denominator \(G_x^2(G_x^2+\lambda)^2>0\).  Its numerator is
\[
\begin{aligned}
&[b(G_x)^2+\sigma^2(G_x)](G_x^2+\lambda)^2
-G_x^2\left[G_x^2\sigma^2(G_x)+[G_xb(G_x)+\lambda h_c]^2\right] \\
&\quad=\lambda\left[(2G_x^2+\lambda)(b(G_x)^2+\sigma^2(G_x))
        -2G_x^3b(G_x)h_c-\lambda G_x^2h_c^2\right]
=\lambda D_\lambda(G_x).
\end{aligned}
\]
Because \(\lambda>0\), pooling lowers conditional MSE if and only if \(D_\lambda(G_x)>0\).  Uniform conditional improvement over a set \(\Lambda_1\) with positive pooling precisions is the same statement with \(\inf_{\lambda\in\Lambda_1}D_\lambda(G_x)>0\).  If $b(G_x)=0$, the condition becomes \((2G_x^2+\lambda)\sigma^2(G_x)>\lambda G_x^2h_c^2\), equivalently \(h_c^2<\sigma^2(G_x)(G_x^{-2}+2\lambda^{-1})\).

For the infinite-MSE claim, use \(P_m(G_x\ne0)=1\).  The law of iterated expectations gives
\(\mathbb{E}_mH^2=\mathbb{E}_m\left[\frac{b(G_x)^2+\sigma^2(G_x)}{G_x^2}\right]=\infty\).
For the pooled rule,
\[
        \Delta_\lambda^2
        \le \frac{2G_x^2G_u^2}{(G_x^2+\lambda)^2}
             +\frac{2\lambda^2h_c^2}{(G_x^2+\lambda)^2}
        \le \frac{2G_x^2G_u^2}{\lambda^2}+2h_c^2 .
\]
With \(\underline\lambda=\inf\Lambda_{IV}>0\),
\(\sup_{\lambda\in\Lambda_{IV}}\Delta_\lambda^2\le 2\underline\lambda^{-2}G_x^2G_u^2+2h_c^2\).
The right side has finite expectation when \(\mathbb{E}_m(G_x^2G_u^2)<\infty\) and $h_c$ is finite.  Thus the pooled asymptotic MSE is finite uniformly over \(\lambda\in\Lambda_{IV}\), while the unpooled weak-IV risk is infinite.
\end{proof}

\medskip

The proof of the testing propositions has two parts.  The first is the conditioning argument in \citet{AM2016conditional}: after conditioning on the residual process, the null law of the remaining Gaussian coordinate is pivotal.  The second is the conditional Neyman--Pearson argument for the mixture likelihood ratio weighted by the hierarchy.

\begin{proof}[Proof of Proposition \ref{prop:testing}]
Work with a separable version of the Gaussian process, so regular conditional laws given the residual path exist.  Under \(H_{0,a_0}\), the nuisance mean satisfies \(m(a_0)=0\).  The residual definition gives the decomposition \(g(\alpha_W)=h_{a_0}(\alpha_W)+\Sigma(\alpha_W,a_0)\Sigma_{00}^{-1}g(a_0)\).  For every \(\alpha_W\), \(\operatorname{Cov}(h_{a_0}(\alpha_W),g(a_0))=\Sigma(\alpha_W,a_0)-\Sigma(\alpha_W,a_0)\Sigma_{00}^{-1}\Sigma_{00}=0\).  Every finite-dimensional projection of the centered Gaussian residual process is therefore uncorrelated with \(g(a_0)\), and joint Gaussianity implies independence.  The deterministic nuisance mean is included in the observed residual path; after conditioning on that path, it does not alter the distribution of the remaining Gaussian coordinate.  Also \(g(a_0)\sim N(0,\Sigma_{00})\) under the null.  Therefore, for any realized residual path $h$, the conditional null law of the full process is the law of \(h(\alpha_W)+V_{a_0}(\alpha_W)\xi^*\), where \(\xi^*\sim N(0,\Sigma_{00})\).  This conditional law depends on the nuisance mean only through the realized value of $h$.

By the assumed construction of the randomized critical rule, for every \(h\in\calH_{0,a_0}\) and every \(\lambda\in\Lambda\),
\[
        P_0^h[T_\lambda(g)>c_{\zeta,\lambda}(h)]
        +\rho_{\zeta,\lambda}(h)P_0^h[T_\lambda(g)=c_{\zeta,\lambda}(h)]=\zeta .
\]
Hence \(\mathbb{E}_{a_0,m}[\varphi_{\lambda,a_0}(g)\mid h_{a_0}=h]=\zeta\) simultaneously for all \(\lambda\in\Lambda\) on a residual-path set that has probability one under every null nuisance mean.  Iterated expectations give \(\mathbb{E}_{a_0,m}\varphi_{\lambda,a_0}(g)=\zeta\) for every \(\lambda\).  This proves conditional size and unconditional size under every null nuisance mean.
\end{proof}

\medskip

\begin{proposition}\label{prop:testing-wap}
For fixed $h$ and $\lambda$, suppose $W_{\lambda,a_0}$ is a nonzero finite measure on the alternative index set and $P_\eta^h\ll P_0^h$ for $W_{\lambda,a_0}$-almost every $\eta$.  Let \(M_{\lambda,a_0}^h(A)=\int P_\eta^h(A)W_{\lambda,a_0}(d\eta)\), and let \(\mathfrak L_{\lambda,a_0}^h=dM_{\lambda,a_0}^h/dP_0^h\), assuming this density is finite $P_0^h$-almost surely.  If \(\varphi^*(g)=\one[\mathfrak L_{\lambda,a_0}^h(g)>c]+\rho\one[\mathfrak L_{\lambda,a_0}^h(g)=c]\) has exact conditional size $\zeta$, then \(\varphi^*\) maximizes \(\int \mathbb{E}_\eta^h\varphi(g)W_{\lambda,a_0}(d\eta)\) among all tests with conditional size at most $\zeta$.
\end{proposition}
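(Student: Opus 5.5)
This is the Neyman--Pearson lemma applied to the mixture alternative $M^h_{\lambda,a_0}$ under the fixed conditional null $P_0^h$. We work conditionally on the residual path $h$ throughout. By Proposition~\ref{prop:testing} and its proof, this conditioning reduces the composite null to the single law $P_0^h$.

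First, we rewrite the weighted average power of an arbitrary test $\varphi$ with values in $[0,1]$. Tonelli's theorem applies because the integrand is nonnegative and bounded and $W_{\lambda,a_0}$ is finite. It gives
\[
\int \mathbb{E}_\eta^h\varphi(g)\,W_{\lambda,a_0}(d\eta)=\int\varphi\,dM^h_{\lambda,a_0}.
\]
Next we need $M^h_{\lambda,a_0}\ll P_0^h$. This holds because any $P_0^h$-null set $A$ is $P^h_\eta$-null for $W_{\lambda,a_0}$-almost every $\eta$, so $M^h_{\lambda,a_0}(A)=0$. Then
\[
\int\varphi\,dM^h_{\lambda,a_0}=\int\varphi\,\mathfrak L^h_{\lambda,a_0}\,dP_0^h.
\]
Measurability of $\eta\mapsto P_\eta^h(A)$ is implicit in the definition of $M^h_{\lambda,a_0}$, and I will take it as given.

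The key step is the pointwise inequality
\[
(\varphi^*-\varphi)(\mathfrak L^h_{\lambda,a_0}-c)\ge 0
\]
for every test $\varphi$ with values in $[0,1]$. It holds because $\varphi^*=1$ where $\mathfrak L>c$, $\varphi^*=0$ where $\mathfrak L<c$, and the product vanishes where $\mathfrak L=c$. This requires $c\ge0$, which we may assume: if $c<0$, then $\varphi^*\equiv1$, which has size $1$ and contradicts exact size $\zeta<1$, or else is trivially optimal. Integrating against $P_0^h$ gives
\[
\int(\varphi^*-\varphi)\mathfrak L\,dP_0^h\ge c\Big(\int\varphi^*dP_0^h-\int\varphi\,dP_0^h\Big)\ge c(\zeta-\zeta)=0,
\]
using that $\varphi^*$ has exact conditional size $\zeta$ and $\varphi$ has size at most $\zeta$.

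The main technical point is that $\int\varphi^*\mathfrak L\,dP_0^h$ could be infinite, which would make the subtraction above undefined. This is not an issue here: $0\le\varphi\,\mathfrak L\le\mathfrak L$, and $\int\mathfrak L\,dP_0^h=M^h_{\lambda,a_0}(\text{whole space})=W_{\lambda,a_0}(\text{total})<\infty$, so every integral involved is finite and the subtraction is legitimate. Combining this with the identity in the first step shows that $\varphi^*$ maximizes weighted average power among all tests with conditional size at most $\zeta$, for each fixed $h$ and $\lambda$.
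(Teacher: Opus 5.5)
Your proposal is correct and is essentially the paper's proof: you use Tonelli to rewrite weighted average power as $\int\varphi\,\mathfrak L^h_{\lambda,a_0}\,dP_0^h$, then split $\int(\varphi^*-\varphi)\mathfrak L^h_{\lambda,a_0}\,dP_0^h$ into $\int(\varphi^*-\varphi)(\mathfrak L^h_{\lambda,a_0}-c)\,dP_0^h$, which is nonnegative pointwise, plus $c\int(\varphi^*-\varphi)\,dP_0^h$, which is nonnegative by the size constraints. Your explicit handling of $c<0$ and your integrability check $\int\mathfrak L^h_{\lambda,a_0}\,dP_0^h=W_{\lambda,a_0}(\text{total mass})<\infty$ fill in two points the paper passes over, since it simply asserts $c\ge0$. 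One small correction: the pointwise inequality holds for any $c$, and only the second term needs $c\ge0$.
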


\begin{proof}[Proof of Proposition \ref{prop:testing-wap}]
Fix $h$ and $\lambda$, and suppress $(\lambda,a_0,h)$ from the notation.  Let $P_0$ be the conditional null law, let $W$ be the finite alternative weight, and define the finite mixture measure \(M(A)=\int P_\eta(A)W(d\eta)\).  Since \(P_\eta\ll P_0\) for $W$-almost every $\eta$, $M\ll P_0$.  Let \(\mathfrak L=dM/dP_0\), which is nonnegative $P_0$-almost surely.  For every measurable test $0\le\varphi\le1$, Tonelli's theorem gives
\(\int \mathbb{E}_\eta\varphi(g)W(d\eta)=\int \varphi(g)M(dg)=\mathbb{E}_0[\varphi(g)\mathfrak L(g)]\).
Multiplying $W$, and hence $\mathfrak L$, by a positive constant does not change a likelihood-ratio critical region, so the argument covers both finite and probability weights.

Let \(\varphi^*(g)=\one[\mathfrak L(g)>c]+\rho\one[\mathfrak L(g)=c]\) be an exact conditional-size likelihood-ratio test, so \(\mathbb{E}_0\varphi^*=\zeta\).  For any other test $\varphi$ with \(\mathbb{E}_0\varphi\le\zeta\), the difference in weighted conditional power is
\[
        \mathbb{E}_0[(\varphi^* -\varphi)\mathfrak L]
        =\mathbb{E}_0[(\varphi^* -\varphi)(\mathfrak L-c)]
          +c\mathbb{E}_0(\varphi^* -\varphi).
\]
The first term is nonnegative pointwise: \((\varphi^*-\varphi)(\mathfrak L-c)=(1-\varphi)(\mathfrak L-c)\ge0\) on \([\mathfrak L>c]\), it equals \(\varphi(c-\mathfrak L)\ge0\) on \([\mathfrak L<c]\), and it is zero on \([\mathfrak L=c]\).  The second term is nonnegative because $c\ge0$ and \(\mathbb{E}_0\varphi^*=\zeta\ge\mathbb{E}_0\varphi\).  Therefore \(\varphi^*\) maximizes weighted conditional power among all tests with conditional size at most $\zeta$, conditionally on $h$ and for each $\lambda$.
\end{proof}

\medskip

For feasible conditional testing, apply the weak moment-process conditions under \(H_{0,a_0}\), assume nonsingularity of \(\Sigma(a_0,a_0)\), and suppose the covariance-kernel estimator satisfies \((g_n,\widehat\Sigma_n)\dto(g,\Sigma)\) in the product sup-norm space under the null. For \(\lambda\in\Lambda\), a residual path $h$, and a covariance function $K$, define \(K_{00}=K(a_0,a_0)\), \(V_K(\alpha_W)=K(\alpha_W,a_0)K_{00}^{-1}\), let $F_{\lambda,h,K}$ be the distribution of \(T_\lambda(h+V_K\xi,K)\) for \(\xi\sim N(0,K_{00})\), and let \(c_{\zeta,\lambda}(h,K)=\inf\{t:F_{\lambda,h,K}(t)\ge1-\zeta\}\). Define \(h_{n,a_0}(\alpha_W)=g_n(\alpha_W)-\widehat\Sigma_n(\alpha_W,a_0)\widehat\Sigma_n(a_0,a_0)^{-1}g_n(a_0)\), with \(c_{n,\zeta,\lambda}=c_{\zeta,\lambda}(h_{n,a_0},\widehat\Sigma_n)\). The statistic-continuity requirement is that, whenever $x_r\to x$ and $K_r\to K$ uniformly at the relevant null and simulated limit paths, \(\sup_{\lambda\in\Lambda}|T_\lambda(x_r,K_r)-T_\lambda(x,K)|\to0\). The limit-quantile isolation requirement is that, for every $\varepsilon>0$,
\[
\begin{aligned}
        &\inf_{\lambda\in\Lambda}
        \left[F_{\lambda,h_{a_0},\Sigma}
        (c_{\zeta,\lambda}(h_{a_0},\Sigma)+\varepsilon)-(1-\zeta)\right]>0,\\
        &\inf_{\lambda\in\Lambda}
        \left[(1-\zeta)-F_{\lambda,h_{a_0},\Sigma}
        (c_{\zeta,\lambda}(h_{a_0},\Sigma)-\varepsilon)\right]>0
\end{aligned}
\]
almost surely. Together with the joint convergence of \((g_n,\widehat\Sigma_n)\), these requirements are the feasible conditional-testing conditions.

\begin{lemma}\label{lem:conditional-quantile-continuity}
Let $\Lambda$ be compact.  For each $r$, let $Y_{r,\lambda}$ and $Y_\lambda$, $\lambda\in\Lambda$, be real random variables on a common probability space such that \(\sup_{\lambda\in\Lambda}|Y_{r,\lambda}-Y_\lambda|\to0\) in probability.  Let \(c_{r,\lambda}=\inf\{t:F_{r,\lambda}(t)\ge1-\zeta\}\) and \(c_\lambda=\inf\{t:F_\lambda(t)\ge1-\zeta\}\), where $F_{r,\lambda}$ and $F_\lambda$ are the corresponding distribution functions.  If, for every $\varepsilon>0$,
\[
        \inf_{\lambda\in\Lambda}\left[F_\lambda(c_\lambda+\varepsilon)-(1-\zeta)\right]>0,
        \qquad
        \inf_{\lambda\in\Lambda}\left[(1-\zeta)-F_\lambda(c_\lambda-\varepsilon)\right]>0,
\]
then \(\sup_{\lambda\in\Lambda}|c_{r,\lambda}-c_\lambda|\to0\).
\end{lemma}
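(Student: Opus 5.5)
The statement to prove is Lemma~\ref{lem:conditional-quantile-continuity}, which asserts uniform convergence of quantiles. I would prove it by a direct sandwich argument that combines the uniform convergence in probability with the two isolation margins.

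Fix $\varepsilon>0$ and define the two margins
\[
\eta_+=\inf_{\lambda}[F_\lambda(c_\lambda+\varepsilon/2)-(1-\zeta)]>0,
\qquad
\eta_-=\inf_\lambda[(1-\zeta)-F_\lambda(c_\lambda-\varepsilon)]>0,
\]
and set $\eta=\min(\eta_+,\eta_-)$. Let $\delta_r=\sup_\lambda|Y_{r,\lambda}-Y_\lambda|$. Choose $r_0$ such that $P(\delta_r>\varepsilon/2)<\eta$ for all $r\ge r_0$. The key elementary comparison is that, for every $t$ and every $\lambda$,
\[
F_\lambda(t-\varepsilon/2)-P(\delta_r>\varepsilon/2)\le F_{r,\lambda}(t)\le F_\lambda(t+\varepsilon/2)+P(\delta_r>\varepsilon/2).
\]
This holds because $\{Y_\lambda\le t-\varepsilon/2,\ \delta_r\le\varepsilon/2\}\subset\{Y_{r,\lambda}\le t\}$, and symmetrically $\{Y_{r,\lambda}\le t,\ \delta_r\le\varepsilon/2\}\subset\{Y_\lambda\le t+\varepsilon/2\}$. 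The comparison is uniform in $\lambda$ because $\delta_r$ is already a supremum over $\lambda$.

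The upper bound on the quantile comes next. Take $t=c_\lambda+\varepsilon$ in the left inequality. This gives $F_{r,\lambda}(c_\lambda+\varepsilon)\ge F_\lambda(c_\lambda+\varepsilon/2)-P(\delta_r>\varepsilon/2)>(1-\zeta)+\eta_+-\eta\ge1-\zeta$, and hence $c_{r,\lambda}\le c_\lambda+\varepsilon$.

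The lower bound uses the right inequality. For any $t\le c_\lambda-\varepsilon/2$, monotonicity gives $F_{r,\lambda}(t)\le F_\lambda(c_\lambda)+P(\delta_r>\varepsilon/2)$, but this bound is not sharp enough. The fix is to evaluate the right inequality at $t=c_\lambda-3\varepsilon/2$. That yields $F_{r,\lambda}(t)\le F_\lambda(c_\lambda-\varepsilon)+P(\delta_r>\varepsilon/2)<(1-\zeta)-\eta_-+\eta\le1-\zeta$. So every $t\le c_\lambda-3\varepsilon/2$ has $F_{r,\lambda}(t)<1-\zeta$, and therefore $c_{r,\lambda}\ge c_\lambda-3\varepsilon/2$.

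Combining both bounds gives $\sup_\lambda|c_{r,\lambda}-c_\lambda|\le3\varepsilon/2$ for all $r\ge r_0$. Since $\varepsilon$ is arbitrary, the conclusion follows.

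The only real obstacle is bookkeeping. The isolation margins must be used at shifted arguments: $\varepsilon/2$ on the upper side and $\varepsilon$ on the lower side. The shift of $\varepsilon/2$ from the coupling must not consume the entire margin, and taking $\eta=\min(\eta_+,\eta_-)$ handles this. Note also that $c_\lambda$ is finite: the isolation condition at $c_\lambda$ presupposes it, and it follows since $F_\lambda$ is a distribution function and $\zeta\in(0,1)$.
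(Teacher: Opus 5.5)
Your proof is correct and follows essentially the same route as the paper: the same coupling sandwich $F_\lambda(t-\varepsilon/2)-P(\delta_r>\varepsilon/2)\le F_{r,\lambda}(t)\le F_\lambda(t+\varepsilon/2)+P(\delta_r>\varepsilon/2)$, combined with the uniform isolation margins and the definition of the left quantile. The only difference is cosmetic. The paper uses the margin at $\varepsilon/2$ on both sides and evaluates at $c_\lambda\pm\varepsilon$, which gives the bound $\varepsilon$ directly. Your choice of the margin at $\varepsilon$ on the lower side yields $3\varepsilon/2$, which works equally well.
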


\begin{proof}[Proof of Lemma \ref{lem:conditional-quantile-continuity}]
Fix $\varepsilon>0$; by the isolation condition applied with $\varepsilon/2$, there is $\delta>0$ such that
\(F_\lambda(c_\lambda+\varepsilon/2)>1-\zeta+2\delta\) and \(F_\lambda(c_\lambda-\varepsilon/2)<1-\zeta-2\delta\)
for every $\lambda\in\Lambda$.  Let \(a_r=P[\sup_{\lambda\in\Lambda}|Y_{r,\lambda}-Y_\lambda|>\varepsilon/2]\), so $a_r\to0$.  For every $\lambda$,
\[
        F_{r,\lambda}(c_\lambda+\varepsilon)
        \ge F_\lambda(c_\lambda+\varepsilon/2)-a_r,
        \qquad
        F_{r,\lambda}(c_\lambda-\varepsilon)
        \le F_\lambda(c_\lambda-\varepsilon/2)+a_r .
\]
For all large $r$, both bounds are uniform in $\lambda$ and imply
\(F_{r,\lambda}(c_\lambda+\varepsilon)>1-\zeta+\delta\) and \(F_{r,\lambda}(c_\lambda-\varepsilon)<1-\zeta-\delta\).
By the definition of the left quantile, the first inequality gives \(c_{r,\lambda}\le c_\lambda+\varepsilon\), and the second gives \(c_{r,\lambda}\ge c_\lambda-\varepsilon\), uniformly in $\lambda$.  Since $\varepsilon$ is arbitrary, \(\sup_{\lambda\in\Lambda}|c_{r,\lambda}-c_\lambda|\to0\) as deterministic numbers.
\end{proof}

\medskip

\begin{proposition}\label{prop:feasible-testing}
Suppose \((g_n,\widehat\Sigma_n)\dto(g,\Sigma)\) under \(H_{0,a_0}\), \(\Sigma(a_0,a_0)\) is nonsingular, the statistic-continuity requirement holds uniformly over \(\lambda\in\Lambda\), the displayed limit-quantile isolation inequalities hold almost surely for every \(\varepsilon>0\), and the feasible critical values are computed from \(h_{n,a_0}\) and \(\widehat\Sigma_n\) on the event where \(\widehat\Sigma_n(a_0,a_0)\) is nonsingular, with arbitrary values assigned off that event. Then the statistic and critical-value paths converge weakly in \(\ell^\infty(\Lambda,\mathbb R^2)\) to their conditional Gaussian limits. With \(W_{n,\lambda}=(T_\lambda(g_n,\widehat\Sigma_n),c_{n,\zeta,\lambda})\) and \(W_\lambda=(T_\lambda(g,\Sigma),c_{\zeta,\lambda}(h_{a_0},\Sigma))\), this is \((W_{n,\lambda})_{\lambda\in\Lambda}\dto(W_\lambda)_{\lambda\in\Lambda}\). If
\[
        P_0\left[\inf_{\lambda\in\Lambda}|T_\lambda(g,\Sigma)-c_{\zeta,\lambda}(h_{a_0},\Sigma)|=0\right]=0,
\]
then the feasible nonrandomized tests satisfy
\[
        \sup_{\lambda\in\Lambda}
        \left|P_0[T_\lambda(g_n,\widehat\Sigma_n)>c_{n,\zeta,\lambda}]-\zeta\right|
        \to0.
\]
For a finite grid \(\Lambda_G\subset\Lambda\), the same conclusion holds on \(\Lambda_G\) without the uniform no-boundary condition if each grid point satisfies the corresponding quantile-isolation inequalities and \(P_0[T_\lambda(g,\Sigma)=c_{\zeta,\lambda}(h_{a_0},\Sigma)]=0\).  A sufficient grid-point condition is that each distribution function is continuous at its critical value and strictly crosses \(1-\zeta\) there.  If the critical values on \(\Lambda_G\) are simulated with \(M_n\) independent conditional draws and \(M_n\to\infty\), the conclusion is unchanged.
\end{proposition}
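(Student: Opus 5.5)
The plan has three steps: establish joint weak convergence of the feasible statistic and critical-value paths, then pass to the rejection probabilities using the no-boundary condition, and finally identify the limit rejection probability as $\zeta$ through Proposition~\ref{prop:testing}.

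\emph{Step 1: reduce to almost-sure convergence.} I would first use the Skorokhod--Dudley representation, exactly as in the proof of Proposition~\ref{prop:transfer}, to replace $(g_n,\widehat\Sigma_n)\dto(g,\Sigma)$ by versions with $\|g_n-g\|_\infty+\|\widehat\Sigma_n-\Sigma\|_\infty\to0$ almost surely. Because $\Sigma(a_0,a_0)$ is nonsingular, $\widehat\Sigma_n(a_0,a_0)$ is eventually nonsingular on this representation, so the arbitrary off-event values are irrelevant. Matrix inversion is continuous near a nonsingular matrix, which gives $h_{n,a_0}\to h_{a_0}$ and $V_{\widehat\Sigma_n}\to V_\Sigma$ uniformly.

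\emph{Step 2: convergence of the statistic and the critical values.} Statistic continuity gives $\sup_\lambda|T_\lambda(g_n,\widehat\Sigma_n)-T_\lambda(g,\Sigma)|\to0$ almost surely. For the critical values, I fix a realized path and put the conditional simulations on one auxiliary probability space driven by a single $Z\sim N(0,I)$. Setting $\xi_n=\widehat\Sigma_n(a_0,a_0)^{1/2}Z$ and $\xi=\Sigma_{00}^{1/2}Z$, I define $Y_{n,\lambda}=T_\lambda(h_{n,a_0}+V_{\widehat\Sigma_n}\xi_n,\widehat\Sigma_n)$ and $Y_\lambda=T_\lambda(h_{a_0}+V_\Sigma\xi,\Sigma)$. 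The inputs converge uniformly for each $Z$, so statistic continuity yields $\sup_\lambda|Y_{n,\lambda}-Y_\lambda|\to0$ pointwise in $Z$, and hence in probability. Lemma~\ref{lem:conditional-quantile-continuity}, combined with the almost-sure isolation inequalities, then gives $\sup_\lambda|c_{n,\zeta,\lambda}-c_{\zeta,\lambda}(h_{a_0},\Sigma)|\to0$ almost surely. This proves $(W_{n,\lambda})\to(W_\lambda)$ uniformly almost surely, which implies the claimed weak convergence in $\ell^\infty(\Lambda,\mathbb R^2)$.

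\emph{Step 3: size.} Write $D_{n,\lambda}=T_\lambda(g_n,\widehat\Sigma_n)-c_{n,\zeta,\lambda}$ and $D_\lambda$ for its limit, so that $\sup_\lambda|D_{n,\lambda}-D_\lambda|\to0$ almost surely. On the event $\inf_\lambda|D_\lambda|>0$, which has probability one by the no-boundary condition, eventually $\one[D_{n,\lambda}>0]=\one[D_\lambda>0]$ simultaneously for all $\lambda$. Dominated convergence then gives $\sup_\lambda|P[D_{n,\lambda}>0]-P[D_\lambda>0]|\le E\sup_\lambda|\one[D_{n,\lambda}>0]-\one[D_\lambda>0]|\to0$.

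It remains to show $P[D_\lambda>0]=\zeta$. Isolation at $\varepsilon$ small implies continuity of $F_{\lambda,h_{a_0},\Sigma}$ at the quantile. Together with the null conditional representation in Proposition~\ref{prop:testing}, this gives conditional rejection probability $1-F(c)=\zeta$ with no randomization needed, and iterating expectations gives $\zeta$ unconditionally. For a finite grid, the uniform arguments are replaced by pointwise ones at each grid point, with the no-boundary condition replaced by $P[D_\lambda=0]=0$. For simulated critical values with $M_n\to\infty$ draws, I would add the Glivenko--Cantelli convergence of the empirical quantile, which converges to the true quantile because of isolation.

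The main obstacle is the critical-value step. The quantile is a functional of a conditional law that depends on the random path, so the coupling through a common $Z$ has to be set up carefully, and the almost-sure isolation condition has to hold uniformly in $\lambda$ so that Lemma~\ref{lem:conditional-quantile-continuity} applies path by path.
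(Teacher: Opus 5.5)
Your architecture matches the paper's: a Skorokhod--Dudley representation, a common-$Z$ coupling of the conditional simulations so that Lemma~\ref{lem:conditional-quantile-continuity} applies path by path, the no-boundary condition to make the rejection indicators converge uniformly in $\lambda$, and the conditional representation from Proposition~\ref{prop:testing} to compute the limiting rejection probability. Everything up to the indicator convergence in Step~3 is fine.

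\textbf{The gap: $P[D_\lambda>0]=\zeta$.} You identify this by asserting that quantile isolation implies continuity of $F_{\lambda,h_{a_0},\Sigma}$ at $c_{\zeta,\lambda}$. That implication is false. Suppose the conditional law of $T_\lambda$ has an atom at $c$ with $F(c-)=1-\zeta-\delta$ and $F(c)=1-\zeta+\delta$. Then:
\begin{itemize}
\item $c$ is the left $(1-\zeta)$-quantile;
\item both isolation inequalities hold for every $\varepsilon>0$, with margin at least $\delta$;
\item yet the nonrandomized conditional rejection probability is $1-F(c)=\zeta-\delta<\zeta$.
\end{itemize}
Isolation rules out flat stretches of $F$ at level $1-\zeta$, not atoms. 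This is why the statement lists continuity and strict crossing as two separate grid-point conditions.

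\textbf{The fix.} Use the no-boundary condition, which you currently invoke only for the indicator step; on the grid, use $P[D_\lambda=0]=0$. Since $c_{\zeta,\lambda}(h_{a_0},\Sigma)$ is a function of the conditioning variables, Proposition~\ref{prop:testing} gives
\[
P(D_\lambda=0\mid h_{a_0})=F(c_\lambda)-F(c_\lambda-),
\]
so $P(D_\lambda=0)=0$ forces this jump to vanish almost surely. Right continuity and the left-quantile definition give $F(c_\lambda-)\le 1-\zeta\le F(c_\lambda)$. Together these yield $F(c_\lambda)=1-\zeta$, hence conditional rejection probability $\zeta$, and then $\zeta$ unconditionally by iterated expectations. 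This is the paper's argument.

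\textbf{A smaller point on simulated critical values.} A plain Glivenko--Cantelli theorem for a fixed law does not directly apply, because the sampled conditional law $F_{n,\lambda}$ changes with $n$. You need a distribution-free bound such as Dvoretzky--Kiefer--Wolfowitz, $P\bigl(\sup_t|\widehat F_{n,\lambda}(t)-F_{n,\lambda}(t)|>\delta\mid X\bigr)\le 2\exp(-2M_n\delta^2)$, with a union bound over the finite grid. You also need isolation transferred from the limit law to $F_{n,\lambda}$ with probability tending to one, which your coupling supplies. The paper handles this step the same way.
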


\begin{proof}[Proof of Proposition \ref{prop:feasible-testing}]
Because \(\Sigma(a_0,a_0)\) is nonsingular and matrix inversion is continuous on the nonsingular matrices, convergence of \(\widehat\Sigma_n(a_0,a_0)\) implies that \(\widehat\Sigma_n(a_0,a_0)\) is nonsingular with probability tending to one.  Values assigned off this event do not affect weak limits.  On the nonsingular event, the residual-path map \((x,K)\mapsto x(\cdot)-K(\cdot,a_0)K(a_0,a_0)^{-1}x(a_0)\) is continuous at $(g,\Sigma)$ in the uniform norm. Indeed, if \((x_r,K_r)\to(x,K)\) uniformly and \(K_{r,00}=K_r(a_0,a_0)\), then
\[
\begin{aligned}
&\left\|x_r(\cdot)-K_r(\cdot,a_0)K_{r,00}^{-1}x_r(a_0)-x(\cdot)+K(\cdot,a_0)K_{00}^{-1}x(a_0)\right\|_\infty \\
&\quad\le \|x_r-x\|_\infty+\|K_r(\cdot,a_0)\|_\infty\|K_{r,00}^{-1}\|\|x_r(a_0)-x(a_0)\| \\
&\qquad +\|K_r(\cdot,a_0)-K(\cdot,a_0)\|_\infty\|K_{r,00}^{-1}\|\|x(a_0)\| \\
&\qquad +\|K(\cdot,a_0)\|_\infty\|K_{r,00}^{-1}-K_{00}^{-1}\|\|x(a_0)\|,
\end{aligned}
\]
which tends to zero because inversion is continuous at the nonsingular matrix \(K_{00}=K(a_0,a_0)\). The projection map \(K\mapsto K(\cdot,a_0)K(a_0,a_0)^{-1}\) is continuous by the same two matrix terms, with the \(x(a_0)\) factors omitted. Therefore \((h_{n,a_0},\widehat\Sigma_n)\dto(h_{a_0},\Sigma)\) jointly with \((g_n,\widehat\Sigma_n)\dto(g,\Sigma)\).

We next verify continuity of the simulated conditional law.  Fix a deterministic continuity point $(h,K)$ with \(K_{00}=K(a_0,a_0)\) nonsingular and take \((h_r,K_r)\to(h,K)\) uniformly.  The projection matrices satisfy
\[
\|V_{K_r}-V_K\|_\infty
\le
\|K_r(\cdot,a_0)-K(\cdot,a_0)\|_\infty\|K_{r,00}^{-1}\|
+\|K(\cdot,a_0)\|_\infty\|K_{r,00}^{-1}-K_{00}^{-1}\|,
\]
so \(V_{K_r}\to V_K\) uniformly.  On a common auxiliary probability space, write \(\xi_r=K_{r,00}^{1/2}Z\) and \(\xi=K_{00}^{1/2}Z\), where \(Z\sim N(0,I_{k_W})\).  Continuity of the symmetric matrix square-root map on positive definite matrices gives \(\xi_r\to\xi\) almost surely, and
\[
\|h_r+V_{K_r}\xi_r-h-V_K\xi\|_\infty
\le
\|h_r-h\|_\infty+\|V_{K_r}-V_K\|_\infty\|\xi_r\|+\|V_K\|_\infty\|\xi_r-\xi\|.
\]
Hence \(h_r+V_{K_r}\xi_r\to h+V_K\xi\) in \(\ell^\infty(\calW)\) almost surely.  The statistic-continuity requirement gives
\(\sup_{\lambda\in\Lambda}|T_\lambda(h_r+V_{K_r}\xi_r,K_r)-T_\lambda(h+V_K\xi,K)|\to0\)
in probability under the simulation draw.  Lemma \ref{lem:conditional-quantile-continuity} then gives \(\sup_{\lambda\in\Lambda}|c_{\zeta,\lambda}(h_r,K_r)-c_{\zeta,\lambda}(h,K)|\to0\).  Applying these deterministic continuity statements on the almost-sure support of \((h_{a_0},\Sigma)\), the continuous mapping theorem gives joint convergence in \(\ell^\infty(\Lambda,\mathbb R^2)\) of the statistic and critical-value paths to \(\lambda\mapsto(T_\lambda(g,\Sigma),c_{\zeta,\lambda}(h_{a_0},\Sigma))\).

It remains to prove uniform size, so define \(R_{n,\lambda}=T_\lambda(g_n,\widehat\Sigma_n)-c_{n,\zeta,\lambda}\) and \(R_\lambda=T_\lambda(g,\Sigma)-c_{\zeta,\lambda}(h_{a_0},\Sigma)\).  The preceding convergence gives \(R_n\dto R\) in \(\ell^\infty(\Lambda)\).  Because the limit law is tight and may be taken to have separable support, the Skorokhod--Dudley representation theorem applies along any subsequence.  On such a representation, with the same marginal laws, \(\|R_n-R\|_\infty\to0\) almost surely.  For any \(\eta>0\) and any \(\lambda\), the events \([R_{n,\lambda}>0]\) and \([R_\lambda>0]\) can differ only if \(\|R_n-R\|_\infty>\eta\) or \(|R_\lambda|\le\eta\).  Hence
\[
        \sup_{\lambda\in\Lambda}|P(R_{n,\lambda}>0)-P(R_\lambda>0)|
        \le P(\|R_n-R\|_\infty>\eta)
        +P\left(\inf_{\lambda\in\Lambda}|R_\lambda|\le\eta\right).
\]
The first probability tends to zero for every fixed \(\eta\).  The no-boundary condition implies \(P(\inf_{\lambda\in\Lambda}|R_\lambda|\le\eta)\to0\) as \(\eta\downarrow0\).  Since the subsequence was arbitrary, \(\sup_{\lambda\in\Lambda}|P(R_{n,\lambda}>0)-P(R_\lambda>0)|\to0\).

By Proposition \ref{prop:testing}, conditionally on the residual path and covariance, \(c_{\zeta,\lambda}(h_{a_0},\Sigma)\) is the left \((1-\zeta)\)-quantile of the conditional null law of \(T_\lambda(g,\Sigma)\).  Let \(F_\lambda(t)=P(T_\lambda(g,\Sigma)\le t\mid h_{a_0},\Sigma)\) and \(c_\lambda=c_{\zeta,\lambda}(h_{a_0},\Sigma)\).  The no-boundary condition implies \(P(R_\lambda=0)=0\) for each fixed \(\lambda\), hence \(P(R_\lambda=0\mid h_{a_0},\Sigma)=0\) almost surely. Thus the conditional distribution has no atom at \(c_\lambda\); since \(c_\lambda\) is the left \((1-\zeta)\)-quantile, \(F_\lambda(c_\lambda)=1-\zeta\). Hence \(P(R_\lambda>0\mid h_{a_0},\Sigma)=P(T_\lambda(g,\Sigma)>c_\lambda\mid h_{a_0},\Sigma)=1-F_\lambda(c_\lambda)=\zeta\) almost surely for every \(\lambda\).  Taking expectations gives \(P(R_\lambda>0)=\zeta\), and the preceding uniform bound proves the asserted uniform asymptotic size.

On a finite reported grid \(\Lambda_G\), the same argument is pointwise at each \(\lambda\in\Lambda_G\).  The grid-point no-boundary condition gives \(P(R_\lambda=0)=0\) for each reported \(\lambda\), and the grid-point quantile-isolation condition gives convergence of the corresponding critical value.  Since the grid is finite, taking the maximum over \(\Lambda_G\) preserves convergence, so these pointwise conditions are sufficient on the grid.

Finally, for simulated critical values on \(\Lambda_G\), let \(M_n\) be the number of conditional simulation draws, let \(F_{n,\lambda}\) be the conditional distribution function of \(T_\lambda(h_{n,a_0}+V_{\widehat\Sigma_n}\xi,\widehat\Sigma_n)\), and let \(\widehat F_{n,\lambda}\) be the empirical distribution function from the \(M_n\) draws.  Fix \(\varepsilon>0\); by the gridwise quantile-isolation condition and convergence of the feasible conditional laws, there are events \(A_n\) with \(P(A_n)\to1\) and a deterministic \(\delta_\varepsilon>0\) such that, on \(A_n\), for every \(\lambda\in\Lambda_G\),
\[
       F_{n,\lambda}(c_{n,\zeta,\lambda}+\varepsilon)>1-\zeta+2\delta_\varepsilon,
       \qquad
       F_{n,\lambda}(c_{n,\zeta,\lambda}-\varepsilon)<1-\zeta-2\delta_\varepsilon .
\]
Conditional on the data, the Dvoretzky--Kiefer--Wolfowitz inequality and a union bound give
\[
       P\!\left(\max_{\lambda\in\Lambda_G}\sup_t|\widehat F_{n,\lambda}(t)-F_{n,\lambda}(t)|>\delta_\varepsilon\mid X\right)
       \le 2|\Lambda_G|\exp(-2M_n\delta_\varepsilon^2)=o(1).
\]
On the intersection of these events, \(\widehat F_{n,\lambda}(c_{n,\zeta,\lambda}+\varepsilon)>1-\zeta+\delta_\varepsilon\) and \(\widehat F_{n,\lambda}(c_{n,\zeta,\lambda}-\varepsilon)<1-\zeta-\delta_\varepsilon\) for every \(\lambda\in\Lambda_G\). By the definition \(\widehat c_{n,\zeta,\lambda}=\inf\{t:\widehat F_{n,\lambda}(t)\ge1-\zeta\}\), the first inequality gives \(\widehat c_{n,\zeta,\lambda}\le c_{n,\zeta,\lambda}+\varepsilon\), while the second gives \(\widehat c_{n,\zeta,\lambda}\ge c_{n,\zeta,\lambda}-\varepsilon\). Therefore \(c_{n,\zeta,\lambda}-\varepsilon\le\widehat c_{n,\zeta,\lambda}\le c_{n,\zeta,\lambda}+\varepsilon\) for all \(\lambda\in\Lambda_G\).  Since \(\varepsilon\) is arbitrary, \(\max_{\lambda\in\Lambda_G}|\widehat c_{n,\zeta,\lambda}-c_{n,\zeta,\lambda}|=o_p(1)\).  Slutsky's theorem applied on the finite grid gives the same size conclusion for the simulated critical values.
\end{proof}

\medskip

\subsection{Auxiliary calculations used in the text}

\paragraph{Finite weak-parameter-set likelihood.} When the weak parameter space is finite, the calculation used in the main text follows from a direct normal integration.  Here \(\gamma^k\) is a candidate retained weak value from the finite weak parameter set. Fix a candidate \(\gamma^k\) and partition \(g=(g_k',g_{-k}')'\).  Under this candidate, the mean vector has the form \(m=(0,\mu_{-k}')'\).  Partition the covariance matrix into blocks \(\Sigma_{kk}\), \(\Sigma_{k,-k}\), \(\Sigma_{-k,k}\), and \(\Sigma_{-k,-k}\), and define \(\Omega_{-k\mid k}=\Sigma_{-k,-k}-\Sigma_{-k,k}\Sigma_{kk}^{-1}\Sigma_{k,-k}\).
Then the joint normal density factors as \(f(g\mid \gamma^k,\mu_{-k}) = f_k(g_k) f_{-k\mid k}(g_{-k}\mid g_k,\mu_{-k}),\)
where \(f_k(g_k)=(2\pi)^{-q_k/2}|\Sigma_{kk}|^{-1/2} \exp[-g_k'\Sigma_{kk}^{-1}g_k/2]\)
and \(f_{-k\mid k}(g_{-k}\mid g_k,\mu_{-k}) =\varphi_{\Omega_{-k\mid k}} (g_{-k}-\mu_{-k}-\Sigma_{-k,k}\Sigma_{kk}^{-1}g_k).\)
Here \(\varphi_\Omega\) denotes the centered normal density with covariance \(\Omega\).  Integrating the conditional density with respect to Lebesgue measure over \(\mu_{-k}\) gives \(\int \varphi_{\Omega_{-k\mid k}} (g_{-k}-\mu_{-k}-\Sigma_{-k,k}\Sigma_{kk}^{-1}g_k)d\mu_{-k}=1,\)
by the change of variables \(u=g_{-k}-\mu_{-k}-\Sigma_{-k,k}\Sigma_{kk}^{-1}g_k\).  Therefore, after dropping constants common across candidates of the same dimension, \(\ell_\infty(\gamma^k;g) \propto |\Sigma_{kk}|^{-1/2} \exp[-g_k'\Sigma_{kk}^{-1}g_k/2],\)
which is the proportionality used in the main text.

\paragraph{Normal prior.} The normal prior formula is obtained by completing the square.  Multiplying a \(N(\mu_\alpha,V_\alpha)\) prior density by \(\exp(-\lambda(\alpha-h_c)'P(\alpha-h_c)/2)\) gives the exponent \(-\frac12(\alpha-\mu_\alpha)'V_\alpha^{-1}(\alpha-\mu_\alpha)-\frac\lambda2(\alpha-h_c)'P(\alpha-h_c)=-\frac12\alpha'A_\lambda\alpha+b_\lambda'\alpha+C_\lambda\),
where \(A_\lambda=V_\alpha^{-1}+\lambda P\), \(b_\lambda=V_\alpha^{-1}\mu_\alpha+\lambda Ph_c\), and \(C_\lambda\) does not depend on \(\alpha\).  If \(A_\lambda\) is positive definite, then \(-\frac12\alpha'A_\lambda\alpha+b_\lambda'\alpha =-\frac12(\alpha-M_\lambda)'A_\lambda(\alpha-M_\lambda) +\frac12b_\lambda'A_\lambda^{-1}b_\lambda\), with \(M_\lambda=A_\lambda^{-1}b_\lambda\).
Thus the covariance and mean are \(V_\lambda=A_\lambda^{-1}\) and \(M_\lambda=V_\lambda(V_\alpha^{-1}\mu_\alpha+\lambda Ph_c)\).  With a flat prior density limit and positive definite \(P\), set \(V_\alpha^{-1}=0\); the same completion-of-squares calculation gives mean \(h_c\) and covariance \((\lambda P)^{-1}\).

\paragraph{Common-mean marginal prior.} The common-mean marginal prior follows from writing \(\alpha_W=\theta\one+\varepsilon\), where \(\varepsilon\mid\theta\sim N(0,\lambda^{-1}D_w^{-1})\) and is independent of \(\theta\sim N(\theta_0,\sigma_\theta^2)\).  Therefore \(\mathbb{E}\alpha_W=\theta_0\one\) and \(\text{Var}(\alpha_W)=\sigma_\theta^2\one\one'+\lambda^{-1}D_w^{-1}\).  In the equal-weight case \(D_w=I_J\), the Woodbury identity gives \([\sigma_\theta^2\one\one'+\lambda^{-1}I_J]^{-1} =\lambda I_J- \frac{\lambda^2\sigma_\theta^2}{1+\lambda\sigma_\theta^2J}\one\one'.\)
As \(\sigma_\theta^2\to\infty\), this precision converges to \(\lambda(I_J-J^{-1}\one\one')\).  Thus the finite part of the diffuse-common-mean precision penalizes only deviations from the span of \(\one\).

\paragraph{Distance to the pooling manifold.} The distance bound following Theorem~\ref{thm:uniform-local} is a spectral calculation: let \(P=P'\succeq0\), let \(\mathcal N(P)=\{x:Px=0\}\), and let \(\rho_P\) be the smallest positive eigenvalue of \(P\).  Decompose \(x=\alpha_W-\bar\alpha_W\) as \(x=x_0+x_1\), with \(x_0\in\mathcal N(P)\) and \(x_1\perp\mathcal N(P)\).  Then \(\operatorname{dist}(\alpha_W,\bar\alpha_W+\mathcal N(P))=\|x_1\|\), while \(\|Px\|=\|Px_1\|\ge\rho_P\|x_1\|\).  Hence \(\operatorname{dist}(\alpha_W,\bar\alpha_W+\mathcal N(P))\le\rho_P^{-1}\|P(\alpha_W-\bar\alpha_W)\|\).

\section{Additional moment details}\label{app:additional-moment-maps}

\paragraph{Minimum-distance and auxiliary-statistic maps.}
Let \(\hat h_{j,n_j}\) be an empirical auxiliary object and let \(h_j(\alpha_j)\) be its model counterpart, both taking values in a finite-dimensional Euclidean space or in an empirical discrepancy space with a specified inner product.  Let \(r_{j,n_j}\) be the rate at which the auxiliary object is localized, equal to \(\sqrt{n_j}\) in the regular finite-dimensional case.  With positive semidefinite weight \(W_{j,n_j}\), set
\[
G^{\mathrm{md}}_{j,n_j}(x_j,\alpha_j)
=
r_{j,n_j}(\hat h_{j,n_j}-h_j(\alpha_j)),
\qquad
q^{\mathrm{md}}_{2,j,n_j}(x_j,\alpha_j)
=
-\frac12\|G^{\mathrm{md}}_{j,n_j}(x_j,\alpha_j)\|_{W_{j,n_j}}^2 .
\]
This construction is the usual GMM construction when \(\hat h_{j,n_j}\) is a vector of sample moments, and it is minimum distance when \(\hat h_{j,n_j}\) is a fitted distribution, quantile function, impulse response, or other auxiliary estimate.  No likelihood or additional sampling model is introduced.

The verification follows from the scaled map: in the regular case \(r_{j,n_j}=\sqrt{n_j}\).  For strong identification, if \(G^{\mathrm{md}}_{j,n_j}(\alpha_{j0}+h/\sqrt{n_j}) =G^{\mathrm{md}}_{j,n_j}(\alpha_{j0})-D_jh+o_p(1)\)
uniformly for bounded \(h\), \(G^{\mathrm{md}}_{j,n_j}(\alpha_{j0})=O_p(1)\), \(W_{j,n_j}\to W_j\), and \(D_j^\top W_jD_j\succ0\), then \(q^{\mathrm{md}}_{2,j,n_j}(\alpha_{j0}+h/\sqrt{n_j})-q^{\mathrm{md}}_{2,j,n_j}(\alpha_{j0})=h'D_j^\top W_jG^{\mathrm{md}}_{j,n_j}(\alpha_{j0})-h'D_j^\top W_jD_jh/2+o_p(1)\), uniformly for bounded \(h\). Thus the local score is \(D_j^\top W_jG^{\mathrm{md}}_{j,n_j}(\alpha_{j0})\) and the curvature is \(D_j^\top W_jD_j\).  For weak identification, if along the weak parameterization \(G^{\mathrm{md}}_{j,n_j}[\mathcal T_j(\gamma_j,n_j^{-1/2}h)] =G^{0,\mathrm{md}}_{j,n_j}(\gamma_j)-D_j^\perp(\gamma_j)h+o_p(1)\), and \(G^{0,\mathrm{md}}_{j,n_j}\Rightarrow \mathbb G_j+\tau_j\) in \(\ell^\infty(\Gamma_j)\), then completing the same square in \(h\) gives the Schur-complement reduced objective function in Lemma~\ref{lem:weak-laplace-reduction}.  A minimum-distance problem therefore enters the QBHM theory only through the map, weight, and local expansion conditions already used for GMM.

\paragraph{Score and likelihood connections.}
Likelihood and quasi-likelihood enter through score moments or through the local quadratic experiment generated by the likelihood.  For observation log likelihood \(\ell_{j,n_j}(X_{ji},\alpha_j)\), define the score moment as \(G^{\mathrm{score}}_{j,n_j}=n_j^{-1/2}\sum_{i=1}^{n_j}\partial_{\alpha_j}\ell_{j,n_j}(X_{ji},\alpha_j)\), and use the score-based GMM objective function \(-\|G^{\mathrm{score}}_{j,n_j}\|_{W_{j,n_j}}^2/2\).  A global log-likelihood quasi-posterior is not the same object unless the researcher chooses the score-based GMM objective; the connection is local, so suppress the group index and write \(h=\sqrt n(\alpha-\alpha_0)\).  If the local asymptotic normality (LAN) expansion has quadratic part \(h^\top\Delta_n-h^\top I h/2\), uniformly for bounded \(h\), with \(I\succ0\) and \(\Delta_n=n^{-1/2}\sum_i\partial_\alpha\ell_n(X_i,\alpha_0)\Rightarrow N(0,I)\), then the score expansion is \(G^{\mathrm{score}}_n(\alpha_0+h/\sqrt n)=\Delta_n-Ih+o_p(1)\).  With efficient score-based GMM weight \(W=I^{-1}\), the algebra \(-\frac12(\Delta_n-Ih)'I^{-1}(\Delta_n-Ih)=h'\Delta_n-h'Ih/2-\Delta_n'I^{-1}\Delta_n/2\) shows that the corresponding GMM objective function equals the same local quadratic likelihood shape plus the additive constant \(-\Delta_n'I^{-1}\Delta_n/2\) and an \(o_p(1)\) remainder.
The final term is free of \(h\) and cancels from normalized local quasi-posteriors.  Hence the likelihood and efficient score-based GMM quasi-posteriors have the same local parameter law, local quasi-posterior mean, and local quadratic shape.  Under misspecification, the same calculation uses the quasi-likelihood score and curvature in place of \((\Delta_n,I)\).

\section{Monte Carlo simulation details}\label{app:mc}

The Monte Carlo designs in Section~\ref{sec:mc} use the same group-specific objective function for the unpooled estimator and for the lower level of the QBHM.  This appendix reports the simulation parameter values, hierarchical priors, the $\lambda$-grid results, and the nonconvex production-function simulation calibrated to the empirical application.

Table~\ref{tab:simple_design_parameters} reports details for the instrumental-variable simulations in Section~\ref{sec:mc}.  The model estimates the parameters of a Cobb--Douglas production function.  Three designs are run, each varying a different component of the setting.  The instrument-strength grid varies instrument strength, the \(\lambda\)-grid varies the pooling strength, and the sample-size grid varies group sample size.

Table~\ref{tab:lowtech_lambda_sweep_overall} reports results from the \(\lambda\)-grid, where the pooling strength varies in the strong-instrument, large-group-size setting.  As \(\lambda\) decreases, the RMSE of the QBHM converges to the performance of the unpooled estimator.

\begin{table}[htbp]
\centering
\footnotesize
\setlength{\tabcolsep}{6pt}
\renewcommand{\arraystretch}{1.14}
\caption{Simple simulation designs: common parameters, priors, and varied settings}
\label{tab:simple_design_parameters}
\begin{tabular}{@{}C{0.28\textwidth} C{0.64\textwidth}@{}}
\toprule
\textbf{Component} & \textbf{Specification} \\
\midrule
Model & Simple Cobb--Douglas production function, $Y_{ig}=a_{1g}K_{ig}^{\rho_g}\exp(\varepsilon_{ig})$ \\
Instrument vector & $Z_i=(1,\mathrm{treatment}_i)$ \\
Synthetic groups & 10 groups \\
Baseline resampling & Empirical resampling from the real full-model dataset with log-normal capital jitter, $\texttt{jitter\_sd}=0.02$ \\
Capital transition & $\log K^{post}_{ig}=\bar{\log K}^{pre}+\delta_T T_{ig}+\delta_P(\log K^{pre}_{ig}-\bar{\log K}^{pre})+\nu_{ig}$ \\
Capital persistence & $\delta_P=0.7$ \\
Capital shock & $\nu_{ig}\sim\mathcal{N}(0,\sigma_\nu^2)$ \\
Revenue shocks & Student-$t$ with 3 degrees of freedom, scaled to $\sigma_{\varepsilon,\mathrm{pre}}=\sigma_{\varepsilon,\mathrm{post}}=0.3$ \\
Data-generating process for $\log a_{1g}$ & $\log a_{1g}\sim\mathcal{N}(\log(0.8),0.3^2)$ \\
Data-generating process for $\rho_g$ & $\operatorname{logit}(\rho_g)\sim\mathcal{N}(\operatorname{logit}(0.3),0.4^2)$ \\
\midrule
\multicolumn{2}{@{}l}{\textbf{Varied designs}} \\
\midrule
Instrument-strength grid & $\lambda=1$, $n_g=1000$ for all groups; $(\delta_T,\sigma_\nu)\in\{(0.50,0.40),(0.06,1.10),(0.03,1.40)\}$ for very strong, weak, and very weak instruments \\
$\lambda$-grid & Very strong instruments fixed at $(\delta_T,\sigma_\nu)=(0.50,0.40)$ with $n_g=1000$; pooling strength $\lambda\in\{1,1/2,1/3\}$ \\
Sample-size grid & Very strong instruments fixed at $(\delta_T,\sigma_\nu)=(0.50,0.40)$ with $\lambda=1$; $n_g\in\{1000,50,20\}$ for all 10 groups \\
\midrule
\multicolumn{2}{@{}l}{\textbf{QBHM priors}} \\
\midrule
Prior for $\log a_{1,\mu}$ & $\log a_{1,\mu}\sim\mathcal{N}(\log(1.0),1.0^2)$ \\
Prior for $\sigma_{a_1}$ & $\sigma_{a_1}\sim\mathcal{N}^{+}(0,0.7^2)$ \\
Prior for $\mu_\rho$ & $\mu_\rho\sim\mathcal{N}(\operatorname{logit}(0.35),1.0^2)$ \\
Prior for $\sigma_\rho$ & $\sigma_\rho\sim\mathcal{N}^{+}(0,0.5^2)$ \\
Group-level raw effects & $\log a_{1,g}^{raw}\sim\mathcal{N}(0,1)$ and $\rho_g^{raw}\sim\mathcal{N}(0,1)$ \\
Stan sampling & Full runs use 2 chains, 1000 warmup iterations, and 1000 sampling iterations per chain \\
\bottomrule
\end{tabular}
\end{table}

\begin{table}[htbp]
\centering
\footnotesize
\setlength{\tabcolsep}{5pt}
\renewcommand{\arraystretch}{1.12}
\caption{Simple scenario RMSE estimator performance for varying $\lambda$}
\label{tab:lowtech_lambda_sweep_overall}
\begin{tabular}{@{}C{1.5cm} C{1.7cm} S[table-format=1.3] S[table-format=1.3] S[table-format=1.3] S[table-format=1.3]@{}}
\toprule
\textbf{$\lambda$} & \textbf{Parameter} & {\shortstack{\textbf{RMSE}\\\textbf{QBHM}}} & {\shortstack{\textbf{RMSE}\\\textbf{Unpooled}}} & {\shortstack{\textbf{Coverage}\\\textbf{QBHM}}} & {\shortstack{\textbf{Coverage}\\\textbf{Unpooled}}} \\
\midrule
$1$ & $a_1$    & 0.564 & 0.269 & 0.962 & 0.940 \\
    & $\rho$ & 0.056 & 0.038 & 0.962 & 0.958 \\
\addlinespace
$1/2$ & $a_1$    & 0.380 & 0.269 & 0.970 & 0.940 \\
      & $\rho$ & 0.043 & 0.038 & 0.970 & 0.958 \\
\addlinespace
$1/3$ & $a_1$    & 0.308 & 0.269 & 0.971 & 0.940 \\
      & $\rho$ & 0.038 & 0.038 & 0.970 & 0.958 \\
\bottomrule
\end{tabular}
\end{table}

The following robustness table reports the prior-sensitivity check. The same simulation is rerun with priors centered at zero and prior standard deviations set to more than 150 times the true data-generating-process scale.  Panel A reports the varying-instrument-strength scenario, where the estimator comparison is unchanged although the wider priors increase QBHM RMSE.  Panel B reports the varying-sample-size scenario, where the QBHM has higher RMSE than the unpooled estimator for parameter \(a_1\) at \(\lambda=1\), while Panel C shows that the QBHM advantage returns at \(\lambda=1/2\).  The robustness results reflect the same bias--variance tradeoff emphasized in the main text: very diffuse priors reduce information about the hyperparameters and can change the amount of pooling favored by MSE.  The robustness check therefore supports reporting the full \(\lambda\)-path rather than treating one default pooling value as definitive.

\begin{table}[htbp]
\centering
\scriptsize
\caption{Model performance under very diffuse priors for varying instrument strength, group size, and pooling weight}
\label{tab:lowtech_instrument_verydiffuse_overall}
\begin{tabular}{ccccccc}
\toprule
Panel & Setting & Parameter & \shortstack{RMSE\\QBHM} & \shortstack{RMSE\\Unpooled} & \shortstack{Coverage\\QBHM} & \shortstack{Coverage\\Unpooled} \\
\midrule
A. Instrument strength & Very strong & $a_1$ & 0.385 & 0.207 & 0.981 & 0.949 \\
                       & Weak        & $a_1$ & 1.857 & 2.092 & 1.000 & 0.876 \\
                       & Very weak   & $a_1$ & 2.219 & 2.381 & 1.000 & 0.887 \\
                       & Very strong & $\rho$ & 0.052 & 0.035 & 0.980 & 0.961 \\
                       & Weak        & $\rho$ & 0.100 & 0.232 & 1.000 & 0.954 \\
                       & Very weak   & $\rho$ & 0.109 & 0.247 & 1.000 & 0.944 \\
\addlinespace
B. Group size, $\lambda=1$ & 1000 & $a_1$ & 0.396 & 0.213 & 0.983 & 0.946 \\
                           & 50   & $a_1$ & 2.470 & 1.930 & 0.999 & 0.872 \\
                           & 20   & $a_1$ & 3.178 & 2.610 & 0.999 & 0.834 \\
                           & 1000 & $\rho$ & 0.051 & 0.036 & 0.982 & 0.953 \\
                           & 50   & $\rho$ & 0.109 & 0.197 & 0.999 & 0.978 \\
                           & 20   & $\rho$ & 0.123 & 0.265 & 0.999 & 0.963 \\
\addlinespace
C. Group size, $\lambda=1/2$ & 1000 & $a_1$ & 0.275 & 0.213 & 0.981 & 0.946 \\
                           & 50   & $a_1$ & 1.507 & 1.930 & 0.996 & 0.872 \\
                           & 20   & $a_1$ & 2.254 & 2.610 & 0.998 & 0.834 \\
                           & 1000 & $\rho$ & 0.039 & 0.036 & 0.980 & 0.953 \\
                           & 50   & $\rho$ & 0.093 & 0.197 & 0.996 & 0.978 \\
                           & 20   & $\rho$ & 0.110 & 0.265 & 0.997 & 0.963 \\
\bottomrule
\end{tabular}
\end{table}

Table~\ref{tab:fullmodel_4strongiv_design_priors} reports details for the nonconvex production-function simulation described in Section~\ref{sec:structural}.  The capital transition equation is group-specific, and its parameters are calibrated to match the four source studies.  Thus the simulated first-stage strength of the treatment instrument approximately matches the first-stage strength in the real data.  Results are reported in Tables~\ref{tab:fullmodel_4strongiv_overall_rmse_coverage} and~\ref{tab:fullmodel_4strongiv_switching_accuracy}: Table~\ref{tab:fullmodel_4strongiv_overall_rmse_coverage} reports parameter-estimation performance, while Table~\ref{tab:fullmodel_4strongiv_switching_accuracy} reports the implied capital switching thresholds and the implied proportion of the sample below each switching threshold.  Columns 3 and 4 report averages across replications, and columns 5 and 6 report RMSE.  In some cases the unpooled estimator has an average estimate closer to the true value than the QBHM average estimate but still has greater RMSE because its estimates are more dispersed across replications.

\begin{table}[H]
\centering
\scriptsize
\setlength{\tabcolsep}{6pt}
\renewcommand{\arraystretch}{1.05}
\caption{Nonconvex production function simulation: common parameters, priors, and varied settings}
\label{tab:fullmodel_4strongiv_design_priors}
\begin{tabular}{@{}C{0.28\textwidth} C{0.64\textwidth}@{}}
\toprule
\textbf{Component} & \textbf{Specification} \\
\midrule
Scenario & Full-model synthetic design calibrated to the four retained strong-IV studies \\
Source studies & GE, AB, FY, RC \\
Sample sizes & GE = 3354; AB = 562; FY = 665; RC = 493 \\
Number of synthetic groups & 4 \\
Replications & 98 Monte Carlo replications. In each replication, households are resampled within study; the capital transition shock, revenue shocks, and group-level production parameters are redrawn.  \\
Baseline resampling & Bootstrap within each source study, preserving empirical baseline capital and treatment distribution \\
Baseline jitter & Log-normal jitter with \texttt{jitter\_sd = 0.02} \\
Instrument vector & $(1,\ \mathrm{treatment})$ \\
Pooling strength & $\lambda = 1$ \\
Capital transition mode & Group-specific \\
Capital transition equation & $\log K_{ig,\mathrm{post}} = \mu_g + \delta_{Tg}T_{ig} + \delta_{Pg}\left(\log K_{ig,\mathrm{pre}} - \overline{\log K}_{g,\mathrm{pre}}\right) + \nu_{ig}$, with $\nu_{ig} \sim \mathcal{N}(0,\sigma_{\nu g}^2)$ \\
GE transition & $\mu_g = 2.678279$, $\delta_{Tg} = 0.171453$, $\delta_{Pg} = 0.641673$, $\sigma_{\nu g} = 1.511258$ \\
AB transition & $\mu_g = 7.904821$, $\delta_{Tg} = 0.382707$, $\delta_{Pg} = 0.430722$, $\sigma_{\nu g} = 0.850527$ \\
FY transition & $\mu_g = 6.489923$, $\delta_{Tg} = 0.232051$, $\delta_{Pg} = 0.685259$, $\sigma_{\nu g} = 0.865686$ \\
RC transition & $\mu_g = 7.137737$, $\delta_{Tg} = 0.302037$, $\delta_{Pg} = 0.684101$, $\sigma_{\nu g} = 0.788914$ \\
Revenue shock distribution & Student-$t$ with 3 degrees of freedom \\
Revenue shock scale & $\sigma_{\varepsilon,\mathrm{pre}} = 0.3$, $\sigma_{\varepsilon,\mathrm{post}} = 0.3$ \\
Data-generating process for $\log a_{1g}$ & $\log a_{1g} \sim \mathcal{N}(\log(0.8),\,0.3^2)$ \\
Data-generating process for $\rho_g$ & $\operatorname{logit}(\rho_g) \sim \mathcal{N}(\operatorname{logit}(0.3),\,0.4^2)$ \\
Data-generating process for $\log a_{2g}$ & $\log a_{2g} \sim \mathcal{N}(\log(2.0),\,0.3^2)$ \\
Data-generating process for $K_g$ & $K_g \sim \mathcal{N}(250,\,25^2)$ \\
Stan settings & 2 chains; 1000 warmup and 1000 sampling iterations per chain \\
\midrule
\multicolumn{2}{@{}l}{\textbf{QBHM priors}} \\
\midrule
Prior for $\log a_{1,\mu}$ & $\log a_{1,\mu} \sim \mathcal{N}(\log(1.0),\,1.0^2)$ \\
Prior for $\sigma_{a_1}$ & $\sigma_{a_1} \sim \mathcal{N}^{+}(0,\,0.7^2)$ \\
Group-level raw effects for $a_1$ & $z_{a_1,g} \sim \mathcal{N}(0,\,1)$ \\
Prior for $\mu_\rho$ & $\mu_\rho \sim \mathcal{N}(\operatorname{logit}(0.35),\,1.0^2)$ \\
Prior for $\sigma_\rho$ & $\sigma_\rho \sim \mathcal{N}^{+}(0,\,0.5^2)$ \\
Group-level raw effects for $\rho$ & $z_{\rho,g} \sim \mathcal{N}(0,\,1)$ \\
Prior for $\log a_{2,\mu}$ & $\log a_{2,\mu} \sim \mathcal{N}(\log(2.5),\,1.0^2)$ \\
Prior for $\sigma_{a_2}$ & $\sigma_{a_2} \sim \mathcal{N}^{+}(0,\,0.7^2)$ \\
Group-level raw effects for $a_2$ & $z_{a_2,g} \sim \mathcal{N}(0,\,1)$ \\
Prior for $K_\mu$ & $K_\mu \sim \mathcal{N}^{+}(250,\,250^2)$ \\
Prior for $\sigma_K$ & $\sigma_K \sim \mathcal{N}^{+}(0,\,125^2)$ \\
Group-level raw effects for $K$ & $z_{K,g} \sim \mathcal{N}^{+}(0,\,1)$ \\
\bottomrule
\end{tabular}
\end{table}

\begin{table}[H]
\centering
\footnotesize
\setlength{\tabcolsep}{5pt}
\renewcommand{\arraystretch}{1.12}
\caption{Full-model synthetic estimator performance over 98 completed real-calibrated four-study replications}
\label{tab:fullmodel_4strongiv_overall_rmse_coverage}
\begin{tabular}{@{}C{2.5cm} S[table-format=2.3] S[table-format=2.3] S[table-format=1.3] S[table-format=1.3]@{}}
\toprule
\textbf{Parameter} & {\shortstack{\textbf{RMSE}\\\textbf{QBHM}}} & {\shortstack{\textbf{RMSE}\\\textbf{Unpooled}}} & {\shortstack{\textbf{Coverage}\\\textbf{QBHM}}} & {\shortstack{\textbf{Coverage}\\\textbf{Unpooled}}} \\
\midrule
$a_1$ & 0.358 & 1.061 & 0.995 & 0.268 \\
$\rho$ & 0.097 & 0.121 & 1.000 & 0.240 \\
$a_2$ & 0.472 & 0.674 & 0.982 & 0.704 \\
$K_{\mathrm{thresh}}$ & 48.668 & 51.930 & 0.964 & 0.727 \\
\bottomrule
\end{tabular}
\end{table}

\begin{table}[H]
\centering
\footnotesize
\setlength{\tabcolsep}{4pt}
\renewcommand{\arraystretch}{1.12}
\caption{Synthetic full-model switching-point accuracy over 98 completed real-calibrated four-study replications}
\label{tab:fullmodel_4strongiv_switching_accuracy}
\begin{tabular}{@{}C{1.4cm} C{1.7cm} S[table-format=3.3] S[table-format=3.3] S[table-format=1.3] S[table-format=2.3] S[table-format=1.3]@{}}
\toprule
\textbf{Group} & \textbf{Estimator} & {\textbf{$K_{\mathrm{thresh}}$}} & {\textbf{Switching capital}} & {\textbf{Prop. below}} & {\textbf{RMSE switch}} & {\textbf{RMSE prop.}} \\
\midrule
GE & True & 251.797 & 254.493 & 0.943 & 0.000 & 0.000 \\
GE & QBHM & 221.153 & 223.244 & 0.934 & 45.791 & 0.014 \\
GE & Unpooled & 232.840 & 235.415 & 0.937 & 46.487 & 0.015 \\
AB & True & 247.056 & 249.785 & 0.006 & 0.000 & 0.000 \\
AB & QBHM & 215.837 & 217.028 & 0.004 & 46.240 & 0.003 \\
AB & Unpooled & 243.356 & 243.609 & 0.006 & 48.139 & 0.003 \\
FY & True & 248.053 & 250.969 & 0.204 & 0.000 & 0.000 \\
FY & QBHM & 206.287 & 207.784 & 0.167 & 54.533 & 0.048 \\
FY & Unpooled & 196.804 & 198.565 & 0.159 & 63.209 & 0.056 \\
RC & True & 246.000 & 248.589 & 0.089 & 0.000 & 0.000 \\
RC & QBHM & 208.871 & 210.222 & 0.073 & 52.245 & 0.023 \\
RC & Unpooled & 218.935 & 220.079 & 0.079 & 51.713 & 0.021 \\
\bottomrule
\end{tabular}
\end{table}

\section{Empirical priors and sensitivity tables}\label{app:emp}

This appendix reports the hierarchical priors and sensitivity checks for the empirical application in Section~\ref{sec:structural}.  Table~\ref{tab:real_fullmodel_priors} reports the priors used for the nonconvex production-function QBHM in the primary specification.  The parameter estimates from this model are reported in Table~\ref{tab:real_fullmodel_4strongiv_parameter_estimates}.  The QBHM interval reported in the tables is the central 95\% quasi-posterior interval, which is a quasi-posterior summary of the hierarchical estimator, while the unpooled estimator uses a Wald interval; these appendix intervals should be read together with the weak-identification inference discussion in Section~\ref{sec:testing}.

\begin{table}[H]
\centering
\footnotesize
\setlength{\tabcolsep}{6pt}
\renewcommand{\arraystretch}{1.14}
\caption{Prior specification for the empirical full-model QBHM (\(\lambda = 1\))}
\label{tab:real_fullmodel_priors}
\begin{tabular}{@{}L{0.34\textwidth} L{0.56\textwidth}@{}}
\toprule
\textbf{Component} & \textbf{Prior} \\
\midrule
Prior for \(\log a_{1,\mu}\) & \(\log a_{1,\mu} \sim \mathcal{N}(\log(1.0), 1.0^2)\) \\
Prior for \(\sigma_{a_1}\) & \(\sigma_{a_1} \sim \mathcal{N}^{+}(0, 0.7^2)\) \\
Prior for \(\mu_{\rho}\) & \(\mu_{\rho} \sim \mathcal{N}(\mathrm{logit}(0.35), 1.0^2)\) \\
Prior for \(\sigma_{\rho}\) & \(\sigma_{\rho} \sim \mathcal{N}^{+}(0, 0.5^2)\) \\
Prior for \(\log a_{2,\mu}\) & \(\log a_{2,\mu} \sim \mathcal{N}(\log(2.5), 1.0^2)\) \\
Prior for \(\sigma_{a_2}\) & \(\sigma_{a_2} \sim \mathcal{N}^{+}(0, 0.7^2)\) \\
Prior for \(K_{\mu}\) & \(K_{\mu} \sim \mathcal{N}(250, 250^2)\) \\
Prior for \(K_{\sigma}\) & \(K_{\sigma} \sim \mathcal{N}^{+}(0, 125^2)\) \\
\bottomrule
\end{tabular}
\end{table}
\normalsize

\begin{table}[H]
\centering
\scriptsize
\setlength{\tabcolsep}{3.5pt}
\renewcommand{\arraystretch}{1.12}
\caption{Real-data full-model parameter estimates for the four retained studies}
\label{tab:real_fullmodel_4strongiv_parameter_estimates}
\begin{tabular}{@{}ccS[table-format=3.3] C{2.6cm} S[table-format=3.3] C{2.6cm}@{}}
\toprule
\textbf{Group} & \textbf{Parameter} & {\textbf{QBHM}} & \textbf{95\% interval} & {\textbf{Unpooled}} & \textbf{95\% interval} \\
\midrule
GE & $a_1$ & 91.772 & [82.186, 102.171] & 150.018 & [138.223, 161.812] \\
GE & $\rho$ & 0.631 & [0.596, 0.667] & 0.403 & [0.384, 0.421] \\
GE & $a_2$ & 0.685 & [0.048, 3.516] & 14.776 & [11.585, 17.966] \\
GE & $K_{\mathrm{thresh}}$ & 435.655 & [66.821, 908.342] & 71.078 & [67.922, 74.234] \\
AB & $a_1$ & 181.845 & [73.639, 400.824] & 0.057 & [0.057, 0.057] \\
AB & $\rho$ & 0.613 & [0.507, 0.708] & 0.350 & [0.350, 0.350] \\
AB & $a_2$ & 0.456 & [0.058, 2.516] & 6.659 & [6.181, 7.137] \\
AB & $K_{\mathrm{thresh}}$ & 439.799 & [66.146, 918.579] & 0.330 & [0.330, 0.330] \\
FY & $a_1$ & 2.967 & [1.178, 9.955] & 31.885 & [0.000, 146.818] \\
FY & $\rho$ & 0.592 & [0.327, 0.726] & 0.044 & [0.029, 0.059] \\
FY & $a_2$ & 0.152 & [0.069, 0.217] & 0.161 & [0.000, 0.328] \\
FY & $K_{\mathrm{thresh}}$ & 446.301 & [66.819, 955.756] & 16.151 & [14.675, 17.627] \\
RC & $a_1$ & 12.952 & [5.903, 40.650] & 0.369 & [0.369, 0.369] \\
RC & $\rho$ & 0.580 & [0.383, 0.665] & 0.240 & [0.240, 0.240] \\
RC & $a_2$ & 0.196 & [0.038, 0.389] & 0.487 & [0.357, 0.618] \\
RC & $K_{\mathrm{thresh}}$ & 435.031 & [67.457, 915.490] & 0.038 & [0.038, 0.038] \\
\bottomrule
\end{tabular}
\end{table}

The four sensitivity tables below report the estimation results from varying the pooling strength $\lambda$.  The primary model, with $\lambda=1$, is rerun for $\lambda=2$ and $\lambda=0.5$.  The unpooled estimator results stay the same, while the QBHM results shift slightly but not substantively, and the presence and approximate location of nonconvexities remain the same across $\lambda$ values.

\begin{table}[H]
\centering
\scriptsize
\setlength{\tabcolsep}{3.5pt}
\renewcommand{\arraystretch}{1.12}
\caption{Real-data full-model parameter estimates ($\lambda = 2$)}
\label{tab:real_fullmodel_4strongiv_parameter_estimates_lambda_2}
\begin{tabular}{@{}ccS[table-format=3.3] C{2.6cm} S[table-format=3.3] C{2.6cm}@{}}
\toprule
\textbf{Group} & \textbf{Parameter} & {\textbf{QBHM mean}} & \textbf{95\% interval} & {\textbf{Unpooled}} & \textbf{95\% interval} \\
\midrule
GE & $a_1$ & 92.249 & [79.136, 106.163] & 150.018 & [138.223, 161.812] \\
GE & $\rho$ & 0.630 & [0.582, 0.681] & 0.403 & [0.384, 0.421] \\
GE & $a_2$ & 0.650 & [0.054, 3.887] & 14.776 & [11.585, 17.966] \\
GE & $K_{\mathrm{thresh}}$ & 431.394 & [63.551, 915.566] & 71.078 & [67.922, 74.234] \\
AB & $a_1$ & 188.298 & [63.214, 519.959] & 0.057 & [0.057, 0.057] \\
AB & $\rho$ & 0.614 & [0.476, 0.728] & 0.350 & [0.350, 0.350] \\
AB & $a_2$ & 0.480 & [0.051, 2.755] & 6.659 & [6.181, 7.137] \\
AB & $K_{\mathrm{thresh}}$ & 434.411 & [68.303, 946.398] & 0.330 & [0.330, 0.330] \\
FY & $a_1$ & 3.031 & [1.045, 10.096] & 31.885 & [0.000, 146.818] \\
FY & $\rho$ & 0.590 & [0.324, 0.744] & 0.044 & [0.029, 0.059] \\
FY & $a_2$ & 0.154 & [0.049, 0.233] & 0.161 & [0.000, 0.328] \\
FY & $K_{\mathrm{thresh}}$ & 438.112 & [66.243, 936.691] & 16.151 & [14.675, 17.627] \\
RC & $a_1$ & 12.755 & [4.895, 38.933] & 0.369 & [0.369, 0.369] \\
RC & $\rho$ & 0.586 & [0.395, 0.689] & 0.240 & [0.240, 0.240] \\
RC & $a_2$ & 0.203 & [0.036, 0.404] & 0.487 & [0.357, 0.618] \\
RC & $K_{\mathrm{thresh}}$ & 433.743 & [57.239, 922.095] & 0.038 & [0.038, 0.038] \\
\bottomrule
\end{tabular}
\end{table}

\begin{table}[H]
\centering
\footnotesize
\setlength{\tabcolsep}{4pt}
\renewcommand{\arraystretch}{1.12}
\caption{Implied switching capital by study and estimator ($\lambda = 2$)}
\label{tab:real_fullmodel_4strongiv_switching_thresholds_lambda_2}
\begin{tabular}{@{}C{1.4cm} C{1.8cm} S[table-format=3.3] S[table-format=7.3] S[table-format=1.3]@{}}
\toprule
\textbf{Group} & \textbf{Estimator} & {\textbf{$K_{\mathrm{thresh}}$}} & {\textbf{Switching capital}} & {\textbf{Prop. below}} \\
\midrule
GE & QBHM & 431.394 & 648416.369 & 1.000 \\
GE & Unpooled & 71.078 & 146.796 & 0.880 \\
AB & QBHM & 434.411 & 5172586.488 & 1.000 \\
AB & Unpooled & 0.330 & 0.336 & 0.000 \\
FY & QBHM & 438.112 & 2361.562 & 0.806 \\
FY & Unpooled & 16.151 & 269.617 & 0.227 \\
RC & QBHM & 433.743 & 23031.923 & 0.998 \\
RC & Unpooled & 0.038 & 0.743 & 0.000 \\
\bottomrule
\end{tabular}
\end{table}

\begin{table}[H]
\centering
\scriptsize
\setlength{\tabcolsep}{3.5pt}
\renewcommand{\arraystretch}{1.12}
\caption{Real-data full-model parameter estimates ($\lambda = 0.5$)}
\label{tab:real_fullmodel_4strongiv_parameter_estimates_lambda_0_5}
\begin{tabular}{@{}ccS[table-format=3.3] C{2.6cm} S[table-format=3.3] C{2.6cm}@{}}
\toprule
\textbf{Group} & \textbf{Parameter} & {\textbf{QBHM mean}} & \textbf{95\% interval} & {\textbf{Unpooled}} & \textbf{95\% interval} \\
\midrule
GE & $a_1$ & 91.328 & [84.451, 99.004] & 150.018 & [138.223, 161.812] \\
GE & $\rho$ & 0.632 & [0.606, 0.659] & 0.403 & [0.384, 0.421] \\
GE & $a_2$ & 0.728 & [0.052, 4.076] & 14.776 & [11.585, 17.966] \\
GE & $K_{\mathrm{thresh}}$ & 429.394 & [68.162, 919.402] & 71.078 & [67.922, 74.234] \\
AB & $a_1$ & 177.677 & [83.737, 315.897] & 0.057 & [0.057, 0.057] \\
AB & $\rho$ & 0.617 & [0.535, 0.695] & 0.350 & [0.350, 0.350] \\
AB & $a_2$ & 0.456 & [0.053, 2.650] & 6.659 & [6.181, 7.137] \\
AB & $K_{\mathrm{thresh}}$ & 430.969 & [68.120, 933.604] & 0.330 & [0.330, 0.330] \\
FY & $a_1$ & 2.964 & [1.182, 10.399] & 31.885 & [0.000, 146.818] \\
FY & $\rho$ & 0.595 & [0.327, 0.727] & 0.044 & [0.029, 0.059] \\
FY & $a_2$ & 0.150 & [0.071, 0.210] & 0.161 & [0.000, 0.328] \\
FY & $K_{\mathrm{thresh}}$ & 434.546 & [68.777, 923.550] & 16.151 & [14.675, 17.627] \\
RC & $a_1$ & 13.669 & [6.643, 41.163] & 0.369 & [0.369, 0.369] \\
RC & $\rho$ & 0.575 & [0.379, 0.650] & 0.240 & [0.240, 0.240] \\
RC & $a_2$ & 0.188 & [0.039, 0.382] & 0.487 & [0.357, 0.618] \\
RC & $K_{\mathrm{thresh}}$ & 431.345 & [68.195, 916.697] & 0.038 & [0.038, 0.038] \\
\bottomrule
\end{tabular}
\end{table}

\begin{table}[H]
\centering
\footnotesize
\setlength{\tabcolsep}{4pt}
\renewcommand{\arraystretch}{1.12}
\caption{Implied switching capital by study and estimator ($\lambda = 0.5$)}
\label{tab:real_fullmodel_4strongiv_switching_thresholds_lambda_0_5}
\begin{tabular}{@{}C{1.4cm} C{1.8cm} S[table-format=3.3] S[table-format=7.3] S[table-format=1.3]@{}}
\toprule
\textbf{Group} & \textbf{Estimator} & {\textbf{$K_{\mathrm{thresh}}$}} & {\textbf{Switching capital}} & {\textbf{Prop. below}} \\
\midrule
GE & QBHM & 429.394 & 513391.716 & 1.000 \\
GE & Unpooled & 71.078 & 146.796 & 0.880 \\
AB & QBHM & 430.969 & 5801600.082 & 1.000 \\
AB & Unpooled & 0.330 & 0.336 & 0.000 \\
FY & QBHM & 434.546 & 2498.667 & 0.817 \\
FY & Unpooled & 16.151 & 269.617 & 0.227 \\
RC & QBHM & 431.345 & 25056.059 & 0.998 \\
RC & Unpooled & 0.038 & 0.743 & 0.000 \\
\bottomrule
\end{tabular}
\end{table}

\medskip

\bibliography{overall_bibliography}

@article{AM2016conditional,
  author       = {Andrews, Isaiah and Mikusheva, Anna},
  title        = {Conditional Inference with a Functional Nuisance Parameter},
  journal      = {Econometrica},
  volume       = {84},
  number       = {4},
  pages        = {1571--1612},
  year         = {2016},
  doi          = {10.3982/ECTA12868}
}

@article{AM2022,
  author       = {Andrews, Isaiah and Mikusheva, Anna},
  title        = {Optimal Decision Rules for Weak {GMM}},
  journal      = {Econometrica},
  volume       = {90},
  number       = {2},
  pages        = {715--748},
  year         = {2022},
  doi          = {10.3982/ECTA18678}
}

@misc{AndrewsMikusheva2023Inadmissible,
  author       = {Andrews, Isaiah and Mikusheva, Anna},
  title        = {{GMM} is Inadmissible Under Weak Identification},
  year         = {2023},
  doi          = {10.48550/arXiv.2204.12462},
  url          = {https://arxiv.org/abs/2204.12462},
  eprint       = {2204.12462},
  archivePrefix = {arXiv},
  primaryClass  = {econ.EM},
  note         = {arXiv preprint, revised May 2023; first circulated 2022}
}

@article{AndrewsMoreiraStock2006,
  author       = {Andrews, Donald W. K. and Moreira, Marcelo J. and Stock, James H.},
  title        = {Optimal Two-Sided Invariant Similar Tests for Instrumental Variables Regression},
  journal      = {Econometrica},
  volume       = {74},
  number       = {3},
  pages        = {715--752},
  year         = {2006},
  doi          = {10.1111/j.1468-0262.2006.00680.x}
}

@article{Angrist2010,
  author       = {Angrist, Joshua D. and Pischke, J{\"o}rn-Steffen},
  title        = {The Credibility Revolution in Empirical Economics: How Better Research Design Is Taking the Con out of Econometrics},
  journal      = {Journal of Economic Perspectives},
  volume       = {24},
  number       = {2},
  pages        = {3--30},
  year         = {2010},
  doi          = {10.1257/jep.24.2.3}
}

@article{BissiriHolmesWalker2016,
  author       = {Bissiri, Pier Giovanni and Holmes, Chris C. and Walker, Stephen G.},
  title        = {A General Framework for Updating Belief Distributions},
  journal      = {Journal of the Royal Statistical Society: Series B},
  volume       = {78},
  number       = {5},
  pages        = {1103--1130},
  year         = {2016},
  doi          = {10.1111/rssb.12158}
}

@article{CaiSzeidl2024,
  author       = {Cai, Jing and Szeidl, Adam},
  title        = {Indirect Effects of Access to Finance},
  journal      = {American Economic Review},
  volume       = {114},
  number       = {8},
  pages        = {2308--2351},
  year         = {2024},
  doi          = {10.1257/aer.20220711}
}

@techreport{ChernozhukovHansenKongWang2025PlausibleGMM,
  author       = {Chernozhukov, Victor and Hansen, Christian B. and Kong, Lingwei and Wang, Weining},
  title        = {Plausible {GMM}: A Quasi-Bayesian Approach},
  number       = {CWP07/26},
  institution  = {Centre for Microdata Methods and Practice},
  type         = {Working Paper},
  year         = {2026},
  month        = may,
  doi          = {10.47004/wp.cem.2026.0726},
  url          = {https://cemmap.ac.uk/publication/plausible-gmm-a-quasi-bayesian-approach-2/},
  eprint       = {2507.00555},
  archivePrefix = {arXiv},
  note         = {Published 8 May 2026; previous version CWP14/25, doi:10.47004/wp.cem.2025.1425}
}

@article{ChernozhukovHong2003,
  author       = {Chernozhukov, Victor and Hong, Han},
  title        = {An {MCMC} Approach to Classical Estimation},
  journal      = {Journal of Econometrics},
  volume       = {115},
  number       = {2},
  pages        = {293--346},
  year         = {2003},
  doi          = {10.1016/S0304-4076(03)00100-3}
}

@article{ChibShinSimoni2018MomentConditions,
  author       = {Chib, Siddhartha and Shin, Minchul and Simoni, Anna},
  title        = {{Bayesian} Estimation and Comparison of Moment Condition Models},
  journal      = {Journal of the American Statistical Association},
  volume       = {113},
  number       = {524},
  pages        = {1656--1668},
  year         = {2018},
  doi          = {10.1080/01621459.2017.1358172}
}

@article{Egami2023,
  author       = {Egami, Naoki and Hartman, Erin},
  title        = {Elements of External Validity: Framework, Design, and Analysis},
  journal      = {American Political Science Review},
  volume       = {117},
  number       = {3},
  pages        = {1070--1088},
  year         = {2023},
  doi          = {10.1017/S0003055422000880}
}

@article{EggerEtAl2022,
  author       = {Egger, Dennis and Haushofer, Johannes and Miguel, Edward and Niehaus, Paul and Walker, Michael},
  title        = {General Equilibrium Effects of Cash Transfers: Experimental Evidence from {Kenya}},
  journal      = {Econometrica},
  volume       = {90},
  number       = {6},
  pages        = {2603--2643},
  year         = {2022},
  doi          = {10.3982/ECTA17945}
}

@misc{FinanPouzo2026,
  author       = {Finan, Frederico and Pouzo, Demian},
  title        = {Learning about Treatment Effects with Prior Studies: A Bayesian Model Averaging Approach},
  year         = {2026},
  doi          = {10.48550/arXiv.2601.09888},
  url          = {https://arxiv.org/abs/2601.09888},
  eprint       = {2601.09888},
  archivePrefix = {arXiv},
  primaryClass  = {econ.EM},
  note         = {arXiv preprint, version 2, revised March 2026}
}

@article{FinkJackMasiye2020,
  author       = {Fink, G{\"u}nther and Jack, B. Kelsey and Masiye, Felix},
  title        = {Seasonal Liquidity, Rural Labor Markets, and Agricultural Production},
  journal      = {American Economic Review},
  volume       = {110},
  number       = {11},
  pages        = {3351--3392},
  year         = {2020},
  doi          = {10.1257/aer.20180607}
}

@book{GelmanHill2007,
  author       = {Gelman, Andrew and Hill, Jennifer},
  title        = {Data Analysis Using Regression and Multilevel/Hierarchical Models},
  publisher    = {Cambridge University Press},
  address      = {New York},
  year         = {2007},
  doi          = {10.1017/CBO9780511790942}
}

@article{HolmesWalker2017,
  author       = {Holmes, Chris C. and Walker, Stephen G.},
  title        = {Assigning a value to a power likelihood in a general {B}ayesian model},
  journal      = {Biometrika},
  volume       = {104},
  number       = {2},
  pages        = {497--503},
  year         = {2017},
  doi          = {10.1093/biomet/asx010}
}

@unpublished{Huang2023GroupedPanels,
  author       = {Huang, Jiaming},
  title        = {{Quasi-Bayesian Inference for Grouped Panels}},
  year         = {2023},
  month        = dec,
  url          = {https://jiaminghuang.net/files/qbc.pdf},
  note         = {Job market paper; latest public PDF dated December 30, 2023; author page last updated January 2026}
}

@incollection{JamesStein1961,
  author       = {James, William and Stein, Charles},
  editor       = {Neyman, Jerzy},
  title        = {Estimation with Quadratic Loss},
  booktitle    = {Proceedings of the Fourth Berkeley Symposium on Mathematical Statistics and Probability, Volume 1: Contributions to the Theory of Statistics},
  pages        = {361--379},
  publisher    = {University of California Press},
  address      = {Berkeley},
  year         = {1961},
  url          = {https://projecteuclid.org/ebooks/berkeley-symposium-on-mathematical-statistics-and-probability/Estimation-with-Quadratic-Loss/chapter/Estimation-with-Quadratic-Loss/bsmsp/1200512173}
}

@misc{Kankanala2025GeneralizedBayesCMR,
  author       = {Kankanala, Sid},
  title        = {Generalized {Bayes} in Conditional Moment Restriction Models},
  year         = {2025},
  doi          = {10.48550/arXiv.2510.01036},
  url          = {https://arxiv.org/abs/2510.01036},
  eprint       = {2510.01036},
  archivePrefix = {arXiv},
  primaryClass  = {econ.EM}
}

@article{LyddonHolmesWalker2019,
  author       = {Lyddon, Simon P. and Holmes, Chris C. and Walker, Stephen G.},
  title        = {General {Bayesian} Updating and the Loss-Likelihood Bootstrap},
  journal      = {Biometrika},
  volume       = {106},
  number       = {2},
  pages        = {465--478},
  year         = {2019},
  doi          = {10.1093/biomet/asz006}
}

@article{Meager2019,
  author       = {Meager, Rachael},
  title        = {Understanding the Average Impact of Microcredit Expansions: A {Bayesian} Hierarchical Analysis of Seven Randomized Experiments},
  journal      = {American Economic Journal: Applied Economics},
  volume       = {11},
  number       = {1},
  pages        = {57--91},
  year         = {2019},
  doi          = {10.1257/app.20170299}
}

@article{Moreira2003,
  author       = {Moreira, Marcelo J.},
  title        = {A Conditional Likelihood Ratio Test for Structural Models},
  journal      = {Econometrica},
  volume       = {71},
  number       = {4},
  pages        = {1027--1048},
  year         = {2003},
  doi          = {10.1111/1468-0262.00438}
}

@article{Schennach2005BayesianETEL,
  author       = {Schennach, Susanne M.},
  title        = {{Bayesian} Exponentially Tilted Empirical Likelihood},
  journal      = {Biometrika},
  volume       = {92},
  number       = {1},
  pages        = {31--46},
  year         = {2005},
  doi          = {10.1093/biomet/92.1.31}
}

@article{Slough2022,
  author       = {Slough, Tara and Tyson, Scott A.},
  title        = {External Validity and Meta-Analysis},
  journal      = {American Journal of Political Science},
  volume       = {67},
  number       = {2},
  pages        = {440--455},
  year         = {2023},
  doi          = {10.1111/ajps.12742},
  note         = {First published online 2022}
}

@article{StaigerStock1997,
  author       = {Staiger, Douglas and Stock, James H.},
  title        = {Instrumental Variables Regression with Weak Instruments},
  journal      = {Econometrica},
  volume       = {65},
  number       = {3},
  pages        = {557--586},
  year         = {1997},
  doi          = {10.2307/2171753}
}

@incollection{StockYogo2005,
  author       = {Stock, James H. and Yogo, Motohiro},
  editor       = {Andrews, Donald W. K. and Stock, James H.},
  title        = {Testing for Weak Instruments in Linear {IV} Regression},
  booktitle    = {Identification and Inference for Econometric Models: Essays in Honor of Thomas Rothenberg},
  pages        = {80--108},
  publisher    = {Cambridge University Press},
  address      = {Cambridge},
  year         = {2005},
  doi          = {10.1017/CBO9780511614491.006}
}

@article{Vivalt2020,
  author       = {Vivalt, Eva},
  title        = {How Much Can We Generalize from Impact Evaluations?},
  journal      = {Journal of the European Economic Association},
  volume       = {18},
  number       = {6},
  pages        = {3045--3089},
  year         = {2020},
  doi          = {10.1093/jeea/jvaa019}
}

@incollection{Walters2024EmpiricalBayes,
  author       = {Walters, Christopher R.},
  editor       = {Dustmann, Christian and Lemieux, Thomas},
  title        = {Empirical {Bayes} Methods in Labor Economics},
  booktitle    = {Handbook of Labor Economics},
  volume       = {5},
  pages        = {183--260},
  publisher    = {Elsevier},
  year         = {2024},
  doi          = {10.1016/bs.heslab.2024.11.001}
}

@misc{You2026,
  author       = {You, Zhiheng},
  title        = {Using Prior Studies to Design Experiments: An Empirical Bayes Approach},
  year         = {2026},
  doi          = {10.48550/arXiv.2602.20581},
  url          = {https://arxiv.org/abs/2602.20581},
  eprint       = {2602.20581},
  archivePrefix = {arXiv},
  primaryClass  = {econ.EM},
  note         = {arXiv preprint, version 1}
}

@article{attanasio2015impacts,
  author       = {Attanasio, Orazio and Augsburg, Britta and De Haas, Ralph and Fitzsimons, Emla and Harmgart, Heike},
  title        = {The Impacts of Microfinance: Evidence from Joint-Liability Lending in {Mongolia}},
  journal      = {American Economic Journal: Applied Economics},
  volume       = {7},
  number       = {1},
  pages        = {90--122},
  year         = {2015},
  doi          = {10.1257/app.20130489}
}

@article{augsburg2015impacts,
  author       = {Augsburg, Britta and De Haas, Ralph and Harmgart, Heike and Meghir, Costas},
  title        = {The Impacts of Microcredit: Evidence from {Bosnia and Herzegovina}},
  journal      = {American Economic Journal: Applied Economics},
  volume       = {7},
  number       = {1},
  pages        = {183--203},
  year         = {2015},
  doi          = {10.1257/app.20130272}
}

@techreport{banerjee2019can,
  author       = {Banerjee, Abhijit and Breza, Emily and Duflo, Esther and Kinnan, Cynthia},
  title        = {Can Microfinance Unlock a Poverty Trap for Some Entrepreneurs?},
  number       = {26346},
  institution  = {National Bureau of Economic Research},
  type         = {Working Paper},
  year         = {2019},
  month        = oct,
  doi          = {10.3386/w26346},
  url          = {https://www.nber.org/papers/w26346},
  note         = {Revised July 2024; author page lists 2025 revision requested at American Economic Journal: Applied Economics}
}

@article{bari2024asset,
  author       = {Bari, Faisal and Malik, Kashif and Meki, Muhammad and Quinn, Simon},
  title        = {Asset-Based Microfinance for Microenterprises: Evidence from {Pakistan}},
  journal      = {American Economic Review},
  volume       = {114},
  number       = {2},
  pages        = {534--574},
  year         = {2024},
  doi          = {10.1257/aer.20210169}
}

@article{crepon2015estimating,
  author       = {Cr{\'e}pon, Bruno and Devoto, Florencia and Duflo, Esther and Parient{\'e}, William},
  title        = {Estimating the Impact of Microcredit on Those Who Take It Up: Evidence from a Randomized Experiment in {Morocco}},
  journal      = {American Economic Journal: Applied Economics},
  volume       = {7},
  number       = {1},
  pages        = {123--150},
  year         = {2015},
  doi          = {10.1257/app.20130535}
}

@article{de2008returns,
  author       = {de Mel, Suresh and McKenzie, David and Woodruff, Christopher},
  title        = {Returns to Capital in Microenterprises: Evidence from a Field Experiment},
  journal      = {The Quarterly Journal of Economics},
  volume       = {123},
  number       = {4},
  pages        = {1329--1372},
  year         = {2008},
  doi          = {10.1162/qjec.2008.123.4.1329}
}

@article{fafchamps2014microenterprise,
  author       = {Fafchamps, Marcel and McKenzie, David and Quinn, Simon and Woodruff, Christopher},
  title        = {Microenterprise Growth and the Flypaper Effect: Evidence from a Randomized Experiment in {Ghana}},
  journal      = {Journal of Development Economics},
  volume       = {106},
  pages        = {211--226},
  year         = {2014},
  doi          = {10.1016/j.jdeveco.2013.09.010}
}

@article{kaji2021weakidentification,
  author       = {Kaji, Tetsuya},
  title        = {Theory of Weak Identification in Semiparametric Models},
  journal      = {Econometrica},
  volume       = {89},
  number       = {2},
  pages        = {733--763},
  year         = {2021},
  doi          = {10.3982/ECTA16413}
}

@article{karlan2019debt,
  author       = {Karlan, Dean and Mullainathan, Sendhil and Roth, Benjamin N.},
  title        = {Debt Traps? Market Vendors and Moneylender Debt in {India} and the {Philippines}},
  journal      = {American Economic Review: Insights},
  volume       = {1},
  number       = {1},
  pages        = {27--42},
  year         = {2019},
  doi          = {10.1257/aeri.20180030}
}

\end{document}